\documentclass[preprint, review, 12pt]{article}          
   
\usepackage{mathrsfs}  
\usepackage{amsmath}
\usepackage{amssymb}     
\usepackage{graphicx}               
\usepackage{latexsym}              
\usepackage{amsfonts}  
\usepackage{amsthm}       

\newtheorem{theorem}{Theorem}      
\newtheorem{lemma}{Lemma}     
\newtheorem{definition}{Definition} 
\newtheorem{corollary}{Corollary}  
 
\newcommand{\f}{\frac}

\textheight=220mm
\voffset=-1in

\begin{document}


\date{May 12, 2015}

\title{Effective integration of ultra-elliptic solutions of the focusing nonlinear Schr\"odinger equation}
\author{O. C. Wright, III\\ 
\it Department of Science and Mathematics, Cedarville University,\\ \it  251 N. Main St., Cedarville, OH 45314 
}

\maketitle

\begin{abstract}
An effective integration method based on the classical solution to the Jacobi inversion problem, using Kleinian ultra-elliptic functions, is presented for quasi-periodic two-phase solutions of the focusing nonlinear Schr\"odinger equation.  
The two-phase solutions with real quasi-periods  are known to form a two-dimensional torus, modulo a circle of complex phase factors, that can be expressed as a ratio of theta functions. In this paper, the two-phase solutions are explicitly parametrized in terms of the branch points on the genus-two Riemann surface of the theta functions.  Simple formulas,  in terms of the imaginary parts of the branch points, are obtained for the maximum modulus and the minimum modulus of the two-phase solution.
\end{abstract}
           
\vspace{.1in}

\emph{Keywords:} nonlinear Schr\"odinger equation, ultra-elliptic solutions

\emph{2010 MSC:} 37K15, 35Q55.

\section{Introduction}
\label{introduction}

The focusing nonlinear Schr\"odinger (NLS) equation is
\begin{equation}
i \, p_t + p_{xx} + 2  |p|^2 p =  0,
\label{nls}
\end{equation}
where $p(x,t)$ is a  complex field exhibiting focusing, viz., modulationally unstable, behavior.  The NLS equation is a well-known and thoroughly studied soliton equation applicable to a wide variety of problems, in which there is a simple balance between dispersive and nonlinear effects, ranging from shallow water waves to optical communication systems, see~\cite{kamc 00} and the references therein. The integration of multi-phase quasi-periodic solutions of the NLS equation by algebro-geometric means has also been studied extensively~\cite{dubr 81, trac 84, prev 85, trac 88, belo 94}.  A crucial feature of the algebro-geometric technique is the determination of reality conditions which must be satisfied by the integrals of motion and the invariant spectral curve, in order for the N-phase quasi-periodic solution with real quasi-periods to satisfy the scalar NLS equation, not merely a complexified coupled version of the NLS equation.  Equivalently, the constants of integration in the algebro-geometric N-phase solution
must satisfy  reality conditions, so that the initial conditions of the solution are consistent with the integrals of motion used to construct the solution. The complete characterization of all reality conditions on the constants of integration in the multi-phase solution is a crucial feature in the correct determination of the solutions.   Reality conditions for quasi-periodic solutions have been studied for the NLS equation~\cite{prev 85}  and for the sine-Gordon equation~\cite{fore 82, erco 85}, including very concrete formulas in the case of the sine-Gordon equation.  

The detailed study of two-phase solutions of the focusing  NLS equation~(\ref{nls}) is important for understanding the 
small-dispersion limit near the gradient catastrophe~\cite{bert 13} and, in particular, the generation of rogue waves near the gradient catastrophe~\cite{el 15}.  
Special classes of ultra-elliptic solutions of vector NLS equations have also been of much interest recently~\cite{poru 99, chri 00, eilb 00, wood 07}, 
including their modulation equations~\cite{kamc 14}.  For elliptic solutions of the NLS equation, Kamchatnov~\cite{kamc 00, kamc 90} has demonstrated an effective integration method that clarifies the dependence of the solution on the branch points of the underlying elliptic curve.  The purpose of this paper is to extend the effective integration method of  elliptic solutions  to two-phase quasi-periodic solutions, viz., the ultra-elliptic solutions.   As is known~\cite{prev 85, belo 94}, the real quasi-periodic solutions form a single  two-dimensional torus in the Jacobi variety of the invariant spectral curve, modulo a circle of phase factors.  In this paper, the two-phase solutions are integrated explicitly using classical methods to solve the Jacobi inversion problem in terms of Kleinian ultra-elliptic functions~\cite{bake 07, eilb 03} and, consequently, in terms of Riemann theta functions explicitly parametrized by the branch points of the genus-two Riemann surface.   The resulting formulas  are similar to  previously known algebro-geometric formulas for the solutions~\cite{belo 94}, except that the dependence of the constants of integration on the branch points is more explicit.  In particular, it is shown that, for all smooth two-phase solutions, there is a simple dependence of the maximum modulus and the minimum modulus of the two-phase solution on the imaginary parts of the branch points, see Theorems~\ref{stheorem} and~\ref{finaltheorem}.

\section{Lax Pair and Recursion Relations}
The integrability of the NLS equation~(\ref{nls}) is established through the equivalence of system~(\ref{nls}) and the commutation of a Lax pair of linear eigenvalue problems,
\begin{equation}
\begin{array}{cccccc}
\psi_x & = & \mathbb{U} \psi, & \psi_t & = & \mathbb{V} \psi,
\end{array}
\label{laxpair}
\end{equation}
where
\begin{equation}
\begin{array}{rcl}
\mathbb{U} & = & \left(\begin{array}{cc} - i \lambda & i p \\ 
			                                 i  p^* & i \lambda  
			                                 \end{array} \right), \\[.3in] 
\mathbb{V} & = & \left(\begin{array}{cc} - 2 i \lambda^2 + i  |p|^2  & 2 i \lambda p - p_x \\
                                        2 i   \lambda p^* +  p^*_x & 2 i \lambda^2 - i  |p|^2  
                                        \end{array} \right),
\end{array}
\end{equation}
$p^*$ denotes the complex conjugate of $p,$ and $\lambda$ is the spectral parameter in the inverse spectral theory of the integrable system.

The commutation of the Lax pair of linear operators~(\ref{laxpair}) implies the NLS equation in zero-curvature form
\begin{equation}
\mathbb{U}_t - \mathbb{V}_x + [\mathbb{U},\mathbb{V}] = 0,
\label{zero}
\end{equation}
where $[\mathbb{A},\mathbb{B}] = AB-BA$ is the usual operation of matrix commutation.     
The stationary N-phase solutions of the NLS equation are, by definition, solutions of the stationary NLS hierarchy defined by
\begin{equation}
\Psi_x = [\mathbb{U}, \Psi],
\label{stationaryx}
\end{equation}
in which the solution matrix $\Psi$ is polynomial in the spectral parameter $\lambda.$  The time-dependent N-phase solutions are then obtained by explicitly constructing the compatible time-dependence of $\Psi$ such that
\begin{equation}
\Psi_t = [\mathbb{V}, \Psi],
\label{stationaryt}
\end{equation}
which, in turn, implies the zero-curvature representation of the NLS equation~(\ref{zero}). 

The commutation operators in equations~(\ref{stationaryx}) and~\ref{stationaryt}) imply that the trace of $\Psi$ is constant and so, without loss of generality, constant multiples of the identity matrix may be added to $\Psi,$ and we may assume that the trace of $\Psi$ is zero.  Furthermore, the commutation structure also implies that the characteristic polynomial of $\Psi$ is a constant, providing integrals of motion that enable the integration of the N-phase solutions.

Therefore a solution to equation~(\ref{stationaryx}) is constructed of the form,
\begin{equation}
\begin{array}{rcl}
\Psi & = & \left(\begin{array}{ccc} \Psi_{11}& \Psi_{12}\\
					\Psi_{21} & -\Psi_{11}
			     \end{array} \right),
\end{array}
\label{Psi}
\end{equation}
so that the Lax pair of equations~(\ref{stationaryx}) and~(\ref{stationaryt}) become
\begin{equation}
\begin{array}{rcl}
\Psi_{11x} & =& -i  p^* \Psi_{12} + i p \Psi_{21}, \\
\Psi_{12x} & =& - 2 i p \Psi_{11} - 2 i \lambda \Psi_{12}, \\
\Psi_{21x} & = & 2 i  p^* \Psi_{11} + 2 i \lambda \Psi_{21},\\
\Psi_{11t} & = & - (2 i  \lambda p^* +  p^*_x) \Psi_{12} + (2 i \lambda p - p_x) \Psi_{21}, \\
\Psi_{12t} &= & -2(2 i \lambda p - p_x) \Psi_{11} + 2 (-2 i \lambda^2 + i  |p|^2) \Psi_{12}, \\
\Psi_{21t} & = & 2 (2 i \lambda  p^* +  p^*_x) \Psi_{11} + 2 ( 2 i \lambda^2 - i |p|^2) \Psi_{21}.
\end{array}
\label{psiequations}
\end{equation}

N-phase solutions correspond to $\Psi$ which are polynomial in $\lambda.$ Substitution of the series
\begin{equation}
\begin{array}{rcl}
\Psi_{11} & = & 1 + \sum\limits^\infty_{n = 1} f_{n} \lambda^{-n},\\[.1in]
\Psi_{12} & = & \sum\limits^\infty_{n=1} g_{n} \lambda^{-n},\\[.1in]
\Psi_{21} & = & \sum\limits^\infty_{n=1} h_{n} \lambda^{-n},
\end{array}
\end{equation}
 into the stationary equation~(\ref{stationaryx}),
produces the Lenard recursion relations for the entries of $\Psi,$
\begin{equation}
\begin{array}{rcl}
g_{n} & = & \f{1}{2} i \f{\partial g_{n-1}}{\partial x} - p f_{n-1}, \\[.1in]
h_{n} & = & -\f{1}{2} i \f{\partial h_{n-1}}{\partial x} - p^* f_{n-1},\\[.1in]
f_{n} & = &  i \int p h_{n} -  p^* g_{n} \,dx.
\end{array}
\end{equation} 
It can be shown that the integrands are always exact~\cite{flas 83}, so that the entries of $\Psi$ are differential polynomials in $p$ and $p^*.$
The homogeneous solution, for which all the constants of integration are set equal to zero, is, up to $O(\lambda^{-3}),$
\begin{equation}
\begin{array}{rclrclrcl}
f_1 & = & 0,&  g_1& = & -p, &  h_1 & = & -  p^*,\\[.1in]
f_2 & = & -\f{1}{2}  |p|^2, & g_2 & = & -\f{1}{2} i \f{\partial p}{\partial x},&  h_2 & = & \f{1}{2} i  \f{\partial p^*}{\partial x},\\[.1in]
f_3 & = & \f{1}{4} i  \left( p \f{\partial p^*}{\partial x} - p^* \f{\partial p}{\partial x} \right),& g_3 & = &\f{1}{4} \f{\partial^2 p}{\partial x^2} + \f{1}{2}  p |p|^2, &h_3 & = & \f{1}{4}  \f{\partial^2 p^*}{\partial x^2} + \f{1}{2} p^* |p|^2.
\end{array}
\end{equation}

\section{Two-phase solutions}
Two-phase solutions correspond to a third-degree polynomial solution for $\Psi,$ the most general such solution is
\begin{equation}
\begin{array}{rcl}
\Psi_{11} & = &- i \lambda^3 - i c_2 \lambda^2 + (\f{1}{2} i |p|^2 - i c_1) \lambda + \f{1}{4}  (p p^{*\prime} - p^\prime p^*)+\f{1}{2}  i c_2 |p|^2 - i c_0, \\[,1in]
 \Psi_{12} & = & i p \lambda^2 +(-\f{1}{2} p^\prime + i c_2 p) \lambda - \f{1}{4} i p^{\prime \prime}-\f{1}{2} i  p |p|^2 -\f{1}{2} c_2 p^\prime + i c_1 p, \\[.1in]
\Psi_{21} & = &i  p^* \lambda^2 +(\f{1}{2}  p^{*\prime} +i c_2  p^*) \lambda -\f{1}{4} i  p^{*\prime \prime}-\f{1}{2} i p^* |p|^2 + \f{1}{2} c_2  p^{*\prime} + i  c_1  p^*,
\end{array}
\label{psi}
\end{equation}
where $c_0, c_1, c_2 \in \mathbb{R}$ are constants of integration.
The stationary equation for this solution is
\begin{equation}
p^{\prime \prime \prime} = -6 |p|^2 p^\prime + 2i c_2 (p^{\prime \prime} + 2  p |p|^2)+ 4 c_1 p^\prime - 8 i c_0 p.
\end{equation}
The solution $\Psi$ of equation~(\ref{psi}) has real symmetry which must be satisfied in order to obtain two-phase solutions of the scalar NLS equation~(\ref{nls}), instead of solutions to a complexified pair of coupled NLS equations.
\begin{theorem}[Reality Condition]
\begin{equation}
\Psi_{12} (\lambda) = -  (\Psi_{21}(\lambda^*))^*,
\end{equation}
hence the two roots $\mu_1$ and $\mu_2$ of $\Psi_{21} (\lambda)=0$ are the complex conjugates of the two roots of $\Psi_{12}(\lambda)=0.$  In particular,
\begin{equation}
\begin{array}{rcl}
\Psi_{11} & = & - i \lambda^3 - i c_2 \lambda^2 + (\f{1}{2} i  \nu_1 - i c_1) \lambda +\f{1}{4} i  \nu_2 +\f{1}{2}  i c_2 \nu_1 - i c_0,\\[.1in]
\Psi_{12} & = & i p (\lambda - \mu_1) (\lambda - \mu_2), \\[.1in]
\Psi_{21} & = & i  p^* (\lambda - \mu^*_1) (\lambda -\mu^*_2),
\end{array}
\label{mupolynomialpsi}
\end{equation}
where $\nu_1, \nu_2 \in \mathbb{R}$ are real variables,
\begin{equation}
\begin{array}{rcl}
\nu_1 & = & |p|^2  \geq 0, \\[.1in]
\nu_2 & = &i (p^* p^\prime - p p^{*\prime}).
\end{array}
\end{equation}
\label{reality1}
\end{theorem}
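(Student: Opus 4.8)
The identity is a manifestation of a discrete reality symmetry of the Lax operator, and the cleanest route is to establish that symmetry and then read off the three assertions. The plan is first to observe that $\mathbb{U}$ (and likewise $\mathbb{V}$) obeys $\sigma^{-1}(\mathbb{U}(\lambda^*))^*\sigma = \mathbb{U}(\lambda)$, where $\sigma = \left(\begin{smallmatrix} 0 & 1 \\ -1 & 0 \end{smallmatrix}\right)$; this is a direct two-by-two check using $p^{**}=p$ and $(\lambda^*)^*=\lambda$. Since this symmetry is compatible with the commutators in~(\ref{stationaryx}) and~(\ref{stationaryt}), the matrix $\sigma^{-1}(\Psi(\lambda^*))^*\sigma$ solves the same stationary hierarchy as $\Psi(\lambda)$; because the recursion is determined up to the constants of integration and these are taken real, the two solutions coincide, which is exactly $(\Psi(\lambda^*))^* = \sigma\,\Psi(\lambda)\,\sigma^{-1}$. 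Reading off the off-diagonal entries of this matrix identity gives $\Psi_{12}(\lambda) = -(\Psi_{21}(\lambda^*))^*$.

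In practice I would confirm the symmetry by direct substitution into the explicit polynomials in~(\ref{psi}), which is more transparent than the abstract argument and makes the role of the reality of the constants explicit. Concretely, I would form $\Psi_{21}(\lambda^*)$, conjugate it term by term, and use that complex conjugation commutes with $\partial_x$ (since $x$ is real), that $|p|^2$ is real, and that $c_0,c_1,c_2\in\mathbb{R}$. Each of the coefficients of $-(\Psi_{21}(\lambda^*))^*$ then matches the corresponding coefficient of $\Psi_{12}$; the point where reality is indispensable is in the $\lambda$ and $\lambda^0$ terms, where $c_1$ and $c_2$ (and $c_0$ in the diagonal check) must equal their own conjugates for the match to hold. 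The same computation applied to the diagonal entry yields $\Psi_{11}(\lambda) = -(\Psi_{11}(\lambda^*))^*$, the consistent diagonal part of the matrix symmetry.

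From the reality identity the root correspondence is immediate: if $\Psi_{12}(\mu)=0$ then $(\Psi_{21}(\mu^*))^* = 0$, hence $\Psi_{21}(\mu^*)=0$, so the two zeros of $\Psi_{21}$ are $\mu_1^*,\mu_2^*$. Since $\Psi_{12}$ and $\Psi_{21}$ are quadratics in $\lambda$ with leading coefficients $ip$ and $ip^*$, factoring out these leading coefficients produces the stated forms $\Psi_{12}=ip(\lambda-\mu_1)(\lambda-\mu_2)$ and $\Psi_{21}=ip^*(\lambda-\mu_1^*)(\lambda-\mu_2^*)$. Finally, for the $\Psi_{11}$ formula I would substitute $\nu_1=|p|^2$ and $\nu_2 = i(p^*p'-pp^{*\prime})$ into~(\ref{psi}); using $pp^{*\prime}-p'p^* = i\nu_2$ and $|p|^2=\nu_1$ reproduces the expression in~(\ref{mupolynomialpsi}). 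Reality of $\nu_1$ is clear and $\nu_1\ge 0$ since it is a modulus squared, while $\nu_2$ is real because the Wronskian-type combination $p^*p'-pp^{*\prime}$ is purely imaginary, being the difference $z-z^*$ of $z=p^*p'$ and its own conjugate, so multiplication by $i$ returns a real number.

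The whole argument is essentially bookkeeping rather than a deep computation, so there is no serious obstacle; the only place that demands care is the accounting of signs and conjugates in the term-by-term matching, together with the recognition that the reality of $c_0,c_1,c_2$ is not automatic but is precisely the hypothesis selecting the scalar NLS equation over its complexified coupled counterpart.
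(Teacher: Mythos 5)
Your proposal is correct and, in its operative part, coincides with the paper's (implicit) argument: the paper states Theorem~\ref{reality1} without a written proof because it follows by exactly the term-by-term verification you describe---conjugating the explicit polynomials in~(\ref{psi}), using $c_0,c_1,c_2\in\mathbb{R}$, and reading off the factorizations and the substitutions $\nu_1=|p|^2$, $\nu_2=i(p^*p'-pp^{*\prime})$. The Lax-pair symmetry $(\Psi(\lambda^*))^*=\sigma\Psi(\lambda)\sigma^{-1}$ you sketch first is a valid and slightly more structural packaging of the same fact, but it is not needed beyond the direct check you then carry out.
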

The two roots $\mu_1, \mu_2 \in \mathbb{C}$ of the equation $\Psi_{12} = 0$ are analogous to the Dirichlet eigenvalues of the KdV spectral problem and, hence, are referred to as Dirichlet eigenvalues in this context.  The solution $p(x,t)$ of the NLS equation~(\ref{nls}) is recovered from the Dirichlet eigenvalues by the trace formulas.
\begin{lemma}[Trace Formulas]
The Dirichlet eigenvalues $\mu_1$ and $\mu_2$ satisfy the trace formulas,
\begin{equation}
\begin{array}{rcl}
\mu_1+\mu_2 & = & -\f{1}{2} i \f{p^\prime}{p} - c_2,\\[.1in]
\mu_1 \mu_2 & = & -\f{1}{4} \f{p^{\prime \prime}}{p} -\f{1}{2}  |p|^2 + \f{1}{2} i c_2 \f{p^\prime}{p} + c_1.
\end{array}
\label{trace}
\end{equation}
Also
\begin{equation}
\nu_2 = -2 \nu_1 (\mu_1+\mu_2+\mu_1^*+\mu_2^*+ 2 c_2),
\label{nu2equation}
\end{equation}
and
\begin{equation}
\begin{array}{rcl}
p^\prime & = & 2 i p (\mu_1 +\mu_2 + c_2),\\[.1in]
p^{\prime \prime} & = & -2p (2 \mu_1 \mu_2 +  \nu_1+ 2 c_2 \mu_1 +2 c_2 \mu_2 + 2 c_2^2- 2 c_1).
\end{array}
\label{traceinverse}
\end{equation}
\end{lemma}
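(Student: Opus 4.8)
The plan is to obtain all four statements by directly comparing the two representations of the off-diagonal entry $\Psi_{12}$. Equation~(\ref{psi}) gives $\Psi_{12}$ as an explicit quadratic polynomial in $\lambda$ whose coefficients are differential expressions in $p$, while the reality condition~(\ref{mupolynomialpsi}) gives the factored form $\Psi_{12} = i p (\lambda - \mu_1)(\lambda - \mu_2)$. Since a polynomial is determined by its coefficients, matching the coefficient of each power of $\lambda$ in these two expressions will produce the trace formulas~(\ref{trace}); the remaining identities then follow by elementary algebra and complex conjugation, using the definitions $\nu_1 = |p|^2$ and $\nu_2 = i(p^* p' - p p^{*\prime})$ supplied in Theorem~\ref{reality1}.

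First I would expand the factored form,
\begin{equation}
i p (\lambda - \mu_1)(\lambda - \mu_2) = i p \lambda^2 - i p (\mu_1 + \mu_2) \lambda + i p \, \mu_1 \mu_2,
\end{equation}
and equate it term by term with the polynomial for $\Psi_{12}$ in~(\ref{psi}). The $\lambda^2$ coefficients agree automatically. Equating the $\lambda^1$ coefficients gives $-i p (\mu_1 + \mu_2) = -\f{1}{2} p' + i c_2 p$, and equating the constant terms gives $i p \, \mu_1 \mu_2 = -\f{1}{4} i p'' - \f{1}{2} i p |p|^2 - \f{1}{2} c_2 p' + i c_1 p$. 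Dividing each relation by $i p$ and simplifying the factors of $i$ (using $1/i = -i$) yields precisely the two trace formulas in~(\ref{trace}).

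To establish~(\ref{nu2equation}) I would take the complex conjugate of the first trace formula. Since $c_2$ is real, conjugation gives $\mu_1^* + \mu_2^* = \f{1}{2} i \, p^{*\prime}/p^* - c_2$. Adding this to the first trace formula, the $c_2$ terms combine with the $+2 c_2$ in~(\ref{nu2equation}) to cancel, leaving $\mu_1 + \mu_2 + \mu_1^* + \mu_2^* + 2 c_2 = \f{1}{2} i \, ( p^{*\prime}/p^* - p'/p )$. Multiplying by $-2 \nu_1 = -2 |p|^2 = -2 p p^*$ and clearing the denominators then reproduces $\nu_2 = i(p^* p' - p p^{*\prime})$, which is exactly its definition.

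Finally, the inversion formulas~(\ref{traceinverse}) follow by solving the trace formulas for the derivatives of $p$. Rearranging the first trace formula gives $\mu_1 + \mu_2 + c_2 = -\f{1}{2} i \, p'/p$, which upon multiplication by $2 i p$ yields $p' = 2 i p (\mu_1 + \mu_2 + c_2)$. Solving the second trace formula for $p''$ produces $p'' = -4 p \, \mu_1 \mu_2 - 2 p |p|^2 + 2 i c_2 p' + 4 c_1 p$; substituting the expression just found for $p'$, replacing $|p|^2$ by $\nu_1$, and collecting terms gives the stated formula for $p''$. I expect no genuine obstacle here: the argument is entirely computational, and the only point requiring care is the consistent bookkeeping of the imaginary units when dividing by $i p$ and when conjugating.
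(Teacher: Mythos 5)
Your proposal is correct and follows exactly the derivation the paper intends (the lemma is stated without an explicit proof, but it follows immediately from comparing the polynomial form of $\Psi_{12}$ in equation~(\ref{psi}) with the factored form in equation~(\ref{mupolynomialpsi}) of Theorem~\ref{reality1}): matching the $\lambda^1$ and $\lambda^0$ coefficients gives the trace formulas, conjugation and the definitions of $\nu_1,\nu_2$ give equation~(\ref{nu2equation}), and algebraic inversion gives equation~(\ref{traceinverse}). All sign and factor-of-$i$ bookkeeping in your computation checks out.
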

The  evolution of the Dirichlet eigenvalues  is governed by the Dubrovin equations obtained by substitution
of equation~(\ref{mupolynomialpsi}) into the second and fifth equations of~(\ref{psiequations}), differentiating and substituting $\lambda = \mu_1$ or $\lambda=\mu_2.$
\begin{lemma}[Dubrovin Equations]
The Dirichlet eigenvalues $\mu_1$ and $\mu_2$ satisfy the system of Dubrovin equations, 
\begin{equation}
\begin{array}{rcl}
\f{\partial \mu_1}{\partial x} & = & 2 \f{\Psi_{11} (\mu_1)}{\mu_1-\mu_2}, \\[.1in]
\f{\partial \mu_2}{\partial x} & = & 2 \f{\Psi_{11} (\mu_2)}{\mu_2 - \mu_1}, \\[.1in]
\f{\partial \mu_1}{\partial t} & = & -2 (\mu_2 + c_2) \f{\partial \mu_1}{\partial x}, \\[.1in]
\f{\partial \mu_2}{\partial t} & = & -2 (\mu_1 + c_2) \f{\partial \mu_2}{\partial x}.
\end{array}
\label{dubrovinequations}
\end{equation}
\end{lemma}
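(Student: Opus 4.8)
The plan is to use the factored form $\Psi_{12} = i p (\lambda - \mu_1)(\lambda - \mu_2)$ supplied by the Reality Condition together with the second and fifth rows of~(\ref{psiequations}), reading off each Dubrovin equation by evaluating at a Dirichlet eigenvalue, where $\Psi_{12}$ vanishes by definition.

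For the spatial equations, I would differentiate the factored expression with respect to $x$ while holding the spectral parameter $\lambda$ fixed:
\[
\Psi_{12x} = i p_x (\lambda - \mu_1)(\lambda - \mu_2) - i p \left[ \mu_{1x}(\lambda - \mu_2) + \mu_{2x}(\lambda - \mu_1) \right].
\]
Setting $\lambda = \mu_1$ annihilates both the $p_x$-term and the $\mu_{2x}$-term, leaving $\Psi_{12x}|_{\lambda = \mu_1} = - i p\, \mu_{1x}(\mu_1 - \mu_2)$. On the other hand, the second equation of~(\ref{psiequations}) evaluated at $\lambda = \mu_1$ reduces, since $\Psi_{12}(\mu_1) = 0$, to $\Psi_{12x}|_{\lambda = \mu_1} = -2 i p\, \Psi_{11}(\mu_1)$. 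Equating the two expressions and cancelling the common factor $- i p$ gives $\mu_{1x}(\mu_1 - \mu_2) = 2 \Psi_{11}(\mu_1)$, which is the first Dubrovin equation; interchanging the roles of $\mu_1$ and $\mu_2$ yields the second.

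For the temporal equations I would repeat the computation with the fifth equation of~(\ref{psiequations}). The $t$-derivative of the factored $\Psi_{12}$, evaluated at $\lambda = \mu_1$, is $- i p\, \mu_{1t}(\mu_1 - \mu_2)$, while the right-hand side of the fifth equation at $\lambda = \mu_1$ collapses---again because $\Psi_{12}(\mu_1) = 0$---to $-2(2 i \mu_1 p - p_x)\Psi_{11}(\mu_1)$. The crucial step is to rewrite the coefficient $2 i \mu_1 p - p_x$ using the trace-inverse relation $p_x = 2 i p(\mu_1 + \mu_2 + c_2)$ from the Trace Formulas~(\ref{traceinverse}), which collapses it to $-2 i p(\mu_2 + c_2)$. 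Substituting this and cancelling $- i p$ produces $\mu_{1t}(\mu_1 - \mu_2) = -4(\mu_2 + c_2)\Psi_{11}(\mu_1)$, and dividing by $(\mu_1 - \mu_2)$ while recognizing the spatial equation already established on the right gives $\mu_{1t} = -2(\mu_2 + c_2)\mu_{1x}$; symmetry supplies the fourth equation.

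The main obstacle is structural rather than computational: the temporal equations only close into the clean product form because the coefficient $2 i \mu_1 p - p_x$ collapses to $-2 i p(\mu_2 + c_2)$, and this collapse depends essentially on the trace-inverse identity, that is, on the already-established consistency between $p$ and the symmetric functions of the Dirichlet eigenvalues. I would also flag the minor caveat that cancelling $- i p$ presumes $p \neq 0$; the Dubrovin equations then extend across the zeros of $p$ by continuity, since every quantity involved is smooth.
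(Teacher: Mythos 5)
Your proposal is correct and follows essentially the same route as the paper, which obtains the Dubrovin equations precisely by substituting the factored form~(\ref{mupolynomialpsi}) into the second and fifth equations of~(\ref{psiequations}), differentiating, and setting $\lambda = \mu_1$ or $\lambda = \mu_2$; your use of the trace-inverse relation $p_x = 2ip(\mu_1+\mu_2+c_2)$ to collapse the coefficient in the $t$-flow is exactly the simplification the paper's one-line derivation relies on, since the Trace Formulas lemma is established immediately beforehand.
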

The Dirichlet eigenvalues lie on trajectories in the complex plane determined by the  Dubrovin differential equations~(\ref{dubrovinequations}) and the initial conditions. However, not all
initial conditions satisfy the reality condition.  In other words, not all initial conditions are consistent with the integrals of motion of the Dubrovin equations and the assumption that the zeros of $\Psi_{12}$ and $\Psi_{21}$ are complex-conjugates of each other. The allowed initial conditions, viz., the allowed trajectories, are not known a priori (unlike the KdV case in which the Dirichlet eigenvalues must lie on certain intervals of the real line determined by the spectral problem) and must be determined in order to construct two-phase solutions from  the  Dubrovin equations.  

The effective integration method of Kamchatnov~\cite{kamc 00, kamc 90} uses the fact that the Dubrovin equations  show that the $\mu-$trajectories of the Dirichlet eigenvalues form a real two-dimensional manifold parametrized by
the two real variables $\nu_1$ and $\nu_2.$ In fact, by finding the algebraic dependence of the $\mu-$trajectories on
$\nu_1$ and $\nu_2,$  the reality conditions can be satisfied explicitly.
Moreover, the differential equations satisfied by $\nu_1$ and $\nu_2$ come from the $x-$flow and $t-$flow of the Lax pair for $\Psi$ by substitution of equations~(\ref{mupolynomialpsi}) into equations~(\ref{psiequations}).
\begin{lemma}
The real variables $\nu_1$ and $\nu_2$ satisfy the following system of equations,
\begin{equation}
\begin{array}{rcl}
\f{\partial \nu_1}{\partial x} & = & 2 i \nu_1 (\mu_1 + \mu_2 - \mu_1^* -\mu_2^*), \\[.1in]
\f{\partial \nu_2}{\partial x} & = &4 i \nu_1 (\mu^*_1 \mu^*_2 - \mu_1 \mu_2) - 2 c_2 \f{\partial \nu_1}{\partial x}, \\[.1in]
\f{\partial \nu_1}{\partial t} & = & \f{\partial \nu_2}{\partial x},\\[.1in]
\f{\partial \nu_2}{\partial t} & = & 8 i \nu_1 ( (\mu_1 -\mu^*_1) |\mu_2|^2 + (\mu_2 - \mu^*_2) |\mu_1|^2) -4 c_2 \f{\partial \nu_2}{\partial x} - 4 c_2^2 \f{\partial \nu_1}{\partial x}.
\end{array}
\label{nuequations}
\end{equation}
\end{lemma}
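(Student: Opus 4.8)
The plan is to obtain all four equations of~(\ref{nuequations}) by substituting the factored forms~(\ref{mupolynomialpsi}) for $\Psi_{11}, \Psi_{12}, \Psi_{21}$ directly into the first and fourth rows of the Lax-pair system~(\ref{psiequations}) --- the equations for $\Psi_{11x}$ and $\Psi_{11t}$ --- and then matching powers of $\lambda$. The crucial observation is that $\nu_1$ and $\nu_2$ enter $\Psi_{11}$ only through the coefficients of $\lambda^1$ and $\lambda^0$, namely $\f{1}{2}i\nu_1 - ic_1$ and $\f{1}{4}i\nu_2 + \f{1}{2}ic_2\nu_1 - ic_0$. Hence the $\lambda^1$ and $\lambda^0$ coefficients of the left-hand sides are precisely $\f{1}{2}i\nu_{1x}$, $\f{1}{4}i\nu_{2x}+\f{1}{2}ic_2\nu_{1x}$ for the $x$-equation and $\f{1}{2}i\nu_{1t}$, $\f{1}{4}i\nu_{2t}+\f{1}{2}ic_2\nu_{1t}$ for the $t$-equation, so everything reduces to computing the corresponding coefficients on the right-hand sides. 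Throughout I would write $\sigma = \mu_1+\mu_2$, $\pi = \mu_1\mu_2$, with $\sigma^*, \pi^*$ their complex conjugates, and use $pp^* = \nu_1$.

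For the two $x$-equations the right-hand side $-ip^*\Psi_{12} + ip\Psi_{21}$ is purely algebraic in the factored forms: substituting $\Psi_{12} = ip(\lambda^2 - \sigma\lambda + \pi)$ and $\Psi_{21} = ip^*(\lambda^2 - \sigma^*\lambda + \pi^*)$ makes the $\lambda^2$ terms cancel and collapses the whole expression to $\nu_1[(\sigma^*-\sigma)\lambda + (\pi-\pi^*)]$. Matching the $\lambda^1$ coefficient gives $\f{1}{2}i\nu_{1x} = \nu_1(\sigma^*-\sigma)$, i.e. the first equation, and matching the $\lambda^0$ coefficient and then solving for $\nu_{2x}$ (isolating the $\f{1}{2}ic_2\nu_{1x}$ carried over from the left) yields the second equation.

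For the two $t$-equations I would repeat the procedure on $\Psi_{11t} = -(2i\lambda p^* + p^{*\prime})\Psi_{12} + (2i\lambda p - p')\Psi_{21}$. Here the right-hand side contains $p'$ and $p^{*\prime}$, which I first eliminate using the trace-inverse relation $p' = 2ip(\mu_1+\mu_2+c_2)$ from~(\ref{traceinverse}) together with its complex conjugate $p^{*\prime} = -2ip^*(\mu_1^*+\mu_2^*+c_2)$ (valid since $c_2$ is real and the roots of $\Psi_{21}$ are $\mu_1^*,\mu_2^*$). The $\lambda^1$ coefficient then simplifies, after a short cancellation, to exactly the right-hand side of the $\nu_{2x}$ equation already obtained, which proves the conservation-law identity $\nu_{1t} = \nu_{2x}$, the third equation. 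The $\lambda^0$ coefficient is where I expect the main obstacle to lie: it emerges as $2\nu_1[\sigma\pi^* - \sigma^*\pi + c_2(\pi^*-\pi)]$, and two non-routine reductions are needed. First, one must recognize the cubic symmetric combination $\sigma^*\pi - \sigma\pi^* = \pi\sigma^* - \pi^*\sigma = (\mu_1-\mu_1^*)|\mu_2|^2 + (\mu_2-\mu_2^*)|\mu_1|^2$ in order to produce the leading term $8i\nu_1(\ldots)$ of the fourth equation. Second, the remaining $c_2$-terms together with the $\f{1}{2}ic_2\nu_{1t}$ transferred from the left-hand side must be re-expressed as $-4c_2\nu_{2x} - 4c_2^2\nu_{1x}$ by back-substituting the already-derived relations for $\nu_{1x}$, $\nu_{2x}$, and $\nu_{1t}=\nu_{2x}$; the bookkeeping of these $c_2$ contributions is the delicate part. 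As an independent check, all four equations can be reproduced by differentiating $\nu_1 = pp^*$ and $\nu_2 = i(p^*p' - pp^{*\prime})$ directly and using $p_t = i(p_{xx} + 2|p|^2 p)$ --- which itself follows from the $\lambda^2$ coefficient of the $\Psi_{12t}$ row of~(\ref{psiequations}) --- the third equation then being recognized as the NLS mass-conservation law $\partial_t|p|^2 = \partial_x\nu_2$.
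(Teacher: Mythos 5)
Your proposal is correct and follows exactly the route the paper indicates: substituting the factored forms~(\ref{mupolynomialpsi}) into the $\Psi_{11x}$ and $\Psi_{11t}$ equations of~(\ref{psiequations}), using the trace relation $p' = 2ip(\mu_1+\mu_2+c_2)$ and its conjugate, and matching the $\lambda^1$ and $\lambda^0$ coefficients; your coefficient bookkeeping (including the reduction of the $c_2$ terms to $-4c_2\nu_{2x}-4c_2^2\nu_{1x}$) checks out. The paper states the lemma without detailed computation, so your write-up simply supplies the details of the same argument.
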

\begin{lemma} If $p(x,t)$ is a smooth bounded quasi-periodic solution of the NLS equation~(\ref{nls}), then $\nu_1 (x,t) = |p(x,t)|^2$ must have relative extrema as a function of $x$ and $t.$  Critical points of $\nu_1 (x,t)$ occur at distinguished values of the Dirichlet eigenvalues. In particular, equations~(\ref{nuequations}) imply that
\begin{equation}
\f{\partial \nu_1}{\partial x} = \f{\partial \nu_1}{\partial t} = 0 \Leftrightarrow \mbox{(i)}\,\mu_1=\mu_1^*, \,\mu_2= \mu_2^* \,\,\mbox{or}\,\mbox{(ii)}\, \mu_2^* = \mu_1.
\label{muc}
\end{equation}
\label{criticalpoints} 
\end{lemma}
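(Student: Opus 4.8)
The plan is to reduce the vector condition $\partial_x\nu_1=\partial_t\nu_1=0$ to a pair of real algebraic equations in the real and imaginary parts of the Dirichlet eigenvalues and then solve that system directly. The qualitative claim comes first and is essentially free: since $\nu_1=|p|^2$ is a smooth, real, nonnegative function and $p$ is quasi-periodic, $\nu_1$ descends to a smooth function on a compact fundamental torus, so it attains a maximum and a minimum, and at any such extremum both $\partial_x\nu_1$ and $\partial_t\nu_1$ vanish. What remains is to identify the zero set of $(\partial_x\nu_1,\partial_t\nu_1)$ in terms of the $\mu_j$.

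Next I would write $\mu_j=\alpha_j+i\beta_j$ with $\alpha_j,\beta_j\in\mathbb{R}$ and substitute into the first and third equations of~(\ref{nuequations}), using $\partial_t\nu_1=\partial_x\nu_2$. With $\mu_j-\mu_j^*=2i\beta_j$ and $\mu_1\mu_2-\mu_1^*\mu_2^*=2i\,\mathrm{Im}(\mu_1\mu_2)=2i(\alpha_1\beta_2+\alpha_2\beta_1)$, the two derivatives collapse to the manifestly real expressions
\[
\partial_x\nu_1 = -4\nu_1(\beta_1+\beta_2),\qquad
\partial_t\nu_1 = 8\nu_1\big[\alpha_1\beta_2+\alpha_2\beta_1+c_2(\beta_1+\beta_2)\big].
\]
Checking that $\partial_t\nu_1$ indeed comes out real is a useful consistency test on the algebra, since the raw formula from~(\ref{nuequations}) is written with explicit factors of $i$.

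Assuming $\nu_1>0$ at the critical point, I would cancel $\nu_1$ to get the system $\beta_1+\beta_2=0$ and $\alpha_1\beta_2+\alpha_2\beta_1+c_2(\beta_1+\beta_2)=0$. The first equation forces $\beta_2=-\beta_1$; the $c_2$ term then drops and the second equation factors as $\beta_1(\alpha_2-\alpha_1)=0$. Hence either $\beta_1=\beta_2=0$, i.e.\ $\mu_1$ and $\mu_2$ are both real (case~(i)), or $\alpha_1=\alpha_2$ together with $\beta_2=-\beta_1$, i.e.\ $\mu_2=\mu_1^*$ (case~(ii)). The factorization shows these two branches are exhaustive. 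The converse is immediate: substituting either case back into the two displayed formulas makes both partials vanish.

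The main obstacle is the treatment of $\nu_1=0$. Where $p$ vanishes, the trace formulas~(\ref{trace})--(\ref{traceinverse}) degenerate (they divide by $p$), so the Dirichlet eigenvalues need not satisfy (i) or (ii) there, and the cancellation of $\nu_1$ above is not justified. I would handle this by restricting the equivalence to critical points where $\nu_1>0$ — which is where the maximum of a nontrivial solution, and the minimum of a solution bounded away from zero, are located — or by invoking the fact that for the smooth two-phase solutions under consideration the minimum modulus is strictly positive (as the explicit formula of Theorem~\ref{finaltheorem} will confirm), so that every relevant extremum has $\nu_1>0$ and the reduction to the real $2\times2$ system is legitimate.
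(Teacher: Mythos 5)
Your proof is correct and takes the same route the paper intends: the paper states this lemma with no proof at all, treating it as an immediate computation from equations~(\ref{nuequations}), and your real/imaginary decomposition $\mu_j = \alpha_j + i\beta_j$, giving $\partial_x\nu_1 = -4\nu_1(\beta_1+\beta_2)$ and $\partial_t\nu_1 = 8\nu_1\bigl[\alpha_1\beta_2+\alpha_2\beta_1+c_2(\beta_1+\beta_2)\bigr]$ and the factorization $\beta_1(\alpha_2-\alpha_1)=0$, is exactly that computation, with the equivalence correctly restricted to $\nu_1>0$ (a restriction the paper leaves implicit but uses later, e.g.\ in the hypotheses of Theorem~\ref{nuextremalemma}). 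One caveat: your alternative justification for dismissing $\nu_1=0$ misstates Theorem~\ref{finaltheorem}, whose minimum modulus is zero whenever $s_3 \leq s_1+s_2$, so you should rely on your primary fix (restricting the equivalence to critical points with $\nu_1>0$) rather than on any claim that the minimum modulus is always strictly positive.
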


Notice that relative extrema of $\nu_1$ are the possible bounds for the oscillations of a smooth quasi-periodic two-phase solution.  These relative extrema can only occur at the distinguished values of the Dirichlet eigenvalues.
\begin{theorem}
 If $p(x,t)$ is a smooth bounded two-phase solution of the NLS equation~(\ref{nls}), then $\nu_1 (x,t) = |p(x,t)|^2$ must oscillate on an interval of non-negative values whose endpoints are relative extrema of $\nu_1 (x,t).$  If $\nu_1 >0$ and $\mu_1 \neq \mu_2,$   then $\nu_1$ has a relative extremum  only if $\mu_1 = \mu_1^*$ and $\mu_2 = \mu_2^*.$
\label{nuextremalemma}
\end{theorem}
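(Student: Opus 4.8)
The plan is to treat the two assertions in turn. The first is a soft consequence of quasi-periodicity. Since $p$ is a smooth bounded two-phase solution, $\nu_1 = |p|^2$ is a smooth non-negative function of the form $\nu_1(x,t) = V(\theta_1,\theta_2)$, where $V$ is periodic on the torus $\mathbb{T}^2$ and $\theta_j = \kappa_j x + \omega_j t$. By compactness $V$ attains its maximum and minimum on $\mathbb{T}^2$, so the range of $\nu_1$ is a closed interval contained in $[0,\infty)$; at the extremal phases $\partial_{\theta_1} V = \partial_{\theta_2} V = 0$, and provided the frequency matrix with rows $(\kappa_j,\omega_j)$ is nonsingular this pulls back to $\partial_x \nu_1 = \partial_t \nu_1 = 0$. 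Hence the endpoints of the oscillation interval are relative extrema of $\nu_1$, which is the first claim.

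For the second claim I would argue by elimination. A relative extremum is in particular a critical point, so by Lemma~\ref{criticalpoints} it is of type (i), with $\mu_1 = \mu_1^*$ and $\mu_2 = \mu_2^*$, or of type (ii), with $\mu_2 = \mu_1^*$. It therefore suffices to show that, when $\nu_1 > 0$ and $\mu_1 \neq \mu_2$, a type (ii) critical point cannot be a relative extremum, which I plan to do by computing the full Hessian of $\nu_1$ in $(x,t)$ and showing it is indefinite. A convenient first step is to rewrite the first two relations of~(\ref{nuequations}) in the compact form
\begin{equation}
\partial_x \nu_1 = -4\nu_1\,\mathrm{Im}(\mu_1+\mu_2), \qquad \partial_t \nu_1 = 8\nu_1\,\mathrm{Im}(\mu_1\mu_2 + c_2(\mu_1+\mu_2)),
\end{equation}
which immediately exhibits type (ii) as a critical point, since $\mu_2 = \mu_1^*$ makes both $\mu_1+\mu_2$ and $\mu_1\mu_2$ real.

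The heart of the argument is the second differentiation. Writing $\Psi_{11}(\lambda) = i R(\lambda)$ with $R$ the real cubic of~(\ref{mupolynomialpsi}), and setting $\mu_1 = a + ib$ with $b = \mathrm{Im}\,\mu_1 \neq 0$, the Dubrovin equations~(\ref{dubrovinequations}) collapse at a type (ii) point to the clean identities $\partial_x \mu_1 = R(\mu_1)/b$ and $\partial_x \mu_2 = -(\partial_x\mu_1)^*$, because $R$ has real coefficients and $\mu_2 = \mu_1^*$ forces $R(\mu_2) = R(\mu_1)^*$. Setting $\partial_x\mu_1 = u + iv$ and using the $t$-Dubrovin equations $\partial_t\mu_j = -2(\mu_{3-j}+c_2)\partial_x\mu_j$, one finds that $\partial_x(\mu_1+\mu_2)$, $\partial_t(\mu_1+\mu_2)$, $\partial_x(\mu_1\mu_2)$ and $\partial_t(\mu_1\mu_2)$ are all purely imaginary, so that differentiating the compact relations above (the terms proportional to $\partial_x\nu_1$ and $\partial_t\nu_1$ dropping out at the critical point) gives the three Hessian entries as explicit real multiples of $\nu_1$ that are linear in $u$ and $v$. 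Forming the Hessian determinant, the cross terms cancel and one is left with
\begin{equation}
(\partial_x^2\nu_1)(\partial_t^2\nu_1) - (\partial_x\partial_t\nu_1)^2 = -256\,\nu_1^2\,b^2\,(u^2+v^2),
\end{equation}
which is strictly negative whenever $\nu_1 > 0$, $b \neq 0$, and $\partial_x\mu_1 \neq 0$. A negative Hessian determinant forces a saddle, so a type (ii) point is not a relative extremum; this eliminates type (ii) and leaves only type (i), proving the theorem.

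The main obstacle is twofold. Most of the labor is in the second-derivative computation, where every term must be retained (the contributions proportional to $c_2$ and to $\partial_x\nu_1$ conspire, and the symmetry $\partial_x\partial_t\nu_1 = \partial_t\partial_x\nu_1$ serves as a useful internal check). The more delicate point is the degenerate sub-case $\partial_x\mu_1 = 0$, equivalently $R(\mu_1)=0$, in which the determinant and all three Hessian entries vanish; here $\mu_1$, and by conjugation $\mu_2=\mu_1^*$, coincide with branch points of the invariant spectral curve, since the characteristic polynomial $\Psi_{11}^2+\Psi_{12}\Psi_{21}$ reduces to $\Psi_{11}(\mu_1)^2 = 0$ at $\lambda=\mu_1$. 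I expect to dispatch this by arguing that a simultaneous collision of both Dirichlet eigenvalues with a conjugate pair of branch points is a degeneration of the genus-two surface and does not occur for a genuine smooth two-phase solution, or else by examining the leading nonvanishing terms of $\nu_1$ near such a point to confirm it is still not an extremum.
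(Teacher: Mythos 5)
Your proposal is correct and follows essentially the same route as the paper: restrict critical points to types (i)/(ii) via Lemma~\ref{criticalpoints}, then show a type (ii) point is a saddle by computing the Hessian determinant of $\nu_1$ --- and your formula agrees exactly with the paper's, since at a type (ii) point $\partial_x \mu_1 = \Psi_{11}(\mu_1)/(i b),$ so that $-256\,\nu_1^2 b^2 (u^2+v^2) = -256\,\nu_1^2 |\Psi_{11}(\mu_1)|^2.$ The degenerate sub-case you flag ($\Psi_{11}(\mu_1)=0$) is closed by the first of your two suggestions: at a type (ii) point $\Psi_{12}$ and $\Psi_{21}$ share the roots $\mu_1, \mu_1^*,$ so $\mathscr{R}(\lambda) = -\Psi_{11}(\lambda)^2 + \nu_1 (\lambda-\mu_1)^2(\lambda-\mu_1^*)^2,$ and since $\Psi_{11}/i$ is a real cubic, $\Psi_{11}(\mu_1)=0$ would force $\mu_1$ and $\mu_1^*$ to be double roots of $\mathscr{R},$ contradicting the paper's standing assumption that the six branch points of $\mathscr{K}_2$ are distinct. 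This is precisely the tacit assumption under which the paper writes its strict inequality $-256\,\nu_1^2|\Psi_{11}(\mu_1)|^2 < 0$ without comment, so your treatment is, if anything, more careful than the original.
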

\begin{proof}
Explicit calculation using the Dubrovin equations~(\ref{dubrovinequations}) and equations~(\ref{nuequations}) shows that
if $\mu_1 \neq \mu_2$ and (i) $\mu_1 = \mu_1^*$ and $\mu_2=\mu_2^*$ or (ii) $\mu_1 = \mu_2^*,$  then
\begin{equation}
\f{\partial^2 \nu_1}{\partial x^2} \f{\partial^2 \nu_1}{\partial t^2} - (\f{\partial^2 \nu_1}{\partial x \partial t})^2 = 256 \nu_1^2 \Psi_{11}(\mu_1) \Psi_{11}(\mu_2) \in \mathbb{R}.
\end{equation}
However, if $\mu_1 \neq \mu_2$ and $\mu_1=\mu_2^*$ and $\nu_1>0,$ then
\begin{equation}
\f{\partial^2 \nu_1}{\partial x^2} \f{\partial^2 \nu_1}{\partial t^2} - (\f{\partial^2 \nu_1}{\partial x \partial t})^2 = -256 \nu_1^2 |\Psi_{11} (\mu_1)|^2<0,
\end{equation}
viz., there is a saddle point, instead of a relative extremum, at the critical point.
\end{proof}
It is worth noting that in order to obtain a similar result for $N-$phase solutions with $N \geq 3,$ the solution $p$ of the NLS equation must be considered as a simultaneous solution of $N$ evolutionary flows in the integrable NLS hierarchy of equations, because two conditions on the critical points of the $x-$ and $t-$flows of the solution
are insufficient to imply the condition of equation~(\ref{muc}) on more than two Dirichlet eigenvalues.

\section{Invariant Characteristic Equation}
The characteristic equation of $\Psi$ is an invariant of both the $x-$flow and the $t-$flow of the solutions.  As such, it provides the integrals of motion necessary to 
integrate the differential equations.  In particular, for the two-phase solutions, we seek the explicit parametrization of the $\mu_1$ and $\mu_2$ trajectories.
The characteristic equation of $\Psi,$ viz., the invariant spectral curve,  as defined by equations~(\ref{mupolynomialpsi}), is
\begin{equation}
\det (i w I - \Psi) = -w^2 + \mathscr{R}(\lambda) = 0,
\label{chareqn}
\end{equation}
where
\begin{equation}
\mathscr{R}(\lambda) = -\Psi_{11}^2 - \Psi_{12} \Psi_{21} = \prod\limits^6_{i=1} (\lambda-\lambda_i),
\label{Reqn}
\end{equation}
and it  naturally defines a hyperelliptic Riemann surface of arithmetic genus two with points $P=(\lambda, w(\lambda)) \in \mathbb{C}^2$ lying on a complex algebraic curve $\mathscr{K}_2$ defined by  equation~(\ref{chareqn}).  The curve $\mathscr{K}_2$ is a branched two-sheeted covering of the
Riemann sphere with two points over the point at infinity; it is assumed that the curve is nonsingular, i.e., the branch points $\lambda_i,$ for $i=1, \ldots, 6,$  are distinct.   Let the point $\infty^+$ be defined as the point over infinity with $\lambda = \infty$ and  $w(\lambda) =  \lambda^3 + O(\lambda^2),$ similarly $\infty^-$ is the point over $\lambda = \infty$ with $w(\lambda) = - \lambda^3+ O(\lambda^2).$
The curve $\mathscr{K}_2$ admits the usual hyperelliptic involution corresponding to sheet interchange,
\begin{equation}
\iota : \mathscr{K}_2 \rightarrow \mathscr{K}_2, \hspace{.2in} \iota (\lambda, w(\lambda)) = (\lambda, -w(\lambda))
\end{equation}
and, because the coefficients of $\mathscr{R}(\lambda)$ are real, an anti-holomorphic involution
\begin{equation}
*: \mathscr{K}_2 \rightarrow \mathscr{K}_2, \hspace{.2in} *(\lambda,w(\lambda)) = (\lambda^*, w(\lambda)^*),
\label{star}
\end{equation}
where we use the same symbol $*$ for the involution acting on points of $\mathscr{K}_2,$ as well as for complex conjugation of complex numbers. Note that 
the anti-holomorphic involution~(\ref{star}) leaves the sheets of the covering of the Riemann sphere unchanged, as can be seen by considering the action of $*$ on points in the vicinity of $P^+_\infty,$ viz., if $\lambda \in \mathbb{R},$ then
\begin{equation}
*(\lambda, w(\lambda)) = (\lambda^*, w(\lambda)^*) = (\lambda, w(\lambda)).
\end{equation}
The symmetry of the curve $\mathscr{K}_2$ expressed in the existence of the anti-holomorphic involution $*$ places reality conditions on the branch points of $\mathscr{K}_2,$
the integrals of motion of the $x-$ and $t-$flows.

\begin{corollary}[Real Curve]
$\mathscr{K}_2$ must be a real algebraic curve, viz., the branch points are either real or come in complex-conjugate pairs. 
\label{realcorollary2}
\end{corollary}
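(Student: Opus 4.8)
The plan is to reduce the corollary to the elementary fact that a polynomial with real coefficients has its zeros either real or arranged in complex-conjugate pairs. Since the branch points $\lambda_1,\dots,\lambda_6$ are by definition the zeros of the sextic $\mathscr{R}(\lambda)$ of equation~(\ref{Reqn}), it suffices to prove that $\mathscr{R}(\lambda)$ has real coefficients, which is equivalent to the single functional identity $\mathscr{R}(\lambda^*)^* = \mathscr{R}(\lambda).$ Everything therefore comes down to verifying this identity from the reality data already established.

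First I would record the conjugation symmetries of the entries of $\Psi.$ From the explicit formula for $\Psi_{11}$ in equation~(\ref{mupolynomialpsi}), together with the reality of the constants $c_0,c_1,c_2$ and of the variables $\nu_1,\nu_2,$ one can write $\Psi_{11}(\lambda) = -i\,Q(\lambda)$ with $Q$ a monic cubic having real coefficients; conjugating then yields $\Psi_{11}(\lambda^*)^* = -\Psi_{11}(\lambda).$ For the off-diagonal entries I would invoke the Reality Condition of Theorem~\ref{reality1}, namely $\Psi_{12}(\lambda) = -(\Psi_{21}(\lambda^*))^*,$ and---after replacing $\lambda$ by $\lambda^*$ and conjugating---read off the companion relations $\Psi_{12}(\lambda^*)^* = -\Psi_{21}(\lambda)$ and $\Psi_{21}(\lambda^*)^* = -\Psi_{12}(\lambda).$

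Next I would substitute these three symmetries into $\mathscr{R} = -\Psi_{11}^2 - \Psi_{12}\Psi_{21}.$ The diagonal contribution reproduces $-\Psi_{11}(\lambda)^2$ because the sign in the symmetry for $\Psi_{11}$ is squared away, while the cross term becomes $-(-\Psi_{21}(\lambda))(-\Psi_{12}(\lambda)) = -\Psi_{12}(\lambda)\Psi_{21}(\lambda),$ the two minus signs cancelling. Summing the two pieces gives $\mathscr{R}(\lambda^*)^* = \mathscr{R}(\lambda),$ which is exactly the realness needed and is the precise content behind the anti-holomorphic involution~(\ref{star}). The concluding step is then a matter of factorization: from $\mathscr{R}(\lambda) = \prod_{i=1}^{6}(\lambda-\lambda_i)$ one gets $\mathscr{R}(\lambda_i^*) = \mathscr{R}(\lambda_i)^* = 0,$ so complex conjugation permutes the branch points and each $\lambda_i$ is either real or paired with a distinct branch point $\lambda_i^*.$

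I do not expect a genuine obstacle here; the argument is short once the symmetries are in place. The only point demanding care is the sign bookkeeping in the conjugation identities---in particular the minus sign in $\Psi_{11}(\lambda^*)^* = -\Psi_{11}(\lambda)$ and the cancellation of the two minus signs in the cross term---since these must line up so that $\mathscr{R},$ and not $-\mathscr{R},$ is reproduced. A conceptually equivalent and perhaps cleaner route is to argue geometrically: the anti-holomorphic involution $*$ of equation~(\ref{star}) preserves $\mathscr{K}_2$ and fixes the locus $w=0,$ hence carries each branch point $(\lambda_i,0)$ to $(\lambda_i^*,0),$ so the branch-point set is invariant under conjugation.
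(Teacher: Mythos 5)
Your proposal is correct and follows essentially the same route as the paper: the corollary there is drawn directly from the fact that $\mathscr{R}(\lambda) = -\Psi_{11}^2 - \Psi_{12}\Psi_{21}$ has real coefficients (a consequence of the reality condition of Theorem~\ref{reality1} and the real constants $c_0,c_1,c_2,\nu_1,\nu_2$), which is exactly the identity $\mathscr{R}(\lambda^*)^* = \mathscr{R}(\lambda)$ underlying the anti-holomorphic involution~(\ref{star}). You merely make explicit the sign bookkeeping that the paper leaves implicit, so no substantive difference exists between the two arguments.
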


The Dubrovin equations~(\ref{dubrovinequations}) can now be integrated~\cite{dubr 81} on the hyperelliptic Riemann surface.
\begin{lemma}[Dubrovin equations reprised]
 Equation~(\ref{Reqn}) implies 
\begin{equation}
\Psi_{11}(\mu_j) = i \sqrt{\mathscr{R}(\mu_j)},
\label{psi11eqn}
\end{equation}
for $j=1,2.$
Therefore, the motion of the Dirichlet eigenvalues is defined on $\mathscr{K}_2$ by the Dubrovin equations~(\ref{dubrovinequations}),
\begin{equation}
\begin{array}{rcl}
\f{\partial \mu_1}{\partial x} & = & 2i \f{\sqrt{\mathscr{R} (\mu_1)}}{\mu_1-\mu_2}, \\[.1in]
\f{\partial \mu_2}{\partial x} & = & 2i \f{\sqrt{\mathscr{R} (\mu_2)}}{\mu_2 - \mu_1}, \\[,1in]
\f{\partial \mu_1}{\partial t} & = & -2 (\mu_2 + c_2) \f{\partial \mu_1}{\partial x}, \\[.1in]
\f{\partial \mu_2}{\partial t} & = & -2 (\mu_1 + c_2) \f{\partial \mu_2}{\partial x}.
\end{array}
\label{dubrovinequations2}
\end{equation}
\end{lemma}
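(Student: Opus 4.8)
The plan is to obtain the algebraic relation~(\ref{psi11eqn}) directly from the definition of the invariant curve in equation~(\ref{Reqn}), and then to substitute it into the already-established Dubrovin equations~(\ref{dubrovinequations}). The key observation is that the Dirichlet eigenvalues $\mu_1,\mu_2$ are by definition the two roots of $\Psi_{12}(\lambda)=0$, so that in the factored form of equation~(\ref{mupolynomialpsi}) we have $\Psi_{12}(\mu_j)=0$ for $j=1,2$. Since $\mathscr{R}(\lambda) = -\Psi_{11}^2 - \Psi_{12}\Psi_{21}$, evaluating at $\lambda=\mu_j$ annihilates the product term $\Psi_{12}\Psi_{21}$, leaving the purely algebraic identity
\begin{equation}
\mathscr{R}(\mu_j) = -\Psi_{11}(\mu_j)^2, \qquad j=1,2.
\end{equation}

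From here I would take square roots to conclude $\Psi_{11}(\mu_j) = \pm i\sqrt{\mathscr{R}(\mu_j)}$. The content of equation~(\ref{psi11eqn}) is then the \emph{choice} of the positive branch, and this is where the only genuine subtlety lies. The polynomial $\Psi_{11}$ is single-valued in $\lambda$, whereas $\sqrt{\mathscr{R}(\lambda)}$ is two-valued on the Riemann sphere; the identity $\Psi_{11}(\mu_j) = i\sqrt{\mathscr{R}(\mu_j)}$ is therefore properly read as lifting each Dirichlet eigenvalue to a definite point $(\mu_j, w_j)\in\mathscr{K}_2$ on the two-sheeted cover $\mathscr{K}_2$ defined by~(\ref{chareqn}). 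In other words, the sign selects the sheet, and this is exactly the data of the Dirichlet divisor that renders the motion single-valued once it is transferred to $\mathscr{K}_2$. I would make this explicit by noting that $w = \Psi_{11}/i$ along the trajectory is the consistent branch determined by the initial lift, so that $\sqrt{\mathscr{R}(\mu_j)}$ means $w(\mu_j)/ (\text{appropriate sign})$; the statement is then an identification of functions on $\mathscr{K}_2$ rather than on $\mathbb{C}$.

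With~(\ref{psi11eqn}) in hand, the derivation of~(\ref{dubrovinequations2}) is immediate substitution: replacing $\Psi_{11}(\mu_1)$ and $\Psi_{11}(\mu_2)$ in the first two lines of~(\ref{dubrovinequations}) by $i\sqrt{\mathscr{R}(\mu_1)}$ and $i\sqrt{\mathscr{R}(\mu_2)}$ respectively yields the first two equations of~(\ref{dubrovinequations2}), while the two time equations are left unchanged because they are already expressed in terms of $\partial\mu_j/\partial x$ and the spatial factors $-2(\mu_2+c_2)$ and $-2(\mu_1+c_2)$. No new computation is required for the $t$-flow.

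The main obstacle, such as it is, is therefore not computational but conceptual: justifying that the branch choice in~(\ref{psi11eqn}) is globally consistent along the $\mu$-trajectory, so that the square-root appearing in~(\ref{dubrovinequations2}) defines a meromorphic flow on the genus-two surface $\mathscr{K}_2$ rather than a merely local expression. I would address this by observing that the sheet of $(\mu_j,w_j)$ can only change when $\mu_j$ passes through a branch point $\lambda_i$, i.e.\ when $\mathscr{R}(\mu_j)=0$; at such a turning point $\partial\mu_j/\partial x$ vanishes in the numerator while the square root changes sign, so the trajectory smoothly reverses and remains on a well-defined lift to $\mathscr{K}_2$. This is precisely the standard mechanism by which the Dubrovin equations on a hyperelliptic curve become single-valued, and it is what permits the subsequent integration via the Jacobi inversion problem.
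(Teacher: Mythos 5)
Your proposal is correct and follows exactly the argument the paper leaves implicit: since $\mu_1,\mu_2$ are the roots of $\Psi_{12}$, evaluating $\mathscr{R}=-\Psi_{11}^2-\Psi_{12}\Psi_{21}$ at $\lambda=\mu_j$ kills the product term, giving $\Psi_{11}(\mu_j)^2=-\mathscr{R}(\mu_j)$, and substitution into the Dubrovin equations~(\ref{dubrovinequations}) yields~(\ref{dubrovinequations2}). Your additional discussion of the sign as a choice of sheet, i.e.\ a lift of each $\mu_j$ to a point of $\mathscr{K}_2$ whose consistency can only fail at branch points where the flow has a turning point, is exactly the interpretation the paper relies on later (e.g.\ the $\sigma_i=\pm1$ sheet assignments in Lemma~\ref{lemmanu4}), so it is a faithful and slightly more careful rendering of the same proof.
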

\begin{lemma}
The $x-$flow and $t-$flow of $\mu_1$ and $\mu_2$ on $\mathscr{K}_2$ are linearized by the  Abel mapping,
\begin{equation}
\begin{array}{rcl}
 \int\limits^{\mu_1}_{\mu_{10}} \f{d \mu_1}{\sqrt{\mathscr{R}(\mu_1)}} + \int\limits^{\mu_2}_{\mu_{20}} \f{d \mu_2}{\sqrt{\mathscr{R}(\mu_2)}} &=& 4 i t,\\[.2in]
 \int\limits^{\mu_1}_{\mu_{10}} \f{\mu_1 d \mu_1}{\sqrt{\mathscr{R}(\mu_1)}}+\int\limits^{\mu_2}_{\mu_{20}} \f{\mu_2 d\mu_2}{\sqrt{\mathscr{R}(\mu_2)}}& = & 2 i x-4 ic_2 t,
\end{array}
\label{linearized}
\end{equation}
where $\mu_{10}$ and $\mu_{20}$ are constants of integration, viz., the initial values for $\mu_1$ and $\mu_2.$  
\end{lemma}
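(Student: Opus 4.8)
The plan is to show that the two left-hand expressions, regarded as functions of $x$ and $t$ through the Dirichlet eigenvalues $\mu_1(x,t)$ and $\mu_2(x,t)$, have constant first partial derivatives that match the coefficients on the right, and then to pin down the additive constants using the initial values $\mu_{10},\mu_{20}$. Here one should recognize that $d\mu/\sqrt{\mathscr{R}(\mu)}$ and $\mu\, d\mu/\sqrt{\mathscr{R}(\mu)}$ are exactly the standard basis of holomorphic one-forms on the genus-two surface $\mathscr{K}_2$, so the left-hand sides are Abel-map integrals and the assertion is precisely that the Abel map straightens the $\mu$-dynamics into linear motion on the Jacobian.

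First I would differentiate the unweighted combination with respect to $x$. By the chain rule and the Dubrovin equations \eqref{dubrovinequations2}, the derivative equals $\frac{1}{\sqrt{\mathscr{R}(\mu_1)}}\,\partial_x\mu_1 + \frac{1}{\sqrt{\mathscr{R}(\mu_2)}}\,\partial_x\mu_2$. The essential point is that substituting \eqref{dubrovinequations2} cancels each radical $\sqrt{\mathscr{R}(\mu_j)}$ exactly against the $\sqrt{\mathscr{R}(\mu_j)}$ in the numerator of $\partial_x\mu_j$, leaving $\frac{2i}{\mu_1-\mu_2}+\frac{2i}{\mu_2-\mu_1}=0$. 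The weighted combination produces instead $\frac{2i\mu_1}{\mu_1-\mu_2}+\frac{2i\mu_2}{\mu_2-\mu_1}=2i$. This cancellation is the whole engine of the linearization, and it works only because the two Dubrovin velocities carry the opposite-signed denominators $\mu_1-\mu_2$ and $\mu_2-\mu_1$.

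Next I would repeat the computation for the $t$-flow, now using the last two equations of \eqref{dubrovinequations2}, namely $\partial_t\mu_j = -2(\mu_{3-j}+c_2)\,\partial_x\mu_j$. After the same cancellation of the radicals, the unweighted sum collapses to $\frac{4i(\mu_1-\mu_2)}{\mu_1-\mu_2}=4i$ and the weighted sum to $-4ic_2$, which match the $t$-derivatives of $4it$ and of $2ix-4ic_2 t$, respectively. Combined with the $x$-computation, this shows that the gradient in $(x,t)$ of each left-hand side equals the constant gradient of the corresponding right-hand side, so the two sides differ only by a constant. Evaluating at the base point where $\mu_1=\mu_{10}$ and $\mu_2=\mu_{20}$ (taken at $x=t=0$) makes both integrals vanish and both right-hand sides vanish, so the constants are zero.

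The algebra above is routine; the point that requires care is the branch of $\sqrt{\mathscr{R}(\mu_j)}$. Each $\mu_j$ is a point of $\mathscr{K}_2$, not merely a complex number, and the sign of $\sqrt{\mathscr{R}(\mu_j)}$ is fixed by \eqref{psi11eqn}, i.e.\ $\Psi_{11}(\mu_j)=i\sqrt{\mathscr{R}(\mu_j)}$; the cancellations are legitimate only when the radicals appearing in the Abel integrals are evaluated on the same sheets of $\mathscr{K}_2$ that the Dubrovin equations prescribe. I would also record that, since the left-hand sides are Abel integrals, they are well defined only modulo the period lattice of $\mathscr{K}_2$, so the stated equalities hold modulo periods; the required consistency of the mixed partials, equivalently the commutativity of the $x$- and $t$-flows, follows from the zero-curvature structure already established in \eqref{zero} and ensures that the left-hand sides are genuine single-valued functions of $(x,t)$ on the Jacobian.
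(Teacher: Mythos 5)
Your proposal is correct and is essentially the argument the paper relies on: the paper states this lemma without a written proof, citing the standard integration of the Dubrovin equations~(\ref{dubrovinequations2}), and your computation (differentiating the Abel sums, letting the radicals cancel against the Dubrovin velocities to get constant derivatives $0$, $2i$, $4i$, $-4ic_2$, then fixing the constants at the initial point) is exactly that standard derivation. Your added remarks on sheet choice via $\Psi_{11}(\mu_j)=i\sqrt{\mathscr{R}(\mu_j)}$ and on the integrals being defined modulo periods are consistent with how the paper later uses the result.
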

It is important to remember that the values of $\mu_{10}$ and $\mu_{20}$ are not arbitrary but must satisfy the reality conditions of the integrals of motion and, hence, on the loci of $\mu_1$ and $\mu_2$  by the characteristic equation~(\ref{chareqn}).  The correct determination of these initial conditions is one of the main goals of this paper.

\section{Integrals of motion and the Dirichlet eigenvalues}
The symmetric polynomials of degree $i,$ $\Sigma_i,$ $i = 1, \ldots, 4,$ of $\mu_1, \mu_2, \mu^*_1, \mu^*_2,$ must satisfy the following reality condition because they must come in complex-conjugate pairs.
\begin{corollary}[Symmetric Polynomials of the Dirichlet Eigenvalues]
\begin{equation}
\Sigma_1, \Sigma_2, \Sigma_3, \Sigma_4 \in \mathbb{R}
\end{equation} and 
\begin{equation}
\Sigma_4 = |\mu_1 \mu_2|^2 \geq 0.
\end{equation}
\label{realcorollary3}
\end{corollary}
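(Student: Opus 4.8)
The plan is to exploit the fact that the four-element collection $\{\mu_1, \mu_2, \mu_1^*, \mu_2^*\}$ is, by construction, closed under complex conjugation: conjugation sends $\mu_1 \mapsto \mu_1^*$, $\mu_2 \mapsto \mu_2^*$, $\mu_1^* \mapsto \mu_1$, and $\mu_2^* \mapsto \mu_2$, and so acts on this collection merely as a permutation of its members. Since each elementary symmetric polynomial $\Sigma_i$ is a symmetric function of these four arguments, it is left invariant by any permutation of them; this invariance is precisely the structural input that will force reality.

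First I would compute $\Sigma_i^*$ directly. Complex-conjugating the defining expression for $\Sigma_i$ conjugates each factor and hence replaces the collection $\{\mu_1,\mu_2,\mu_1^*,\mu_2^*\}$ by its conjugate, which is the very same collection in permuted order. By the symmetry of $\Sigma_i$ the numerical value is unchanged, so $\Sigma_i^* = \Sigma_i$ and therefore $\Sigma_i \in \mathbb{R}$ for $i = 1, \ldots, 4$. An equivalent route uses Theorem~\ref{reality1}: there one has $\Psi_{12}\Psi_{21} = -|p|^2 (\lambda-\mu_1)(\lambda-\mu_2)(\lambda-\mu_1^*)(\lambda-\mu_2^*)$, which equals, up to the real factor $-|p|^2$, the monic quartic $\lambda^4 - \Sigma_1 \lambda^3 + \Sigma_2 \lambda^2 - \Sigma_3 \lambda + \Sigma_4$ whose roots are the four eigenvalues. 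Since $\Psi_{12}\Psi_{21} = -\Psi_{11}^2 - \mathscr{R}(\lambda)$ has real coefficients (by equation~(\ref{mupolynomialpsi}), $\Psi_{11}$ is $i$ times a polynomial with real coefficients, so $\Psi_{11}^2$ is real, and $\mathscr{R}$ is real by Corollary~\ref{realcorollary2}), the coefficients $\Sigma_i$ of this quartic must be real as well.

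The remaining claim $\Sigma_4 \geq 0$ follows not from the symmetry argument but from a direct factorization,
\begin{equation}
\Sigma_4 = \mu_1 \mu_2 \mu_1^* \mu_2^* = (\mu_1 \mu_2)(\mu_1 \mu_2)^* = |\mu_1 \mu_2|^2 \geq 0.
\end{equation}
I do not expect a genuine obstacle here, as the result is an immediate structural consequence of the Reality Condition, which already guarantees that the zeros of $\Psi_{21}$ are the complex conjugates of the zeros of $\Psi_{12}$. The only point requiring a moment's care is to confirm that conjugation acts as a permutation of the entire four-element collection rather than within each conjugate pair in isolation; this is clear because the collection coincides with its own conjugate.
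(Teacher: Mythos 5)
Your proposal is correct and matches the paper's reasoning: the paper states this corollary as an immediate consequence of the Reality Condition (Theorem~\ref{reality1}), justifying it with the observation that the Dirichlet eigenvalues ``must come in complex-conjugate pairs,'' which is precisely your argument that conjugation permutes $\{\mu_1,\mu_2,\mu_1^*,\mu_2^*\}$ and hence fixes each symmetric polynomial, together with the factorization $\Sigma_4=(\mu_1\mu_2)(\mu_1\mu_2)^*=|\mu_1\mu_2|^2\geq 0$. Your secondary route via the real coefficients of $\Psi_{12}\Psi_{21}$ is a harmless elaboration of the same fact, not a genuinely different method.
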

Moreover, the algebraic constraints on the real loci of $\mu_{1}$ and $\mu_{2}$ can be made explicit.  Using the invariant
spectral polynomial~(\ref{chareqn}) and definition~(\ref{Reqn}), the symmetric polynomials   of degree $i,$ $\Lambda_i,$ $ i = 1, \ldots, 6,$ of $\lambda_i,$ $i=1, \ldots 6,$ 
the branch points of the invariant spectral curve, can be written in terms of the symmetric polynomials of degree $i,$ $\Sigma_i,$ $i = 1, \ldots, 4,$ of the Dirichlet eigenvalues $\mu_1, \mu_2, \mu^*_1, \mu^*_2.$  
\begin{lemma}[Symmetric Polynomials of Branch Points] 
\begin{equation}
\begin{array}{rcl}
\Lambda_1 & = & -2 c_2, \\[.1in]
\Lambda_2 &  = & 2 c_1 + c_2^2, \\[.1in]
\Lambda_3 & = & -2c_1 c_2 - 2 c_0 + \f{1}{2}   \nu_2 + 2 c_2   \nu_1 +   \nu_1  \Sigma_1,\\[.1in]
\Lambda_4 & = & -\f{1}{2} c_2 \nu_2 -  c_2^2 \nu_1 -  c_1 \nu_1 +\f{1}{4} \nu_1^2 + 2 c_0 c_2 + c_1^2 +  \nu_1 \Sigma_2,\\[.1in]
\Lambda_5 & = & -\f{1}{4} \nu_1 \nu_2 -\f{1}{2} c_2 \nu_1^2 +  c_0 \nu_1 +\f{1}{2}  c_1 \nu_2 - 2 c_0 c_1 +c_1 c_2  \nu_1 + \nu_1 \Sigma_3,\\[.1in]
\Lambda_6 & = & \f{1}{4} c_2 \nu_1 \nu_2 - c_2 c_0  \nu_1-\f{1}{2}  c_0 \nu_2 +\f{1}{4} c_2^2 \nu_1^2 + c_0^2 + \f{1}{16} \nu_2^2 +  \nu_1 \Sigma_4. 
\end{array}
\label{lambdaequations}
\end{equation}
\end{lemma}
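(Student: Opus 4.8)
The plan is to compute $\mathscr{R}(\lambda)$ directly from its definition~(\ref{Reqn}) using the factored forms of $\Psi_{11},\Psi_{12},\Psi_{21}$ given in~(\ref{mupolynomialpsi}), and then to read off the elementary symmetric functions $\Lambda_i$ of the branch points by matching coefficients of powers of $\lambda$. Because $\mathscr{R}(\lambda)=\prod_{i=1}^{6}(\lambda-\lambda_i)$, the $\Lambda_i$ are, up to sign, precisely these coefficients; that is, $\mathscr{R}(\lambda)=\sum_{k=0}^{6}(-1)^{k}\Lambda_k\lambda^{6-k}$ with $\Lambda_0=1$. So the whole lemma reduces to expanding one explicit polynomial identity and collecting terms.

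First I would rewrite $\Psi_{11}$ from~(\ref{mupolynomialpsi}) in monic form as $\Psi_{11}=-i\,R(\lambda)$, where
\[
R(\lambda)=\lambda^{3}+c_2\lambda^{2}+\bigl(c_1-\tfrac{1}{2}\nu_1\bigr)\lambda+\bigl(c_0-\tfrac{1}{4}\nu_2-\tfrac{1}{2}c_2\nu_1\bigr),
\]
so that $-\Psi_{11}^{2}=R(\lambda)^{2}$. Next, using $\nu_1=|p|^{2}$ together with the factorizations $\Psi_{12}=ip(\lambda-\mu_1)(\lambda-\mu_2)$ and $\Psi_{21}=ip^{*}(\lambda-\mu_1^{*})(\lambda-\mu_2^{*})$, I obtain
\[
-\Psi_{12}\Psi_{21}=\nu_1(\lambda-\mu_1)(\lambda-\mu_2)(\lambda-\mu_1^{*})(\lambda-\mu_2^{*})=\nu_1\bigl(\lambda^{4}-\Sigma_1\lambda^{3}+\Sigma_2\lambda^{2}-\Sigma_3\lambda+\Sigma_4\bigr),
\]
since the $\Sigma_i$ are by definition the elementary symmetric polynomials of the four quantities $\mu_1,\mu_2,\mu_1^{*},\mu_2^{*}$. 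Hence the entire characteristic polynomial decomposes as $\mathscr{R}(\lambda)=R(\lambda)^{2}+\nu_1\,Q(\lambda)$ with $Q$ the quartic above.

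The remaining step is to expand $R(\lambda)^{2}$, add the contribution $\nu_1 Q(\lambda)$ to the coefficients of $\lambda^{4},\lambda^{3},\lambda^{2},\lambda,\lambda^{0}$, and equate the coefficient of $\lambda^{6-k}$ with $(-1)^{k}\Lambda_k$. The $\lambda^{5}$ and $\lambda^{4}$ coefficients come only from $R^{2}$ (together with the leading $\nu_1\lambda^{4}$ of $\nu_1 Q$), yielding at once $\Lambda_1=-2c_2$ and $\Lambda_2=2c_1+c_2^{2}$; the cubic, quadratic, linear, and constant coefficients mix the cross products of the constant, linear, and quadratic parts of $R$ with the $\nu_1\Sigma_i$ terms and produce the stated formulas for $\Lambda_3$ through $\Lambda_6$.

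The main obstacle is purely the bookkeeping in squaring the cubic $R(\lambda)$: each $\Lambda_i$ with $i\geq 3$ is a sum of several cross terms, so one must track the sign conventions relating polynomial coefficients to elementary symmetric functions carefully — for instance, the cancellation $-\nu_1+\nu_1=0$ in the $\lambda^{4}$ coefficient that removes the explicit $\nu_1$-dependence from $\Lambda_2$. No genuinely new idea is required beyond the decomposition $\mathscr{R}=R^{2}+\nu_1 Q$ and the identification of the $\Sigma_i$ as the coefficients of $Q$; the verification is a finite, deterministic expansion.
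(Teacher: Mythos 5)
Your proposal is correct and follows the same route the paper intends: expanding $\mathscr{R}(\lambda) = -\Psi_{11}^2 - \Psi_{12}\Psi_{21}$ via the factored forms of equation~(\ref{mupolynomialpsi}), writing $\mathscr{R} = R^2 + \nu_1 Q$ with $Q$ the quartic whose coefficients are the $\Sigma_i$, and matching coefficients of $\lambda^{6-k}$ against $(-1)^k\Lambda_k$. I verified the expansion term by term (including the $-\nu_1+\nu_1$ cancellation in $\Lambda_2$ and the cross terms in $\Lambda_3$ through $\Lambda_6$), and it reproduces exactly the stated formulas.
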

Equation~(\ref{nu2equation}) and the first three equations of~(\ref{lambdaequations}) determine the three constants $c_0, c_1, c_2,$ in terms of the branch points.
\begin{lemma}[Constants of Integration]
\begin{equation}
\begin{array}{rcl}
c_2 & = & -\f{1}{2} \Lambda_1, \\[.1in]
c_1 & = & \f{1}{2} \Lambda_2-\f{1}{8} \Lambda_1^2,\\[.1in]
c_0 & = & -\f{1}{2} \Lambda_3 + \f{1}{4} \Lambda_2 \Lambda_1 - \f{1}{16} \Lambda_1^3.
\end{array}
\end{equation}
\end{lemma}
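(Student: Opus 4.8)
The plan is to invert the first three relations of equation~(\ref{lambdaequations}) in sequence, treating them as a triangular system in the unknowns $c_2, c_1, c_0$. Since $\Lambda_1$ depends only on $c_2$, the relation for $\Lambda_2$ depends only on $c_1$ and $c_2$, and the relation for $\Lambda_3$ on $c_0, c_1, c_2$ (together with the variables $\nu_1, \nu_2, \Sigma_1$), the system is lower-triangular and can be solved one constant at a time.

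First I would read off $c_2 = -\f{1}{2}\Lambda_1$ directly from the first equation of~(\ref{lambdaequations}). Substituting this into the second equation $\Lambda_2 = 2c_1 + c_2^2$ and solving for $c_1$ gives $c_1 = \f{1}{2}\Lambda_2 - \f{1}{2}c_2^2 = \f{1}{2}\Lambda_2 - \f{1}{8}\Lambda_1^2$, using $c_2^2 = \f{1}{4}\Lambda_1^2$. These two steps are immediate.

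The decisive step is the third equation. A priori $\Lambda_3$ carries the variable terms $\f{1}{2}\nu_2 + 2c_2\nu_1 + \nu_1\Sigma_1$, but $\Lambda_3$ is an integral of motion and must be constant, so these contributions are forced to cancel. This cancellation is exactly what equation~(\ref{nu2equation}) supplies: writing $\Sigma_1 = \mu_1 + \mu_2 + \mu_1^* + \mu_2^*$ and substituting $\nu_2 = -2\nu_1(\Sigma_1 + 2c_2)$ gives $\f{1}{2}\nu_2 = -\nu_1\Sigma_1 - 2c_2\nu_1$, whence $\f{1}{2}\nu_2 + 2c_2\nu_1 + \nu_1\Sigma_1 = 0$. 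Thus $\Lambda_3$ collapses to $\Lambda_3 = -2c_1 c_2 - 2c_0$. Solving for $c_0$ and back-substituting the expressions already found for $c_1$ and $c_2$ yields $c_0 = -\f{1}{2}\Lambda_3 - c_1 c_2 = -\f{1}{2}\Lambda_3 + \f{1}{4}\Lambda_2\Lambda_1 - \f{1}{16}\Lambda_1^3$, which is the claimed formula.

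I do not anticipate a genuine computational obstacle, as each manipulation is elementary once the triangular structure is recognized. The one subtlety worth flagging is the role of~(\ref{nu2equation}): without it the third relation would retain the dynamical variables $\nu_1, \nu_2, \Sigma_1$, and $c_0$ could not be a true constant of integration. The real content of the lemma is therefore the consistency check that~(\ref{nu2equation}) annihilates precisely the variable part of $\Lambda_3$, confirming that the branch-point data consistently determine the integration constants.
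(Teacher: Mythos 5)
Your proposal is correct and is exactly the argument the paper intends: the text preceding the lemma states that equation~(\ref{nu2equation}) together with the first three equations of~(\ref{lambdaequations}) determine $c_0, c_1, c_2$, which is precisely your triangular inversion with~(\ref{nu2equation}) cancelling the variable part $\f{1}{2}\nu_2 + 2c_2\nu_1 + \nu_1\Sigma_1$ of $\Lambda_3$. Your observation that this cancellation is the real content of the lemma is apt, since without it $c_0$ would not be a constant.
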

Equation~(\ref{nu2equation}) and the last three equations of~(\ref{lambdaequations}) show that loci of the Dirichlet eigenvalues can be parametrized in terms of $\nu_1$ and $\nu_2$ by solving for $\Sigma_i,$ $ i = 1, \ldots, 4.$
\begin{lemma}[Symmetric Polynomials of Dirichlet Eigenvalues]  
\begin{equation}
\begin{array}{rcl}
\Sigma_1 & = & -\f{1}{2} \f{\nu_2}{\nu_1}+\Lambda_1,\\[.1in]
\Sigma_2 & = & -\f{1}{4} \Lambda_1 \f{\nu_2}{\nu_1} -\f{1}{4} \nu_1+ \f{-\f{5}{64} \Lambda_1^4 +\f{3}{8} \Lambda_1^2 \Lambda_2 -\f{1}{4} \Lambda_2^2 -\f{1}{2}\Lambda_1\Lambda_3+ \Lambda_4}{\nu_1}+\f{1}{2}\Lambda_2 + \f{1}{8} \Lambda_1^2,\\[.1in]
\Sigma_3 & = & \f{1}{4}  \nu_2+(\f{1}{16} \Lambda_1^2-\f{1}{4}\Lambda_2) \f{\nu_2}{\nu_1} -\f{1}{4} \Lambda_1 \nu_1\\
&& +\f{\f{1}{64} \Lambda_1^5-\f{1}{8}\Lambda_1^3 \Lambda_2+\f{1}{8} \Lambda_1^2 \Lambda_3+\f{1}{4} \Lambda_1 \Lambda_2^2-\f{1}{2}\Lambda_2\Lambda_3+\Lambda_5}{\nu_1} +\f{1}{2}\Lambda_3,\\[.1in]
\Sigma_4 & =  & \f{1}{8} \Lambda_1 \nu_2-\f{1}{16} \f{\nu_2^2}{\nu_1} -\f{1}{16} \Lambda_1^2\nu_1
+(-\f{1}{32}\Lambda_1^3+\f{1}{8}\Lambda_1\Lambda_2-\f{1}{4}\Lambda_3)\f{\nu_2}{\nu_1}\\
&&+ \f{-\f{1}{256}\Lambda_1^6+\f{1}{32}\Lambda_1^4\Lambda_2-\f{1}{16}\Lambda_1^3\Lambda_3-\f{1}{16}\Lambda_1^2\Lambda_2^2+\f{1}{4} \Lambda_1\Lambda_2\Lambda_3 -\f{1}{4} \Lambda_3^2 + \Lambda_6}{\nu_1}\\
&&+\f{1}{32} \Lambda_1^4-\f{1}{8}\Lambda_1^2\Lambda_2+\f{1}{4} \Lambda_1 \Lambda_3.
\end{array}
\label{sigmaequations}
\end{equation}
\end{lemma}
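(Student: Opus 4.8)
The plan is to read the last four relations of~(\ref{lambdaequations}) --- namely equation~(\ref{nu2equation}) together with the expressions for $\Lambda_4$, $\Lambda_5$, and $\Lambda_6$ --- as an explicitly triangular system in the unknowns $\Sigma_1, \Sigma_2, \Sigma_3, \Sigma_4$ and to invert it term by term. The structural observation that makes this immediate is that each symmetric function $\Sigma_j$ enters only one of these relations and enters it linearly, always through the single product $\nu_1 \Sigma_j$: in~(\ref{lambdaequations}) the term $\nu_1 \Sigma_2$ sits in the $\Lambda_4$ equation, $\nu_1 \Sigma_3$ in the $\Lambda_5$ equation, and $\nu_1 \Sigma_4$ in the $\Lambda_6$ equation, while $\Sigma_1$ is recovered separately from~(\ref{nu2equation}). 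Consequently there is no coupled system to solve; each $\Sigma_j$ is obtained by isolating its term and dividing by $\nu_1$.

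First I would handle $\Sigma_1$. Since $\Sigma_1 = \mu_1 + \mu_2 + \mu_1^* + \mu_2^*$, equation~(\ref{nu2equation}) is $\nu_2 = -2\nu_1(\Sigma_1 + 2 c_2)$, which I solve for $\Sigma_1$ and into which I substitute $c_2 = -\frac{1}{2}\Lambda_1$ from the Constants of Integration lemma to obtain $\Sigma_1 = -\frac{1}{2}\frac{\nu_2}{\nu_1} + \Lambda_1$. This is the same cancellation that already collapsed the $\Lambda_3$ relation to $\Lambda_3 = -2 c_1 c_2 - 2 c_0$ in the derivation of the constants, so the two lemmas are consistent.

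Next, for $j = 2, 3, 4$ I would solve the $\Lambda_4$, $\Lambda_5$, and $\Lambda_6$ equations of~(\ref{lambdaequations}) for $\nu_1 \Sigma_j$ and substitute the closed forms $c_2 = -\frac{1}{2}\Lambda_1$, $c_1 = \frac{1}{2}\Lambda_2 - \frac{1}{8}\Lambda_1^2$, and $c_0 = -\frac{1}{2}\Lambda_3 + \frac{1}{4}\Lambda_1\Lambda_2 - \frac{1}{16}\Lambda_1^3$ supplied by that lemma. After dividing by $\nu_1$, the contributions sort themselves into the three groups displayed in the statement: a term linear in $\nu_1$ coming from each $\nu_1^2$ contribution, a term proportional to $\frac{\nu_2}{\nu_1}$ (and, for $\Sigma_4$, a $\frac{\nu_2^2}{\nu_1}$ term arising from the $\frac{1}{16}\nu_2^2$ piece of the $\Lambda_6$ relation), and a $\Lambda$-dependent remainder consisting of a polynomial in the $\Lambda_i$ divided by $\nu_1$ plus a $\nu_1$-independent polynomial in the $\Lambda_i$.

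The only genuine labor, and the step most prone to arithmetic slips, is the final collection of like terms for $\Sigma_4$: substituting the cubic expression for $c_0$ together with the square $c_0^2$ generates monomials up to degree six in $\Lambda_1$, and one must check that they assemble to the stated coefficients such as $-\frac{1}{256}\Lambda_1^6$ and $\frac{1}{32}\Lambda_1^4\Lambda_2$. I expect this to be routine but lengthy polynomial bookkeeping; the safest route is to group each contribution by the constant $c_0, c_0^2, c_1, c_1^2, c_2, c_2^2, c_0 c_1, c_0 c_2$ from which it arises before assembling the coefficient of each $\Lambda$-monomial and confirming the fractions in~(\ref{sigmaequations}).
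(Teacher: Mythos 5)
Your proposal is correct and is essentially the paper's own (implicit) argument: the paper derives this lemma exactly by solving equation~(\ref{nu2equation}) for $\Sigma_1$ and the $\Lambda_4$, $\Lambda_5$, $\Lambda_6$ relations of~(\ref{lambdaequations}) for $\nu_1\Sigma_2$, $\nu_1\Sigma_3$, $\nu_1\Sigma_4$, then substituting $c_2=-\f{1}{2}\Lambda_1$, $c_1=\f{1}{2}\Lambda_2-\f{1}{8}\Lambda_1^2$, $c_0=-\f{1}{2}\Lambda_3+\f{1}{4}\Lambda_1\Lambda_2-\f{1}{16}\Lambda_1^3$ and dividing by $\nu_1$. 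Your observation that the system is diagonal in the $\Sigma_j$ (each appearing linearly in exactly one relation) is precisely why the paper states the result without further proof.
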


Note that $\Lambda_i \in \mathbb{R},$ $i = 1, \ldots, 6,$  are real parameters and $ \nu_1, \nu_2 \in \mathbb{R}$ are real variables.  Thus the four roots $\mu_1, \mu_2, \mu_1^*, \mu_2^* \in \mathbb{C}$ of the quartic equation
\begin{equation}
\prod\limits^2_{i=1} (\mu - \mu_i) (\mu-\mu^*_i) = \mu^4 - \Sigma_1 \mu^3 + \Sigma_2 \mu^2 - \Sigma_3 \mu + \Sigma_4 = 0,
\label{quartic}
\end{equation}
each lie on a two-real-dimensional manifold parametrized by $\nu_1, \nu_2 \in \mathbb{R}.$  
The explicit solutions for $\mu_1$ and $\mu_2$ can be found using the standard procedure for solving quartic equations.  First the quartic polynomial
in equation~(\ref{quartic}) is reduced by the transformation 
\begin{equation}
 \mu = \hat{\mu} + \f{1}{4} \Sigma_1,
\label{mushift}
\end{equation}
 so that equation~(\ref{quartic}) becomes
\begin{equation}
 \hat{\mu}^4 + \hat{\Sigma}_2 \hat{\mu}^2 -\hat{\Sigma}_3 \hat{\mu} +\hat{\Sigma}_4 = 0,
\label{reducedquartic}
\end{equation}
where
\begin{equation}
\begin{array}{rcl}
\hat{\Sigma}_2 & = & -\f{3}{8} \Sigma_1^2 + \Sigma_2,\\[.1in]
\hat{\Sigma}_3& = &\f{1}{8} \Sigma_1^3-\f{1}{2} \Sigma_1 \Sigma_2 + \Sigma_3,\\[.1in]
\hat{\Sigma}_4 &= &-\f{3}{256} \Sigma_1^4 + \f{1}{16} \Sigma_1^2 \Sigma_2 -\f{1}{4} \Sigma_1 \Sigma_3 + \Sigma_4.
\end{array}
\end{equation}
Note that $\Sigma_1 \in \mathbb{R},$ so the reality condition of Corollary~\ref{realcorollary3} transfers unchanged to the shifted variables $\hat{\Sigma}_2, \hat{\Sigma}_3, \hat{\Sigma}_4 \in \mathbb{R},$ and $\hat{\Sigma}_4 \geq 0,$ because the roots of equation~(\ref{reducedquartic}) still come in two complex-conjugate pairs.
Completing the square of the first two terms of equation~(\ref{reducedquartic}),
\begin{equation}
(\hat{\mu}^2 + \f{1}{2} \hat{\Sigma}_2)^2 =  \hat{\Sigma}_3 \hat{\mu} +\f{1}{4} \hat{\Sigma}_2^2- \hat{\Sigma}_4 .
\label{mu2eqn}
\end{equation}

If $\hat{\Sigma}_3 =0,$ then equation~(\ref{mu2eqn}) can be solved immediately to give,
\begin{equation}
\hat{\mu} = \f{\pm1}{\sqrt{2}} \sqrt{- \hat{\Sigma}_2  + s \sqrt{\hat{\Sigma}_2^2 -4 \hat{\Sigma}_4}},
\label{muhatsolution1}
\end{equation}
where $s  = \pm 1.$
The reality condition of Theorem~\ref{reality1} implies that the shifted Dirichlet eigenvalues come in complex-conjugate pairs, so their values are constrained and the expressions for them in equation~(\ref{muhatsolution1}) can be simplified.

\begin{corollary}[First Symmetric Polynomial Constraint]
If $\hat{\Sigma}_3 = 0,$ then $\hat{\Sigma}_2 \geq - 2 \sqrt{\hat{\Sigma}_4},$ 
and the explicit expressions for the Dirichlet eigenvalues fall into two subcases.
\begin{enumerate}
\item If $\hat{\Sigma}_2^2 \leq 4\hat{\Sigma}_4,$ then 
\begin{equation}
\begin{array}{rcl}
\hat{\mu}_1 &=& -\f{1}{2} \sqrt{-\hat{\Sigma}_2 + 2 \sqrt{\hat{\Sigma}_4}}\pm\f{1}{2}i \sqrt{\hat{\Sigma}_2+ 2 \sqrt{\hat{\Sigma}_4}},\\[.1in]
\hat{\mu}_2 & = & \f{1}{2} \sqrt{-\hat{\Sigma}_2 + 2 \sqrt{\hat{\Sigma}_4}}\pm\f{1}{2}i \sqrt{\hat{\Sigma}_2+ 2 \sqrt{\hat{\Sigma}_4}}.
\end{array}
\label{muexpressions2}
\end{equation}
 \item  If $\hat{\Sigma}_2 > 2 \sqrt{\hat{\Sigma}_4},$ then  
\begin{equation}
\begin{array}{rcl}
\hat{\mu}_1 & = &  \pm \f{1}{\sqrt{2}}i \sqrt{ \hat{\Sigma}_2  + \sqrt{\hat{\Sigma}_2^2 -4 \hat{\Sigma}_4}},\\[.1in]
\hat{\mu}_2 & = &  \pm \f{1}{\sqrt{2}} i\sqrt{ \hat{\Sigma}_2  -  \sqrt{\hat{\Sigma}_2^2 -4 \hat{\Sigma}_4}}.
\end{array}
\label{muexpressions3}
\end{equation}
\end{enumerate}
\label{realcorollary4}
\end{corollary}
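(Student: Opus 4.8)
The plan is to specialize the reduced quartic to the biquadratic case, solve it explicitly, and then impose the reality condition of Corollary~\ref{realcorollary3} to pin down both the inequality and the correct arrangement of roots into complex-conjugate pairs. Setting $\hat{\Sigma}_3=0$ in equation~(\ref{reducedquartic}) gives the even polynomial $\hat{\mu}^4+\hat{\Sigma}_2\hat{\mu}^2+\hat{\Sigma}_4=0$, so the substitution $z=\hat{\mu}^2$ reduces it to the quadratic $z^2+\hat{\Sigma}_2 z+\hat{\Sigma}_4=0$ with roots $z_\pm=\tfrac12\bigl(-\hat{\Sigma}_2\pm\sqrt{\hat{\Sigma}_2^2-4\hat{\Sigma}_4}\bigr)$, consistent with equation~(\ref{muhatsolution1}). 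The four roots of the quartic are then $\pm\sqrt{z_+}$ and $\pm\sqrt{z_-}$, and the whole task reduces to reading off their real and imaginary parts under the constraint $\hat{\Sigma}_2,\hat{\Sigma}_4\in\mathbb{R}$, $\hat{\Sigma}_4\ge 0$.

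First I would establish the inequality $\hat{\Sigma}_2\ge -2\sqrt{\hat{\Sigma}_4}$. By Theorem~\ref{reality1} the four roots must form the two complex-conjugate pairs $(\mu_1,\mu_1^*)$ and $(\mu_2,\mu_2^*)$, and since $\mathscr{K}_2$ is nonsingular they are distinct. The key observation is that a pair $(\mu_i,\mu_i^*)$ either consists of two genuinely complex conjugates or, if $\mu_i\in\mathbb{R}$, collapses to a \emph{double} real root; hence a configuration of four \emph{distinct real} roots cannot be arranged into two such pairs and is forbidden. A short sign analysis shows that four distinct real roots occur precisely when both $z_\pm$ are real and positive, which (using $\hat{\Sigma}_4\ge 0$, so that $z_+z_-=\hat{\Sigma}_4$ and $z_++z_-=-\hat{\Sigma}_2$) happens exactly when $\hat{\Sigma}_2<0$ and $\hat{\Sigma}_2^2>4\hat{\Sigma}_4$, i.e. when $\hat{\Sigma}_2<-2\sqrt{\hat{\Sigma}_4}$. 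Excluding this region yields the stated bound.

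The remaining work is to split the allowed region into the two subcases and extract the formulas. When $\hat{\Sigma}_2^2\le 4\hat{\Sigma}_4$ the discriminant is nonpositive, so $z_\pm$ are complex conjugates with $|z_\pm|=\sqrt{z_+z_-}=\sqrt{\hat{\Sigma}_4}$ and $\operatorname{Re}z_\pm=-\tfrac12\hat{\Sigma}_2$; applying the standard identity $\sqrt{a+bi}=\pm\bigl(\sqrt{(|z|+a)/2}+i\,\mathrm{sgn}(b)\sqrt{(|z|-a)/2}\bigr)$ with $a=-\tfrac12\hat{\Sigma}_2$ and $|z|=\sqrt{\hat{\Sigma}_4}$ produces real part $\pm\tfrac12\sqrt{-\hat{\Sigma}_2+2\sqrt{\hat{\Sigma}_4}}$ and imaginary part $\pm\tfrac12\sqrt{\hat{\Sigma}_2+2\sqrt{\hat{\Sigma}_4}}$, which I would sort into the two conjugate pairs of equation~(\ref{muexpressions2}). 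When instead $\hat{\Sigma}_2>2\sqrt{\hat{\Sigma}_4}$ the discriminant is positive and, because $\hat{\Sigma}_2>0$ and $\sqrt{\hat{\Sigma}_2^2-4\hat{\Sigma}_4}<\hat{\Sigma}_2$, both $z_\pm$ are real and \emph{negative}; their square roots are purely imaginary and give equation~(\ref{muexpressions3}) directly. These subcases are exhaustive once the forbidden region is removed, since $\hat{\Sigma}_2^2>4\hat{\Sigma}_4$ together with $\hat{\Sigma}_2\ge-2\sqrt{\hat{\Sigma}_4}$ forces $\hat{\Sigma}_2>2\sqrt{\hat{\Sigma}_4}$.

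I expect the main obstacle to be conceptual rather than computational: correctly arguing that the reality pairing \emph{excludes} four distinct real roots (and thereby produces the inequality) rather than merely evaluating the radicals. The manipulation of the nested square roots in the first subcase is routine but delicate in its signs, and care is needed to verify that the four values assemble into exactly two conjugate pairs $(\hat\mu_1,\hat\mu_1^*)$ and $(\hat\mu_2,\hat\mu_2^*)$, as well as to note that the boundary $\hat{\Sigma}_2=-2\sqrt{\hat{\Sigma}_4}$ and the case $\hat{\Sigma}_4=0$ correspond to coalescing roots where $\mathscr{K}_2$ degenerates.
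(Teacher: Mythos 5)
You are correct, and your route coincides with the paper's: the corollary is presented there as an immediate consequence of the solved biquadratic, equation~(\ref{muhatsolution1}), together with the requirement that the four roots assemble into the conjugate pairs $(\hat{\mu}_1,\hat{\mu}_1^*)$ and $(\hat{\mu}_2,\hat{\mu}_2^*),$ which is exactly the argument you carry out in detail. Two peripheral remarks in your write-up are inaccurate but harmless to the logic: nonsingularity of $\mathscr{K}_2$ does not make the roots $\hat{\mu}_1,\hat{\mu}_2,\hat{\mu}_1^*,\hat{\mu}_2^*$ distinct (they coalesce, e.g., at extrema of $\nu_1$), and the sliver $\hat{\Sigma}_4=0,$ $\hat{\Sigma}_2<0,$ where the roots are $0,0,\pm\sqrt{-\hat{\Sigma}_2}$ rather than four distinct reals, is excluded by the same pairing argument, not because $\mathscr{K}_2$ degenerates.
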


 If $\hat{\Sigma}_3 \neq 0,$ then a quantity $z,$ to be determined, is added to the quantity in the parentheses on the left-hand side of equation~(\ref{mu2eqn}),
\begin{equation}
(\hat{\mu}^2 + \f{1}{2} \hat{\Sigma}_2 + z)^2 = 2 z \hat{\mu}^2 +\hat{\Sigma}_3 \hat{\mu} + \f{1}{4} \hat{\Sigma}_2^2 -\hat{\Sigma}_4 +\hat{\Sigma}_2 z + z^2.
\label{completedwithz}
\end{equation}
The quantity $z$ is chosen so that the right-hand side of the equation is a perfect square, which means that $z$ should be chosen so that the discriminant
of the right-hand side, a quadratic polynomial in $\hat{\mu},$ is zero, viz., 
\begin{equation}
z^3 + \hat{\Sigma}_2 z^2 + (\f{1}{4} \hat{\Sigma}_2^2 - \hat{\Sigma}_4 )z -\f{1}{8} \hat{\Sigma}_3^2 = 0.
\label{cubiczequation}
\end{equation}
In fact the roots of equation~(\ref{cubiczequation}) can be written down explicitly, if necessary.
\begin{lemma} If $\hat{\Sigma}_3 \neq 0,$ then equation~(\ref{cubiczequation}) has at least one positive root.
The three roots of the cubic equation~(\ref{cubiczequation}) are given by the explicit formula,
\begin{equation}
z = -\f{1}{3} \hat{\Sigma}_2  + \sqrt[3]{R + \sqrt{\Delta}} -\f{Q}{\sqrt[3]{R+ \sqrt{\Delta}}},
\label{zformula}
\end{equation}
where 
\begin{equation}
\begin{array}{rcl}
Q & = & -\f{1}{36} \hat{\Sigma}_2^2 - \f{1}{3} \hat{\Sigma}_4, \\
R & = & \f{1}{216} \hat{\Sigma}_2^3-\f{1}{6} \hat{\Sigma}_2 \hat{\Sigma}_4 + \f{1}{16} \hat{\Sigma}_3^2,\\
\Delta & = & Q^3 + R^2,
\end{array}
\end{equation}
and $\Delta \leq 0$ if and only if all three roots are real.
A double root occurs if and only if $\Delta = 0.$
\end{lemma}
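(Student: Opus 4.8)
The plan is to dispatch the assertions in turn, each resting on classical facts about the depressed cubic. Write $f(z) = z^3 + \hat{\Sigma}_2 z^2 + (\f{1}{4}\hat{\Sigma}_2^2 - \hat{\Sigma}_4)z - \f{1}{8}\hat{\Sigma}_3^2$ for the left-hand side of equation~(\ref{cubiczequation}). For the existence of a positive root I would simply evaluate $f(0) = -\f{1}{8}\hat{\Sigma}_3^2$, which is strictly negative precisely because $\hat{\Sigma}_3 \neq 0$, while $f(z) \to +\infty$ as $z \to +\infty$ since $f$ is monic of odd degree. The intermediate value theorem then produces a root in $(0,\infty)$. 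This is also exactly the kind of root needed to make the coefficient $2z$ in equation~(\ref{completedwithz}) nonnegative, so that its right-hand side can be completed to a perfect square.

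To obtain the explicit formula~(\ref{zformula}) I would eliminate the quadratic term by the shift $z = y - \f{1}{3}\hat{\Sigma}_2$, reducing equation~(\ref{cubiczequation}) to a depressed cubic $y^3 + p\,y + q = 0$. A routine computation identifies the coefficients as $p = 3Q$ and $q = -2R$, so that $(p/3)^3 + (q/2)^2 = Q^3 + R^2 = \Delta$. Cardano's resolvent gives $y = \sqrt[3]{R + \sqrt{\Delta}} + \sqrt[3]{R - \sqrt{\Delta}}$; since the product of the two cube roots equals $\sqrt[3]{R^2 - \Delta} = \sqrt[3]{-Q^3} = -Q$, the second radical may be rewritten as $-Q/\sqrt[3]{R + \sqrt{\Delta}}$, and undoing the shift reproduces~(\ref{zformula}) verbatim.

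The reality and multiplicity statements both follow from the polynomial discriminant of the cubic, which is invariant under the affine shift and equals $-4p^3 - 27q^2 = -108(Q^3 + R^2) = -108\,\Delta$. The classical sign analysis of the cubic discriminant asserts that a real cubic has three real roots if and only if its discriminant is nonnegative, and a repeated root if and only if the discriminant vanishes. Passing through the factor $-108$ converts these into $\Delta \leq 0 \iff$ all three roots are real, and $\Delta = 0 \iff$ there is a repeated root, generically a double root (a triple root being the degenerate subcase $Q = R = 0$).

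I expect no serious obstacle, since the whole statement is essentially Cardano's formula together with the standard discriminant criterion. The only point demanding care is the \emph{casus irreducibilis}, where $\Delta < 0$ yet all three roots are real: there the cube roots in~(\ref{zformula}) are genuinely complex, so $\sqrt[3]{\cdot}$ must be read as a fixed branch of the complex cube root, and one should check that its three branches recover the three real roots. It is also worth verifying that the positive root guaranteed in the first step is consistent with the branch selection implicit in~(\ref{zformula}).
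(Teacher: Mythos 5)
Your proposal is correct: the intermediate-value argument at $f(0) = -\frac{1}{8}\hat{\Sigma}_3^2 < 0$ yields the positive root, your identifications $p = 3Q$, $q = -2R$, hence discriminant $-4p^3 - 27q^2 = -108\Delta$, are all accurate, and flagging the branch choice of the cube root in the casus irreducibilis is exactly the right point of care. The paper gives no proof of this lemma at all, invoking it as classical Cardano theory, so your argument supplies precisely the standard reasoning the paper takes for granted.
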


\begin{definition}
If $\hat{\Sigma}_3 \neq 0,$ then $z$ is chosen, without loss of generality, to be a  positive root of Equation~(\ref{cubiczequation}).
\end{definition}
Since $z$ is a root of the cubic equation~(\ref{cubiczequation}), the right-hand side of
equation~(\ref{completedwithz}) is a perfect square,
\begin{equation}
(\hat{\mu}^2 + \f{1}{2} \hat{\Sigma}_2 + z)^2 = 2 z (\hat{\mu} + \f{1}{4 z} \hat{\Sigma}_3)^2.
\end{equation}
 Therefore
\begin{equation}
\hat{\mu}^2 + s \sqrt{2 z} \hat{\mu} +\f{1}{2} \hat{\Sigma}_2 + s \f{1}{2 \sqrt{2 z}} \hat{\Sigma}_3 +z= 0,
\label{muquadratic}
\end{equation}
where $s = \pm 1.$ 
Finally, solving the quadratic equation~(\ref{muquadratic}) gives explicit expressions for the shifted Dirichlet eigenvalues, together with algebraic constraints.

\begin{corollary}[Second Symmetric Polynomial Constraint]
If $\hat{\Sigma}_3 \neq 0,$ then $z>0$ is chosen to be a root of equation~(\ref{cubiczequation}) and the reality condition on the Dirichlet eigenvalues, viz., they must form two complex-conjugate pairs, implies that  
\begin{equation}
\begin{array}{rcl}
\rho_+ & = & \hat{\Sigma}_2 +  \f{1}{\sqrt{2 z}} \hat{\Sigma}_3 + z \geq 0,\\
\rho_- & = & \hat{\Sigma}_2 -  \f{1}{\sqrt{2 z}} \hat{\Sigma}_3 + z \geq 0.
\end{array}
\end{equation}
The explicit  expressions
for $\hat{\mu}_1=\mu_1 - \f{1}{4} \Sigma_1$ and $\hat{\mu}_2 = \mu_2 - \f{1}{4} \Sigma_1$ in terms of $\nu_1$ and $\nu_2$ are  
\begin{equation}
\begin{array}{rcl}
\hat{\mu}_1 & = & -\sqrt{\f{z}{2}} \pm \f{1}{\sqrt{2}} i \sqrt{\hat{\Sigma}_2 +  \f{1}{\sqrt{2 z}} \hat{\Sigma}_3 + z},\\
\hat{\mu}_2 & = & \sqrt{\f{z}{2}} \pm \f{1}{\sqrt{2}} i \sqrt{\hat{\Sigma}_2 - \f{1}{\sqrt{2 z}} \hat{\Sigma}_3 + z}.
\end{array}
\label{muexpressions1}
\end{equation}
In particular, if $\hat{\Sigma}_3 \neq 0,$ then $\mu_1 \neq \mu_2.$
\label{realcorollary5}
\end{corollary}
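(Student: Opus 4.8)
\section*{Proof proposal}

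The plan is to solve the two quadratic equations~(\ref{muquadratic}), one for each sign $s=\pm1$, and then to extract the reality constraints $\rho_\pm\geq0$ from the requirement of Theorem~\ref{reality1} that the four roots $\mu_1,\mu_2,\mu_1^*,\mu_2^*$ form two complex-conjugate pairs. Since $z$ is a root of the resolvent cubic~(\ref{cubiczequation}), the right-hand side of~(\ref{completedwithz}) is a perfect square, so the reduced quartic~(\ref{reducedquartic}) factors as the product of the two quadratics appearing in~(\ref{muquadratic}); because $z>0$ is real and $\hat\Sigma_2,\hat\Sigma_3,\hat\Sigma_4$ are real, both factors have real coefficients.

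First I would apply the quadratic formula to~(\ref{muquadratic}). For $s=+1$ the discriminant is $2z-4\left(\f12\hat\Sigma_2+\f{1}{2\sqrt{2z}}\hat\Sigma_3+z\right)=-2\rho_+$, and for $s=-1$ it equals $-2\rho_-$, while the linear coefficients $\pm\sqrt{2z}$ contribute the real parts $\mp\sqrt{z/2}$. Thus, provided $\rho_\pm\geq0$, the roots of the $s=+1$ factor are $\hat\mu_1=-\sqrt{z/2}\pm\f{1}{\sqrt2}i\sqrt{\rho_+}$ and those of the $s=-1$ factor are $\hat\mu_2=\sqrt{z/2}\pm\f{1}{\sqrt2}i\sqrt{\rho_-}$, which are exactly the expressions~(\ref{muexpressions1}). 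Everything then reduces to the two inequalities $\rho_\pm\geq0$, equivalently to the statement that each real quadratic factor carries a complex-conjugate pair of roots rather than two distinct real roots.

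The hard part is this reality argument. The shift~(\ref{mushift}) removes the cubic term, so the four shifted roots sum to zero; writing the two genuine conjugate pairs as $\{a,a^*\}$ and $\{b,b^*\}$ gives $\Re a+\Re b=0$. There are three ways to group the four roots into two pairs, corresponding to the three roots of~(\ref{cubiczequation}) through $2z=P^2$, where $\pm P$ are the linear coefficients of the two quadratic factors. For the conjugate pairing $\{a,a^*\},\{b,b^*\}$ one gets $P=-2\Re a$, hence $2z=4(\Re a)^2\geq0$; for each of the two non-conjugate pairings the relevant linear coefficient is $a+b$ or $a+b^*$, which is purely imaginary since $\Re a+\Re b=0$, forcing $2z\leq0$. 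Therefore a positive root $z$ of~(\ref{cubiczequation}) can only arise from the conjugate pairing, so each real quadratic factor has a complex-conjugate pair of roots, its discriminant $-2\rho_\pm$ is nonpositive, and $\rho_\pm\geq0$. The consistency of this choice with the earlier lemma follows from the product of the three roots of~(\ref{cubiczequation}) being $\f18\hat\Sigma_3^2>0$: together with the signs just found this shows that, when $\hat\Sigma_3\neq0$, all three roots are nonzero and exactly one is positive.

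Finally, $\mu_1\neq\mu_2$ is immediate: since $\hat\Sigma_3\neq0$ forces $\Re a\neq0$ and hence $z>0$, the real parts $\mp\sqrt{z/2}$ of $\hat\mu_1$ and $\hat\mu_2$ are distinct, so $\mu_1=\hat\mu_1+\f14\Sigma_1$ and $\mu_2=\hat\mu_2+\f14\Sigma_1$ differ. I expect the pairing argument of the third paragraph to be the main obstacle, in particular the clean identification, via the vanishing sum of the shifted roots, of the positive resolvent root with the conjugate pairing; the surrounding steps are routine applications of the quadratic formula and of the reality condition in Corollary~\ref{realcorollary3}.
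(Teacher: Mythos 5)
Your proposal is correct, and it is worth noting that the paper itself gives no proof of Corollary~\ref{realcorollary5}: the corollary is stated as an immediate consequence of applying the quadratic formula to equation~(\ref{muquadratic}), with the inequalities $\rho_\pm \geq 0$ simply asserted to follow from the reality condition. Your second paragraph reproduces exactly that computation (discriminants $-2\rho_\pm$, real parts $\mp\sqrt{z/2}$), so to that point you and the paper coincide. Your genuine contribution is the pairing argument: the three roots of the resolvent cubic~(\ref{cubiczequation}) correspond, via $2z=P^2$, to the three ways of splitting the four roots of the reduced quartic~(\ref{reducedquartic}) into two pairs; since the shift~(\ref{mushift}) makes the roots sum to zero, the two non-conjugate pairings have purely imaginary linear coefficients and give $2z=P^2\leq 0$, so the chosen positive root $z$ can only come from the conjugate pairing, whence each real quadratic factor has nonpositive discriminant and $\rho_\pm\geq 0$. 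This closes a real gap in the paper's exposition, since a priori the factorization attached to a positive resolvent root need not align with the conjugate pairing. (A shorter alternative argument: if, say, $\rho_+<0$, the $s=+1$ factor of~(\ref{muquadratic}) has two distinct real roots; under the reality condition every real root of the quartic~(\ref{quartic}) must have even multiplicity, so both would also be roots of the $s=-1$ factor, forcing the two factors, whose linear coefficients are $+\sqrt{2z}$ and $-\sqrt{2z}$, to coincide --- impossible for $z>0$.) Your byproducts are also sound: the product of the resolvent roots being $\frac{1}{8}\hat{\Sigma}_3^2>0$ shows the positive root is in fact unique, sharpening the paper's ``without loss of generality'' in its choice of $z$, and deducing $\mu_1\neq\mu_2$ from the distinct real parts $\mp\sqrt{z/2}$ is exactly right.
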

When $\hat{\Sigma}_3 \neq 0,$ $z>0,$ so the set of $(\nu_1,\nu_2)$ for which $\hat{\Sigma}_3 \neq 0$ and either $\rho_+ = 0$ or $\rho_-=0$ is equivalent to the
set of solutions of
\begin{equation}
2 z (z + \hat{\Sigma}_2)^2 - \hat{\Sigma}_3^2 = 0.
\label{newcubic}
\end{equation}
Now $z>0$ is also the root of the cubic equation~(\ref{cubiczequation}).  The set of points $(\nu_1,\nu_2)$ for which Equations~(\ref{cubiczequation}) and~(\ref{newcubic}) have a common root is given by the resultant of the two equations.  The resultant of Equations~(\ref{cubiczequation}) and~(\ref{newcubic}) is 
a non-zero constant multiple of
\begin{equation}
\hat{\Sigma}_3^2 (16 \hat{\Sigma}_4 (\hat{\Sigma}_2^2-4 \hat{\Sigma}_4)^2 - \hat{\Sigma}_3^2 (4 \hat{\Sigma}_2^3-144\hat{\Sigma}_2 \hat{\Sigma}_4 + 27 \hat{\Sigma}_3^2)).
\end{equation}

\begin{theorem}[Boundary of the Real Parameter Set]
Suppose $\nu_1>0,$ so that $\hat{\Sigma}_2,\hat{\Sigma}_3,$ and $\hat{\Sigma}_4$ are all defined, and also assume that $\hat{\Sigma}_3 \neq 0.$ Then the boundary of the set of points $(\nu_1,\nu_2)$ satisfying the reality condition of Corollary~\ref{realcorollary5} is a subset of the algebraic set 
\begin{equation}
16 \hat{\Sigma}_4 (\hat{\Sigma}_2^2-4 \hat{\Sigma}_4)^2 - \hat{\Sigma}_3^2 (4 \hat{\Sigma}_2^3-144\hat{\Sigma}_2 \hat{\Sigma}_4 + 27 \hat{\Sigma}_3^2)=0,
\label{algebraicset}
\end{equation}
representing the set of $(\nu_1,\nu_2)$ for which equations~(\ref{cubiczequation}) and~(\ref{newcubic}) possess a common root $z_c.$  On this algebraic set, the common root $z_c$ of the two equations~(\ref{cubiczequation}) and~(\ref{newcubic}) is either
\begin{equation}
z_c = -\f{1}{3} \hat{\Sigma}_2 + \f{1}{3} \sqrt{\hat{\Sigma}_2^2 + 12 \hat{\Sigma}_4} > 0,
\label{rootzcpositive}
\end{equation}
or
\begin{equation}
z_ c=  -\f{1}{3} \hat{\Sigma}_2 - \f{1}{3} \sqrt{\hat{\Sigma}_2^2 + 12 \hat{\Sigma}_4} < 0.
\label{rootzcnegative}
\end{equation}
\label{boundarytheorem}
\end{theorem}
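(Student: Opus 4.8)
The plan is to treat the theorem's two assertions separately. For the first, I would observe that by Corollary~\ref{realcorollary5} the chosen positive root $z$ of the cubic~(\ref{cubiczequation}) lies on the boundary of the reality region precisely when $\rho_+=0$ or $\rho_-=0$, and that, since $z>0$, these degeneracies are exactly the statement~(\ref{newcubic}) that $z$ also be a root of $2z(z+\hat{\Sigma}_2)^2-\hat{\Sigma}_3^2=0$. Hence a boundary point is one at which~(\ref{cubiczequation}) and~(\ref{newcubic}) share the root $z$. Two polynomials share a root if and only if their resultant vanishes, and the resultant of~(\ref{cubiczequation}) and~(\ref{newcubic}) has already been exhibited as a nonzero constant multiple of $\hat{\Sigma}_3^2$ times the left-hand side of~(\ref{algebraicset}); since the standing hypothesis $\hat{\Sigma}_3\neq 0$ removes the first factor, every boundary point must lie in the algebraic set~(\ref{algebraicset}). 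This settles the first claim.

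For the second assertion I would identify the common root explicitly, and the cleanest route is not a blind Euclidean reduction but to use~(\ref{newcubic}) to eliminate $\hat{\Sigma}_3^2$ from~(\ref{cubiczequation}). At a common root $z_c$ one has $\hat{\Sigma}_3^2=2z_c(z_c+\hat{\Sigma}_2)^2$; substituting this into the constant term $-\f{1}{8}\hat{\Sigma}_3^2$ of~(\ref{cubiczequation}) and expanding, the cubic collapses and factors as
\begin{equation}
z_c\Bigl(\f{3}{4}z_c^2+\f{1}{2}\hat{\Sigma}_2 z_c-\hat{\Sigma}_4\Bigr)=0.
\end{equation}
Because $\hat{\Sigma}_3\neq 0$ forces $z_c\neq 0$ (a vanishing $z_c$ would make the left side of~(\ref{newcubic}) equal to $-\hat{\Sigma}_3^2\neq 0$), the common root must satisfy the quadratic $3z_c^2+2\hat{\Sigma}_2 z_c-4\hat{\Sigma}_4=0$.

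Solving this quadratic by the elementary formula gives $z_c=-\f{1}{3}\hat{\Sigma}_2\pm\f{1}{3}\sqrt{\hat{\Sigma}_2^2+12\hat{\Sigma}_4}$, which are precisely~(\ref{rootzcpositive}) and~(\ref{rootzcnegative}); thus any common root is one of these two values. The sign statements then follow from $\hat{\Sigma}_4\geq 0$ (noted just after~(\ref{reducedquartic})): one has $\sqrt{\hat{\Sigma}_2^2+12\hat{\Sigma}_4}\geq|\hat{\Sigma}_2|$, so that $-\hat{\Sigma}_2+\sqrt{\hat{\Sigma}_2^2+12\hat{\Sigma}_4}\geq 0$ and $-\hat{\Sigma}_2-\sqrt{\hat{\Sigma}_2^2+12\hat{\Sigma}_4}\leq 0$, with strict inequalities whenever $\hat{\Sigma}_4>0$.

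I expect the main obstacle to lie less in the algebra, which is routine once the substitution $\hat{\Sigma}_3^2=2z(z+\hat{\Sigma}_2)^2$ is recognized as the device that produces the factorization above, and more in the bookkeeping of strictness and of which root is actually realized. In particular, the strict inequalities $z_c>0$ and $z_c<0$ genuinely require $\hat{\Sigma}_4>0$; the degenerate locus $\hat{\Sigma}_4=0$, where a shifted Dirichlet eigenvalue vanishes, would have to be examined separately or excluded, and one should also confirm that the chosen positive root $z$ of~(\ref{cubiczequation}) is the branch that attains the boundary in~(\ref{rootzcpositive}).
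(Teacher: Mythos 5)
Your proof is correct, and on the only part of the theorem the paper actually argues it follows the paper's route exactly: the reality region of Corollary~\ref{realcorollary5} degenerates precisely where $\rho_+\rho_-=0$, which for $z>0$ is equation~(\ref{newcubic}), and the common-root condition for~(\ref{cubiczequation}) and~(\ref{newcubic}) is detected by the resultant, whose extraneous factor $\hat{\Sigma}_3^2$ is discarded under the standing hypothesis $\hat{\Sigma}_3\neq 0$. Where you go beyond the paper is the second claim: the paper asserts the formulas~(\ref{rootzcpositive}) and~(\ref{rootzcnegative}) with no derivation, whereas you produce one by substituting $\hat{\Sigma}_3^2=2z_c(z_c+\hat{\Sigma}_2)^2$ into~(\ref{cubiczequation}), collapsing it to $z_c\bigl(\tfrac{3}{4}z_c^2+\tfrac{1}{2}\hat{\Sigma}_2 z_c-\hat{\Sigma}_4\bigr)=0$ and then, since $z_c\neq 0$ (a zero root of~(\ref{newcubic}) would force $\hat{\Sigma}_3=0$), reading off the two roots of $3z_c^2+2\hat{\Sigma}_2 z_c-4\hat{\Sigma}_4=0$. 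That elimination is clean and correct, and it is the computation the paper leaves silent; an alternative would be a blind resultant factorization, which is less illuminating.

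Two remarks on the loose ends you flag, neither of which is a real gap. First, the locus $\hat{\Sigma}_4=0$ needs no separate treatment: your own observation $z_c\neq 0$ closes it, because when $\hat{\Sigma}_4=0$ the two quadratic roots are $0$ and $-\tfrac{2}{3}\hat{\Sigma}_2$, so the common root must be the nonzero one, and it then satisfies the corresponding strict inequality --- it equals the ``$+$'' value (and is positive) when $\hat{\Sigma}_2<0$, and the ``$-$'' value (and is negative) when $\hat{\Sigma}_2>0$; the case $\hat{\Sigma}_2=\hat{\Sigma}_4=0$ is vacuous since then no common root coexists with $\hat{\Sigma}_3\neq 0$. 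Second, identifying which branch is realized by the chosen positive root $z$ is not asserted by this theorem, which only claims the common root is one of the two listed values; that identification is carried out later in the paper, in the lemma computing the limit $z_c=-\hat{\Sigma}_2>0$ at the node $(\nu_1^{(4)},\nu_2^{(4)})$.
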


The effective integration of the $\mu_1$ and $\mu_2$ equations is accomplished by transforming to the evolution equations for $\nu_1$ and $\nu_2.$
The equations~(\ref{nuequations}) for $\nu_1$ and $\nu_2$ become, in the shifted variables $\hat{\mu}_1$ and $\hat{\mu}_2,$
\begin{equation}
\begin{array}{rcl}
\f{\partial \nu_1}{\partial x} & = & 2 i \nu_1 (\hat{\mu}_1 +\hat{\mu}_2 -\hat{\mu}_1^*  -\hat{\mu}_2^*), \\[.1in]
\f{\partial \nu_2}{\partial x} & = &4 i \nu_1 (\hat{\mu}^*_1 \hat{\mu}^*_2 - \hat{\mu}_1 \hat{\mu}_2) +(-\f{1}{2}\Sigma_1- 2 c_2) \f{\partial \nu_1}{\partial x}, \\[.1in]
\f{\partial \nu_1}{\partial t} & = & \f{\partial \nu_2}{\partial x},\\[.1in]
\f{\partial \nu_2}{\partial t} & = & 8 i \nu_1 ( (\hat{\mu}_1 -\hat{\mu}^*_1) |\hat{\mu}_2|^2 + (\hat{\mu}_2 - \hat{\mu}^*_2) |\hat{\mu}_1|^2) \\[.1in]
&&+(-\f{1}{4} \Sigma_1^2-2 c_2 \Sigma_1 - 4 c_2^2) \f{\partial \nu_1}{\partial x}- (\Sigma_1 +4 c_2) \f{\partial \nu_2}{\partial x}.
\end{array}
\end{equation}
Equivalently, the explicit expressions for $\hat{\mu}_1$ and $\hat{\mu}_2$ can be used to write down the complicated but explicit equations for the evolution
of $\nu_1$ and $\nu_2.$
\begin{equation}
\begin{array}{rcl}
\f{\partial \nu_1}{\partial x} & = & -2 \sqrt{2} \nu_1 (\sqrt{\rho_+} + \sqrt{\rho_-}),\\[.1in]
\f{\partial \nu_2}{\partial x} & = & 4 \nu_1 \sqrt{z} (\sqrt{\rho_+} - \sqrt{\rho_-})+\sqrt{2}\nu_1(\Sigma_1+4 c_2)(\sqrt{\rho_+}+\sqrt{\rho_-}),\\[.1in]
\f{\partial \nu_1}{\partial t} & = & \f{\partial \nu_2}{\partial x}, \\[.1in]
\f{\partial \nu_2}{\partial t} & = & -4 \sqrt{2} \nu_1 ( (z+\rho_-)\sqrt{\rho_+}+(z+\rho_+)\sqrt{\rho_-})\\[.1in]
&&+\sqrt{2}\nu_1(-\f{1}{2}\Sigma_1^2-4 c_2 \Sigma_1 -8 c_2^2)(\sqrt{\rho_+}+\sqrt{\rho_-})\\[.1in]
&&-4\nu_1\sqrt{z}(\Sigma_1 +4 c_2)(\sqrt{\rho_+}-\sqrt{\rho_-}).
\end{array}
\label{nurho}
\end{equation}
Assuming that a two-dimensional region of allowed initial values for $(\nu_1, \nu_2)$ exists, it must be bounded by a subset of the algebraic set defined by equation~(\ref{algebraicset}).  Moreover, assuming existence of solutions to equations~(\ref{nurho}), viz., a region in which both $\rho_+ \geq 0$ and $\rho_- \geq 0,$ the slopes of the boundary curves can be determined directly, viz.,   if $\nu_1 > 0$ and $\rho_+ \neq 0,$ then $\rho_- =0$ implies
\begin{equation}
\f{\partial \nu_2}{\partial \nu_1} = -\sqrt{2z} -\f{1}{2} \Sigma_1 - 2 c_2,
\end{equation}
in which either $x$ or $t$ is held constant.
Similarly, if $\nu_1 > 0$ and $\rho_- \neq 0,$ then $\rho_+ = 0$ implies
\begin{equation}
\f{\partial \nu_2}{\partial \nu_1} = \sqrt{2z}-\f{1}{2}\Sigma_1 - 2 c_2.
\end{equation}
Thus, for $\nu_1 > 0,$ the one-dimensional varieties defined by $\rho_+(\nu_1,\nu_2)=0$ or $\rho_-(\nu_1,\nu_2)=0$ define smooth curves $\nu_2 =\nu_2 (\nu_1)$ which cross transversely at points where $z \neq 0$ and $\rho_+ = \rho_- =0.$ At points where $z =0$ and $\rho_+=\rho_-=0$ the two curves 
$\rho_+=0$ and $\rho_-=0$ meet tangentially.  

\section{Dependence of Extrema on the Branch Points}
Having obtained the explicit expressions for $\mu_1$ and $\mu_ 2$ in terms of $\nu_1$ and $\nu_2$ in Corollary~\ref{realcorollary5}, these expressions can
be used to solve the $\nu_1$ and $\nu_2$ equations of equation~(\ref{nuequations}).  Assuming that a smooth bounded two-phase solution exists, there must be at least one positive relative extremum for $\nu_1,$ which can occur only if $\mu_1 = \mu_1^*$ and $\mu_2 = \mu_2^*,$ provided $\mu_1 \neq \mu_2.$  It is now possible to characterize explicitly all cases of the disposition of the branch points of $\mathscr{K}_2$ for which the solution for $\nu_1$ could possibly oscillate on an 
interval of non-negative values.

\begin{lemma}
$\mu_1 = \mu_1^*$ and $\mu_2=\mu_2^*$ if and only if $\hat{\Sigma}_3 = 0$ and $\hat{\Sigma}_2=- 2 \sqrt{\hat{\Sigma}_4}.$
Also, $\mu_1 = \mu_2^*$ or $\mu_1 = \mu_2$ if and only if $\hat{\Sigma}_3 = 0$ and $\hat{\Sigma}_2=2 \sqrt{\hat{\Sigma}_4}.$ 
\label{mutosigma}
\end{lemma}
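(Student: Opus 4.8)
The plan is to pass to the depressed quartic~(\ref{reducedquartic}) and exploit the single structural fact underlying all three coincidences: each of them forces the quartic~(\ref{quartic}) to have a pair of double roots. By the reality condition of Theorem~\ref{reality1}, the four roots of~(\ref{quartic}) are exactly $\mu_1,\mu_1^*,\mu_2,\mu_2^*$. If $\mu_1=\mu_1^*$ and $\mu_2=\mu_2^*$, the multiset of roots is $\{\mu_1,\mu_1,\mu_2,\mu_2\}$ with $\mu_1,\mu_2$ real; if instead $\mu_1=\mu_2$, it is $\{\mu_1,\mu_1,\mu_1^*,\mu_1^*\}$; and if $\mu_1=\mu_2^*$ then also $\mu_2=\mu_1^*$, so again the multiset is $\{\mu_1,\mu_1,\mu_1^*,\mu_1^*\}$. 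In every case the quartic factors as $[(\mu-r_1)(\mu-r_2)]^2$ with $\{r_1,r_2\}$ either two reals or a single complex-conjugate pair, so the quadratic $(\mu-r_1)(\mu-r_2)$ has real coefficients.

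First I would carry out the forward direction. Shifting by~(\ref{mushift}) turns the factorization into $[(\hat\mu-\hat r_1)(\hat\mu-\hat r_2)]^2$; since the depressed quartic~(\ref{reducedquartic}) has vanishing cubic coefficient and its four roots sum to $2(\hat r_1+\hat r_2)$, we obtain $\hat r_2=-\hat r_1$, so the quadratic is $\hat\mu^2+q$ with $q=\hat r_1\hat r_2=-\hat r_1^2$. Expanding $(\hat\mu^2+q)^2=\hat\mu^4+2q\hat\mu^2+q^2$ and matching coefficients with~(\ref{reducedquartic}) gives at once $\hat\Sigma_3=0$, $\hat\Sigma_2=2q$, and $\hat\Sigma_4=q^2\ge 0$, whence $\sqrt{\hat\Sigma_4}=|q|$. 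The sign of $q$ then separates the two cases: when $\hat r_1$ is real (both Dirichlet eigenvalues real) we have $q=-\hat r_1^2\le 0$, so $\hat\Sigma_2=2q=-2|q|=-2\sqrt{\hat\Sigma_4}$; when $\hat r_1=i\beta$ is imaginary (the $\mu_1=\mu_2$ or $\mu_1=\mu_2^*$ cases) we have $q=\beta^2\ge 0$, so $\hat\Sigma_2=2\sqrt{\hat\Sigma_4}$.

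For the converse I would run the factorization backwards. Given $\hat\Sigma_3=0$ and $\hat\Sigma_2=\mp 2\sqrt{\hat\Sigma_4}$, set $q=\hat\Sigma_2/2=\mp\sqrt{\hat\Sigma_4}$, so that $q^2=\hat\Sigma_4$ and~(\ref{reducedquartic}) collapses to $(\hat\mu^2+q)^2$, i.e.\ $\hat\mu^2=-q$; equivalently, one may simply quote the explicit roots of Corollary~\ref{realcorollary4}. For $\hat\Sigma_2=-2\sqrt{\hat\Sigma_4}$ one finds $\hat\mu=\pm\hat\Sigma_4^{1/4}$ real, so $\mu_1,\mu_2$ are real and $\mu_1=\mu_1^*,\ \mu_2=\mu_2^*$; for $\hat\Sigma_2=+2\sqrt{\hat\Sigma_4}$ one finds $\hat\mu=\pm i\hat\Sigma_4^{1/4}$, a doubled conjugate pair, and the two admissible labelings give $\mu_1=\mu_2$ or $\mu_1=\mu_2^*$. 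The main bookkeeping obstacle is precisely this last point: because the two imaginary roots each have multiplicity two, there is a labeling freedom in assigning them to $\hat\mu_1,\hat\mu_2$, and one must check that it produces exactly the disjunction $\mu_1=\mu_2$ or $\mu_1=\mu_2^*$ (and no third possibility), as well as dispose of the degenerate collapse $\hat\Sigma_4=0$ in which both cases coincide at the quadruple root $\hat\mu=0$.
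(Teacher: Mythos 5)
Your proof is correct, and it reaches the lemma by a genuinely different route than the paper. The paper's entire proof is one sentence: the explicit root formulas in equations~(\ref{muexpressions2}),~(\ref{muexpressions3}), and~(\ref{muexpressions1}) imply the result. In other words, the paper checks the claim against the case analysis already carried out in Corollaries~\ref{realcorollary4} and~\ref{realcorollary5}: when $\hat{\Sigma}_3\neq 0$, equation~(\ref{muexpressions1}) gives $\hat{\mu}_1$ and $\hat{\mu}_2$ the distinct real parts $-\sqrt{z/2}$ and $+\sqrt{z/2}$ with $z>0$, and $\rho_+\neq\rho_-$ prevents both eigenvalues from being real, so every coincidence in the lemma forces $\hat{\Sigma}_3=0$; then, in the regime of equation~(\ref{muexpressions2}), $\hat{\Sigma}_2=-2\sqrt{\hat{\Sigma}_4}$ is precisely the vanishing of the common imaginary part and $\hat{\Sigma}_2=+2\sqrt{\hat{\Sigma}_4}$ precisely the vanishing of the opposite real parts, while the regime of equation~(\ref{muexpressions3}) admits no coincidence at all. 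You instead bypass the corollaries in the forward direction: each coincidence makes the quartic~(\ref{quartic}) the square of a real quadratic, which after the shift~(\ref{mushift}) must be $\hat{\mu}^2+q$ since the depressed quartic~(\ref{reducedquartic}) has no cubic term, and coefficient matching hands you $\hat{\Sigma}_3=0$, $\hat{\Sigma}_2=2q$, $\hat{\Sigma}_4=q^2$ all at once, with the sign of $q$ (nonpositive for two real double roots, nonnegative for a doubled conjugate pair) separating the two statements. What your factorization argument buys is self-containedness: $\hat{\Sigma}_3=0$ emerges as a conclusion of the algebra rather than having to be excluded case by case through the $z>0$ discussion. What the paper's route buys is economy, since the corollaries were already proved for the integration scheme, and automatic handling of the bookkeeping you rightly flag at the end: the labeling freedom between the doubled imaginary roots and the quadruple-root collapse at $\hat{\Sigma}_4=0$ are both encoded in the $\pm$ signs of the explicit expressions, which is presumably why the author felt a one-line proof sufficed.
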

\begin{proof}
The explicit expressions in equations~(\ref{muexpressions2}),~(\ref{muexpressions3}), and~(\ref{muexpressions1}) imply the stated results.
\end{proof}

\begin{lemma}
If $p(x,t)$ is a smooth bounded two-phase solution of the NLS equation~(\ref{nls}) and $\nu_1(x,t) = |p(x,t)|^2,$ then the non-zero extrema of $\nu_1(x,t)$ 
must occur at points $(\nu_1,\nu_2)$ where $\hat{\Sigma}_3 =0$ and $ \hat{\Sigma}_2^2 - 4 \hat{\Sigma}_4 = 0.$
\end{lemma}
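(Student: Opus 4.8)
The plan is to treat this lemma as a bookkeeping corollary that matches the eigenvalue configurations already isolated in Lemma~\ref{criticalpoints} against the algebraic translation dictionary of Lemma~\ref{mutosigma}. First I would fix a point $(x_0,t_0)$ at which $\nu_1$ attains a non-zero relative extremum, so that $\nu_1(x_0,t_0)>0$ and, since $\nu_1$ is a smooth function of $(x,t)$, both first derivatives vanish: $\partial_x\nu_1=\partial_t\nu_1=0$ there. The hypothesis $\nu_1>0$ is exactly what is needed to divide out the overall factor of $\nu_1$ that multiplies every right-hand side of~(\ref{nuequations}); at a point where $\nu_1=0$ the first derivatives would vanish trivially and impose no constraint on the $\mu_j$, so the ``non-zero'' qualifier is indispensable. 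With $\nu_1>0$ in hand, Lemma~\ref{criticalpoints} applies and forces the Dirichlet eigenvalues into one of its two distinguished configurations: (i) $\mu_1=\mu_1^*$ and $\mu_2=\mu_2^*$, or (ii) $\mu_1=\mu_2^*$.

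Next I would feed each configuration through Lemma~\ref{mutosigma}. Configuration (i) is, by the first statement of that lemma, equivalent to $\hat{\Sigma}_3=0$ together with $\hat{\Sigma}_2=-2\sqrt{\hat{\Sigma}_4}$; configuration (ii) is, by the second statement, equivalent to $\hat{\Sigma}_3=0$ together with $\hat{\Sigma}_2=+2\sqrt{\hat{\Sigma}_4}$. In either case one reads off $\hat{\Sigma}_3=0$ and $\hat{\Sigma}_2=\pm 2\sqrt{\hat{\Sigma}_4}$, and squaring the second relation yields precisely $\hat{\Sigma}_2^2-4\hat{\Sigma}_4=0$. The only degenerate overlap is the coincidence $\mu_1=\mu_2$ (both real), which collapses both branches and forces $\hat{\Sigma}_2=\hat{\Sigma}_4=0$; this still lies on the variety $\hat{\Sigma}_2^2-4\hat{\Sigma}_4=0$, so no admissible configuration escapes the claimed locus. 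This completes the chain: every non-zero extremum is sent to a point where $\hat{\Sigma}_3=0$ and $\hat{\Sigma}_2^2-4\hat{\Sigma}_4=0$.

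A useful observation is that the statement as written does not require the sign information distinguishing (i) from (ii): since $\hat{\Sigma}_2^2-4\hat{\Sigma}_4$ is insensitive to the sign of $\hat{\Sigma}_2$, it vanishes for both the genuine-extremum branch and the saddle branch, so Theorem~\ref{nuextremalemma} is not strictly needed here. If one wished to sharpen the conclusion to record that a genuine (non-saddle) extremum forces specifically $\hat{\Sigma}_2=-2\sqrt{\hat{\Sigma}_4}\le 0$, one would additionally invoke Theorem~\ref{nuextremalemma} to discard configuration (ii) as a saddle whenever $\mu_1\neq\mu_2$, but for the present weaker claim that refinement is optional. I do not anticipate a serious obstacle; the one point demanding genuine care is the logical role of $\nu_1>0$ at the first step, where it converts the analytic condition ``relative extremum'' into the eigenvalue constraints of Lemma~\ref{criticalpoints}, and the verification that the degenerate cases $\mu_1=\mu_2$ and $\hat{\Sigma}_4=0$ remain inside the algebraic set $\{\hat{\Sigma}_3=0,\ \hat{\Sigma}_2^2-4\hat{\Sigma}_4=0\}$, which the square-root bookkeeping above confirms.
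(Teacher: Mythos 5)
Your proof is correct, and it is a genuinely leaner route than the paper's. The paper proves this lemma by citing Theorem~\ref{nuextremalemma} (whose Hessian computation shows that the configuration $\mu_1=\mu_2^*$ with $\mu_1\neq\mu_2$ is a saddle), so that extrema can occur only when $\mu_1=\mu_2$ or when both $\mu_j$ are real, and then applies Lemma~\ref{mutosigma}. You bypass the second-order analysis entirely: the first-order condition of Lemma~\ref{criticalpoints} (valid precisely because $\nu_1>0$ lets you cancel the overall factor of $\nu_1$ in equations~(\ref{nuequations})) already confines the eigenvalues to the two configurations, and both branches of Lemma~\ref{mutosigma} --- $\hat{\Sigma}_2=-2\sqrt{\hat{\Sigma}_4}$ for the real-pair case and $\hat{\Sigma}_2=+2\sqrt{\hat{\Sigma}_4}$ for the conjugate-pair case --- land in the sign-insensitive set $\hat{\Sigma}_3=0,$ $\hat{\Sigma}_2^2-4\hat{\Sigma}_4=0$ after squaring. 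What your route gives up is only the sign of $\hat{\Sigma}_2$, which is what distinguishes genuine extrema from saddles; the lemma as stated does not need it, though that sharper information is the currency of the paper's later classification of the critical points as maxima, minima, or saddles. Your treatment of the degenerate overlap $\mu_1=\mu_2$ (necessarily real at a critical point, forcing $\hat{\Sigma}_2=\hat{\Sigma}_4=0$, hence still inside the claimed variety) is also correct, so there is no gap.
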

 \begin{proof}
Theorem~\ref{nuextremalemma} shows that extrema can only occur when either (i) $\mu_1 = \mu_2$ or (ii) $\mu_1=\mu_1^*$ and $\mu_2=\mu_2^*.$  Together
with the previous lemma, the result follows.
\end{proof}

\begin{lemma}
Suppose the quantities $\nu_1, \nu_2 \in \mathbb{R}$ are replaced by  complex variables $\nu_1, \nu_2 \in \mathbb{C}$ in the rational functions $\hat{\Sigma}_2, \hat{\Sigma}_3,$ and $\hat{\Sigma}_4.$ Then $\hat{\Sigma}_3 = 0$ and $\hat{\Sigma}_2^2 - 4 \hat{\Sigma}_4 = 0$ if and only if
$P(\nu_1)=0,$ where $P(\nu_1)$ is a polynomial of degree ten in $\nu_1 \in \mathbb{C}.$
\label{realmulemma}
\end{lemma}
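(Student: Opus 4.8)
The plan is to treat the two stated conditions as a polynomial system in the complex pair $(\nu_1,\nu_2)$ and to eliminate $\nu_2$, the surviving factor in $\nu_1$ being $P$. By~(\ref{sigmaequations}) the functions $\hat{\Sigma}_2,\hat{\Sigma}_3,\hat{\Sigma}_4$ are rational in $\nu_1,\nu_2$ with denominators that are powers of $\nu_1$, so I would first clear denominators and set
\begin{equation}
\tilde{A} := \nu_1^3\,\hat{\Sigma}_3, \qquad \tilde{B} := \nu_1^4\,(\hat{\Sigma}_2^2 - 4\hat{\Sigma}_4),
\end{equation}
which are honest polynomials in $(\nu_1,\nu_2)$, of degrees $3$ and $4$ in $\nu_2$ with leading $\nu_2$-coefficients equal to the nonzero constants $-\tfrac{1}{64}$ and $\tfrac{3}{256}$. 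For $\nu_1\neq 0$ the system $\hat{\Sigma}_3=0,\ \hat{\Sigma}_2^2-4\hat{\Sigma}_4=0$ is equivalent to $\tilde{A}=\tilde{B}=0$, and I would form the resultant $\mathcal{R}(\nu_1):=\mathrm{Res}_{\nu_2}(\tilde{A},\tilde{B})$. Because the two leading $\nu_2$-coefficients never vanish, classical resultant theory gives, for each fixed $\nu_1$, that $\tilde{A}$ and $\tilde{B}$ possess a common root $\nu_2\in\mathbb{C}$ if and only if $\mathcal{R}(\nu_1)=0$; this is the desired equivalence once $P$ is extracted from $\mathcal{R}$, the forward direction being the vanishing of the resultant at a common root and the reverse being the nondegeneracy of the leading terms.

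Next I would fix the degree of $\mathcal{R}$ by a scaling argument. Assigning weights $\mathrm{wt}(\nu_1)=2$, $\mathrm{wt}(\nu_2)=3$, $\mathrm{wt}(\Lambda_i)=i$, one checks directly from~(\ref{sigmaequations}) that every $\hat{\Sigma}_j$ is weighted homogeneous of weight $j$, so $\tilde{A}$ and $\tilde{B}$ are weighted homogeneous of weights $9$ and $12$. The resultant of a weight-$9$ polynomial of $\nu_2$-degree $3$ against a weight-$12$ polynomial of $\nu_2$-degree $4$ is then weighted homogeneous of weight $4\cdot 9 + 3\cdot 12 - 3\cdot 4\cdot 3 = 36$, so $\mathcal{R}(\nu_1)=\sum_k c_k\,\nu_1^k$ with $c_k$ a $\Lambda$-polynomial of weight $36-2k$; in particular $\deg_{\nu_1}\mathcal{R}\le 18$. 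Specializing all $\Lambda_i=0$ reduces the two polynomials to multiples of $\nu_2^3+(\text{lower})$ and $\nu_2^4+(\text{lower})$ whose resultant is a nonzero constant times $\nu_1^{18}$, which shows $\deg_{\nu_1}\mathcal{R}=18$ with a constant (nonzero) leading coefficient.

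The remaining task is to divide out the spurious power of $\nu_1$. At $\nu_1=0$ both $\tilde{A}$ and $\tilde{B}$ collapse to constant multiples of $\nu_2^3$ and $\nu_2^4$, so all their roots coalesce at $\nu_2=0$ and $\mathcal{R}$ vanishes at $\nu_1=0$ to some order $s$; the assertion $\deg P=10$ is exactly the statement $s=8$, giving $P(\nu_1):=\mathcal{R}(\nu_1)/\nu_1^{8}$ of degree $18-8=10$. To compute $s$ I would use $\mathcal{R}=(\mathrm{lead}_{\nu_2}\tilde{A})^{4}\prod_i \tilde{B}(\alpha_i)$ over the three Puiseux branches $\alpha_i(\nu_1)$ of $\tilde{A}$ and add their $\nu_1$-valuations. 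The Newton polygon of $\tilde{A}$ yields one integer branch $\alpha_1\sim\nu_1$ and a conjugate pair $\alpha_{2,3}\sim\pm 4\sqrt{K_2}\,\nu_1^{1/2}$, where $K_2$ is the $\nu_1,\nu_2$-free part of $\nu_1\Sigma_2$. The integer branch gives $v_{\nu_1}(\tilde{B}(\alpha_1))=2$, and each square-root branch is found to give valuation $3$, so $s=2+3+3=8$.

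The crux, and the main obstacle, is this last valuation count for the square-root branches. Substituting $\beta^2=16K_2$ makes the order-$\nu_1^{2}$ part of $\tilde{B}(\alpha_{2,3})$ vanish identically, and carrying the branch to its next coefficient, $\gamma=2K_3/K_2$ with $K_3$ the $\nu_1,\nu_2$-free part of $\nu_1\Sigma_3$, makes the order-$\nu_1^{5/2}$ part vanish as well; only at order $\nu_1^{3}$ does a generically nonzero term survive. Thus the naive counts $2+2+2$ and $2+\tfrac52+\tfrac52$ are both defeated by a double cancellation, which is the analytic shadow of the tangency of the curves $\rho_+=0$ and $\rho_-=0$ remarked after~(\ref{nurho}); it is precisely this tangency that forces the degree to drop to ten rather than to a larger value. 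I would confirm the two cancellations and the survival of the order-$\nu_1^{3}$ term either through the Newton--Puiseux bookkeeping above or, equivalently, by computing $\mathrm{Res}_{\nu_2}(\tilde{A},\tilde{B})$ directly with a computer-algebra system and factoring out $\nu_1^{8}$.
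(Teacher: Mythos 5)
Your proposal is correct, and I was able to verify its key computational claims independently: the leading $\nu_2$-coefficients of $\tilde{A}=\nu_1^3\hat{\Sigma}_3$ and $\tilde{B}=\nu_1^4(\hat{\Sigma}_2^2-4\hat{\Sigma}_4)$ are indeed the constants $-\tfrac{1}{64}$ and $\tfrac{3}{256}$; the grading $\mathrm{wt}(\nu_1,\nu_2,\Lambda_i)=(2,3,i)$ does make $\tilde{A},\tilde{B}$ isobaric of weights $9$ and $12$ and the resultant isobaric of weight $36$, so $\deg_{\nu_1}\mathcal{R}\le 18$ with a weight-zero (hence constant) top coefficient; specializing $\Lambda_i=0$ gives $\mathcal{R}=64^{-4}\nu_1^{18}$; the Puiseux branches of $\tilde{A}$ are as you say, with $\nu_2\sim\pm4\sqrt{K_2}\,\nu_1^{1/2}+(2K_3/K_2)\,\nu_1$, and along them the order-$\nu_1^{2}$ and order-$\nu_1^{5/2}$ parts of $\tilde{B}$ do both cancel, while the integer branch contributes valuation $2$ through the $K_2^2\nu_1^2$ term. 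However, your route is genuinely different from the paper's. The paper's proof of Lemma~\ref{realmulemma} is a single sentence --- $P$ arises by eliminating $\nu_2$ --- and the degree-ten claim is really substantiated only in Theorem~\ref{nuexplicit}, where $P$ is identified as the resolvent polynomial for factoring the reduced form of the sextic $\mathscr{R}(\lambda)$ into two cubics, so that the degree is the combinatorial count $6!/(3!\,3!\,2!)=10$ and the roots come for free as $\nu_1=-\tfrac{1}{4}\chi_1^2$. Your argument buys a self-contained proof that never leaves the elimination ideal, pins down the nonzero constant leading coefficient, and explains structurally why the naive degree $18$ drops to $10$ (the double cancellation, i.e.\ the tangency you point to); the paper's identification buys the explicit location of the roots in terms of the branch points, which is what it actually needs downstream, at the cost of deferring the degree claim to an unexhibited ``explicit calculation.''

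One refinement closes the single step you deferred. The survival of the order-$\nu_1^{3}$ term is not actually needed for the lemma as stated: your branch analysis already gives valuation $\ge 8$ for every $\Lambda$ with $K_2\neq 0$, hence the coefficients $c_0,\dots,c_7$ of $\mathcal{R}$ vanish on a Zariski-dense set of $\Lambda$ and therefore identically, so $P:=\mathcal{R}/\nu_1^{8}$ is a polynomial; its degree is then exactly $18-8=10$ for every $\Lambda$, because $c_{18}=64^{-4}\neq 0$ is constant --- regardless of whether the exact order of vanishing is $8$ or larger. Exactness of $s=8$ is only needed to conclude that $P(0)\neq 0$ generically, i.e.\ that no further power of $\nu_1$ can be split off, and that fact is confirmed a posteriori by the resolvent description, whose ten roots $-\tfrac{1}{4}\chi_1^2$ are generically nonzero and distinct.
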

\begin{proof}
$P(\nu_1)=0$ upon elimination of $\nu_2$ from the two equations $\hat{\Sigma}_3=0$ and $\hat{\Sigma}_2^2 - 4\hat{\Sigma}_4 =0.$
\end{proof}
The power of the effective integration method presented in this paper  lies in the fact, stated in the following theorem, that the roots of the polynomial equation  $P(\nu_1)=0$ can be expressed explicitly in terms of the branch points of the Riemann surface $\mathscr{K}_2.$

\begin{theorem}
If $p(x,t)$ is a smooth bounded two-phase solution  of the NLS equation~(\ref{nls}), then $\nu_1 (x,t)$ oscillates on an interval bounded either by zero or by 
positive roots  of the polynomial equation $P(\nu_1)=0.$  The polynomial $P(\nu_1)$ is a resolvent polynomial for the factorization of the reduced form of $\mathscr{R} (\lambda)$ into two cubic factors.  In particular, the ten roots of $P(\nu_1)= 0,$ together with the corresponding $\nu_2$ values, are given explicitly 
in terms of the six branch points $\lambda_i,$ $i = 1, \ldots, 6,$ of $\mathscr{K}_2$ by the formulas
\begin{equation}
\begin{array}{rcl}
\nu_1 & = & -  \f{1}{4} \chi_1^2,\\[.2in]
\nu_2 & = & -\f{1}{2}  \chi_1 \chi_2,
\end{array}
\label{nu1expression}
\end{equation}
where
\begin{equation}
\begin{array}{rcl}
\chi_1 & = & \lambda_{\pi(1)} + \lambda_{\pi(2)}+\lambda_{\pi(3)}-\lambda_{\pi(4)}-\lambda_{\pi(5)}-\lambda_{\pi(6)},\\[.2in]
\chi_2 &= & \lambda_{\pi(1)}^2 + \lambda_{\pi(2)}^2+\lambda_{\pi(3)}^2-\lambda_{\pi(4)}^2-\lambda_{\pi(5)}^2-\lambda_{\pi(6)}^2,
\end{array}
\end{equation}
and $(\pi(1),\pi(2),\pi(3),\pi(4),\pi(5),\pi(6)) \in S_6,$ the permutation group of order six. 
\label{nuexplicit}
\end{theorem}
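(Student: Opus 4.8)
The plan is to read the extremal condition $\hat\Sigma_3=0$, $\hat\Sigma_2^2-4\hat\Sigma_4=0$ geometrically as a factorization of the invariant spectral polynomial $\mathscr{R}(\lambda)$ into two cubics, and then to extract $\nu_1$ and $\nu_2$ directly from Vieta's relations for those factors. First I would note that, at any extremum, Theorem~\ref{nuextremalemma} together with Lemma~\ref{mutosigma} forces the Dirichlet eigenvalues to degenerate so that the quartic $\prod_{i=1}^{2}(\lambda-\mu_i)(\lambda-\mu_i^*)$ becomes the square of a single real monic quadratic $R_0(\lambda)$ --- namely $(\lambda-\mu_1)(\lambda-\mu_2)$ with $\mu_1,\mu_2$ real in case~(i), or $(\lambda-\mu_1)(\lambda-\mu_1^*)$ in case~(ii). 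Writing $\Psi_{11}=-iQ(\lambda)$ with $Q$ the real monic cubic read off from~(\ref{mupolynomialpsi}), and using $\Psi_{12}\Psi_{21}=-\nu_1R_0(\lambda)^2$, the definition~(\ref{Reqn}) gives
\begin{equation}
\mathscr{R}(\lambda)=Q(\lambda)^2+\nu_1\,R_0(\lambda)^2=\bigl(Q(\lambda)+i\sqrt{\nu_1}\,R_0(\lambda)\bigr)\bigl(Q(\lambda)-i\sqrt{\nu_1}\,R_0(\lambda)\bigr),
\end{equation}
a factorization of the sextic into two complex-conjugate cubic factors $A=Q+i\sqrt{\nu_1}\,R_0$ and $B=Q-i\sqrt{\nu_1}\,R_0$.

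Next I would obtain the two formulas purely by comparing coefficients of $A$ and $B$. Let $e_1(\cdot)$ and $e_2(\cdot)$ denote the first and second elementary symmetric functions of the roots of a cubic factor. Since $R_0$ is monic, the leading mismatch between $A$ and $B$ is in the $\lambda^2$-coefficient: $e_1(A)=-(c_2+i\sqrt{\nu_1})$ and $e_1(B)=-(c_2-i\sqrt{\nu_1})$, so the difference of the root-sums is $\chi_1=e_1(A)-e_1(B)=-2i\sqrt{\nu_1}$, which immediately yields $\nu_1=-\tfrac14\chi_1^2$. For the second formula I would pass to the power sums $\sum\alpha_j^2=e_1(A)^2-2e_2(A)$ and likewise for $B$; the difference reduces to $\chi_2=4i\sqrt{\nu_1}\,(c_2-r_1)$, where $r_1=-\tfrac12\Sigma_1$ is the $\lambda$-coefficient of $R_0$ (obtained by matching $R_0^2$ to $\lambda^4-\Sigma_1\lambda^3+\cdots$). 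Substituting $c_2=-\tfrac12\Lambda_1$ and $\Sigma_1=-\tfrac12\,\nu_2/\nu_1+\Lambda_1$ from the Lemmas on the symmetric polynomials collapses $c_2-r_1$ to $-\tfrac14\,\nu_2/\nu_1$, so $\chi_2=-i\,\nu_2/\sqrt{\nu_1}$; multiplying then gives $\chi_1\chi_2=-2\nu_2$, i.e.\ $\nu_2=-\tfrac12\chi_1\chi_2$.

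Finally I would settle the combinatorics and the counting. Each cubic factorization corresponds to a partition of the six branch points into two triples, the roots of $A$ and of $B$, labelled by $\pi\in S_6$; there are exactly $\binom{6}{3}/2=10$ such unordered partitions. Swapping the two triples sends $(\chi_1,\chi_2)\mapsto(-\chi_1,-\chi_2)$ and leaves both $\nu_1=-\tfrac14\chi_1^2$ and $\nu_2=-\tfrac12\chi_1\chi_2$ invariant, so the ten partitions produce ten well-defined pairs $(\nu_1,\nu_2)$. Since $P(\nu_1)$ of Lemma~\ref{realmulemma} is also of degree ten and vanishes exactly on the extremal set $\{\hat\Sigma_3=0,\ \hat\Sigma_2^2-4\hat\Sigma_4=0\}$, matching degrees identifies $P$ with the resolvent polynomial for the two-cubic factorization of the reduced sextic and shows its roots are precisely the ten values $-\tfrac14\chi_1^2$. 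The oscillation statement then follows because the relative extrema of the smooth bounded, nonnegative function $\nu_1(x,t)$ occur only at these points, so $\nu_1$ oscillates on an interval whose endpoints are either $0$ or positive roots of $P$.

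I expect the main obstacle to be the identification asserted in the last paragraph: proving that the degree-ten polynomial $P(\nu_1)$ produced by eliminating $\nu_2$ from the two rational equations coincides, up to a nonzero constant, with the degree-ten resolvent whose roots are $-\tfrac14\chi_1^2$. The coefficient bookkeeping of the first two paragraphs is essentially forced, but confirming that the elimination neither introduces nor loses roots --- so that the two polynomials share the same ten zeros and the same degree --- and correctly tracking the branch of $\sqrt{\nu_1}$ across the two eigenvalue configurations~(i) and~(ii) is the delicate part.
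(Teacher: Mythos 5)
Your proposal is correct and follows essentially the same route as the paper's proof: both exploit the factorization $\mathscr{R}(\lambda)=Q(\lambda)^2+\nu_1 R_0(\lambda)^2=\bigl(Q+i\sqrt{\nu_1}R_0\bigr)\bigl(Q-i\sqrt{\nu_1}R_0\bigr)$ at the critical configurations, extract $\nu_1$ and $\nu_2$ from Vieta's relations for the two cubic factors, count the $\binom{6}{3}/2=10$ partitions of the branch points into unordered triples, and identify $P(\nu_1)$ with the cubic-resolvent polynomial of the reduced sextic. The only differences are cosmetic (you compare $\lambda^2$-coefficients and power sums where the paper cites the quadratic and constant terms of the factors), and the step you flag as delicate --- that the eliminant $P$ coincides with the degree-ten resolvent --- is handled in the paper by exactly the kind of direct elimination computation you describe.
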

\begin{proof}
If $\mu_1 \neq \mu_2,$ then extrema of $\nu_1$ (or saddle points) occur when $\mu_1 = \mu_1^*, \mu_2 = \mu_2^*$ (or $\mu_1 = \mu_2^*$) and so (in either case, although we are only concerned with extrema) $P(\nu_1)=0.$   For each of these ten roots, the sextic $\mathscr{R}$ factors into the product of two cubics $Q_1$ and $Q_2,$
\begin{equation}
\begin{array}{rcl}
\mathscr{R}(\lambda) & = & - \Psi_{11} (\lambda)^2 +  \nu_1 (\lambda - \mu_1)^2 (\lambda -\mu_2)^2, \\[.1in]
& = & Q_1 (\lambda) Q_2 (\lambda),
\end{array}
\end{equation}
where
\begin{equation}
\begin{array}{rcl}
Q_1 & = &i \Psi_{11} (\lambda) + i \sqrt{ \nu_1} (\lambda - \mu_1)(\lambda - \mu_2),\\[.1in] 
Q_2 & = & i \Psi_{11} (\lambda) - i \sqrt{ \nu_1} (\lambda - \mu_1)(\lambda - \mu_2). 
\end{array}
\end{equation}
Notice that, in case $\mu_1=\mu_2,$ it is still true that $P(\nu_1)=0,$ and  there is still  a similar factorization of $\mathscr{R}$ into cubic factors, so the stated 
formulas for $\nu_1$ and $\nu_2$ are still true.
The zeros of each cubic factor form two non-overlapping sets, each comprised of three distinct branch points. Using Vieta's formulas to express the coefficients of the cubic factors $Q_1$ and $Q_2$ in terms of the three roots of $Q_1 = 0$ and the three roots of $Q_2 = 0,$ explicitly solving for $\nu_1$ and $\nu_2$ from the coefficients of the quadratic and constant terms of $Q_1$ and $Q_2,$ and eliminating
the terms depending on $\mu_1$ and $\mu_2,$  leads to the stated formulas for $\nu_1$ and $\nu_2.$  Notice that both $\nu_1$ and $\nu_2$  are invariant under the actions of re-ordering the first three or the last three summands, and also swapping the first three summands with the second three summands, so that the number of values of either $\nu_1$ or $\nu_2$ is  $6!/(3!3!2!)=10,$ as expected.
 In fact, explicit calculation shows that if $\nu_1 = - a_2^2,$ then the degree ten polynomial $P(\nu_1)$ is the resolvent polynomial for the factorization into two cubic factors, one of the factors  being $\lambda^3 + a_2 \lambda^2 +a_1 \lambda + a_0,$  of the reduced sextic polynomial of the sextic polynomial $\mathscr{R}(\lambda).$   The resolvent polynomial can be calculated by successive eliminations of variables from the equations obtained by equating coefficients of the expanded product of the desired factors equal to the polynomial to be factored~\cite{piez 04}.
\end{proof}

\begin{theorem} If a  smooth bounded two-phase solution of the NLS equation~(\ref{nls}) exists, then the values of $\nu_1$ and $\nu_2$ at  critical points of $\nu_1(x,t)$ are given by
\begin{equation}
\begin{array}{rcl}
\nu_1 & = & -  \f{1}{4}  \chi_1^2,\\[.2in]
\nu_2 & = & -\f{1}{2}  \chi_1 \chi_2,
\end{array}
\label{chieqn}
\end{equation}
where
\begin{equation}
\begin{array}{rcl}
\chi_1 & = & \lambda_{\pi(1)} + \lambda_{\pi(2)}+\lambda_{\pi(3)}-\lambda_{\pi(4)}-\lambda_{\pi(5)}-\lambda_{\pi(6)},\\[.2in]
\chi_2 &= & \lambda_{\pi(1)}^2 + \lambda_{\pi(2)}^2+\lambda_{\pi(3)}^2-\lambda_{\pi(4)}^2-\lambda_{\pi(5)}^2-\lambda_{\pi(6)}^2.
\end{array}
\end{equation}
Moreover, if $\mu_1 \neq \mu_2$ at a critical point, then the following expressions of the second-order partial derivatives at the critical point have a similarly explicit dependence on the branch points,
\begin{equation}
\begin{array}{rcl}
\nu_{1xx}\nu_{1tt} - \nu_{1xt}^2 & = & 256 \nu_1^2 \Psi_{11} (\mu_1) \Psi_{11} (\mu_2) \\[.1in]
 &=& -16 \chi_1 \prod\limits^{3}_{k=1} (\lambda_{\pi(k)}-\lambda_{\pi(4)}) (\lambda_{\pi(k)}-\lambda_{\pi(5)}) (\lambda_{\pi(k)}-\lambda_{\pi(6)}), 
\end{array}
\label{del}
\end{equation}
and
\begin{equation}
\begin{array}{rcl}
\nu_{1xx} & = & 8 i \nu_1 \frac{\Psi_{11} (\mu_1) - \Psi_{11} (\mu_2)}{\mu_1 - \mu_2}\\[.1in]
&=&-2 (\lambda_{\pi(1)}-\lambda_{\pi(4)}) (\lambda_{\pi(1)}-\lambda_{\pi(5)}) (\lambda_{\pi(2)}-\lambda_{\pi(4)}) (\lambda_{\pi(2)}-\lambda_{\pi(5)}) \\
&& -2 (\lambda_{\pi(1)}-\lambda_{\pi(4)}) (\lambda_{\pi(1)}-\lambda_{\pi(6)}) (\lambda_{\pi(3)}-\lambda_{\pi(4)}) (\lambda_{\pi(3)}-\lambda_{\pi(6)}) \\
&& -2 (\lambda_{\pi(2)}-\lambda_{\pi(5)}) (\lambda_{\pi(2)}-\lambda_{\pi(6)}) (\lambda_{\pi(3)}-\lambda_{\pi(5)}) (\lambda_{\pi(3)}-\lambda_{\pi(6)}) \\
&& -2 \chi_{1} (\lambda_{\pi(1)}-\lambda_{\pi(4)}) (\lambda_{\pi(2)}-\lambda_{\pi(5)}) (\lambda_{\pi(3)}-\lambda_{\pi(6)}),
\end{array}
\label{nuxx}
\end{equation}
where $(\pi(1),\pi(2),\pi(3),\pi(4),\pi(5),\pi(6)) \in S_6,$ where $S_6$ is the permutation group of order $6.$  
\end{theorem}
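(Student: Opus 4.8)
The plan is to read everything off the factorization of the spectral sextic into two cubics that was introduced in the proof of Theorem~\ref{nuexplicit}. At a critical point one has $\mathscr{R}(\lambda)=Q_1(\lambda)Q_2(\lambda)$ with $Q_1=i\Psi_{11}+i\sqrt{\nu_1}(\lambda-\mu_1)(\lambda-\mu_2)$ and $Q_2=i\Psi_{11}-i\sqrt{\nu_1}(\lambda-\mu_1)(\lambda-\mu_2)$, each a monic cubic whose three roots form one of the two triples of branch points, say $\lambda_{\pi(1)},\lambda_{\pi(2)},\lambda_{\pi(3)}$ for $Q_1$ and $\lambda_{\pi(4)},\lambda_{\pi(5)},\lambda_{\pi(6)}$ for $Q_2$. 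Two elementary relations drive the whole argument: first, since $Q_1-Q_2=2i\sqrt{\nu_1}(\lambda-\mu_1)(\lambda-\mu_2)$ vanishes at $\lambda=\mu_1,\mu_2$, we have $i\Psi_{11}(\mu_j)=Q_1(\mu_j)=Q_2(\mu_j)$; second, matching the $\lambda^2$-coefficient in $Q_1-Q_2$ gives $2i\sqrt{\nu_1}=-\chi_1$, hence $Q_1-Q_2=-\chi_1(\lambda-\mu_1)(\lambda-\mu_2)$ and $\nu_1=-\tfrac14\chi_1^2$. The formulas~(\ref{chieqn}) for $\nu_1$ and $\nu_2$ are then exactly the content of Theorem~\ref{nuexplicit}: matching the remaining coefficients of $Q_1-Q_2$ expresses $\mu_1+\mu_2$ and $\mu_1\mu_2$ through the elementary symmetric functions $e_j^{(1)}$ and $e_j^{(2)}$ of the two triples, and feeding these together with $c_2=-\tfrac12\Lambda_1$ into~(\ref{nu2equation}) collapses $\nu_2$ to $-\tfrac12\chi_1\chi_2$.

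For the Hessian determinant I would start from the first equality in~(\ref{del}), which is already established in the proof of Theorem~\ref{nuextremalemma}, and only need to evaluate $\Psi_{11}(\mu_1)\Psi_{11}(\mu_2)$ on the branch points. Because $Q_1(\lambda_{\pi(k)})=0$ for $k=1,2,3$, the relation $Q_1-Q_2=-\chi_1(\lambda-\mu_1)(\lambda-\mu_2)$ gives $Q_2(\lambda_{\pi(k)})=\chi_1(\lambda_{\pi(k)}-\mu_1)(\lambda_{\pi(k)}-\mu_2)$. Taking the product over $k=1,2,3$ realizes the resultant of the two cubics in two ways,
\begin{equation}
\prod_{k=1}^{3}\prod_{l=4}^{6}(\lambda_{\pi(k)}-\lambda_{\pi(l)})=\prod_{k=1}^{3}Q_2(\lambda_{\pi(k)})=\chi_1^3\Big[\prod_{k=1}^{3}(\lambda_{\pi(k)}-\mu_1)\Big]\Big[\prod_{k=1}^{3}(\lambda_{\pi(k)}-\mu_2)\Big],
\end{equation}
and since $\prod_{k=1}^{3}(\lambda_{\pi(k)}-\mu_j)=-Q_1(\mu_j)=-i\Psi_{11}(\mu_j)$, this reads $\mathrm{Res}(Q_1,Q_2)=-\chi_1^3\Psi_{11}(\mu_1)\Psi_{11}(\mu_2)$. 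Solving for the product and inserting $\nu_1^2=\tfrac{1}{16}\chi_1^4$ turns $256\nu_1^2\Psi_{11}(\mu_1)\Psi_{11}(\mu_2)$ into $-16\chi_1\,\mathrm{Res}(Q_1,Q_2)$, which is precisely the second line of~(\ref{del}). The only care needed here is the bookkeeping of the factors of $i$ and the fixed branch $2i\sqrt{\nu_1}=-\chi_1$.

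For~(\ref{nuxx}) I would first produce the intermediate identity $\nu_{1xx}=8i\nu_1\,\frac{\Psi_{11}(\mu_1)-\Psi_{11}(\mu_2)}{\mu_1-\mu_2}$. Differentiating $\nu_{1x}=2i\nu_1(\mu_1+\mu_2-\mu_1^*-\mu_2^*)$ from~(\ref{nuequations}) and using $\nu_{1x}=0$ at the critical point leaves $\nu_{1xx}=2i\nu_1\,\partial_x(\mu_1+\mu_2-\mu_1^*-\mu_2^*)$; the Dubrovin $x$-equations~(\ref{dubrovinequations}) give $\partial_x(\mu_1+\mu_2)=2(\Psi_{11}(\mu_1)-\Psi_{11}(\mu_2))/(\mu_1-\mu_2)$. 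To combine this with the conjugate term I use that, at a genuine extremum, $\mu_1$ and $\mu_2$ are real and $\sqrt{\nu_1}$ is real, so the defining expressions for the factors give $Q_2(\lambda)=\overline{Q_1(\bar\lambda)}$; evaluating at the real points $\mu_j$ and invoking $Q_1(\mu_j)=Q_2(\mu_j)$ forces $Q_1(\mu_j)\in\mathbb{R}$, whence $\Psi_{11}(\mu_j)=-iQ_1(\mu_j)$ is purely imaginary. The conjugate derivative therefore equals $-\partial_x(\mu_1+\mu_2)$, doubling the contribution and producing the factor $8i\nu_1$. Substituting $\Psi_{11}(\mu_j)=-iQ_1(\mu_j)$ collapses the identity to $\nu_{1xx}=8\nu_1\,\frac{Q_1(\mu_1)-Q_1(\mu_2)}{\mu_1-\mu_2}$, that is, $8\nu_1$ times the divided difference of $Q_1$, which equals $8\nu_1\big[(\mu_1^2+\mu_1\mu_2+\mu_2^2)-e_1^{(1)}(\mu_1+\mu_2)+e_2^{(1)}\big]$.

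The remaining, and genuinely laborious, step is to rewrite this last expression in the symmetric branch-point form of~(\ref{nuxx}). Here I would insert the coefficient-matching values $\mu_1+\mu_2=(e_2^{(1)}-e_2^{(2)})/\chi_1$ and $\mu_1\mu_2=(e_3^{(1)}-e_3^{(2)})/\chi_1$, together with $\nu_1=-\tfrac14\chi_1^2$, and simplify; symmetrizing with the identical expression built from $Q_2$ renders the result manifestly invariant under relabeling within each triple and under swapping the two triples (with $\chi_1\mapsto-\chi_1$). The claim is that this rational combination reduces to the displayed sum of the three ``paired'' quartic products (each a $2\times2$ sub-resultant formed from an index pair of the first triple and the matching pair of the second) together with the ``diagonal'' term $\chi_1\prod_{k=1}^{3}(\lambda_{\pi(k)}-\lambda_{\pi(k+3)})$. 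I expect this final symmetric-function identity, rather than any of the conceptual steps, to be the main obstacle: it is a finite but intricate polynomial identity, best confirmed either by direct expansion with computer algebra or by checking the invariances and matching on enough specializations of the branch points. Everything else reduces cleanly to Vieta's formulas and the resultant of the two cubic factors.
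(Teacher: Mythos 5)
Your proposal is correct, and for the key branch-point evaluations it takes a genuinely different route from the paper. The paper's own proof is a substitution chain: it obtains the first equalities of~(\ref{del}) and~(\ref{nuxx}) in terms of $\Psi_{11}(\mu_j)$ by direct substitution of the critical-point conditions, then substitutes the explicit radical expressions for $\mu_1,\mu_2$ in terms of $\hat{\Sigma}_2,\hat{\Sigma}_4$ from equations~(\ref{mushift}) and~(\ref{muexpressions2}), and finally inserts the branch-point values of $\nu_1,\nu_2$ from~(\ref{chieqn}); every step is an (implicitly computer-algebra) evaluation. You instead work structurally with the cubic factorization $\mathscr{R}=Q_1Q_2$ from the proof of Theorem~\ref{nuexplicit}: coefficient matching in $Q_1-Q_2=-\chi_1(\lambda-\mu_1)(\lambda-\mu_2)$ recovers $\nu_1$, $\nu_2$ (via equation~(\ref{nu2equation})) and Vieta expressions for $\mu_1+\mu_2$ and $\mu_1\mu_2$, and the resultant identity $\prod_{k=1}^{3}Q_2(\lambda_{\pi(k)})=\mathrm{Res}(Q_1,Q_2)=-\chi_1^{3}\,\Psi_{11}(\mu_1)\Psi_{11}(\mu_2)$ yields the second line of~(\ref{del}) in one stroke where the paper grinds through radicals; your sign bookkeeping ($2i\sqrt{\nu_1}=-\chi_1$, $Q_1(\mu_j)=i\Psi_{11}(\mu_j)$, $\prod_k(\lambda_{\pi(k)}-\mu_j)=-i\Psi_{11}(\mu_j)$) checks out and reproduces $-16\chi_1\prod(\cdots)$ exactly. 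Your derivation of the first equality of~(\ref{nuxx}) from the Dubrovin equations plus conjugation symmetry is also sound, and is more explicit than the paper's ``direct substitution.'' Two caveats, neither fatal. First, you argue that step only for a genuine extremum ($\mu_1,\mu_2$ real), whereas the theorem asserts it at any critical point with $\mu_1\neq\mu_2$, including saddles with $\mu_1=\mu_2^*$; the identical argument covers that case, since $\overline{\Psi_{11}(\mu_j)}=-\Psi_{11}(\mu_j^*)$ and $\mu_1=\mu_2^*$ again force $\partial_x(\mu_1^*+\mu_2^*)=-\partial_x(\mu_1+\mu_2)$, so this is an easily repaired omission rather than a gap. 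Second, you leave the final reduction of $8\nu_1\bigl(Q_1(\mu_1)-Q_1(\mu_2)\bigr)/(\mu_1-\mu_2)$ to the displayed symmetric form of~(\ref{nuxx}) as a deferred polynomial identity; that identity is true (it can be confirmed by expansion), and deferring it is no less rigorous than the paper's treatment of the same step, but it is the one place where your resultant machinery buys nothing and both proofs are equally computational. On balance, your approach is cleaner for~(\ref{del}) and equally labor-intensive for~(\ref{nuxx}).
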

\begin{proof}
The first part of the theorem is just a restatement of the result of the previous theorem, viz., by Lemma~\ref{criticalpoints}, at a critical point of $\nu_1(x,t)$ either (i) $\mu_1 = \mu_1^*$ and $\mu_2 = \mu_2^*$ or (ii) $\mu_1 = \mu_2^*.$  In either case
$P(\nu_1)=0$ by Lemmas 10 and 12.   According to Theorem~\ref{nuexplicit}, $\nu_1$ and $\nu_2$ are given by the stated formulas for one of the permutations of the indices.  The  expressions for the second-order partial derivatives at the critical point, assuming $\mu_1 \neq \mu_2,$ follow by direct substitution of  (i) $\mu_1 = \mu_1^*$ and $\mu_2 = \mu_2^*$ or (ii) $\mu_1 = \mu_2^*$ into the expressions for the second-order partial derivatives, resulting in formulas in terms of $\Psi_{11}(\mu).$ Then the explicit expressions for
$\mu_1$ and $\mu_2$ in terms of $\nu_1$ and $\nu_2$ are substituted into $\Psi_{11}(\mu)$, using equations~(\ref{mushift}) and~(\ref{muexpressions2}).  Finally the explicit expressions for $\nu_1$ and $\nu_2$ in terms of the branch points, given by equation~(\ref{chieqn}), are used. 
\end{proof}

\section{Branch Point Reality Conditions}
Using Theorem~\ref{nuexplicit}, the reality condition of Corollary~\ref{realcorollary2} on the algebraic curve can be strengthened.  
\begin{lemma}
If $\lambda_1, \lambda_2, \lambda_3, \lambda_4, \lambda_5, \lambda_6 \in \mathbb{R},$ then $P(\nu_1) =0$ has no positive roots.
\end{lemma}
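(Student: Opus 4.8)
The plan is to read the roots of $P(\nu_1)$ directly off the explicit formula of Theorem~\ref{nuexplicit} and observe that reality of the branch points forces every one of them to be non-positive, so that in particular none is strictly positive. By Lemma~\ref{realmulemma} the polynomial $P$ has degree exactly ten, while Theorem~\ref{nuexplicit} exhibits ten values of $\nu_1$, namely $\nu_1 = -\frac{1}{4}\chi_1^2$, indexed by the $6!/(3!\,3!\,2!) = 10$ ways of splitting the six branch points into two unordered triples, as its roots. Since a degree-ten polynomial has at most ten roots, these ten values constitute the entire root set of $P$, and no positive root can be hiding outside the formula.

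First I would note that when $\lambda_1, \ldots, \lambda_6 \in \mathbb{R}$, the quantity
\begin{equation}
\chi_1 = \lambda_{\pi(1)} + \lambda_{\pi(2)} + \lambda_{\pi(3)} - \lambda_{\pi(4)} - \lambda_{\pi(5)} - \lambda_{\pi(6)}
\end{equation}
is a real number for every permutation $\pi \in S_6$. Consequently $\chi_1^2 \geq 0$, and hence each root satisfies $\nu_1 = -\frac{1}{4}\chi_1^2 \leq 0$. No root can therefore be strictly positive; the only way a root can fail to be strictly negative is the degenerate case $\chi_1 = 0$, which yields $\nu_1 = 0$, still not a positive value. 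This establishes the claim.

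There is essentially no analytic obstacle here, since the entire content has been packaged into the explicit root formula of Theorem~\ref{nuexplicit}, after which the statement reduces to the elementary fact that the square of a real number is non-negative. The only point requiring a moment of care is the bookkeeping that guarantees the ten listed values exhaust all roots of $P$: one must match the degree of $P$ supplied by Lemma~\ref{realmulemma} against the count of distinct triple-splittings of the branch points, thereby ruling out any additional, possibly positive, root not captured by the formula $\nu_1 = -\frac{1}{4}\chi_1^2$.
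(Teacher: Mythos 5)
Your proposal is correct and follows exactly the paper's own argument: the paper's proof is the one-line observation that the result "follows immediately from equation~(\ref{nu1expression})," i.e., from the formula $\nu_1 = -\frac{1}{4}\chi_1^2$ of Theorem~\ref{nuexplicit}, which forces $\nu_1 \leq 0$ once all branch points (and hence $\chi_1$) are real. Your additional bookkeeping about the ten triple-splittings exhausting the roots is already contained in the statement of Theorem~\ref{nuexplicit}, so it adds care but nothing substantively different.
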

\begin{proof}
The result follows immediately from equation~(\ref{nu1expression}).
\end{proof}
\begin{lemma}
If $\nu_2 \in \mathbb{R},$   $\lambda_1, \lambda_2, \lambda_3, \lambda_4, \lambda_5, \lambda_6 \in \mathbb{R}$ are all distinct, and $\lambda_1 = \lambda_2^*,$ then $P(\nu_1)=0$ has no positive roots.
\end{lemma}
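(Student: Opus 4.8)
\section*{Proof proposal}

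The plan is to read the hypothesis in the only way consistent with Corollary~\ref{realcorollary2} and the preceding lemma, namely that $\lambda_1=\lambda_2^*$ is a genuine complex-conjugate pair while the remaining four branch points are real and distinct. Writing $\lambda_1=a+bi$ and $\lambda_2=a-bi$ with $b\neq 0$, and taking $\lambda_3,\lambda_4,\lambda_5,\lambda_6\in\mathbb{R}$ distinct, I would argue by contradiction: suppose $\nu_1>0$ is a root of $P(\nu_1)=0$ whose associated value is $\nu_2\in\mathbb{R}$. By Theorem~\ref{nuexplicit} every root of $P$ is realized by one of the ten partitions of $\{\lambda_i\}$ into two triples, with $\nu_1=-\f14\chi_1^2$ and $\nu_2=-\f12\chi_1\chi_2$. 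The whole argument then reduces to tracking where the conjugate pair $\lambda_1,\lambda_2$ lands in the partition, so I would split into two cases accordingly.

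\textbf{Case A (the pair $\lambda_1,\lambda_2$ lies in the same triple).} Here their contribution to $\chi_1$ is $\lambda_1\pm\lambda_2$ with matching signs, so $\lambda_1+\lambda_2=2a\in\mathbb{R}$ and $\chi_1\in\mathbb{R}$; hence $\nu_1=-\f14\chi_1^2\le 0$, which cannot be a positive root. \textbf{Case B (the pair is split between the two triples).} Labeling the two real indices accompanying $\lambda_1$ as $r_1,r_2$ and those accompanying $\lambda_2$ as $r_3,r_4$, and setting $R_1=\lambda_{r_1}+\lambda_{r_2}-\lambda_{r_3}-\lambda_{r_4}$ and $R_2=\lambda_{r_1}^2+\lambda_{r_2}^2-\lambda_{r_3}^2-\lambda_{r_4}^2$, one gets $\chi_1=(\lambda_1-\lambda_2)+R_1=2bi+R_1$ and $\chi_2=(\lambda_1^2-\lambda_2^2)+R_2=4abi+R_2$. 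Since $\operatorname{Im}\nu_1=-bR_1$, positivity of $\nu_1$ forces it first to be real, i.e.\ $R_1=0$, whereupon $\chi_1=2bi$ and $\nu_1=b^2>0$. Feeding $R_1=0$ into $\nu_2=-\f12\chi_1\chi_2=4ab^2-bR_2\,i$, the hypothesis $\nu_2\in\mathbb{R}$ forces $R_2=0$ as well.

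It remains to close the contradiction. From $R_1=0$ and $R_2=0$ the two real pairs satisfy $\lambda_{r_1}+\lambda_{r_2}=\lambda_{r_3}+\lambda_{r_4}$ and $\lambda_{r_1}^2+\lambda_{r_2}^2=\lambda_{r_3}^2+\lambda_{r_4}^2$; squaring the first identity and subtracting the second yields $\lambda_{r_1}\lambda_{r_2}=\lambda_{r_3}\lambda_{r_4}$, so the pairs have equal sum and equal product and are therefore equal as sets, contradicting the distinctness of $\lambda_3,\dots,\lambda_6$. This rules out Case B, and Case A already contributes no positive root, giving the claim. The step I expect to be the crux is recognizing that the hypothesis $\nu_2\in\mathbb{R}$ is genuinely indispensable here: distinctness alone permits $R_1=0$ (for instance the split $\{1,4\}$ versus $\{2,3\}$), so without the reality of $\nu_2$ there really would be a positive root $\nu_1=b^2$. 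Packaging the two vanishing conditions $R_1=R_2=0$ into the Newton-identity argument that forces set equality is what converts the reality constraint into the desired impossibility.
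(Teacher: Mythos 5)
Your proposal is correct and follows essentially the same route as the paper's own proof: reduce via Theorem~\ref{nuexplicit} to the partition formulas, observe that a positive root forces the conjugate pair to be split between the two triples with the real contributions to $\chi_1$ and (using $\nu_2\in\mathbb{R}$) to $\chi_2$ both vanishing, and then derive equal sums and equal products for the two real pairs, contradicting distinctness. The paper states this very tersely ("for example, $\lambda_3+\lambda_4=\lambda_5+\lambda_6$ and $\lambda_3\lambda_4=\lambda_5\lambda_6$"); your write-up simply makes the case split and the Vieta-type set-equality argument explicit.
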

\begin{proof}
The proof is by contradiction.  If $P(\nu_1) = 0$ and $\nu_1 >0,$ then equation~(\ref{nu1expression}) implies, for example, that $\lambda_3 + \lambda_4 = \lambda_5 + \lambda_6$ and, also, that $\lambda_3 \lambda_4 = \lambda_5 \lambda_6,$ since $\nu_2$ is real.  However these two equalities imply that $\lambda_3$ is equal to at least one of $\lambda_5$ or $\lambda_6,$ contradicting the distinctness of the branch points, viz., the non-singularity assumption on $\mathscr{R}.$
\end{proof}
\begin{lemma}
If $\nu_2 \in \mathbb{R},$  $\lambda_1, \lambda_2, \lambda_3, \lambda_4, \lambda_5, \lambda_6 \in \mathbb{R}$ are all distinct, and $\lambda_1 = \lambda_2^*$
and $\lambda_3 = \lambda_4^*,$  then $P(\nu_1) = 0$ has no positive roots.
\end{lemma}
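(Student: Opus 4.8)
The plan is to follow the strategy of the two preceding lemmas. By Theorem~\ref{nuexplicit} every root of $P(\nu_1)=0$ has the form $\nu_1=-\frac14\chi_1^2$ for some partition of the six branch points into two triples, with companion value $\nu_2=-\frac12\chi_1\chi_2$. Such a root is a positive real precisely when $\chi_1$ is purely imaginary and nonzero, and in that case the hypothesis $\nu_2\in\mathbb{R}$ forces $\chi_2$ to be purely imaginary as well (since $\nu_2=-\frac12\chi_1\chi_2$ with $\chi_1$ nonzero imaginary). So the whole argument reduces to showing that, for the present disposition of branch points, no partition makes both $\chi_1$ and $\chi_2$ purely imaginary with $\chi_1\neq0$. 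I would set $\lambda_1=a+bi,\ \lambda_2=a-bi,\ \lambda_3=c+di,\ \lambda_4=c-di$ with $b,d\neq0$, while $\lambda_5,\lambda_6$ are the two remaining, real, branch points; by Corollary~\ref{realcorollary2} the only alternative for them would be a third conjugate pair, which is the separate case that does admit positive roots and is not considered here.

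First I would isolate $\mathrm{Im}(\chi_1)$. A real branch point contributes nothing to it; a conjugate pair contributes nothing when its two members lie in the same triple, and contributes $\pm2b$ or $\pm2d$ when the pair is split between the triples. Hence $\mathrm{Im}(\chi_1)\neq0$ forces at least one of $\{\lambda_1,\lambda_2\}$ and $\{\lambda_3,\lambda_4\}$ to be split. A short count of the three slots in each triple then leaves only two possibilities. If both pairs are split, each triple already contains one member of each pair, so one real point falls in each triple and $\mathrm{Re}(\chi_1)=\pm(\lambda_5-\lambda_6)\neq0$ by distinctness; then $\chi_1$ is not purely imaginary and this family of partitions yields no positive root. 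The only surviving possibility is that exactly one pair is split and the other is kept together.

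Using the $\{\lambda_1,\lambda_2\}\leftrightarrow\{\lambda_3,\lambda_4\}$ symmetry I may assume $\{\lambda_1,\lambda_2\}$ is split (supplying the nonzero imaginary part) and $\{\lambda_3,\lambda_4\}$ is together. Then, regardless of which triple holds the together-pair and of the harmless overall sign coming from interchanging the two triples, $\mathrm{Re}(\chi_1)=0$ reads $\lambda_3+\lambda_4=\lambda_5+\lambda_6$, while $\mathrm{Re}(\chi_2)=0$ reads $\lambda_3^2+\lambda_4^2=\lambda_5^2+\lambda_6^2$. These two relations give $\lambda_3\lambda_4=\lambda_5\lambda_6$, so $\{\lambda_3,\lambda_4\}$ and $\{\lambda_5,\lambda_6\}$ are the roots of one and the same monic quadratic and must coincide as sets. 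This is impossible, since $\lambda_3,\lambda_4$ are non-real whereas $\lambda_5,\lambda_6$ are real; it is exactly the coincidence ruled out in the previous lemma, now reached after reduction to the split/together configuration. Hence no partition produces a positive root, proving the claim.

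I expect the only real care to be needed in the enumeration bookkeeping: checking that the slot count genuinely exhausts all ten partitions, and that the sign ambiguities from interchanging the two triples (which flip $\chi_1$ and $\chi_2$ but leave $\nu_1,\nu_2$ fixed) introduce no overlooked configuration. Once the reduction to ``one pair split, one pair together'' is secured, the contradiction is the familiar one---a complex-conjugate pair being forced to equal a pair of distinct real branch points.
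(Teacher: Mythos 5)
Your proposal is correct and follows essentially the same route as the paper: the paper's (much terser) proof likewise reduces a hypothetical positive root with real $\nu_2$ to the alternatives that either $\lambda_5=\lambda_6$ (your both-pairs-split case) or a conjugate pair coincides with the real pair $\{\lambda_5,\lambda_6\}$ (your one-pair-split case), contradicting distinctness. Your write-up simply makes explicit the partition bookkeeping and the purely-imaginary conditions on $\chi_1$ and $\chi_2$ that the paper leaves implicit by appeal to the preceding lemma.
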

\begin{proof}
The proof of this lemma is similar to the previous lemma.  If $\nu_1 >0$ in equation~(\ref{nu1expression}) and the corresponding $\nu_2 \in \mathbb{R}$ in
equation~(\ref{nu1expression}), then either $\lambda_5 = \lambda_6$ or, otherwise, $\lambda_1$ equals one of $\lambda_5$ or $\lambda_6.$
\end{proof}
\begin{theorem}[Reality Conditions for Bounded Two-Phase Solutions]
If the  NLS equation~(\ref{nls}) has a smooth bounded two-phase solution, and $\mathscr{K}_2$ is non-singular, i.e., its six branch points are distinct, then the branch points form three complex-conjugate pairs.  $P(\nu_1)=0$ has
exactly four distinct non-negative roots, these are the possible non-zero bounds of the two-phase solution, and the corresponding values of $\nu_2$ given by equation~(\ref{nu1expression}) are also real.  In particular,  if $\lambda_1 = \lambda_2^* = r_1 + i s_1,$ $\lambda_3=\lambda_4^* = r_2+i s_2$ and $\lambda_5 = \lambda_6^* = r_3 +i s_3,$  with $0<s_1 \leq s_2 \leq s_3,$ then
the four real roots of $P(\nu_1)=0$ are
\begin{equation}
0\leq\nu_1^{(1)} < \nu_1^{(2)} < \nu_1^{(3)} < \nu_1^{(4)},
\end{equation}
where
\begin{equation}
\begin{array}{rcl}
\nu_1^{(1)} & = & (s_1+s_2-s_3)^2,\\[.1in]
\nu_1^{(2)} & = & (s_1-s_2+s_3)^2,\\[.1in]
\nu_1^{(3)} & = & (-s_1+s_2+s_3)^2,\\[.1in]
\nu_1^{(4)} & = & (s_1 + s_2 + s_3)^2,
\end{array}
\end{equation}
and
\begin{equation}
\begin{array}{rcl}
\nu_2^{(1)} &=& 4(s_1+s_2-s_3)(r_1 s_1+r_2 s_2-r_3 s_3),\\[.1in]
\nu_2^{(2)}& = & 4(s_1-s_2+s_3)(r_1 s_1- r_2 s_2+ r_3 s_3),\\[.1in]
\nu_2^{(3)}& = &4(-s_1+s_2+s_3)(-r_1 s_1+r_2 s_2+r_3 s_3),\\[.1in]
\nu_2^{(4)}& = & 4(s_1+s_2+s_3)(r_1 s_1+ r_2 s_2+r_3 s_3).
\end{array}
\end{equation}
\label{focusingreality}
\end{theorem}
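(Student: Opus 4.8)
The plan is to assemble the theorem from the earlier structural results and then perform the final explicit computation that ties the four admissible roots to the imaginary parts $s_1,s_2,s_3$. First I would establish that the branch points must form three complex-conjugate pairs. Corollary~\ref{realcorollary2} already forces the six branch points to be either real or in complex-conjugate pairs, so I only need to rule out the presence of any real branch point. Here the three preceding lemmas do the work: if all six were real, or if one conjugate pair coexisted with four real points, or if two conjugate pairs coexisted with two real points, then in each case $P(\nu_1)=0$ has no positive root. But a smooth bounded two-phase solution with $\nu_1>0$ requires a \emph{positive} relative extremum of $\nu_1$, which by Theorem~\ref{nuexplicit} is a positive root of $P(\nu_1)=0$. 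Hence none of the mixed real/complex configurations can support a bounded oscillating solution, and the only survivor is three complex-conjugate pairs $\lambda_1=\lambda_2^*$, $\lambda_3=\lambda_4^*$, $\lambda_5=\lambda_6^*$.

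Next I would exploit the explicit formulas of Theorem~\ref{nuexplicit}, which give the roots of $P(\nu_1)=0$ as $\nu_1=-\tfrac14\chi_1^2$ with $\chi_1=\lambda_{\pi(1)}+\lambda_{\pi(2)}+\lambda_{\pi(3)}-\lambda_{\pi(4)}-\lambda_{\pi(5)}-\lambda_{\pi(6)}$, indexed by the $10$ ways of splitting the six indices into two triples. Writing $\lambda_{2j-1}=r_j+is_j$ and $\lambda_{2j}=r_j-is_j$, I would evaluate $\chi_1$ for each of the ten partitions. The key observation is that $\nu_1=-\tfrac14\chi_1^2$ is non-negative (and real) precisely when $\chi_1$ is purely imaginary, i.e. when the real parts cancel in $\chi_1$; this happens exactly for those partitions that keep each conjugate pair together on the \emph{same} side of the split or place the two members of a pair on \emph{opposite} sides in a balanced way. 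A short enumeration shows that the partitions giving real non-negative $\nu_1$ are the four in which each conjugate pair contributes $\pm 2is_j$ with a consistent sign pattern, yielding $\chi_1=\pm 2i(\pm s_1\pm s_2\pm s_3)$ and hence $\nu_1=(\epsilon_1 s_1+\epsilon_2 s_2+\epsilon_3 s_3)^2$ with $\epsilon_j=\pm1$. Up to overall sign there are exactly four such sign vectors, producing the four listed values $(s_1+s_2-s_3)^2$, $(s_1-s_2+s_3)^2$, $(-s_1+s_2+s_3)^2$, $(s_1+s_2+s_3)^2$; all remaining partitions give $\chi_1$ with a nonzero real part and thus $\nu_1\notin\mathbb{R}_{\geq0}$. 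With the ordering $0<s_1\leq s_2\leq s_3$ one checks the strict chain of inequalities among these four values, and the corresponding $\nu_2=-\tfrac12\chi_1\chi_2$ is computed from $\chi_2=\sum\epsilon_j(\lambda_{2j-1}^2+\ldots)$, which after substituting $\lambda^2=(r_j^2-s_j^2)+2ir_js_j$ and using that only the imaginary part survives against the purely imaginary $\chi_1$ gives the stated $\nu_2^{(k)}=4(\cdots)(\cdots)$ formulas.

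The remaining point is the count ``exactly four distinct non-negative roots.'' Since $P$ has degree ten, I must account for the other six roots. The enumeration above shows they correspond to partitions that split at least one conjugate pair across the two triples in an unbalanced way, giving $\chi_1$ with a genuine real component and therefore $\nu_1=-\tfrac14\chi_1^2$ with a strictly negative real part or a nonzero imaginary part; either way these are not admissible bounds. The reality of $\nu_2$ at the four surviving roots follows directly from the substitution since $\chi_1$ is purely imaginary and $\chi_2$ has a purely imaginary component matching it, so the product $\chi_1\chi_2$ is real.

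The main obstacle I anticipate is the bookkeeping in the enumeration: one must verify that across all $\binom{6}{3}/2=10$ index partitions, the real part of $\chi_1$ vanishes in precisely the four asserted cases and in no others, and that these four yield \emph{distinct} values under the hypothesis $s_1\leq s_2\leq s_3$ (distinctness can fail only on the boundary $s_1=s_2-s_3$ type degeneracies, which I would note corresponds to $\nu_1^{(1)}=0$ being allowed). This is a finite but delicate case check rather than a deep argument; the conceptual content is entirely supplied by Theorem~\ref{nuexplicit} and the three no-positive-root lemmas, and what remains is to organize the ten partitions according to how they treat the conjugate pairs and read off the purely imaginary ones.
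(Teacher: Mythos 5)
Your proposal is correct and takes essentially the same route as the paper's own proof: the three preceding no-positive-root lemmas (together with Corollary~\ref{realcorollary2}) eliminate every configuration containing real branch points, since a smooth bounded solution forces a positive root of $P(\nu_1)$, and then the formulas $\nu_1 = -\frac{1}{4}\chi_1^2$, $\nu_2 = -\frac{1}{2}\chi_1\chi_2$ of Theorem~\ref{nuexplicit} produce the four listed root pairs, with your ten-partition enumeration merely making explicit the computation the paper leaves implicit. One caveat that applies equally to your argument and to the paper's: the strict ordering $\nu_1^{(2)} < \nu_1^{(3)}$ actually fails when $s_1 = s_2$ (they both equal $s_3^2$ even though all six branch points remain distinct), so the distinctness step in both proofs silently assumes the $s_i$ are pairwise unequal, not just nonzero.
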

\begin{proof}
The preceding lemmas eliminate any other possibilities for branch points satisfying the reality conditions.  Theorem 4, equation~(\ref{nu1expression}), provides the explicit expressions for the four real roots.
The explicit expressions for $\nu_1$ show that the four real values are distinct, otherwise at least one of $s_1, s_2,$ or $s_3$ must be zero, which contradicts the assumption that the branch points are distinct.  
\end{proof}
\begin{lemma}
If a two-phase solution of the NLS equation~(\ref{nls}) exists, then it is bounded, viz., there exists a positive number $M \in \mathbb{R}$ such that $\nu_1(x,t)^2 + \nu_2 (x,t)^2 < M$ for all $x$ and $t.$  
\label{boundedlemma}
\end{lemma}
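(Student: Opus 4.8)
The plan is to show, without presupposing boundedness, that the reality structure of the Dirichlet eigenvalues confines the pair $(\nu_1,\nu_2)$ to a \emph{bounded} region of the plane at every $(x,t)$, so that boundedness of the solution is forced rather than assumed. First I would rewrite the invariant curve in a form isolating the dependence on $\nu_1$ and $\nu_2$. From equation~(\ref{mupolynomialpsi}) one has $\Psi_{11}=-iS$, where
\begin{equation}
S(\lambda)=\lambda^3+c_2\lambda^2+(c_1-\tfrac12\nu_1)\lambda+(c_0-\tfrac12 c_2\nu_1-\tfrac14\nu_2)=S_0(\lambda)-\tfrac12\nu_1(\lambda+c_2)-\tfrac14\nu_2
\end{equation}
is a real monic cubic depending \emph{affinely} on $(\nu_1,\nu_2)$, with $S_0$ fixed by the branch points. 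Substituting into~(\ref{Reqn}) and using $\Psi_{12}\Psi_{21}=-\nu_1\prod_{i=1,2}(\lambda-\mu_i)(\lambda-\mu_i^*)$ produces the identity $\mathscr{R}(\lambda)=S(\lambda)^2+\nu_1 W(\lambda)$, where $W(\lambda)=\prod_{i=1,2}(\lambda-\mu_i)(\lambda-\mu_i^*)$ and $\mathscr{R}$ is the \emph{fixed} real sextic $\prod_{i=1}^6(\lambda-\lambda_i)$ determined by the constants of motion.

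Next I would extract the crucial pointwise inequality. By the Reality Condition of Theorem~\ref{reality1}, at every $(x,t)$ the zeros of $\Psi_{21}$ are the complex conjugates of those of $\Psi_{12}$, so for real $\lambda$ each factor pairs with its conjugate and $W(\lambda)=\prod_{i=1,2}|\lambda-\mu_i|^2\geq 0$. Since $\nu_1=|p|^2\geq 0$, the identity above forces
\begin{equation}
S\big(\lambda;\nu_1(x,t),\nu_2(x,t)\big)^2\leq\mathscr{R}(\lambda)\qquad\text{for all }\lambda\in\mathbb{R}\text{ and all }(x,t).
\end{equation}
In particular, the mere existence of the solution forces $\mathscr{R}\geq 0$ on the real axis (a real $\lambda^{*}$ with $\mathscr{R}(\lambda^{*})<0$ would make the inequality impossible), so I may choose two distinct reals $\lambda_0\neq\lambda_1$ at which the right-hand sides are defined.

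Finally I would turn these inequalities into a bound. Evaluating the displayed inequality at $\lambda_0$ and $\lambda_1$ gives $|S(\lambda_j)|\leq\sqrt{\mathscr{R}(\lambda_j)}$ for $j=0,1$; because $S$ is affine in $(\nu_1,\nu_2)$, each of these confines $(\nu_1,\nu_2)$ to a strip bounded by two parallel lines with common normal direction $\big(-\tfrac12(\lambda_j+c_2),\,-\tfrac14\big)$. As $\lambda_0\neq\lambda_1$, these two normals are linearly independent, so the intersection of the strips is a bounded parallelogram. Hence there is a constant $M>0$, depending only on the branch points and the choice of $\lambda_0,\lambda_1$, with $\nu_1^2+\nu_2^2<M$ throughout that parallelogram, and therefore for all $(x,t)$. (If $\mathscr{R}(\lambda_j)=0$ a strip collapses to a line, which only shrinks the admissible set and does not affect boundedness.)

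The step I expect to require the most care is the justification of the pointwise inequality $S^2\leq\mathscr{R}$ across the \emph{entire} real axis: it rests on the fact that the conjugate-pair structure of the Dirichlet eigenvalues guaranteed by Theorem~\ref{reality1} persists for every $(x,t)$ along the flow, which is precisely what makes $W$ nonnegative on $\mathbb{R}$. Once that is secured, the confinement of $(\nu_1,\nu_2)$ is elementary linear algebra for two transverse strips, and, notably, no appeal to the explicit extremal values of $\nu_1$ or to the theta-function representation is needed, which avoids any circular dependence on the boundedness hypotheses of the earlier theorems.
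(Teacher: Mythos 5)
Your proof is correct, but it takes a genuinely different route from the paper's. The paper argues by asymptotic case analysis: using the explicit formulas~(\ref{sigmaequations}), it computes the leading behavior of $\Sigma_4$ (and of $\hat{\Sigma}_2,$ $\hat{\Sigma}_3,$ $z,$ $\rho_\pm$) in the regimes $\nu_1 \rightarrow \infty$ with $\nu_2 = o(\nu_1),$ $\nu_2 = O(\nu_1),$ $\nu_2/\nu_1 \rightarrow \infty,$ and finally $|\nu_2| \rightarrow \infty$ with $\nu_1$ bounded, and shows that each regime violates a reality condition ($\Sigma_4 \geq 0$ from Corollary~\ref{realcorollary3}, or $\rho_\pm \geq 0$ from Corollary~\ref{realcorollary5}). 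You instead exploit the invariance of the spectral curve pointwise on the real $\lambda$-axis: writing $\Psi_{11} = -iS$ with $S$ real and affine in $(\nu_1,\nu_2),$ the identity $\mathscr{R} = S^2 + \nu_1 W$ with $W \geq 0$ on $\mathbb{R}$ gives $S(\lambda;\nu_1,\nu_2)^2 \leq \mathscr{R}(\lambda)$ for every real $\lambda,$ and evaluating at two distinct real points traps $(\nu_1,\nu_2)$ in the intersection of two transverse strips, which is a bounded parallelogram. Your argument is more elementary and effective: it needs only Theorem~\ref{reality1} and the constancy of $\mathscr{R},$ it produces an explicit constant $M$ in terms of $\mathscr{R}$ at two chosen real points, and it yields $\mathscr{R} \geq 0$ on the real axis as a by-product; the paper's argument, by contrast, stays inside the $\hat{\Sigma},$ $z,$ $\rho_\pm$ machinery that it must develop anyway in order to delimit the admissible region of $(\nu_1,\nu_2).$ One small repair is advisable: your derivation of $W \geq 0$ relies on the factorization $\Psi_{12} = ip(\lambda-\mu_1)(\lambda-\mu_2),$ which degenerates at zeros of $p$ (there $\nu_1 = 0$ and the Dirichlet eigenvalues escape to infinity). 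Either extend the inequality to such points by continuity, or, better, observe that for real $\lambda$ the reality condition of Theorem~\ref{reality1} gives $\Psi_{21}(\lambda) = -\Psi_{12}(\lambda)^*,$ hence $\mathscr{R}(\lambda) = S(\lambda)^2 + |\Psi_{12}(\lambda)|^2$ directly, with no reference to the Dirichlet eigenvalues at all; this makes the key inequality unconditional.
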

\begin{proof}

First we show that $\nu_1$ is bounded.
\begin{enumerate}
\item If $\nu_1 \rightarrow \infty$ and $\nu_2 = o(\nu_1)$ and $\Lambda_1 \neq 0,$ then $\Sigma_4 \sim -\f{1}{16} \Lambda_1^2 \nu_1 <0,$ which is impossible since $\Sigma_4 \geq 0.$
\item If $\nu_1 \rightarrow \infty$ and $\nu_2 = o(\nu_1)$ and $\Lambda_1 = 0,$ then $\Sigma_4 \sim -\f{\nu_2^2}{16 \nu_1} < 0,$ also impossible.
\item If $\nu_1 \rightarrow \infty$ and $\nu_2 = O(\nu_1),$ then $\hat{\Sigma}_2 \sim -\f{1}{4} \nu_1, \hat{\Sigma}_3 \sim -\f{1}{8} \Lambda_1 \nu_1 + \f{3}{16} \nu_2,$ and $z \rightarrow 0.$  Hence
\begin{equation}
\hat{\Sigma}_2 \pm \f{\hat{\Sigma}_3}{\sqrt{2 z}} + z \sim -\f{\nu_1}{4} \,\mbox{or}\,  \pm\f{-\f{1}{8} \Lambda_1 \nu_1 + \f{3}{16} \nu_2}{\sqrt{2z}},
\end{equation}
so that, in any case, at least one of $\rho_1$ or $\rho_2$ will become negative, which contradicts the reality condition of Corollary~\ref{realcorollary5}.
\item If $\nu_1 \rightarrow \infty$ and $\f{\nu_2}{\nu_1} \rightarrow \infty,$ then $\Sigma_4 \sim -\f{\nu_2^2}{\nu_1} < 0,$ which is also impossible.
\end{enumerate}
The preceding cases are exhaustive and show that $\nu_1$ is bounded.  

Consequently, if $|\nu_2| \rightarrow \infty,$ then
$$\Sigma_4 \sim - \f{1}{16} \f{\nu_2^2}{\nu_1},$$
which contradicts $\Sigma_4 \geq 0.$  Hence $|\nu_2|$ is also bounded.
\end{proof}

\begin{lemma}
If a smooth two-phase solution of the NLS equation~(\ref{nls}) exists in a neighborhood of a point where $\nu_1(x,t) = \nu_1^{(4)},$ then  $\nu_1 = \nu_1^{(4)} >0$ is a relative maximum of $\nu_1(x,t).$  Moreover, the values of the corresponding solutions $(\mu_1,\sigma_1 \sqrt{\mathscr{R}(\mu_1)}) \in \mathbb{R}^2$ and $(\mu_2,\sigma_2 \sqrt{\mathscr{R}(\mu_2)}) \in \mathbb{R}^2$ of the Dubrovin equations on the Riemann surface $\mathscr{K}_2$ at this point are real and are given by explicit formulas,
\begin{equation}
\begin{array}{rcl}
\mu_1 & = & \f{r_1 (s_2 + s_3)+r_2 (s_1+s_3) + r_3 (s_1+s_2) - \sqrt{A(\vec{r},\vec{s})}}{2(s_1 + s_2 + s_3)},\\
\mu_2 & = & \f{r_1 (s_2 + s_3)+r_2 (s_1+s_3) + r_3 (s_1+s_2) + \sqrt{A(\vec{r},\vec{s})}}{2(s_1 + s_2 + s_3)},
\end{array}
\label{muatnu4defn}
\end{equation}
where
\begin{equation}
\begin{array}{rcl}
A(\vec{r},\vec{s})&  =&  4 s_1 s_2 s_3 (s_1 + s_2 + s_3) + s_1^2 (r_2-r_3)^2 + s_2^2 (r_1-r_3)^2\\
&& +s_3^2 (r_1-r_2)^2 +   2 s_1 s_3 (r_2-r_3)(r_2-r_1) \\
&&   + 2 s_2 s_3 (r_1-r_3)(r_1-r_2) +2 s_1 s_2 (r_3 - r_1)(r_3-r_2), 
\end{array}
\label{Adefn}
\end{equation}
and  $A(\vec{r},\vec{s}) > 0.$ The hyperelliptic irrationalities are given by
\begin{equation}
\sigma_i \sqrt{\mathscr{R} (\mu_i)} = \sigma_i |\mu_i - \lambda_1| |\mu_i - \lambda_3| |\mu_i - \lambda_5|,
\end{equation}
for $i = 1,2,$ with the hyperelliptic sheets given by  $\sigma_1 = -1$ and $\sigma_2 = 1.$
\label{lemmanu4}
\end{lemma}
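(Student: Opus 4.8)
The plan is to locate $\nu_1^{(4)}$ among the explicit critical values of Theorem~\ref{nuexplicit}, read off $\mu_1$ and $\mu_2$ from the associated cubic factorization of $\mathscr{R}$, and then settle the maximum and the sheet assignments with the second-derivative formulas~(\ref{del}) and~(\ref{nuxx}). First I would identify the permutation $\pi$ producing $\nu_1^{(4)} = (s_1+s_2+s_3)^2$ in equation~(\ref{nu1expression}): grouping the three upper-half-plane branch points $\{\lambda_1,\lambda_3,\lambda_5\}$ against their conjugates $\{\lambda_2,\lambda_4,\lambda_6\}$ gives $\chi_1 = 2i(s_1+s_2+s_3)$ and $\chi_2 = 4i(r_1 s_1 + r_2 s_2 + r_3 s_3)$, so that $\nu_1 = -\tfrac14\chi_1^2 = (s_1+s_2+s_3)^2$ and $\nu_2 = -\tfrac12\chi_1\chi_2 = \nu_2^{(4)}$, the largest root. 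For this partition the sextic factors as $\mathscr{R} = Q_1 Q_2$ with $Q_1(\lambda) = (\lambda-\lambda_2)(\lambda-\lambda_4)(\lambda-\lambda_6)$ and $Q_2(\lambda) = (\lambda-\lambda_1)(\lambda-\lambda_3)(\lambda-\lambda_5)$; since the two triples are complex-conjugate, $Q_1$ and $Q_2$ have complex-conjugate coefficients.

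Next I would exploit the factorization identity $Q_1 - Q_2 = 2i\sqrt{\nu_1}\,(\lambda-\mu_1)(\lambda-\mu_2)$, whose left side is $2i$ times a polynomial with real coefficients. Matching the coefficients of $\lambda^2$, $\lambda^1$, and $\lambda^0$ fixes $\sqrt{\nu_1} = s_1+s_2+s_3$ and yields $\mu_1+\mu_2$ and $\mu_1\mu_2$ explicitly in terms of $\vec r,\vec s$; solving the resulting real quadratic reproduces~(\ref{muatnu4defn}) with $A$ identified as $(s_1+s_2+s_3)^2(\mu_1-\mu_2)^2 = (\mu_1+\mu_2)^2(s_1+s_2+s_3)^2 - 4(s_1+s_2+s_3)^2\mu_1\mu_2$, which simplifies to~(\ref{Adefn}).

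The crux is proving $A>0$, which simultaneously forces $\mu_1,\mu_2$ to be real and distinct (the real quadratic then has positive discriminant). I would write $A$ as the sum of the manifestly positive term $4s_1 s_2 s_3(s_1+s_2+s_3)$ and a quadratic form in $u = r_1-r_2$ and $v = r_2-r_3$, namely $(s_2+s_3)^2 u^2 + 2B uv + (s_1+s_2)^2 v^2$ with $B = s_2(s_1+s_2+s_3) - s_1 s_3$. Using the identity $(s_1+s_2)(s_2+s_3) = s_2(s_1+s_2+s_3) + s_1 s_3$, its discriminant is $(s_1+s_2)^2(s_2+s_3)^2 - B^2 = 4s_1 s_2 s_3(s_1+s_2+s_3) > 0$, so the form is positive definite and $A \geq 4s_1 s_2 s_3(s_1+s_2+s_3) > 0$ for every $\vec r$. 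Because $\mathscr{R}(\mu_j) = Q_1(\mu_j)Q_2(\mu_j) = |Q_2(\mu_j)|^2 \geq 0$ for real $\mu_j$, equation~(\ref{psi11eqn}) gives $\sqrt{\mathscr{R}(\mu_j)} = |\mu_j-\lambda_1|\,|\mu_j-\lambda_3|\,|\mu_j-\lambda_5|$ and the points $(\mu_j,\sigma_j\sqrt{\mathscr{R}(\mu_j)})$ lie in $\mathbb{R}^2$.

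Finally, since $\mu_1\neq\mu_2$ are real, i.e. $\mu_1=\mu_1^*$ and $\mu_2=\mu_2^*$, Lemma~\ref{criticalpoints} puts us in case (i), so $\partial_x\nu_1=\partial_t\nu_1=0$ and, by Theorem~\ref{nuextremalemma}, the point is a genuine extremum. To see it is a maximum I would evaluate~(\ref{del}): with $\chi_1 = 2i(s_1+s_2+s_3)$ and each upper branch point paired against the three lower ones, the triple product reduces to $8i\,s_1 s_2 s_3\,|\lambda_1-\lambda_3^*|^2|\lambda_1-\lambda_5^*|^2|\lambda_3-\lambda_5^*|^2$, making the Hessian determinant the positive real number $256(s_1+s_2+s_3)s_1 s_2 s_3\,|\lambda_1-\lambda_3^*|^2|\lambda_1-\lambda_5^*|^2|\lambda_3-\lambda_5^*|^2$. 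Then evaluating~(\ref{nuxx}) as $\nu_{1xx} = -8\nu_1\,(\sigma_1\sqrt{\mathscr{R}(\mu_1)} - \sigma_2\sqrt{\mathscr{R}(\mu_2)})/(\mu_1-\mu_2)$ with $\mu_1<\mu_2$ shows that $\nu_{1xx}<0$ precisely when $\sigma_1=-1$, $\sigma_2=1$; conversely, as $\nu_1^{(4)}$ is the largest root of $P$ and $\nu_1$ is bounded (Lemma~\ref{boundedlemma}), the attained value can only be a maximum, so these are the sheets the solution actually realizes. The main obstacle is the positivity of $A$ together with the sign bookkeeping that pins down the sheets: both the realness of $\mu_1,\mu_2$ and the determination that the extremum is a maximum depend on getting those signs right.
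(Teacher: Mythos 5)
Your algebraic derivation of the explicit formulas is correct and takes a genuinely different route from the paper's. The paper obtains reality of $\mu_1,\mu_2$ by substituting $(\nu_1^{(4)},\nu_2^{(4)})$ directly into the rational expression for $\hat{\Sigma}_2$ from equation~(\ref{sigmaequations}), finding $\hat{\Sigma}_2 = -A(\vec{r},\vec{s})/(2(s_1+s_2+s_3)^2)$, proving $A>0$ by symmetrizing and assuming $r_1 \leq r_2 \leq r_3$, and then invoking Lemma~\ref{mutosigma} and the case $\hat{\Sigma}_2 = -2\sqrt{\hat{\Sigma}_4}$ of Corollary~\ref{realcorollary4} together with the shift~(\ref{mushift}). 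You instead match coefficients in $Q_1 - Q_2 = 2i\sqrt{\nu_1}\,(\lambda-\mu_1)(\lambda-\mu_2)$ for the upper/lower half-plane grouping, which reaches the same $\mu_1+\mu_2$, $\mu_1\mu_2$, and $A$; and your positivity proof, writing $A - 4s_1s_2s_3(s_1+s_2+s_3)$ as a quadratic form in $(r_1-r_2,r_2-r_3)$ with $(s_1+s_2)^2(s_2+s_3)^2 - B^2 = 4s_1s_2s_3(s_1+s_2+s_3) > 0$, is clean, correct, and arguably tidier than the paper's ordering argument. The computation of the Hessian determinant from equation~(\ref{del}) also agrees with the paper's.

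The genuine gap is in your last step, where you decide the critical point is a maximum and thereby fix the sheets. You argue that $\nu_{1xx}<0$ exactly when $\sigma_1=-1,\ \sigma_2=1$, and that the point ``can only be a maximum'' because $\nu_1^{(4)}$ is the largest root of $P$ and $\nu_1$ is bounded by Lemma~\ref{boundedlemma}. Boundedness is not enough for that middle claim: a bounded smooth function need not attain its supremum at any critical point, so supposing the point were a relative minimum (with nearby values exceeding $\nu_1^{(4)}$) yields no contradiction from Lemma~\ref{boundedlemma} alone. What is actually needed is attainment of a global maximum, i.e.\ the recurrence/compactness of the quasi-periodic flow on the torus; but that structure is established in Theorem~\ref{stheorem} and the theta-function construction, which rest on the present lemma, so invoking it here risks circularity. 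The paper's proof is purely local and avoids the issue: equation~(\ref{nuxx}) has a second, sheet-independent expression in terms of the branch points --- $\Psi_{11}$ is a polynomial, so $\Psi_{11}(\mu_j)$ is completely determined by $(\nu_1,\nu_2)$, the $\Lambda_i$, and the explicit real values of $\mu_j$, with no square-root branch choice --- and evaluating it for this grouping gives $\nu_{1xx} = -32(s_1+s_2+s_3)s_1s_2s_3 - 8s_1s_2|\lambda_1-\lambda_3^*|^2 - 8s_1s_3|\lambda_1-\lambda_5^*|^2 - 8s_2s_3|\lambda_3-\lambda_5^*|^2 < 0$. The maximum is thus established by direct computation, and only afterwards are the sheets read off by comparing the sign with $\nu_{1xx} = 8\nu_1\f{s_1+s_2+s_3}{\sqrt{A}}\,\sigma_1\left(\sqrt{\mathscr{R}(\mu_1)}+\sqrt{\mathscr{R}(\mu_2)}\right)$, forcing $\sigma_1=-1,\ \sigma_2=1$. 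Replacing your global argument with this local evaluation closes the proof.
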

\begin{proof}
 Direct substitution shows that when $(\nu_1,\nu_2) = (\nu_1^{(4)},\nu_2^{(4)}),$
\begin{equation}
\hat{\Sigma}_2  =  - \f{A(\vec{r},\vec{s})}{2 (s_1 + s_2 +s_3)^2} < 0
\end{equation}
where
\begin{equation}
\begin{array}{rcl}
A(\vec{r},\vec{s})&  =&  4 s_1 s_2 s_3 (s_1 + s_2 + s_3) + s_1^2 (r_2-r_3)^2 + s_2^2 (r_1-r_3)^2\\
&& +s_3^2 (r_1-r_2)^2 +   2 s_1 s_3 (r_2-r_3)(r_2-r_1) \\
&&   + 2 s_2 s_3 (r_1-r_3)(r_1-r_2) +2 s_1 s_2 (r_3 - r_1)(r_3-r_2)  \\[.1in]
&=&  4 s_1 s_2 s_3 (s_1 + s_2 + s_3) + (s_1 (r_2-r_3) + s_3 (r_1-r_2))^2\\
&&  +s_2^2 (r_1-r_3)^2 +   2 s_2 s_3 (r_1-r_3)(r_1-r_2) \\
&&   + 2 s_1 s_2 (r_3-r_1)(r_3-r_2)\\[.1in]
& > & 0.
\end{array}
\label{Aeqn}
\end{equation}
The positivity of $A(\vec{r},\vec{s})$ follows from the fact that the first expression in equation~(\ref{Aeqn}) for $A(\vec{r},\vec{s})$ is symmetric with respect to interchange of the indices $i=1,2,3.$  Furthermore, if $s_1, s_2, s_3 >0$ and $r_1 \leq r_2 \leq r_3,$ then the second expression for $A$ is positive.
Thus $\hat{\Sigma}_2 = - 2 \sqrt{\hat{\Sigma}_4}$ and, hence, $\mu_1 = \mu_1^*$ and $\mu_2 = \mu_2^*,$ with
\begin{equation}
\begin{array}{rcl}
\mu_1 & = & \f{r_1 (s_2 + s_3)+r_2 (s_1+s_3) + r_3 (s_1+s_2) - \sqrt{A(\vec{r},\vec{s})}}{2(s_1 + s_2 + s_3)},\\
\mu_2 & = & \f{r_1 (s_2 + s_3)+r_2 (s_1+s_3) + r_3 (s_1+s_2) + \sqrt{A(\vec{r},\vec{s})}}{2(s_1 + s_2 + s_3)}.
\end{array}
\label{muexplicitatnu4}
\end{equation}
Obviously, $\mu_1, \mu_2 \in \mathbb{R}$ and $\mu_1 \neq \mu_2,$ as expected.  Moreover, the real symmetry of the hyperelliptic curve $\mathscr{K}_2$ implies that 
\begin{equation}
\mathscr{R} (\mu_i) = |\mu_i - \lambda_1|^2 |\mu_i - \lambda_3|^2 |\mu_i -\lambda_5|^2,
\label{Rexplicitatnu4}
\end{equation}
since $\mu_i \in \mathbb{R}$ for $i = 1,2.$

The explicit expressions of Theorem 5 can  be used to obtain the result, using standard results of differential calculus.
Equation~(\ref{psi11eqn}) is also used, in order to determine the correct hyperelliptic sheets, $\sigma_1=\pm 1$ or $\sigma_2 = \pm 1,$ for the points $(\mu_1,\sigma_1 \sqrt{\mathscr{R}(\mu_1)})$ and $(\mu_2,\sigma_2 \sqrt{\mathscr{R}(\mu_2)})$ on the Riemann surface $\mathscr{K}_2.$
Equation~(\ref{del}) becomes
\begin{equation}
\begin{array}{rcl}
\f{\partial^2 \nu_1}{\partial x^2} \f{\partial^2 \nu_1}{\partial t^2} - (\f{\partial^2 \nu_1}{\partial x \partial t})^2 & = & 256 \nu_1^2 \Psi_{11}(\mu_1) \Psi_{11}(\mu_2)\\[.1in]
&= & -256 \nu_1^2 \sigma_1 \sigma_2 \sqrt{\mathscr{R}(\mu_1)} \sqrt{\mathscr{R}(\mu_2)}, \\[.1in]
&= & 256 (s_1+s_2+s_3) s_1 s_2 s_3 |\lambda_1-\lambda_3^*|^2|\lambda_1-\lambda_5^*|^2 |\lambda_3 -\lambda_5^*|^2 \\[.1in]
& > & 0.
\end{array}
\end{equation}
Thus $\sigma_1 \sigma_2 < 0 $ and $\sigma_2 = -\sigma_1.$

Equation~(\ref{nuxx}) becomes
\begin{equation}
\begin{array}{rcl}
\f{\partial^2 \nu_1}{\partial x^2} & =  & 8 i \nu_1 \f{\Psi_{11} (\mu_1) - \Psi_{11}(\mu_2)}{\mu_1 - \mu_2} \\[.1in]
& =& 8 \nu_1 \f{s_1+s_2+s_3}{\sqrt{A(\vec{r},\vec{s})}} \sigma_1 (\sqrt{\mathscr{R}(\mu_1)}+\sqrt{\mathscr{R}(\mu_2)}) \\[.1in]
&= & -32 (s_1+s_2+s_3)s_1s_2s_3 - 8 s_1 s_2 |\lambda_1 -\lambda_3^*|^2\\
&&-8s_1s_3|\lambda_1-\lambda_5^*|^2-8s_2s_3|\lambda_3-\lambda_5^*|^2\\
&<& 0.
\end{array}
\end{equation}
Hence there is a relative maximum.  Moreover, at this relative maximum $\sigma_1 = -1$ and $\sigma_2 = 1.$
\end{proof}

\begin{lemma}
If a smooth two-phase solution of the  NLS equation~(\ref{nls}) exists in a neighborhood of a point  $(x,t,\nu_1(x,t))=(x,t,\nu_1^{(i)})$ for $i=2$ or $i=3,$ then  $(x,t,\nu_1^{(i)})$ is  a saddle point of  $\nu_1= \nu_1(x,t).$  
\end{lemma}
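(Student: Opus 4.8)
The plan is to compute the Hessian determinant $\nu_{1xx}\nu_{1tt}-\nu_{1xt}^2$ directly at the critical point and show that it is strictly negative, exactly as Lemma~\ref{lemmanu4} did for the maximum at $\nu_1^{(4)}$, but now landing on the opposite sign. By Lemma~\ref{criticalpoints} and Theorem~\ref{nuextremalemma}, at a critical point with $\nu_1=\nu_1^{(i)}>0$ one has $\mu_1\neq\mu_2$ (the coincidence $\mu_1=\mu_2$ forces a degenerate coalescence of branch points and is excluded), so the explicit Hessian formula~(\ref{del}) applies and expresses the determinant entirely through the branch points,
\begin{equation}
\nu_{1xx}\nu_{1tt}-\nu_{1xt}^2 = -16\,\chi_1\prod_{k=1}^{3}(\lambda_{\pi(k)}-\lambda_{\pi(4)})(\lambda_{\pi(k)}-\lambda_{\pi(5)})(\lambda_{\pi(k)}-\lambda_{\pi(6)}),
\end{equation}
with $\pi$ the permutation attached to $\nu_1^{(i)}$ by Theorem~\ref{nuexplicit}.

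First I would pin down the permutation. Matching $\nu_1=-\tfrac14\chi_1^2$ and $\nu_2=-\tfrac12\chi_1\chi_2$ against the listed values shows that the split for $\nu_1^{(3)}$ places $\{\lambda_2,\lambda_3,\lambda_5\}$ in the ``$+$'' group against $\{\lambda_1,\lambda_4,\lambda_6\}$, while $\nu_1^{(2)}$ places $\{\lambda_1,\lambda_4,\lambda_5\}$ against $\{\lambda_2,\lambda_3,\lambda_6\}$; in each case exactly one conjugate pair is interchanged relative to the all-upper split that produced the maximum. With this identification $\chi_1=2i(-s_1+s_2+s_3)$ for $i=3$ and $\chi_1=2i(s_1-s_2+s_3)$ for $i=2$.

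Next I would evaluate the nine-fold product by grouping it, as in Lemma~\ref{lemmanu4}, into three ``diagonal'' factors from matched conjugate pairs and three ``off-diagonal'' pairings. Each diagonal factor equals $\pm2is_k$, the sign being $-$ for precisely the one interchanged pair, so the diagonal contributes $+8i\,s_1s_2s_3$, a single sign flip away from the $-8i\,s_1s_2s_3$ of the maximum case. Each off-diagonal pairing collapses to $-|\lambda_a-\lambda_b^{*}|^2$ when the two pairs share orientation and to $-|\lambda_a-\lambda_b|^2$ when their orientations differ; either way it is a strictly negative real multiple, so the three off-diagonal factors together contribute $(-1)^3$ times a positive product $P>0$ of squared branch-point differences. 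Assembling these with $\chi_1=2i\sum_k\eta_k s_k$ yields
\begin{equation}
\nu_{1xx}\nu_{1tt}-\nu_{1xt}^2 = -256\,\Big(\textstyle\sum_{k}\eta_k s_k\Big)\,s_1 s_2 s_3\,P,
\end{equation}
where $\eta_k=-1$ for the single interchanged pair and $\eta_k=+1$ otherwise.

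Finally I would read off the sign. For $i=3$ the coefficient is $-s_1+s_2+s_3\ge s_3>0$, and for $i=2$ it is $s_1-s_2+s_3\ge s_1>0$, both strictly positive by the ordering $0<s_1\le s_2\le s_3$; since $s_1s_2s_3>0$ and $P>0$, the determinant is strictly negative and $(x,t,\nu_1^{(i)})$ is a saddle. The main obstacle is the sign bookkeeping in the product: one must verify that interchanging a single conjugate pair flips exactly one diagonal sign (turning the maximum's $+256(\cdots)$ into $-256(\cdots)$) while merely trading one conjugate distance $|\lambda_a-\lambda_b^{*}|^2$ for a nonconjugate one $|\lambda_a-\lambda_b|^2$, leaving $P>0$ intact. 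It is precisely here that the restriction to $i=2,3$ is essential, since the analogous coefficient for $\nu_1^{(1)}$ is $s_1+s_2-s_3$, whose sign is not fixed by the ordering and for which $\nu_1^{(1)}$ therefore need not be a saddle.
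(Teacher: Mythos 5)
Your sign computation is correct and follows essentially the same route as the paper: attach the permutation $\{\lambda_2,\lambda_3,\lambda_5\}$ vs.\ $\{\lambda_1,\lambda_4,\lambda_6\}$ to $\nu_1^{(3)}$ (and $\{\lambda_1,\lambda_4,\lambda_5\}$ vs.\ $\{\lambda_2,\lambda_3,\lambda_6\}$ to $\nu_1^{(2)}$), evaluate the nine-fold product in equation~(\ref{del}) by separating the conjugate-pair factors from the cross-pair factors, and arrive at
\begin{equation}
\nu_{1xx}\nu_{1tt}-\nu_{1xt}^2 \;=\; -256\,(-s_1+s_2+s_3)\,s_1s_2s_3\,|\lambda_1-\lambda_3|^2|\lambda_1-\lambda_5|^2|\lambda_3-\lambda_5^*|^2\;<\;0
\end{equation}
for $i=3$ (similarly for $i=2$), the sign being forced by $-s_1+s_2+s_3\geq s_3>0$ and $s_1-s_2+s_3\geq s_1>0.$ This is exactly the paper's formula, and your bookkeeping of which distances are conjugate ($|\lambda_a-\lambda_b^*|$) versus non-conjugate ($|\lambda_a-\lambda_b|$) is right.

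There is, however, a genuine gap where you dismiss the case $\mu_1=\mu_2$: the claim that this coincidence ``forces a degenerate coalescence of branch points'' is false, and neither Lemma~\ref{criticalpoints} nor Theorem~\ref{nuextremalemma} rules it out. At $(\nu_1^{(3)},\nu_2^{(3)})$ one has $\hat{\Sigma}_3=0$ and $\hat{\Sigma}_2=-A_1/\bigl(2(-s_1+s_2+s_3)^2\bigr)$ with $A_1=A(\vec{r},-s_1,s_2,s_3),$ and, unlike at the maximum where $A(\vec{r},\vec{s})>0$ always, the sign of $A_1$ is indeterminate: taking $r_1=r_2=r_3$ (with $s_1<s_2<s_3$ strict, so the six branch points stay distinct) gives $A_1=-4s_1s_2s_3(-s_1+s_2+s_3)<0,$ while widely separated $r_k$ give $A_1>0,$ so by continuity there are nonsingular curves with $A_1=0.$ When $A_1=0$ the two Dirichlet eigenvalues coincide, $\mu_1=\mu_2\in\mathbb{R}$ --- it is the quartic~(\ref{quartic}) in $\mu$ that degenerates, not the sextic $\mathscr{R}$ --- and formula~(\ref{del}), which the paper states only under the hypothesis $\mu_1\neq\mu_2,$ cannot be invoked as it stands. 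The paper closes precisely this case by observing that all the second-derivative expressions have finite limits as $\mu_2\rightarrow\mu_1$ with $\mu_2^*=\mu_1,$ so that~(\ref{del}) remains valid in the limit and the displayed negative value is unchanged. Your proof needs this limiting step (or an equivalent argument); as written it covers only those branch-point configurations with $A_i\neq 0.$
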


\begin{proof} 
Consider the case where $i=3,$ the case for $i=2$ is similar. As in the previous lemma,
\begin{equation}
\hat{\Sigma}_2  =  - \f{A_1}{2 (-s_1 + s_2 +s_3)^2},
\end{equation}
where $A_1 = A(\vec{r},-s_1,s_2,s_3)$ and $A(\vec{r},\vec{s})$ is given by equation~(\ref{Aeqn}). In this case the sign of $A_1 = A(\vec{r},-s_1,s_2,s_3)$ is indeterminate, and
\begin{equation}
\begin{array}{rcl}
\mu_1 & = & \f{r_1 (s_2 + s_3)+r_2 (-s_1+s_3) + r_3 (-s_1+s_2) - \sqrt{A_1}}{2(-s_1 + s_2 + s_3)},\\
\mu_2 & = & \f{r_1 (s_2 + s_3)+r_2 (-s_1+s_3) + r_3 (-s_1+s_2) + \sqrt{A_1}}{2(-s_1 + s_2 + s_3)}.
\end{array}
\label{muexplicit}
\end{equation}
If $A_1 \neq 0,$ then $\mu_1 \neq \mu_2$ and the formula of equation~(\ref{del}) is applicable.  If $A_1 = 0,$ then $\mu_1 = \mu_2 \in \mathbb{R},$ but all the expressions for the second-order partial derivatives of $\nu_1(x,t)$ have finite limits as $\mu_2 \rightarrow \mu_1$  with $\mu_2^*=\mu_1.$  So the formula of equation~(\ref{del}) is still valid in the limit.
In particular, regardless of the value of $A_1,$
\begin{equation}
\begin{array}{rcl}
\f{\partial^2 \nu_1}{\partial x^2} \f{\partial^2 \nu_1}{\partial t^2} - (\f{\partial^2 \nu_1}{\partial x \partial t})^2 & = & 256 \nu_1^2 \Psi_{11}(\mu_1) \Psi_{11}(\mu_2)\\[.1in]
&= & -256 (-s_1+s_2+s_3) s_1 s_2 s_3 |\lambda_1-\lambda_3|^2|\lambda_1-\lambda_5|^2 |\lambda_3 -\lambda_5^*|^2 \\[.1in]
& <& 0,
\end{array}
\end{equation}
so there is a saddle point.
\end{proof}

\begin{lemma}
If a smooth two-phase solution of the NLS equation~(\ref{nls}) exists in a neighborhood of a point $(x,t,\nu_1(x,t))=(x,t,\nu_1^{(1)}),$ then  $\nu_1 = \nu_1(t,x)$ has a saddle point at $(t,x,\nu_1^{(1)})$   when $s_3<s_1+s_2,$ but  $\nu_1(t,x)$ has a relative minimum of $\nu_1 = \nu_1^{(1)} > 0$ when $s_3 > s_1 + s_2.$
\end{lemma}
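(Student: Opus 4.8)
The plan is to follow the pattern of Lemma~\ref{lemmanu4} and the preceding saddle-point lemma, exploiting the fact that the critical value $\nu_1^{(1)}=(s_1+s_2-s_3)^2$ is produced by the permutation grouping the branch points as $\{\lambda_1,\lambda_3,\lambda_6\}$ against $\{\lambda_2,\lambda_4,\lambda_5\}$. For this grouping one computes $\chi_1=\lambda_1+\lambda_3+\lambda_6-\lambda_2-\lambda_4-\lambda_5=2i(s_1+s_2-s_3)$, which reproduces $\nu_1=-\f{1}{4}\chi_1^2=\nu_1^{(1)}$ through equation~(\ref{chieqn}). Since $\nu_1^{(1)}$ is obtained from $\nu_1^{(4)}$ of Lemma~\ref{lemmanu4} by the formal substitution $s_3\mapsto -s_3$ (which interchanges $\lambda_5$ and $\lambda_6$), all the explicit branch-point formulas of the theorem containing equations~(\ref{del}) and~(\ref{nuxx}) can be transported from the $\nu_1^{(4)}$ computation by the same substitution, keeping careful track of the sign changes in the factors containing $s_3$.

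First I would evaluate the Hessian determinant using equation~(\ref{del}). Pairing each cross factor with its complex conjugate and using $\lambda_2=\lambda_1^*,\ \lambda_4=\lambda_3^*,\ \lambda_6=\lambda_5^*$, the product collapses to
\begin{equation}
\nu_{1xx}\nu_{1tt}-\nu_{1xt}^2=-256\,(s_1+s_2-s_3)\,s_1 s_2 s_3\,|\lambda_1-\lambda_4|^2|\lambda_1-\lambda_5|^2|\lambda_3-\lambda_5|^2.
\end{equation}
Because $s_1,s_2,s_3>0$ and the three moduli are strictly positive by non-singularity, the sign of the determinant is exactly the sign of $s_3-s_1-s_2$. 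Hence when $s_3<s_1+s_2$ the determinant is negative and the critical point is a saddle, which settles the first half of the lemma; in the borderline degenerate situation $\mu_1=\mu_2$ the formula persists by the same continuity argument used in the saddle-point lemma. When $s_3>s_1+s_2$ the determinant is positive, so the critical point is a genuine extremum; moreover, positivity of $256\nu_1^2\Psi_{11}(\mu_1)\Psi_{11}(\mu_2)$ forces $\mu_1\neq\mu_2$ with $\mu_1=\mu_1^*$ and $\mu_2=\mu_2^*$ both real, so equation~(\ref{nuxx}) applies.

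To decide maximum versus minimum I would compute $\nu_{1xx}$ from equation~(\ref{nuxx}). The same conjugate pairing gives
\begin{equation}
\nu_{1xx}=32(s_1+s_2-s_3)s_1s_2s_3-8s_1s_2|\lambda_1-\lambda_4|^2+8s_1s_3|\lambda_1-\lambda_5|^2+8s_2s_3|\lambda_3-\lambda_5|^2.
\end{equation}
Writing $u=r_1-r_3,\ v=r_2-r_3$ and separating the $r$-dependent and $s$-dependent contributions, the key algebraic step is the factorization
\begin{equation}
\f{1}{8}\nu_{1xx}=\bigl[s_1(s_3-s_2)u^2+2s_1s_2\,uv+s_2(s_3-s_1)v^2\bigr]+(s_1+s_2)(s_3-s_1-s_2)(s_3-s_1)(s_3-s_2).
\end{equation}
When $s_3>s_1+s_2$ the second summand is a product of four positive factors, and the first summand is a quadratic form whose matrix has positive diagonal entries and determinant $s_1s_2s_3(s_3-s_1-s_2)>0$, hence is positive definite. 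Therefore $\nu_{1xx}>0$, and together with the positive Hessian determinant this yields a relative minimum with value $\nu_1^{(1)}=(s_3-s_1-s_2)^2>0$.

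The main obstacle is the second factorization: establishing that $\nu_{1xx}$ splits cleanly into the manifestly positive quartic in the $s_i$ and the positive-definite quadratic form in the shifted real parts $u,v$. This is a finite but delicate computation. The cubic in $s_3$ must be divided by the known root $s_3=s_1+s_2$ (reflecting $\nu_1^{(1)}=0$ on that locus) to expose the remaining factors $(s_3-s_1)(s_3-s_2)$, and the discriminant of the $r$-quadratic must be shown to equal $s_1s_2s_3(s_3-s_1-s_2)$, which is precisely the quantity whose sign flips across $s_3=s_1+s_2$. Everything else is a direct application of the second-derivative test.
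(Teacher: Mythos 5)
Your proposal is correct and takes essentially the same approach as the paper: both evaluate the Hessian determinant from equation~(\ref{del}) for the grouping $\{\lambda_1,\lambda_3,\lambda_6\}$ versus $\{\lambda_2,\lambda_4,\lambda_5\}$, obtaining a sign governed by $s_3-s_1-s_2$ (saddle when $s_3<s_1+s_2$), and then, when $s_3>s_1+s_2$, prove $\nu_{1xx}>0$ from equation~(\ref{nuxx}) using the same factorization $8(s_1+s_2)(s_3-s_1)(s_3-s_2)(s_3-s_1-s_2)$ of the purely imaginary-part contribution. The only cosmetic difference is your final positivity step, which verifies positive-definiteness of the quadratic form in $(r_1-r_3,\,r_2-r_3)$ via its determinant $s_1s_2s_3(s_3-s_1-s_2)$, whereas the paper bounds $s_3$ below by $s_1+s_2$ in the positive terms and completes a perfect square; both rest on the same algebra.
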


\begin{proof}
As before, consider the formula given by equation~(\ref{del}).  When $\nu_1 = \nu_1^{(1)},$
\begin{equation}
\begin{array}{rcl}
\f{\partial^2 \nu_1}{\partial x^2} \f{\partial^2 \nu_1}{\partial t^2} - (\f{\partial^2 \nu_1}{\partial x \partial t})^2 & = & 256 \nu_1^2 \Psi_{11}(\mu_1) \Psi_{11}(\mu_2)\\[.1in]
&= & -256 (s_1+s_2-s_3) s_1 s_2 s_3 |\lambda_1-\lambda_3^*|^2|\lambda_1-\lambda_5|^2 |\lambda_3 -\lambda_5|^2 
\end{array}
\end{equation}
If $s_3< s_1+s_2,$ then the above expression is negative, so there is a saddle point.
If $s_3 > s_1+s_2,$ then the above expression is positive, and
equation~(\ref{nuxx}) becomes
\begin{equation}
\begin{array}{rcl}
\f{\partial^2 \nu_1}{\partial x^2} & =  & 8 i \nu_1 \f{\Psi_{11} (\mu_1) - \Psi_{11}(\mu_2)}{\mu_1 - \mu_2} \\[.1in]
&= & 32 (s_1+s_2-s_3)s_1s_2s_3 - 8 s_1 s_2 |\lambda_1 -\lambda_3^*|^2\\
&&+8s_1s_3|\lambda_1-\lambda_5|^2+8s_2s_3|\lambda_3-\lambda_5|^2\\[.1in]
&=&8(s_1+s_2)(s_3-s_1)(s_3-s_2)(s_3-s_1-s_2)-8 s_1 s_2 (r_1-r_2)^2 \\
&& +8 s_2 s_3 (r_2-r_3)^2+8s_1 s_3(r_1-r_3)^2\\[.1in]
&>&8(s_1+s_2)(s_3-s_1)(s_3-s_2)(s_3-s_1-s_2)-8 s_1 s_2 (r_1-r_2)^2 \\
&& +8 s_2 (s_1+s_2) (r_2-r_3)^2+8s_1 (s_1+s_2) (r_1-r_3)^2\\[.1in]
&=&8(s_1+s_2)(s_3-s_1)(s_3-s_2)(s_3-s_1-s_2)\\
&& +8(s_2 (r_2-r_3)+s_1(r_1-r_3))^2\\[.1in]
&>& 0,
\end{array}
\end{equation}
so there is a relative minimum.
\end{proof}

\begin{lemma}
If $\nu_1^{(i)} >0,$ then the four real points $(\nu_1^{(i)},\nu_2^{(i)}),$ $i = 1, \ldots, 4,$ given in Theorem 6, satisfying $\hat{\Sigma}_3=0$ and $\hat{\Sigma}_2^2 - 4 \hat{\Sigma}_4 = 0,$ are singular points of the one-dimensional algebraic set defined by equation~(\ref{algebraicset}),
\begin{equation}
 16 \hat{\Sigma}_4 (\hat{\Sigma}_2^2-4 \hat{\Sigma}_4)^2 - \hat{\Sigma}_3^2 (4 \hat{\Sigma}_2^3-144\hat{\Sigma}_2 \hat{\Sigma}_4 + 27 \hat{\Sigma}_3^2)=0.
\label{nodeequation}
\end{equation}
In particular, at $(\nu_1, \nu_2)=(\nu_1^{(i)},\nu_2^{(i)}),$ for $i=1,2, \mbox{and } 3,$ if $\hat{\Sigma}_2 =-2 \sqrt{\hat{\Sigma}_4} < 0,$  then $(\nu_1^{(i)},\nu_2^{(i)})$ is a node, but if $\hat{\Sigma}_2 = 2 \sqrt{\hat{\Sigma}_4}>0,$ then $(\nu_1^{(i)},\nu_2^{(i)})$ is an isolated point. 
However, at $(\nu_1, \nu_2)=(\nu_1^{(4)},\nu_2^{(4)}),$ $\hat{\Sigma}_2 =-2 \sqrt{\hat{\Sigma}_4} < 0,$  and $(\nu_1^{(4)},\nu_2^{(4)})$ is always a node.

\end{lemma}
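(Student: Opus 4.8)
The plan is to regard the locus (\ref{nodeequation}) as the real affine plane curve $F(\nu_1,\nu_2)=0$ and to analyse its singularities from the first two derivatives of $F$. Writing $D=\hat\Sigma_2^2-4\hat\Sigma_4$ and $G=4\hat\Sigma_2^3-144\hat\Sigma_2\hat\Sigma_4+27\hat\Sigma_3^2$, the defining function factors conveniently as
\[
F=16\,\hat\Sigma_4\,D^2-\hat\Sigma_3^2\,G .
\]
Each of the four points $(\nu_1^{(i)},\nu_2^{(i)})$ satisfies $\hat\Sigma_3=0$ and $D=0$ by construction (Lemma~\ref{mutosigma}), so both summands of $F$ vanish and the point lies on the curve. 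To see that it is a singular point I would differentiate once: in every term of $\partial F/\partial\nu_j$ the blocks $D^2$ and $\hat\Sigma_3^2$ leave behind at least one surviving factor of $D$ or of $\hat\Sigma_3$, both of which vanish at the point, so $\nabla F=0$ there. Since the hypothesis $\nu_1^{(i)}>0$ makes the rational functions $\hat\Sigma_2,\hat\Sigma_3,\hat\Sigma_4$ smooth in a neighbourhood, this already establishes that all four points are singular points of (\ref{nodeequation}).

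For the node versus isolated-point dichotomy I would compute the Hessian of $F$ and read off the type from the sign of its determinant, using the standard fact that an ordinary double point is a node (two distinct real branches crossing) when the Hessian is indefinite, $\det H<0$, and an isolated point (acnode) when it is definite, $\det H>0$. Differentiating $F$ twice and discarding every term that still carries a factor of $D$ or $\hat\Sigma_3$, only two kinds of term survive, giving at the point
\[
\frac{\partial^2 F}{\partial\nu_j\,\partial\nu_k}=32\,\hat\Sigma_4\,D_{\nu_j}D_{\nu_k}-2\,G\,\hat\Sigma_{3,\nu_j}\hat\Sigma_{3,\nu_k}.
\]
Inserting the on-curve values $\hat\Sigma_4=\frac14\hat\Sigma_2^2$ (from $D=0$) and $G=4\hat\Sigma_2^3-144\hat\Sigma_2(\frac14\hat\Sigma_2^2)=-32\hat\Sigma_2^3$ (from $D=0$ and $\hat\Sigma_3=0$) turns the Hessian into $8\hat\Sigma_2^2$ times the sum of two rank-one symmetric matrices, with $(j,k)$-entries $D_{\nu_j}D_{\nu_k}$ and $8\hat\Sigma_2\,\hat\Sigma_{3,\nu_j}\hat\Sigma_{3,\nu_k}$.

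Applying the two-dimensional identity $\det(aa^{\top}+\beta\,bb^{\top})=\beta\,(a_1b_2-a_2b_1)^2$ with $a=\nabla D$, $b=\nabla\hat\Sigma_3$, and $\beta=8\hat\Sigma_2$, the Hessian determinant becomes
\[
\det H=512\,\hat\Sigma_2^{\,5}\,J^2,\qquad J=D_{\nu_1}\hat\Sigma_{3,\nu_2}-D_{\nu_2}\hat\Sigma_{3,\nu_1},
\]
so that, whenever $J\neq 0$, the sign of $\det H$ coincides with the sign of $\hat\Sigma_2$. This yields the classification at once. For $i=1,2,3$, Lemma~\ref{mutosigma} splits the points into the two cases $\hat\Sigma_2=-2\sqrt{\hat\Sigma_4}<0$ and $\hat\Sigma_2=2\sqrt{\hat\Sigma_4}>0$; the former gives $\det H<0$, a node, and the latter gives $\det H>0$, an isolated point. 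For $i=4$, Lemma~\ref{lemmanu4} has already shown $\hat\Sigma_2=-A(\vec r,\vec s)/(2(s_1+s_2+s_3)^2)<0$, so $\det H<0$ and the point is always a node. An equivalent way to see the split is to pass to the local coordinates $(D,\hat\Sigma_3)$, legitimate precisely because $J\neq 0$, in which $F=0$ reads to leading order $D^2+8\hat\Sigma_2\hat\Sigma_3^2=0$, a pair of real lines when $\hat\Sigma_2<0$ and an isolated real point when $\hat\Sigma_2>0$.

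The main obstacle is the nondegeneracy condition $J\neq 0$, i.e. that the curves $\{\hat\Sigma_3=0\}$ and $\{\hat\Sigma_2^2-4\hat\Sigma_4=0\}$ meet transversally at each of the four points; this is exactly what rules out a cusp or tacnode and leaves only an ordinary node or isolated point. I would settle it by substituting the explicit coordinates $(\nu_1^{(i)},\nu_2^{(i)})$ of Theorem~\ref{focusingreality}, expressed through $r_1,r_2,r_3,s_1,s_2,s_3$, into the gradients $\nabla D$ and $\nabla\hat\Sigma_3$ computed from the rational expressions (\ref{sigmaequations}), and checking $J\neq 0$; here the strict inequalities $\hat\Sigma_2=\pm2\sqrt{\hat\Sigma_4}\neq 0$ built into the hypotheses already force $\hat\Sigma_4>0$, keeping the configuration away from the coincidence $\mu_1=\mu_2$. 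In the node case this transversality is corroborated by the earlier observation that the boundary branches $\rho_+=0$ and $\rho_-=0$ cross with the distinct slopes $\pm\sqrt{2z}-\frac12\Sigma_1-2c_2$ whenever the relevant positive root $z=-\hat\Sigma_2$ is nonzero, which holds because $\hat\Sigma_2<0$ there. Verifying $J\neq 0$ uniformly over the four points, thereby excluding the genuinely degenerate possibility $\det H=0$, is the step I expect to require the most care.
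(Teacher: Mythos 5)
Your strategy is, at its core, the same as the paper's: show that the gradient of the defining function vanishes at the four points, and then classify each singular point by the sign of the Hessian determinant, which both you and the paper reduce to the sign of $\hat{\Sigma}_2^5.$ Your first part (the vanishing of $\nabla F$ because every term of the first derivatives retains a factor of $D=\hat{\Sigma}_2^2-4\hat{\Sigma}_4$ or of $\hat{\Sigma}_3$) is complete and correct, and your structural reduction
\[
\det H = 512\,\hat{\Sigma}_2^{\,5}\,J^2,\qquad J = D_{\nu_1}\hat{\Sigma}_{3,\nu_2}-D_{\nu_2}\hat{\Sigma}_{3,\nu_1},
\]
obtained from the rank-one identity, is a tidier derivation of what the paper obtains by raw substitution: the paper evaluates $Q_{\nu_1\nu_1}Q_{\nu_2\nu_2}-Q_{\nu_1\nu_2}^2$ at $(\nu_1^{(i)},\nu_2^{(i)})$ with a computer algebra system and reports the closed form of equation~(\ref{Qeqn}), a positive constant times $s_1^2 s_2^2 s_3^2 (s_1+s_2+s_3)^{10} a_1^2 a_2^2 a_3^2\, \hat{\Sigma}_2^5$ (with the sign of one $s_i$ flipped for $i=1,2,3$). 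So the two proofs implement the same criterion; yours makes visible the geometric content (transversality of $\{D=0\}$ and $\{\hat{\Sigma}_3=0\}$) that is hidden inside the paper's explicit factors.

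The genuine gap is the step you yourself flag and postpone: nothing in your argument establishes $J\neq 0.$ The whole node/isolated-point dichotomy is conditional on it, because if $J=0$ then $\det H=0,$ the quadratic part of $F$ is degenerate, and the point could be a cusp or tacnode rather than a node or acnode; the lemma as stated would then not follow. This nondegeneracy is precisely what the paper's explicit formula supplies for free: comparing your expression with equation~(\ref{Qeqn}) shows that, up to a positive constant and an even power of $\nu_1$ coming from clearing denominators, $J^2$ equals $s_1^2 s_2^2 s_3^2 (\pm s_1 \pm s_2 \pm s_3)^{10} a_1^2 a_2^2 a_3^2,$ and every factor is nonzero under the standing hypotheses: $s_j>0$; $(\pm s_1\pm s_2\pm s_3)^{10}=(\nu_1^{(i)})^5>0$ precisely by the hypothesis $\nu_1^{(i)}>0$; and the quantities $a_j$ of equation~(\ref{a123equation}) (with the flipped sign) can vanish only if two branch points coincide, contradicting the nonsingularity of $\mathscr{K}_2.$ In other words, the hypotheses of the lemma enter the classification exactly through this Jacobian, so its explicit evaluation (or that of the full Hessian, as the paper does) is not an optional check but the substantive computational step; until it is carried out, the second half of the lemma remains unproved in your proposal.
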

\begin{proof}
When $\nu_1 >0,$ equation~(\ref{nodeequation}) defines an algebraic set given by a polynomial equation $Q(\nu_1,\nu_2)=0.$ At $(\nu_1,\nu_2) = (\nu_1^{(4)},\nu_2^{(4)})$, substitution and simplification, using a computer algebra system such as Maple, shows that $Q=0, Q_{\nu_1} = 0, Q_{\nu_2}=0,$ and 
\begin{equation}
 Q_{\nu_1\nu_1} Q_{\nu_2\nu_2} - Q_{\nu_1\nu_2}^2  =  K s_1^2 s_2^2 s_3^2 (s_1+s_2+s_3)^{10} a_1^2 a_2^2 a_3^2 \, \hat{\Sigma}_2^5,
\label{Qeqn}
\end{equation}
where $K>0$ is a positive constant and
\begin{equation}
\begin{array}{rcl}
a_1 & = & (r_2-r_3)^2+ (s_2 + s_3)^2,\\
a_2 & = & (r_1 - r_3)^2 + (s_1 + s_3)^2,\\
a_3 & = & (r_2 - r_1)^2 + (s_1+s_2)^2.
\end{array}
\label{a123equation}
\end{equation}
Clearly $a_1, a_2, $ and $a_3$ are all strictly positive. 
In Lemma~\ref{lemmanu4}, it was shown that $\hat{\Sigma}_2 = - 2 \hat{\Sigma}_4 <0 $ at  $(\nu_1,\nu_2) = (\nu_1^{(4)},\nu_2^{(4)}).$ 
Hence, at $(\nu_1,\nu_2) = (\nu_1^{(4)},\nu_2^{(4)})$
\begin{equation}
 Q_{\nu_1\nu_1} Q_{\nu_2\nu_2}-Q_{\nu_1\nu_2}^2  < 0
\end{equation}
and there is always a node at this singular point.  

In the case of each of the other singular points $(\nu_1^{(i)},\nu_2^{(i)}),$ $i=1,2,3,$ equations~(\ref{Qeqn}) and~(\ref{a123equation}) remain the same, except that the sign in front of $s_i$ changes from positive to negative.  Thus the sign of the expression on the right-hand side of equation~(\ref{Qeqn}) is always the opposite of
the sign of $\hat{\Sigma}_2.$  Therefore, when $\hat{\Sigma}_2 >0,$ there is an isolated point, but when $\hat{\Sigma}_2<0,$ there is a node.
\end{proof}

\begin{lemma}
The common root $z_c$ of  equations~(\ref{cubiczequation}) and~(\ref{newcubic}) that exists at each point of the algebraic set~(\ref{algebraicset}), provided
$\hat{\Sigma}_3 \neq 0,$ has
a positive limit $z = -\hat{\Sigma}_2 > 0$ at the node $(\nu_1^{(4)},\nu_2^{(4)}),$ where $\hat{\Sigma}_3 = 0$ and $\hat{\Sigma}_2 =-2 \sqrt{\hat{\Sigma}_4} < 0.$  Moreover, the same point $(\nu_1^{(4)},\nu_2^{(4)})$ is a regular point of each of the two distinct algebraic sets $\rho_+=0$ and $\rho_-=0$ which cross transversely as subsets of the algebraic set~(\ref{algebraicset}) at the node at $(\nu_1^{(4)},\nu_2^{(4)}).$
\end{lemma}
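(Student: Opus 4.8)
The plan is to handle the two assertions in turn. First I would compute the common root directly at the node. By the hypothesis (reaffirmed in Lemma~\ref{lemmanu4}), at $(\nu_1^{(4)},\nu_2^{(4)})$ one has $\hat{\Sigma}_3=0$ and $\hat{\Sigma}_2=-2\sqrt{\hat{\Sigma}_4}<0$, so $\hat{\Sigma}_4=\f{1}{4}\hat{\Sigma}_2^2$. Substituting these into~(\ref{cubiczequation}) collapses the cubic to $z^2(z+\hat{\Sigma}_2)=0$, and substituting into~(\ref{newcubic}) gives $2z(z+\hat{\Sigma}_2)^2=0$; the common roots are $z=0$ and $z=-\hat{\Sigma}_2$, so the unique positive common root is $z=-\hat{\Sigma}_2>0$. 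To confirm that this is genuinely the limit of $z_c$ as one approaches the node along~(\ref{algebraicset}) with $\hat{\Sigma}_3\neq0$, I would substitute into the explicit formula~(\ref{rootzcpositive}): since $\hat{\Sigma}_2$ and $\hat{\Sigma}_4$ are continuous functions of $(\nu_1,\nu_2)$ for $\nu_1>0$, and $\hat{\Sigma}_2^2+12\hat{\Sigma}_4\to 16\hat{\Sigma}_4=4\hat{\Sigma}_2^2$ at the node, we get $\sqrt{\hat{\Sigma}_2^2+12\hat{\Sigma}_4}\to-2\hat{\Sigma}_2$ and hence $z_c\to-\f{1}{3}\hat{\Sigma}_2-\f{2}{3}\hat{\Sigma}_2=-\hat{\Sigma}_2>0$. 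This proves the first assertion.

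For the second assertion, the key point is that $z=-\hat{\Sigma}_2$ is a \emph{simple} root of~(\ref{cubiczequation}) at the node: the cubic factors as $z^2(z+\hat{\Sigma}_2)$ with $\hat{\Sigma}_2\neq0$, and differentiating its left-hand side $F$ gives $F_z(-\hat{\Sigma}_2)=\f{5}{4}\hat{\Sigma}_2^2-\hat{\Sigma}_4=\hat{\Sigma}_2^2\neq0$. The implicit function theorem then produces a smooth positive branch $z=z_c(\nu_1,\nu_2)$ near the node, bounded away from $0$, so that $\rho_{\pm}=\hat{\Sigma}_2\pm\hat{\Sigma}_3/\sqrt{2z_c}+z_c$ are smooth; both vanish at the node, since there $\hat{\Sigma}_3=0$ and $z_c=-\hat{\Sigma}_2$ give $\rho_{\pm}=\hat{\Sigma}_2+z_c=0$. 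Because $\nu_1^{(4)}=(s_1+s_2+s_3)^2>0$ and $z=-\hat{\Sigma}_2\neq0$, the node meets exactly the hypotheses of the transversality statement recorded in the discussion following equations~(\ref{nurho}); to make this self-contained I would verify that the slope formulas derived there extend continuously to the node (legitimate because $z_c$ is smooth there), giving finite slopes $\sqrt{2z}-\f{1}{2}\Sigma_1-2c_2$ along $\rho_+=0$ and $-\sqrt{2z}-\f{1}{2}\Sigma_1-2c_2$ along $\rho_-=0$. A finite slope shows each of $\rho_+=0$ and $\rho_-=0$ is a regular curve, a graph $\nu_2=\nu_2(\nu_1)$ through the node, while the slope difference $2\sqrt{2z}=2\sqrt{-2\hat{\Sigma}_2}\neq0$ shows they cross transversely. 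Finally, since the chosen positive $z$ is a common root of~(\ref{cubiczequation}) and~(\ref{newcubic}) precisely when $\rho_+=0$ or $\rho_-=0$, each curve is a subset of~(\ref{algebraicset}), as claimed.

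The hard part will be the branch-tracking of the implicitly defined $z$. At the node the cubic~(\ref{cubiczequation}) degenerates, two of its roots colliding at the origin, so one must confirm that the branch being followed is the isolated simple root $z=-\hat{\Sigma}_2$ rather than the confluent double root at $0$; this is what legitimizes the implicit function theorem and keeps $\rho_{\pm}$ smooth with the $\hat{\Sigma}_3/\sqrt{2z}$ terms finite. The separation hinges on the sign $\hat{\Sigma}_2<0$ at $(\nu_1^{(4)},\nu_2^{(4)})$, from Lemma~\ref{lemmanu4}, which forces $z=-\hat{\Sigma}_2>0$; this is exactly what the previous lemma exploits to distinguish the node from the isolated-point singularities $(\nu_1^{(i)},\nu_2^{(i)})$, $i=1,2,3$, at which $\hat{\Sigma}_2>0$ and the local real structure of~(\ref{algebraicset}) collapses to an isolated point carrying no real crossing curves. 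Verifying that the relevant positive root remains bounded away from the degenerate root at $0$, so that the slope difference $2\sqrt{2z}$ stays nonzero, is the crux of the transversality and is what makes the conclusion special to $(\nu_1^{(4)},\nu_2^{(4)})$.
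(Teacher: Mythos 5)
Your treatment of the first assertion is correct and is essentially the paper's own argument: at the node the two cubics collapse to $z^2(z+\hat{\Sigma}_2)=0$ and $2z(z+\hat{\Sigma}_2)^2=0,$ and continuity of the expression~(\ref{rootzcpositive}) forces $z_c \rightarrow -\hat{\Sigma}_2 > 0.$ Your implicit-function-theorem remark---that $z=-\hat{\Sigma}_2$ is a simple root at the node, since the derivative of the cubic there equals $\hat{\Sigma}_2^2 \neq 0,$ so the positive branch $z_c(\nu_1,\nu_2)$ and hence $\rho_\pm$ are smooth nearby---is a worthwhile point that the paper leaves implicit.

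The second assertion is where there is a genuine gap. Regularity of $(\nu_1^{(4)},\nu_2^{(4)})$ as a point of $\rho_\pm=0$ means $\nabla\rho_\pm \neq 0$ there, and your argument never computes any derivative of $\rho_+$ or $\rho_-$ at the node. The inference ``the slopes extend continuously and are finite, hence each zero set is a regular curve'' is invalid: a smooth function can vanish to higher order along a perfectly smooth graph (e.g.\ $\rho = (\nu_2 - f(\nu_1))^2$), in which case the zero set has finite slope at every point yet no point of it is a regular point. Moreover, the slope formulas following equations~(\ref{nurho}) are derived from the flow at points where exactly one of $\rho_\pm$ vanishes, so that the velocity vector is nonzero and tangent to the boundary; at the node both vanish, the velocity degenerates to zero, and those formulas simply do not apply, so appealing to the ``transversality statement'' of that discussion is circular---that informal statement is precisely what this lemma is meant to make rigorous at this point. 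The paper's proof supplies the two ingredients your argument is missing: (i) a direct computation showing $\hat{\Sigma}_{3,\nu_2} > 0$ at $(\nu_1^{(4)},\nu_2^{(4)}),$ so that $\rho_{+,\nu_2}$ and $\rho_{-,\nu_2},$ which differ by a nonzero multiple of $\hat{\Sigma}_{3,\nu_2},$ cannot both vanish, making the node a regular point of at least one of the two curves; and (ii) the preceding lemma's Hessian computation showing the point is a node of the algebraic set~(\ref{algebraicset}), on which $\rho_+\rho_-=0$ near the point, so that the two transverse branches guaranteed by the node structure must be exactly $\rho_+=0$ and $\rho_-=0.$ Your proof can be repaired, but only by adding essentially these two steps.
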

\begin{proof}
Lemma~\ref{lemmanu4} shows that $\hat{\Sigma}_2 < 0$ at $(\nu_1^{(4)},\nu_2^{(4)}).$  Since $\hat{\Sigma}_3 = 0,$ also, at this point, equations~(\ref{cubiczequation}) and~(\ref{newcubic}) have two common roots, $z=0$ and $z=-\hat{\Sigma}_2.$  Therefore,
the common root $z_c$ of  equations~(\ref{cubiczequation}) and~(\ref{newcubic}) that exists at each point of the algebraic set~(\ref{algebraicset}), provided
$\hat{\Sigma}_3 \neq 0,$ must approach either $0$ or $-\hat{\Sigma}_2.$  The two possible expressions for this common root, given in Theorem~\ref{boundarytheorem}, approach $-\hat{\Sigma}_2$ and $\f{1}{3} \hat{\Sigma}_2,$ respectively.  Since $\hat{\Sigma}_2 <0$ at  $(\nu_1^{(4)},\nu_2^{(4)}),$ the limiting value must be $z = -\hat{\Sigma}_2 > 0.$

Direct calculation shows that, at $(\nu_1^{(4)},\nu_2^{(4)}),$
\begin{equation}
\begin{array}{rcl}
\hat{\Sigma}_{3,\nu_{2}} &=& \f{1}{2 (s_1+s_2+s_3)^4} ((s_1+s_2)(s_1+s_3)(s_2+s_3)(s_1+s_2+s_3)\\
&&+s_1s_2(r_1-r_2)^2+s_1s_3(r_1-r_3)^2+s_2s_3(r_2-r_3)^2 )\\[.1in]
&>& 0,
\end{array}
\end{equation}
and
\begin{equation}
\rho_{\pm,\nu_2} = \hat{\Sigma}_{2,\nu_2} + z_{\nu_2} \pm \f{1}{\sqrt{z}} \hat{\Sigma}_{3,\nu_2}.
\end{equation}
So at least one of $\rho_{\pm,\nu_2}$ is nonzero.  Thus $(\nu_1^{(4)},\nu_2^{(4)})$ is a regular point of at least one of $\rho_+=0$ or $\rho_-=0.$  Since the same point is also a node of the algebraic set~(\ref{algebraicset}), in a neighborhood of which  $z_c>0$ and $\rho_+ \rho_- = 0$ on the algebraic set, the two transverse one-dimensional submanifolds of the algebraic set at the node must correspond to $\rho_+=0$ and $\rho_-=0.$

\end{proof}

\begin{theorem}
The set of possible initial conditions for the Dirichlet eigenvalues $\mu_1$ and $\mu_2$  for smooth two-phase solutions $p=p(x,t)$ of the NLS equation~(\ref{nls}) is parametrized by a single  compact connected region of $\{(\nu_1,\nu_2) \in \mathbb{R}^2\}.$   The maximum value of $\nu_1 =|p(x,t)|^2$ on this region is $(s_1+s_2+s_3)^2,$ and the minimum value of $\nu_1=|p(x,t)|^2$ on this region is
$(s_1+s_2-s_3)^2 > 0,$ if $s_3 > s_1+s_2,$ or zero, if $s_3 \leq s_1+s_2.$
\label{stheorem}
\end{theorem}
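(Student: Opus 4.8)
The plan is to read off the global shape of the admissible parameter set from the local analyses at the four distinguished points $(\nu_1^{(i)},\nu_2^{(i)})$ already carried out. First I would identify the set of admissible initial data with $\mathcal{R}=\{(\nu_1,\nu_2):\nu_1\geq 0,\ \rho_+\geq 0,\ \rho_-\geq 0\}$, the locus on which the reality condition of Corollary~\ref{realcorollary5} (together with its $\hat{\Sigma}_3=0$ counterpart, Corollary~\ref{realcorollary4}) holds; by those corollaries this is precisely the set of $(\nu_1,\nu_2)$ for which $\mu_1,\mu_2,\mu_1^*,\mu_2^*$ form two genuine complex-conjugate pairs, hence the set of legitimate initial values for the Dirichlet eigenvalues. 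Compactness then follows from Lemma~\ref{boundedlemma}, which bounds $\nu_1^2+\nu_2^2$, together with the fact that $\rho_\pm\geq 0$ and $\nu_1\geq 0$ are closed conditions. For connectedness I would invoke that the real quasi-periodic solutions sweep a single two-dimensional torus in the Jacobi variety (the known result quoted in the introduction); since $\nu_1$ and $\nu_2$ are phase-invariant continuous functions on this torus, $\mathcal{R}$ is the continuous image of a compact connected set, hence a single compact connected region.

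For the maximum, I would use that $\nu_1=|p|^2$ is continuous on the compact torus and therefore attains a global maximum, which is positive for a nontrivial solution. Being attained at an interior point of a torus (which has no boundary), this value is a relative maximum of $\nu_1(x,t)$, so by Lemma~\ref{criticalpoints} and Theorem~\ref{nuextremalemma} it occurs where $\mu_1=\mu_1^*$ and $\mu_2=\mu_2^*$ with $\mu_1\neq\mu_2$; by Lemma~\ref{realmulemma} this forces $P(\nu_1)=0$, so the maximizing value lies among $\{\nu_1^{(1)},\dots,\nu_1^{(4)}\}$ of Theorem~\ref{focusingreality}. Among these, the preceding classification shows that only $\nu_1^{(4)}$ is a relative maximum (Lemma~\ref{lemmanu4}), while $\nu_1^{(2)},\nu_1^{(3)}$ are saddles. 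Hence the global maximum equals $\nu_1^{(4)}=(s_1+s_2+s_3)^2$.

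The minimum is handled by the same attainment argument, now split on the sign of $s_3-(s_1+s_2)$. The global minimum is attained; if it were strictly positive it would be a positive relative minimum, hence (by Theorem~\ref{nuextremalemma} and Lemma~\ref{realmulemma}) one of the $\nu_1^{(i)}$ that is a relative minimum. The preceding lemmas show the only candidate is $\nu_1^{(1)}$, and only when $s_3>s_1+s_2$. Thus if $s_3\leq s_1+s_2$ no positive relative minimum exists, forcing the minimum to be $0$ (attained, by continuity on the compact torus). When $s_3>s_1+s_2$, the lemma establishing the relative minimum at $\nu_1^{(1)}$ gives $\nu_1^{(1)}=(s_3-s_1-s_2)^2>0$ as a node bounding $\mathcal{R}$ on the left, and I would conclude the minimum is $\nu_1^{(1)}=(s_1+s_2-s_3)^2$.

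The hard part will be excluding the possibility that $\nu_1$ reaches $0$ in the regime $s_3>s_1+s_2$, since this is exactly what distinguishes the two cases of the minimum and is not covered by Theorem~\ref{nuextremalemma}, whose extremum classification assumes $\nu_1>0$. Here I would use that $p=0$ forces $(\nu_1,\nu_2)=(0,0)$ (because $\nu_2=i(p^{*}p^{\prime}-pp^{*\prime})$ also vanishes when $p=0$), reducing the issue to whether the single point $(0,0)$ belongs to $\mathcal{R}$; I expect to show that the relative-minimum (node) structure at $\nu_1^{(1)}>0$, together with the absence of any further critical value of $\nu_1(x,t)$ in the strip $0<\nu_1<\nu_1^{(1)}$, confines $\mathcal{R}$ to $\{\nu_1\geq\nu_1^{(1)}\}$ and keeps $(0,0)$ outside it. The remaining bookkeeping — that $\nu_1^{(2)},\nu_1^{(3)}$, and $\nu_1^{(1)}$ in the saddle regime, are isolated or interior points that neither disconnect $\mathcal{R}$ nor impose spurious bounds — follows from the node-versus-isolated-point dichotomy already established.
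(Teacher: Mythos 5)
Your skeleton — classify possible extremal values of $\nu_1$ through $P(\nu_1)=0$ and the extremum/saddle lemmas, get compactness from Lemma~\ref{boundedlemma} — is the paper's, and your maximum argument is essentially the paper's own ("a bounded solution must attain a maximum, and $\nu_1^{(4)}$ is the only admissible one"). The genuine gap is precisely the step you yourself flag as the hard part: ruling out $\nu_1=0$ when $s_3>s_1+s_2$. The mechanism you propose — no further critical value of $\nu_1(x,t)$ in the strip $0<\nu_1<\nu_1^{(1)}$, hence $\mathcal{R}$ is confined to $\{\nu_1\ge\nu_1^{(1)}\}$ — cannot work on its own, because the entire critical-point machinery you would invoke (Lemma~\ref{criticalpoints}, Theorem~\ref{nuextremalemma}, the passage to $P(\nu_1)=0$ via Lemma~\ref{realmulemma}) presupposes $\nu_1>0$: at a zero of $p$ equations~(\ref{nuequations}) give $\partial\nu_1/\partial x=\partial\nu_1/\partial t=0$ automatically, so a zero of $p$ is an unclassified minimum that never has to register as a root of $P(\nu_1)$, and a trajectory can descend from values near $\nu_1^{(1)}$ to $0$ without passing through any classified critical point. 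Thus "no critical values in the strip" does not keep trajectories, or $\mathcal{R}$, out of the strip, and deducing the exclusion of zeros from the confinement of $\mathcal{R}$ is then circular, as you half-acknowledge. What closes this case in the paper is a static boundary argument, not a trajectory argument: trajectories are trapped in the region bounded by the algebraic set~(\ref{algebraicset}); at any relative extremum the Dirichlet eigenvalues are real, forcing $\hat{\Sigma}_2=-2\sqrt{\hat{\Sigma}_4}<0$, so by the node-versus-isolated-point lemma the point $(\nu_1^{(1)},\nu_2^{(1)})$ is, in the regime $s_3>s_1+s_2$, a node at which the arcs $\rho_+=0$ and $\rho_-=0$ cross and seal the admissible region on the left, exactly as they do at $(\nu_1^{(4)},\nu_2^{(4)})$. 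Without that boundary-closure step the two cases of the minimum are never actually separated.

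A secondary divergence: your connectedness argument imports the literature's theorem that the real two-phase solutions form a single torus and pushes it forward through the continuous map to $(\nu_1,\nu_2)$. For that to say anything about $\mathcal{R}=\{\nu_1\ge 0,\ \rho_+\ge 0,\ \rho_-\ge 0\}$ you need the identification of $\mathcal{R}$ with the set of values attained by actual solutions, i.e.\ the inclusion opposite to the one Corollaries~\ref{realcorollary4} and~\ref{realcorollary5} give: every point satisfying the algebraic reality conditions must be realized by a genuine real solution. You assert this identification rather than prove it. The paper's route avoids both the external citation (a result it is in the business of re-deriving effectively) and this issue: the transverse-crossing lemma shows exactly one of the four local sectors at the node $(\nu_1^{(4)},\nu_2^{(4)})$ is admissible, and since every bounded solution must attain a maximum of $\nu_1$ and $\nu_1^{(4)}$ is the only possible maximum, every admissible component has that node in its closure — giving existence, uniqueness, and connectedness of the region in one stroke.
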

\begin{proof}
The previous lemma shows that $\rho_+=0$ and $\rho_- = 0$ cross transversely at the point $(\nu_1,\nu_2) = (\nu_1^{(4)},\nu_2^{(4)}),$ where $z>0.$  These two one-dimensional algebraic sets divide a neighborhood of $(\nu_1,\nu_2) = (\nu_1^{(4)},\nu_2^{(4)})$ into four regions, exactly one of which (to the left of $(\nu_1,\nu_2) = (\nu_1^{(4)},\nu_2^{(4)}),$ since $\nu_1^{(4)}$ must be a relative maximum) contains an open set where $\rho_+ > 0,$ $\rho_- > 0,$ and $z>0.$  Thus there exists a region of initial conditions for $\nu_1$ and $\nu_2$ and, hence, for the Dirichlet eigenvalues $\mu_1$ and $\mu_2,$ on which 
all the reality conditions are satisfied.  Since the smooth bounded solution constructed in this region must reach a maximum value of $\nu_1 = |p(x,t)|^2,$ and
$\nu_1 = \nu_1^{(4)}$ is the only such maximum possible, the closure of the connected region where $\rho_+ >0$ and $\rho_- > 0,$ to the left of   $(\nu_1,\nu_2) = (\nu_1^{(4)},\nu_2^{(4)})$ and bounded by the algebraic set~(\ref{algebraicset}), must be the unique compact connected region consisting of all the permissible initial conditions of $(\nu_1,\nu_2).$ This region parametrizes the permissible initial conditions for $\mu_1$ and $\mu_2$ through the explicit expressions of the first and second symmetric polynomial constraints given in Corollaries 3 and 4, respectively.

Finally, all smooth  two-phase solutions of the NLS equation~(\ref{nls}) are bounded, by Lemma~\ref{boundedlemma}, and so they must achieve a minimum and a maximum value for $\nu_1 = |p(x,t)|^2.$
Therefore all such solutions must oscillate between the maximum value and the minimum value of $\nu_1 = |p(x,t)|^2$ implied by
Theorem 6 and Lemmas 17, 18, and 19. 
\end{proof}
 The simple dependence 
of the extrema of the modulus of the two-phase solution on the branch points is consistent with 
 the analogous result for genus-zero and genus-one solutions~\cite{kamc 00} and numerical simulations  of genus-two solutions~\cite{el 15, osbo 10}.
For higher-phase solutions, analogous results  require the inclusion of higher-time flows in the integrable NLS hierarchy.  The spatial and temporal flows of the scalar NLS equation are not sufficient to span the higher-dimensional torus of three-phase or  higher-phase solutions. In general, the sum of the imaginary parts of the branch points of the 
higher-genus Riemann surface will be an upper bound for the maximum modulus of the solution.

\section{Two-phase solutions of the NLS equation}
In this section, the smooth  two-phase solutions are constructed for  the NLS equation~(\ref{nls}).  The solutions for the 
Dirichlet eigenvalues of such solutions were shown, in the previous section, to be a single compact connected two-dimensional manifold.
As is well-known, the Dubrovin equations for the Dirichlet eigenvalues linearize via the Abel map onto the Jacobi variety of $\mathscr{K}_2,$
\begin{equation}
\begin{array}{rcl}
4i t + d_1 &= &\int\limits^{\mu_1}_{\lambda_1} \f{d \mu_1}{\sqrt{\mathscr{R}(\mu_1)}} + \int\limits^{\mu_2}_{\lambda_3} \f{d \mu_2}{\sqrt{\mathscr{R}(\mu_2)}},\\[.2in]
2ix-4i c_2  t + d_2 & = & \int\limits^{\mu_1}_{\lambda_1} \f{\mu_1 d \mu_1}{\sqrt{\mathscr{R}(\mu_1)}}+\int\limits^{\mu_2}_{\lambda_3} \f{\mu_2 d\mu_2}{\sqrt{\mathscr{R}(\mu_2)}},
\end{array}
\label{abellinearized}
\end{equation}
where
\begin{equation}
\begin{array}{rcl}
d_1 & = & \int\limits^{\mu_{10}}_{\lambda_1} \f{d \mu_1}{\sqrt{\mathscr{R}(\mu_1)}} + \int\limits^{\mu_{20}}_{\lambda_3} \f{d \mu_2}{\sqrt{\mathscr{R}(\mu_2)}},\\[.2in]
d_2 & = & \int\limits^{\mu_{10}}_{\lambda_1} \f{\mu_1 d \mu_1}{\sqrt{\mathscr{R}(\mu_1)}}+\int\limits^{\mu_{20}}_{\lambda_3} \f{\mu_2 d\mu_2}{\sqrt{\mathscr{R}(\mu_2)}},
\end{array}
\label{dequation}
\end{equation}
where the constants of integration $d_1$ and $d_2$ are not arbitrary but are determined by an allowed pair of initial values $\mu_{10}$ and $\mu_{20}$ which satisfy, along with the initial values of the real parameters $\nu_1$ and $\nu_2,$ the algebraic constraints imposed by the invariant algebraic curve.  Moreover, this set of permissible values for $d_1$ and $d_2$ is partitioned into equivalence classes  lying on distinct non-intersecting two-real dimensional planes in $\mathbb{C}^2$ formed 
by translation along the linear flows of equation~(\ref{abellinearized}) on the Jacobian variety  of $\mathscr{K}_2.$

Solving equations~(\ref{abellinearized}) for $\mu_1$ and $\mu_2$ is the classical Jacobi inversion problem.  The details of solving the Jacobi inversion problem, in terms of the Kleinian elliptic functions $\sigma$ and $\zeta$ defined on the genus-two Riemann surface $\mathscr{K}_2,$ are placed in the Appendix.  If $u^\prime = (u_1^\prime, u_2^\prime)^T,$ with
\begin{equation}
\begin{array}{rclcl}
u_1^\prime & = & \int\limits^{\mu_1}_{\lambda_1} \f{d \mu_1}{\sqrt{\mathscr{R}(\mu_1)}} + \int\limits^{\mu_2}_{\lambda_3} \f{d \mu_2}{\sqrt{\mathscr{R}(\mu_2)}}
& = & 4 i t +d_1,\\[.2in]
u_2^\prime & = &  \int\limits^{\mu_1}_{\lambda_1} \f{\mu_1 d \mu_1}{\sqrt{\mathscr{R}(\mu_1)}}+\int\limits^{\mu_2}_{\lambda_3} \f{\mu_2 d\mu_2}{\sqrt{\mathscr{R}(\mu_2)}} & = & 2 i x +2 i \Lambda_1 t + d_2,
\end{array}
\end{equation}
so that 
\begin{equation}
u^\prime =  i V x + i W t + d,
\end{equation}
where
\begin{equation}
\begin{array}{rcl}
 V & = & (0,2)^T, \\
  W & = & (4, 2 \Lambda_1)^T,\\
d & = & (d_1, d_2)^T,
\end{array}
\end{equation}
and $d_1$ and $d_2$ are defined in equation~(\ref{dequation}),
then equation~(\ref{implicitsolution}) produces simple formulas for the symmetric polynomials of $\mu_1$ and $\mu_2.$ In particular,
\begin{equation}
\begin{array}{rcl}
\mu_1 + \mu_2 & =& \f{1}{2}\Lambda_1 -\delta_2 + \f{\partial}{\partial u_2} \ln \left( \f{\sigma (u^\prime + \int^{\infty^+}_{\lambda_6} du)}{\sigma (u^\prime + \int^{\infty^-}_{\lambda_6} du)}\right), \\[.2in]
\mu_1 \mu_2 & = &  -\f{1}{8} \Lambda_1^2+\f{1}{2} \Lambda_2+\delta_1 - \f{\partial}{\partial u_1} \ln  \left( \f{\sigma (u^\prime + \int^{\infty^+}_{\lambda_6} du)}{\sigma (u^\prime + \int^{\infty^-}_{\lambda_6} du)}\right),
\end{array}
\label{musolution}
\end{equation}
where $\delta_1$ and $\delta_2$ are constants of integration expressible in terms of functions of the branch points.
The solution $p$ to the NLS equation~(\ref{nls}) is obtained from the trace formulas in Equation~(\ref{trace}) and Equation~(\ref{musolution}), viz.,
\begin{equation}
\begin{array}{rcl}
\f{\partial}{\partial x} \ln p & =& 2 i (\mu_1 + \mu_2) - i \Lambda_1\\[.2in]
& = & 2 i  \f{\partial}{\partial u_2} \ln \left( \f{\sigma (u^\prime + \int^{\infty^+}_{\lambda_6} du)}{\sigma (u^\prime + \int^{\infty^-}_{\lambda_6} du)}\right) - 2 i \delta_2 \\[.2in]
& = & \f{\partial}{\partial x}   \ln \left( \f{\sigma (u^\prime + \int^{\infty^+}_{\lambda_6} du)}{\sigma (u^\prime + \int^{\infty^-}_{\lambda_6} du)}\right)-2 i \delta_2,\\[.2in]
\f{\partial}{\partial t} \ln p & = & -4 i \mu_1 \mu_2 + 2 i \Lambda_2 - \f{1}{2}i \Lambda_1^2+ \Lambda_1 \f{\partial}{\partial x} \ln p \\[.2in]
&= & 4 i \f{\partial}{\partial u_1}  \ln \left( \f{\sigma (u^\prime + \int^{\infty^+}_{\lambda_6} du)}{\sigma (u^\prime + \int^{\infty^-}_{\lambda_6} du)}\right)  + 2i\Lambda_1
 \f{\partial}{\partial u_2} \ln \left( \f{\sigma (u^\prime + \int^{\infty^+}_{\lambda_6} du)}{\sigma (u^\prime + \int^{\infty^-}_{\lambda_6} du)}\right) - 4 i \delta_1 - 2 i \delta_2 \Lambda_1\\[.2in]
& = &\f{\partial}{\partial t}  \ln \left( \f{\sigma (u^\prime + \int^{\infty^+}_{\lambda_6} du)}{\sigma (u^\prime + \int^{\infty^-}_{\lambda_6} du)}\right) - 4 i \delta_1  - 2 i \delta_2 \Lambda_1.
\end{array}
\end{equation}
Hence, the two-phase solution to the NLS equation~(\ref{nls}) is
\begin{equation}
\begin{array}{rcl}
p(x,t) &=& \sqrt{\nu_1 (0,0)} e^{i \phi(x,t)} \f{\sigma \left(\beta^- \right)}{\sigma \left(\beta^+ \right)} \f{\sigma \left(i V x + i W t +\beta^+ \right)}{\sigma \left(i V x + i W t + \beta^-\right)} 
\end{array}
\label{twosigma}
\end{equation}
where
\begin{equation}
\begin{array}{rcl}
\beta^+ & = & \int^{\mu_{10}}_{\lambda_1} du + \int^{\mu_{20}}_{\lambda_3} du + \int^{\infty^+}_{\lambda_6} du, \\[.1in]
\beta^- & = & \int^{\mu_{10}}_{\lambda_1} du +\int^{\mu_{20}}_{\lambda_3} du + \int^{\infty^-}_{\lambda_6} du,\\[.1in]
\phi(x,t) & = & -2  \delta_2 x - (4  \delta_1 + 2 \delta_2 \Lambda_1)t + \phi_0,
\end{array}
\label{betadefinition}
\end{equation}
$\delta_1, \delta_2 \in \mathbb{C}$ are given explicitly by equation~(\ref{deltaeqn}), and $\phi_0 \in \mathbb{R}$ is an arbitrary real constant. Note that $\delta_1$ and $\delta_2$  in $\phi$ are, in general, complex because there is also a complex phase in the ratio of the $\sigma$ functions.  These complex phases cancel out, so that the solution remains bounded for all $x$ and $t.$
The values of $\nu_1 (0,0) >0$ and $(\mu_{10},s_{10}), (\mu_{20},s_{20}) \in \mathscr{K}_2$ can be explicitly given in terms of the branch points of the curve $\mathscr{K}_2$  by the formulas derived in the previous section.

It is convenient to remove the exponential factors from the $\sigma$ functions in equation~(\ref{twosigma}) and write the two-phase solution in terms of Riemann theta functions.  The definitions of the theta functions and the constants in the following expressions are given in the Appendix.
\begin{equation}
\begin{array}{rcl}
p(x,t) &=& \sqrt{\nu_1 (0,0)} e^{i \xi(x,t) } \f{\theta \left(\f{1}{2}\omega^{-1}\beta^- \right)}{\theta \left(\f{1}{2} \omega^{-1} \beta^+ \right)} \f{\theta \left(\f{1}{2} \omega^{-1} (i V x + i W t + \beta^+) \right)}{\theta \left(\f{1}{2}\omega^{-1} (i V x + i W t + \beta^- )\right)}, 
\end{array}
\label{firstp}
\end{equation}
where
\begin{equation}
\begin{array}{rcl}
\xi(x,t) & = & \phi(x,t) + (2\alpha_{11} 4 t+(\alpha_{12}+\alpha_{21})(2 x + 2 \Lambda_1 t)) \int^{\infty^+}_{\infty^-} du_1 \\
    &  &+ (2 \alpha_{22} (2x+2\Lambda_1 t)+ (\alpha_{12}+\alpha_{21}) 4t) \int^{\infty^+}_{\infty^-} du_2. 
\end{array}
\end{equation}

\begin{lemma}  The exponential factor $e^{i\xi(x,t)}$ in equation~(\ref{firstp}) is a phase factor of modulus 1, viz.,
\begin{equation}    
 \xi(x,t)= \kappa_1 x + \kappa_2 t + \phi_0,
\label{xidefinition}
\end{equation}
where the wavenumbers $\kappa_1,\kappa_2 \in \mathbb{R}$ are given by
\begin{equation}
\begin{array}{rcl}
\kappa_1 & = & 2(\lambda_1+\lambda_1^* -\f{1}{2}\Lambda_1) - (\omega^{-1})_{12}(\f{\theta_1^+}{\theta^+}-\f{\theta_1^-}{\theta^-})- (\omega^{-1})_{22}(\f{\theta_2^+}{\theta^+}-\f{\theta_2^-}{\theta^-}),      \\[.1in]
\kappa_2 & = &  4 (- |\lambda_1|^2 - \f{1}{8} \Lambda_1^2 + \f{1}{2} \Lambda_2)  -2(\omega^{-1})_{11} (\f{\theta_1^+}{\theta^+}-\f{\theta_1^-}{\theta^-})-2  (\omega^{-1})_{21} (\f{\theta_2^+}{\theta^+} - \f{\theta_2^-}{\theta^-})\\
&&+\Lambda_1 \left( 2 (\lambda_1 + \lambda_1^* -\f{1}{2} \Lambda_1) 
 - (\omega^{-1})_{12}(\f{\theta_1^+}{\theta^+}-\f{\theta_1^-}{\theta^-})- (\omega^{-1})_{22}(\f{\theta_2^+}{\theta^+}-\f{\theta_2^-}{\theta^-}) \right),
\end{array}
\end{equation}
and $\phi_0 \in \mathbb{R}$ is an arbitrary phase.
\end{lemma}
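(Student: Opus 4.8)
The plan is to separate the two assertions of the lemma: that $\xi$ is affine in $(x,t)$ with the displayed coefficients, and that those coefficients are real. First I would note that $\xi(x,t)$ is affine by construction. The phase $\phi$ of~(\ref{betadefinition}) is linear in $x$ and $t$, and each of the two correction terms is a fixed period constant, $\int^{\infty^+}_{\infty^-}du_1$ or $\int^{\infty^+}_{\infty^-}du_2$, multiplied by the linear expressions $4t$ and $2x+2\Lambda_1 t$. Hence $\xi=\kappa_1 x+\kappa_2 t+\phi_0$, and collecting the coefficients of $x$ gives the preliminary expression
\[
\kappa_1=-2\delta_2+2(\alpha_{12}+\alpha_{21})\int^{\infty^+}_{\infty^-}du_1+4\alpha_{22}\int^{\infty^+}_{\infty^-}du_2,
\]
with an analogous (longer) expression for $\kappa_2$ involving $\delta_1,\delta_2,\Lambda_1$ and the same two period constants. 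I would also record \emph{why} the correction terms take exactly this form: passing from the $\sigma$-form~(\ref{twosigma}) to the $\theta$-form~(\ref{firstp}) uses the Kleinian relation $\sigma(u)=C\exp(u^{T}\alpha u)\,\theta(\tfrac12\omega^{-1}u)$, and the ratio of the two shifted $\sigma$'s produces an exponential whose part linear in $(x,t)$ is $2(Vx+Wt)^{T}\alpha\int^{\infty^+}_{\infty^-}du$; with $V=(0,2)^{T}$ and $W=(4,2\Lambda_1)^{T}$ this is precisely the pair of correction terms, the symmetric combination $\alpha_{12}+\alpha_{21}$ appearing because only the symmetric part of $\alpha$ survives in the quadratic form.

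Next I would obtain the explicit formulas by substituting the Appendix expressions~(\ref{deltaeqn}) for $\delta_1$ and $\delta_2$ into the coefficients above. These constants of integration, which enter~(\ref{musolution}), are built from logarithmic derivatives of the Kleinian $\sigma$-function at the base points together with explicit branch-point data. Rewriting them through $\theta$ with the same conversion $\sigma=Ce^{u^{T}\alpha u}\theta$, each $\delta_j$ splits into three kinds of terms: logarithmic theta-derivatives $\theta_i^{\pm}/\theta^{\pm}$ weighted by entries of $\omega^{-1}$, quadratic $\alpha$-form contributions, and purely branch-point symmetric functions. The decisive point is that the $\alpha$-form contributions from $\delta_1,\delta_2$ cancel exactly against the $\alpha$-dependent correction terms isolated in the first step, leaving only the theta-derivative combinations $\theta_1^{\pm}/\theta^{\pm}$ and $\theta_2^{\pm}/\theta^{\pm}$ and the branch-point quantities $\lambda_1+\lambda_1^{*}$, $|\lambda_1|^2$, $\Lambda_1$, and $\Lambda_2$. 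Assembling what survives reproduces the stated formulas for $\kappa_1$ and $\kappa_2$. This cancellation-and-collection is the computational core of the lemma and the step I expect to be the main obstacle, since it requires the precise Appendix normalizations of $\delta_1,\delta_2,\alpha$, and $\omega$ to be tracked consistently.

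Finally, for the reality of $\kappa_1,\kappa_2$ I would not re-derive it from the explicit formulas but argue directly from boundedness. Writing $\nu_1=|p|^2=\nu_1(0,0)\,e^{-2\operatorname{Im}\xi}\,|\Theta|^2$, where $\Theta$ denotes the full theta factor in~(\ref{firstp}), the real quasi-periods make the argument $\tfrac12\omega^{-1}(iVx+iWt)$ move along the $\mathbb{Z}^{2}$-lattice directions of the Jacobian, along which $|\Theta|$ is quasi-periodic and (by smoothness, the denominator never vanishing) bounded above. By Theorem~\ref{stheorem}, $\nu_1$ is bounded and attains the fixed positive maximum $\nu_1^{(4)}$ on an unbounded set in $(x,t)$. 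If either $\operatorname{Im}\kappa_1$ or $\operatorname{Im}\kappa_2$ were nonzero, the factor $e^{-2\operatorname{Im}\xi}$ would grow or decay exponentially along some direction, so $\nu_1$ could not remain bounded while recurring to $\nu_1^{(4)}$; hence $\operatorname{Im}\kappa_1=\operatorname{Im}\kappa_2=0$, and $\kappa_1,\kappa_2\in\mathbb{R}$. As an independent consistency check on the explicit formulas, the same reality can be read off from the anti-holomorphic involution $*$ of~(\ref{star}), which fixes the points $\infty^{\pm}$ and permutes the conjugate branch points so that the $\omega^{-1}$-weighted theta-derivative combinations equal their own conjugates.
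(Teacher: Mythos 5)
Your first two steps are correct and reproduce what the paper does implicitly in the text preceding the lemma: collecting the coefficient of $x$ in $\xi$ gives $\kappa_1=-2\delta_2+2(\alpha_{12}+\alpha_{21})\int^{\infty^+}_{\infty^-}du_1+4\alpha_{22}\int^{\infty^+}_{\infty^-}du_2$, and substituting~(\ref{deltaeqn}) makes the $\alpha$-dependent terms cancel identically, leaving the stated $\kappa_1$; the same bookkeeping applied to $-(4\delta_1+2\delta_2\Lambda_1)$ plus the $t$-part of the correction terms gives $\kappa_2$. That half of the lemma is fine.

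The divergence, and the gap, is in the reality claim, which is the entire content of the paper's own proof. The paper argues directly: since $\omega^{-1}$ is purely imaginary, it suffices to show each difference $\f{\theta_j^+}{\theta^+}-\f{\theta_j^-}{\theta^-}$ is purely imaginary; the anti-holomorphic involution sends the arguments $\f{1}{2}\omega^{-1}(\int^{\infty^\pm}_{\lambda_6}du+\int^{\lambda_2}_{\lambda_3}du)$ to minus themselves plus a \emph{common} integer lattice vector $m+\tau m^\prime$, and then evenness of $\theta$, oddness of $\theta_j$, and the lattice law~(\ref{logtheta}) give $(\theta_j^\pm/\theta^\pm)^*=-2\pi i m_j^\prime-\theta_j^\pm/\theta^\pm$, so the $-2\pi i m_j^\prime$ terms cancel in the difference. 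Your closing ``consistency check'' glosses over exactly this mechanism: the individual $\omega^{-1}$-weighted combinations are \emph{not} self-conjugate; only the $+/-$ difference is, because the lattice-shift terms are shared.

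Your main (dynamical) argument for reality has a genuine hole at the step ``$\nu_1$ attains $\nu_1^{(4)}$ on an unbounded set and $e^{-2\mathrm{Im}\,\xi}$ grows or decays along some direction, hence a contradiction.'' Unboundedness of the maximum set is not enough: a priori that set could lie along level lines of $\mathrm{Im}\,\xi$, i.e., inside a strip on which $e^{-2\mathrm{Im}\,\xi}$ is bounded, and then boundedness of $\nu_1$ is not violated. What closes the argument is lattice periodicity, which you gesture at but never use: since $V$ and $W$ are linearly independent and $\omega^{-1}$ is purely imaginary, the map $(x,t)\mapsto\f{1}{2}\omega^{-1}(iVx+iWt)\bmod\mathbb{Z}^2$ is a covering of the real torus with a rank-two kernel lattice $\Lambda_0$; because $\theta(v+m)=\theta(v)$ for $m\in\mathbb{Z}^2$, the theta quotient in~(\ref{firstp}) (your $\Theta$) is exactly $\Lambda_0$-periodic, and $\nu_1$ is $\Lambda_0$-periodic because translation by $\ell\in\Lambda_0$ shifts $u^\prime$ by a full period $2\omega m$ of the Jacobian, so the Abelian functions $\mu_1,\mu_2$, and hence $\nu_1$, return to their values. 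Comparing $\nu_1=\nu_1(0,0)e^{-2\mathrm{Im}\,\xi}|\Theta|^2$ at a point $s_0$ where $\nu_1$ is positive and at $s_0+\ell$ then forces $(\mathrm{Im}\,\kappa_1)\ell_1+(\mathrm{Im}\,\kappa_2)\ell_2=0$ for all $\ell\in\Lambda_0$, whence $\mathrm{Im}\,\kappa_1=\mathrm{Im}\,\kappa_2=0$ since $\Lambda_0$ has rank two. Without this step (and without justifying that the theta denominator is bounded away from zero on the \emph{closure} of the trajectory, which you assert only from non-vanishing at trajectory points), the proposal does not close; with it, you obtain a correct proof, but one far less self-contained than the paper's short symmetry computation, since it leans on Lemma~\ref{boundedlemma}, Theorem~\ref{stheorem}, and the prior identification of~(\ref{firstp}) with a bona fide bounded two-phase solution.
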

\begin{proof}
Since $\omega^{-1}$ is purely imaginary (see the Appendix for details), it is sufficient to show that the differences of logarithmic derivatives of $\theta$ of the form, for $j=1,2,$
\begin{equation}
\f{\theta_j^+}{\theta^+}-\f{\theta_j^-}{\theta^-}
\end{equation}
are also purely imaginary.  
Consider the complex conjugate
\begin{equation}
\begin{array}{rcl}
\left(\f{1}{2} \omega^{-1}( \int^{\infty^+}_{\lambda_6} du + \int^{\lambda_2}_{\lambda_3} du)\right)^* & = & - \f{1}{2} \omega^{-1} ( \int^{\infty^+}_{\lambda_5} du + \int^{\lambda_1}_{\lambda_4} du)\\[.1in]
& = & - \f{1}{2} \omega^{-1} ( \int^{\infty^+}_{\lambda_6} du + \int^{\lambda_2}_{\lambda_3} du) \\
&&- v^{\lambda_6,\lambda_5}-
v^{\lambda_1,\lambda_2}- v^{\lambda_3,\lambda_4}\\[.1in]
&=&  - \f{1}{2} \omega^{-1} ( \int^{\infty^+}_{\lambda_6} du + \int^{\lambda_2}_{\lambda_3} du)  + m + \tau m^\prime,
\end{array}
\end{equation}
where the sum of the half-period characteristics is zero, i.e., an integer lattice translation for some $m, m^\prime \in \mathbb{Z}^2.$
Similarly,
\begin{equation}
\left(\f{1}{2} \omega^{-1} \left( \int^{\infty^+}_{\lambda_6} du + \int^{\lambda_2}_{\lambda_3} du\right)\right)^* =  - \f{1}{2} \omega^{-1} \left( \int^{\infty^+}_{\lambda_6} du + \int^{\lambda_2}_{\lambda_3} du\right)  + m + \tau m^\prime,
\end{equation}
with the same $m, m^\prime \in \mathbb{Z}^2$ as previously, since we can integrate along the same paths in the calculation of the half-integer characteristics.

Now using the fact that $\theta$ is an even function and the partial derivatives $\theta_1$ and $\theta_2$ are odd functions, 
and the transformation property of the logarithmic derivatives in equation~(\ref{logtheta}), we obtain, for $j=1,2,$
\begin{equation}
\begin{array}{rcl}
\left( \f{\theta^+_j}{\theta^+}\right)^* & = & -2 \pi i m_j^\prime - \f{\theta_j^{+}}{\theta^+},\\[.1in]
\left( \f{\theta^-_j}{\theta^-}\right)^* & = & -2 \pi i m_j^\prime - \f{\theta_j^{-}}{\theta^-}.
\end{array}
\end{equation}
Hence
\begin{equation}
\left( \f{\theta^+_j}{\theta^+} - \f{\theta^-_j}{\theta^-} \right)^* = - \left( \f{\theta^+_j}{\theta^+} - \f{\theta^-_j}{\theta^-} \right).
\end{equation}
\end{proof}

Finally, all the previous results can be summarized in the following theorem, in which the  solution of the NLS equation~(\ref{nls}) is constructed, explicitly parametrized by the branch points of the Riemann surface $\mathscr{K}_2,$ with all reality conditions  satisfied.  Moreover simple formulas for the minimum and the maximum of the solution are determined, based solely on the imaginary parts of the branch points of $\mathscr{K}_2.$
\begin{theorem}
The smooth two-phase solutions of the NLS equation~(\ref{nls}) with non-singular invariant curve $\mathscr{K}_2,$ 
form a  two-real-dimensional torus submanifold, modulo a circle of complex phase factors, of the Jacobi variety of $\mathscr{K}_2,$ with branch points
$\lambda_1 = \lambda_2^* = r_1 + i s_1,$ $\lambda_3=\lambda_4^* = r_2 + i s_2,$ $\lambda_5 = \lambda_6^* = r_3 + i s_3,$ where $r_i, s_i \in \mathbb{R},$ for $i = 1,2,3,$ and $s_1, s_2, s_3 > 0.$ 
The two-phase solution has maximum modulus $s_1+s_2+s_3>0.$ The  minimum modulus of the solution is $s_3-s_1-s_2>0,$ if $s_3 > s_1 + s_2,$ or zero, if $s_3 \leq s_1+s_2.$  In general, the solution has an explicit representation in terms of the $\theta$ function associated with $\mathscr{K}_2,$ as constructed in the Appendix, and the branch points of $\mathscr{K}_2,$ viz.,
 \begin{equation}
p(x,t) =(s_1+s_2+s_3) e^{i \xi(x-x_0,t-t_0) } \f{\theta \left(\f{1}{2}\omega^{-1}\beta^- \right)}{\theta \left(\f{1}{2} \omega^{-1} \beta^+ \right)} \f{\theta \left(\f{1}{2} \omega^{-1} U^+(x,t) \right)}{\theta \left(\f{1}{2}\omega^{-1} U^-(x,t)\right)}, 
\end{equation}
where
\begin{equation}
\begin{array}{rcl}
U^+(x,t) &= &  i V (x-x_0) + i W (t-t_0) + \beta^+ ,\\[.1in]
U^-(x,t) & = & i V (x-x_0) + i W (t-t_0) + \beta^-,
\end{array}
\end{equation}
with $V  =  (0,2)^T, $ $ W  =  (4, 4(r_1+r_2+r_3))^T,$
\begin{equation}
\begin{array}{rcl}
\beta^+ & = & \int^{(\mu_{1},-\sqrt{\mathscr{R}(\mu_1)})}_{\lambda_1} du + \int^{(\mu_{2},\sqrt{\mathscr{R}(\mu_2)})}_{\lambda_3} du + \int^{\infty^+}_{\lambda_6} du, \\[.1in]
\beta^- & = & \int^{(\mu_{1},-\sqrt{\mathscr{R}(\mu_1)})}_{\lambda_1} du +\int^{(\mu_{2},\sqrt{\mathscr{R}(\mu_2)})}_{\lambda_3} du + \int^{\infty^-}_{\lambda_6} du,
\end{array}
\end{equation}
where the initial values of the Dirichlet eigenvalues $(\mu_{1},-\sqrt{\mathscr{R}(\mu_1)})$ and \newline $(\mu_{2},\sqrt{\mathscr{R}(\mu_2)})$ are given by equations~(\ref{muexplicitatnu4}) and~(\ref{Rexplicitatnu4}), the phase $\xi(x,t)$ is given by  equation~(\ref{xidefinition}), and $(x_0, t_0) \in \mathbb{R}^2$ is
an arbitrary  location for the maximum modulus of the solution.
\label{finaltheorem}
\end{theorem}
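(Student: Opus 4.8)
The plan is to assemble the theorem from the structural facts already established and then verify that the explicit data---the quasi-period vectors $V,W$, the base points $\beta^\pm$, the prefactor, and the phase $\xi$---take the stated forms once the reality conditions are imposed and the solution is normalized at its maximum. I would begin by fixing the branch points: by Theorem~\ref{focusingreality}, a smooth bounded two-phase solution on a non-singular $\mathscr{K}_2$ forces the six branch points into three complex-conjugate pairs, so that, after relabeling, $\lambda_1=\lambda_2^*=r_1+is_1$, $\lambda_3=\lambda_4^*=r_2+is_2$, $\lambda_5=\lambda_6^*=r_3+is_3$ with $s_1,s_2,s_3>0$. Consequently $\Lambda_1=\sum_i\lambda_i=2(r_1+r_2+r_3)$ is real, so the linearized Abel map~(\ref{abellinearized}) immediately gives $V=(0,2)^T$ and $W=(4,2\Lambda_1)^T=(4,4(r_1+r_2+r_3))^T$ as claimed.

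Next, for the torus structure I would combine the linearization~(\ref{abellinearized})---under which the $x$- and $t$-flows translate $u'$ by $iVx+iWt$ on the Jacobi variety---with Theorem~\ref{stheorem}, which guarantees that the admissible pairs $(\nu_1,\nu_2)$, hence via Corollaries~\ref{realcorollary4} and~\ref{realcorollary5} the admissible initial Dirichlet data $(\mu_{10},\mu_{20})$, fill a single compact connected region. The image of this region under the Abel map, swept by the two-parameter linear flow, realizes the single two-real-dimensional subtorus of the Jacobian known to carry the real quasi-periodic solutions~\cite{prev 85, belo 94}; the one remaining degree of freedom is the arbitrary real phase $\phi_0$ of~(\ref{betadefinition}), i.e.\ the symmetry $p\mapsto e^{i\phi_0}p$ of~(\ref{nls}), which is the ``circle of complex phase factors.''

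Then I would obtain the explicit solution by rewriting the $\sigma$-quotient~(\ref{twosigma}) in $\theta$-function form~(\ref{firstp}) and invoking the preceding lemma, which shows $e^{i\xi(x,t)}$ is genuinely a modulus-one factor with $\xi=\kappa_1 x+\kappa_2 t+\phi_0$. To pin down $\beta^\pm$ and the prefactor I would normalize at the maximum: choosing the space-time origin at a maximum, Lemma~\ref{lemmanu4} supplies the real Dirichlet eigenvalues through~(\ref{muexplicitatnu4}) and~(\ref{Rexplicitatnu4}), together with the sheet choice $\sigma_1=-1$, $\sigma_2=+1$, fixing $\beta^\pm$ as the integrals to $(\mu_1,-\sqrt{\mathscr{R}(\mu_1)})$ and $(\mu_2,+\sqrt{\mathscr{R}(\mu_2)})$. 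Since $\nu_1(0,0)=\nu_1^{(4)}=(s_1+s_2+s_3)^2$ there, the prefactor $\sqrt{\nu_1(0,0)}$ equals $s_1+s_2+s_3$. The extremal moduli then read off from Theorem~\ref{stheorem} via $\nu_1=|p|^2$: the maximum modulus is $\sqrt{(s_1+s_2+s_3)^2}=s_1+s_2+s_3$, and the minimum is $\sqrt{(s_1+s_2-s_3)^2}=|s_3-s_1-s_2|$, equal to $s_3-s_1-s_2>0$ when $s_3>s_1+s_2$ and to $0$ otherwise. Finally, the translation invariance of~(\ref{nls}) in $x$ and $t$ lets me place the maximum at any $(x_0,t_0)\in\mathbb{R}^2$, producing the shifted arguments $U^\pm(x,t)=iV(x-x_0)+iW(t-t_0)+\beta^\pm$.

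The main obstacle I anticipate is not any single computation but the consistency check that the complex phases generated by the $\theta$-quotient cancel exactly against those in $e^{i\phi(x,t)}$, so that the assembled $p(x,t)$ is single-valued, smooth, non-vanishing away from the minimum, and bounded on the whole torus---this is precisely what the preceding modulus-one lemma and the cancellation remark following~(\ref{twosigma}) are designed to secure. Verifying that the constructed $\theta$-quotient has no spurious poles (equivalently, that the Jacobi inversion is solved on the admissible region without the $\mu_j$ colliding with branch points in a way that destroys smoothness) is where the genuine care lies; once that is in hand, the theorem is a synthesis of Theorems~\ref{focusingreality} and~\ref{stheorem}, Lemma~\ref{lemmanu4}, and the explicit $\theta$-representation.
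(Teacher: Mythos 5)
Your proposal is correct and takes essentially the same route as the paper: the paper gives no separate proof of Theorem~\ref{finaltheorem}, presenting it explicitly as a summary of Theorem~\ref{focusingreality}, Theorem~\ref{stheorem}, Lemma~\ref{lemmanu4}, the $\sigma$-to-$\theta$ rewriting of~(\ref{twosigma}) into~(\ref{firstp}), and the modulus-one phase lemma for $\xi(x,t)$ --- precisely the ingredients you assemble. Your normalization at the maximum (prefactor $\sqrt{\nu_1^{(4)}}=s_1+s_2+s_3$, sheet choices $\sigma_1=-1$, $\sigma_2=+1$ fixing $\beta^\pm$, and translation invariance giving the arbitrary $(x_0,t_0)$) is exactly the synthesis the paper intends.
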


\section{Example Solutions}
Two solutions are constructed using the formulas of Theorem~\ref{finaltheorem}. Both solutions are constructed using Maple and a simple integration algorithm, based
on Simpson's rule, to compute the necessary contour integrals on the Riemann surface $\mathscr{K}_2.$  For integrals involving points over infinity or branch points, Maple's algcurve package is used to compute a Puiseux expansion near the point of interest, which is  integrated directly using Maple's int command~\cite{deco 11}.

In Figure~\ref{figsoln1}, the branch points are chosen to be
\begin{equation}
-2+i,-2-i,-1+i,-1-i,1+i,1-i,
\end{equation}
so that the imaginary parts of the branch points are $s_1 =1, s_2=1,$ and $s_3=1.$  Thus the maximum of the solution's modulus is $s_1+s_2+s_3=3.$  Also $s_3 < s_1 + s_2,$ so the minimum is 0.

In Figure~\ref{figsoln2}, the branch points are chosen to be
\begin{equation}
-2+3 i, -2-3i, -1+i,-1-i, 1+i, 1-i,
\end{equation}
so that the imaginary parts of the branch points are $s_1 = 3, s_2=1,$ and $s_3=1.$  Thus the maximum of the solution's modulus is $s_1+s_2+s_3 =5.$ Also 
$s_3> s_1+s_2,$ so the minimum is $s_3-s_1-s_2 = 1.$

\begin{figure}
\centering
\includegraphics[height=8cm]{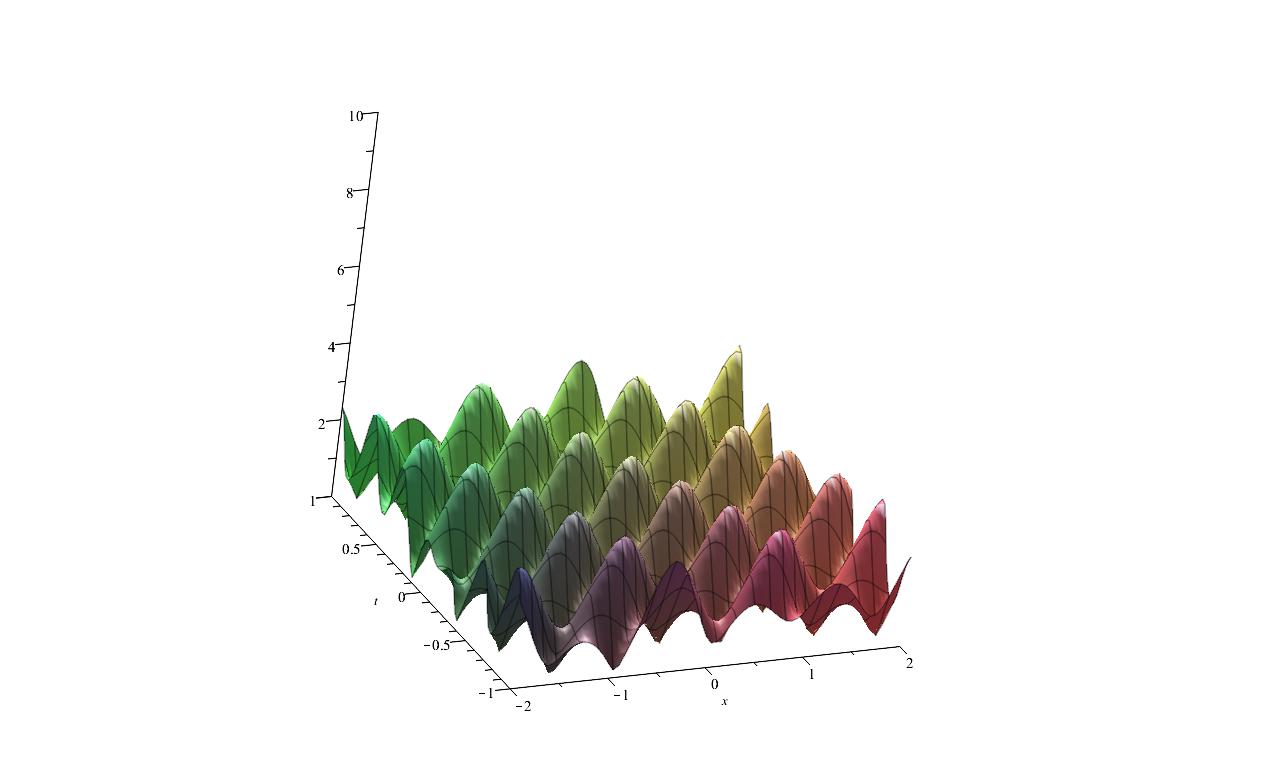}
\caption{Modulus of a solution of the NLS equation~(\ref{nls}) with maximum 3 and minimum 0.}
\label{figsoln1}
\end{figure}

\begin{figure}
\centering
\includegraphics[height=8cm]{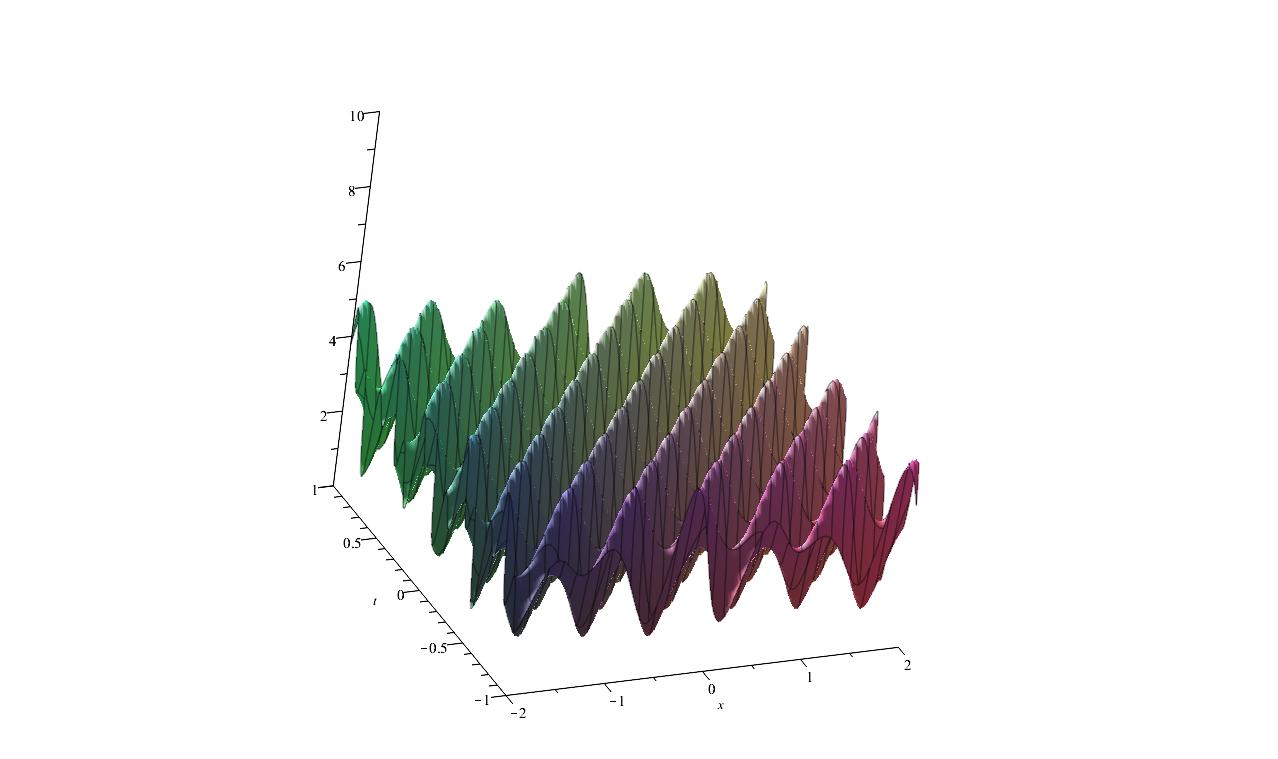}
\caption{Modulus of a solution of the NLS equation~(\ref{nls}) with maximum 5 and minimum 1.}
\label{figsoln2}
\end{figure}

\section{Conclusion}

An explicit parametrization of all smooth two-phase solutions of the scalar cubic  NLS equation~(\ref{nls})  is given in terms of the branch points of the invariant spectral curve and the loci of the Dirichlet eigenvalues.  In particular, simple formulas are found, in terms of the imaginary parts of the branch points, for the maximum modulus and the minimum modulus of each smooth two-phase solution.  Independently, a general theta function formula for the solution is given with parametrization in terms of the branch points of  the genus-two Riemann surface and the values of the Dirichlet eigenvalues which satisfy the reality condtions. 

The dependence of the maximum modulus of the two-phase solution on the imaginary parts of the branch points is consistent with known results for zero-phase
and one-phase solutions~\cite{kamc 00} and with numerical simulations of two-phase solutions~\cite{el 15, osbo 10}. For  higher-phase solutions of the scalar cubic NLS equation~(\ref{nls}), the two-real dimensional linear flow on the Jacobi variety will not span the higher-dimensional real torus of solutions, as it does in the two-dimensional case, so the sum of the imaginary parts of the branch points will be an upper bound on the modulus of the quasi-periodic solution, not necessarily the maximum value.

The parametrization obtained in this manuscript may be useful in characterizing modulations of two-phase solutions, for example, in
the vicinity of a gradient catastrophe in which spikes are formed which are limits of two-phase solutions~\cite{bert 13, el 15}.
More work is needed to extend the current results to the scalar defocusing cubic NLS equation and the Manakov system of coupled NLS equations~\cite{wood 07, kamc 14, elgi 07, wrig 13}.

It is well-known (see~\cite{dubr 81, belo 94} and the references therein) how to obtain the solution of the NLS equation $p(x,t)$ in terms of theta functions defined on the Jacobi variety  of the Riemann surface $\mathscr{K}_2,$ by examining the asymptotic expansion of the  Baker-Akhiezer solution of the Lax pair.  In that approach,  constants of integration are chosen in a manner sufficient to satisfy the reality conditions.  In this paper, necessary and sufficient conditions for the existence of smooth two-phase solutions are derived for all the integration constants, using the explicit representation of the $\mu$-trajectories of the Dirichlet eigenvalues, analogous to the results for elliptic solutions obtained by Kamachatnov~\cite{kamc 00, kamc 90}.  
Consequently, a theta function representation of the solution $p(x,t)$ is also constructed, using the classical integration of the Abelian integrals by the Kleinian ultra-elliptic functions, with all constants of integration  expressed in terms of the branch points of the Riemann surface. 

\renewcommand{\theequation}{A-\arabic{equation}}
  \setcounter{equation}{0}  
  \section{Appendix - The Jacobi Inversion Problem} 
The inversion of the Abelian integrals in equation~(\ref{linearized}) for $\mu_1 = \mu_1 (x,t)$ and $\mu_2 = \mu_2 (x,t)$ is called the Jacobi inversion problem.
Classically~\cite{bake 07} the inversion problem for the symmetric polynomials $\mu_1+\mu_2$ and $\mu_1\mu_2$ was solved in terms of the Kleinian sigma and zeta functions defined on the odd-degree genus-two curves which generalize the more familiar elliptic functions, i.e., the  Weierstrass sigma and zeta functions.  A more modern treatment~\cite{eilb 03} shows how the Kleinian elliptic functions can be used to solve the inversion problem for even-degree hyperelliptic curves such as $\mathscr{K}_2.$

\subsection{Differential Identities on the Riemann Surface}
In order to solve the Jacobi inversion problem~(\ref{linearized}), several definitions and fundamental identities involving differentials on the Riemann surface are needed.  The goal of the present manuscript is to keep technicalities to a minimum and follow as closely as possible the classical
approach of Baker~\cite{bake 07}, while using the more modern language found in~\cite{belo 94} and~\cite{eilb 03}.
First a canonical basis of homology cycles, $\{\mathfrak{a}_1, \mathfrak{a}_2; \mathfrak{b}_1, \mathfrak{b}_2\},$ is constructed on $\mathscr{K}_2$ satisfying the intersection properties $\mathfrak{a}_1 \circ \mathfrak{a}_2 = \mathfrak{b}_1 \circ \mathfrak{b}_2 = 0$ and $\mathfrak{a}_1 \circ \mathfrak{b}_1 = \mathfrak{a}_2 \circ \mathfrak{b}_2 = 1.$  The action of the natural hyperelliptic involution $\iota$ on $\mathscr{K}_2$ is extended in a natural way to the homology cycles,
and the cycles are chosen such that
\begin{equation}
\begin{array}{rcl}
\iota (\mathfrak{a}_1) & = & - \mathfrak{a}_1,\\
\iota(\mathfrak{a}_2) & = & -\mathfrak{a}_2,\\
\iota(\mathfrak{b}_1) & = & -\mathfrak{b}_1,\\
\iota(\mathfrak{b}_2)& = & -\mathfrak{b}_2.
\end{array}
\end{equation}
In particular, the arrangement of cycles is chosen as shown in Figure~\ref{cycles} for the case of three complex-conjugate pairs of branch points which satisfy
the reality conditions for the NLS equation~(\ref{nls}).  Similarly, the natural action of the anti-holomorphic involution $*$ on the cycles is
\begin{equation}
\begin{array}{rcl}
*(\mathfrak{a}_1) & = & -\mathfrak{a}_1, \\
*(\mathfrak{a}_2)& =& -\mathfrak{a}_2,\\
*(\mathfrak{b}_1) & = &  \mathfrak{b}_1 + \mathfrak{a}_2,\\
*(\mathfrak{b}_2) & = & \mathfrak{b}_2 +\mathfrak{a}_1.
\end{array}
\label{conjugatecycles}
\end{equation}
\begin{figure}
\centering
\includegraphics[height=8cm]{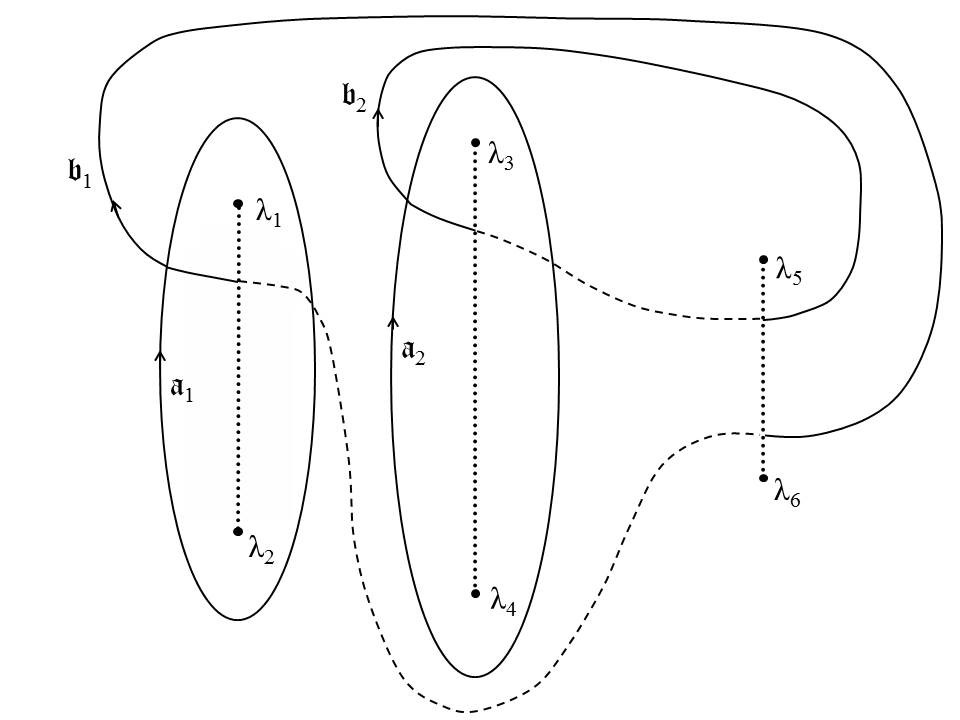}
\caption{Basis of homology cycles, genus two, for focusing NLS equation.}
\label{cycles}
\end{figure}

Now consider the two holomorphic integrals that appear in the Jacobi inversion problem~(\ref{linearized}),
\begin{equation}
du_1  =  \frac{d\lambda}{w}, \hspace{.5in}
du_2  =  \frac{\lambda d\lambda}{w}.
\end{equation}
The natural action of the anti-holomorphic involution on these two differentials is $*du_i  =  du_i^*,$ for $i=1,2,$ where $du_i^*$ denotes the complex conjugate of the differential. 
The periods of the above differentials around the basis cycles are defined, for $i,j = 1,2,$ as
\begin{equation}
\oint_{\mathfrak{a}_j} du_i  =  2 \omega_{ij},\hspace{.5in} \oint_{\mathfrak{b}_j} du_i  =  2 \omega^\prime_{ij}.
\end{equation}
Notice that for the given choice of the basis cycles,
\begin{equation}
\omega_{ij}^* = \oint_{\mathfrak{a}_j} du_i^* = \oint_{\mathfrak{a}_j} *du_i = \oint_{*(\mathfrak{a}_j) }du_i = \oint_{-\mathfrak{a}_j} du_i = -\omega_{ij},
\end{equation}
so that $\Re (\omega_{ij}) =0.$  

A normalized basis of holomorphic differentials, canonically dual to the basis of homology cycles, can now be constructed.  Let, for $i=1,2,$
\begin{equation}
\begin{array}{rcl}
d v_i & = & \sum\limits_{j=1}^2 \f{1}{2}(\omega^{-1})_{ij} du_j,
\end{array}
\end{equation}
then the periods are normalized in the sense that, for $i,j=1,2,$
\begin{equation}
\oint_{\mathfrak{a}_j} dv_i  =  \delta_{ij},\hspace{.5in}
\oint_{\mathfrak{b}_j} dv_i  =  \tau_{ij},
\end{equation}
where $\delta_{ij}$ is the Kronecker delta symbol.  A standard result, using Riemann's bilinear relations, is that the determinant of the matrix $\omega = (\omega_{ij})$ is nonzero, and the matrix
\begin{equation}
\tau = \omega^{-1} \omega^\prime
\label{tau}
\end{equation}
is symmetric with a positive definite imaginary part.

With the particular choice of the canonical cycles shown in Figure~\ref{cycles}, the symmetry of $\mathscr{K}_2$ implies further information about the real part $\Re (\tau)$ of $\tau,$ viz., 
\begin{equation}
\begin{array}{rcl}
\tau_{ij}^* & = & \oint\limits_{\mathfrak{b}_j} dv_i^* \\
&=& \oint\limits_{\mathfrak{b}_j} \sum\limits_{k=1}^2 \f{1}{2}(\omega^{-1})^*_{ik} du_k^*\\
& = &-\sum\limits_{k=1}^2\f{1}{2} (\omega^{-1})_{ik} \oint\limits_{\mathfrak{b}_j} *du_k\\
& = &-\sum\limits_{k=1}^2\f{1}{2} (\omega^{-1})_{ik} \oint\limits_{*(\mathfrak{b}_j)} du_k\\
& = &-\sum\limits_{k=1}^2\f{1}{2} (\omega^{-1})_{ik} \oint\limits_{\mathfrak{b}_j+\hat{\mathfrak{a}}_{j}} du_k\\
& = &-\sum\limits_{k=1}^2\f{1}{2} (\omega^{-1})_{ik}(2 \omega^\prime_{kj} + 2 \hat{\omega}_{kj})\\
& =& -\tau_{ij}- \hat{\delta}_{ij},
\end{array}
\end{equation}
where $\hat{\mathfrak{a}}_1 = \mathfrak{a}_2$ and $\hat{\mathfrak{a}}_2=\mathfrak{a}_1,$ so that the first and second columns of  $\hat{\omega}$ are the same as the second and first columns, respectively, of $\omega,$ and 
\begin{equation}
\hat{\delta}_{ij} = \left(\begin{array}{cc} 0 & 1 \\ 1 & 0 \end{array} \right).
\end{equation}
Therefore
\begin{equation}
\Re (\tau) = -\f{1}{2}\left(\begin{array}{cc} 0 & 1 \\ 1 & 0 \end{array} \right).
\label{realtau}
\end{equation}

We now introduce two differentials of the second kind associated with $du_1$ and $du_2,$ see~\cite{bake 07}, viz., 
\begin{equation}
dr_1  = \f{4 \lambda^4 -3\Lambda_1 \lambda^3+2 \Lambda_2 \lambda^2-\Lambda_3 \lambda}{4 w}d\lambda,\hspace{.4in}
dr_2  =  \f{2 \lambda^3 - \Lambda_1 \lambda^2}{4 w} d\lambda.
\end{equation}
These differentials are holomorphic except for poles of the second kind at the points at infinity, and they satisfy the following identity of two-differentials,
\begin{equation}
\frac{\partial}{\partial z} \frac{w+s}{\lambda-z} \frac{1}{2 w} dz\,d\lambda + du_1(\lambda)dr_1(z) + du_2(\lambda) dr_2(z) = \frac{F(\lambda,z)+ 2 w s}{4(\lambda-z)^2} \frac{d\lambda}{w}\frac{dz}{s},
\label{twoformidentity}
\end{equation}
where $w^2 = \mathscr{R}(\lambda), s^2 = \mathscr{R}(z),$ and the symmetric function  $F(\lambda,z)=F(z,\lambda),$ where
\begin{equation}
F(\lambda,z) =  2 \Lambda_6 - \Lambda_5(\lambda+z) + \lambda z (2 \Lambda_4 - \Lambda_3 (\lambda+z))  +\lambda^2 z^2
(2 \Lambda_2 - \Lambda_1 (\lambda+z)) + 2 \lambda^3 z^3.
\end{equation}
The periods of the differentials $dr_1$ and $dr_2$ are defined, for $i, j = 1, 2,$  as
\begin{equation}
\oint_{\mathfrak{a}_j} dr_i = -2\eta_{ij}, \hspace{.5in} \oint_{\mathfrak{b}_j} dr_i= -2\eta^\prime_{ij}.
\end{equation}
The periods satisfy the following relations,
\begin{equation}
\begin{array}{ccc}
\omega \omega^{\prime\, T}-\omega^\prime \omega^T &=& 0,\\
\eta \eta^{\prime \,T} - \eta^\prime \eta^T & = & 0,\\
\omega \eta^{\prime \, T} - \omega^\prime \eta^T & = & - \frac{\pi i}{2} \mbox{\bf I},
\end{array}
\end{equation}  
where $\omega^T$ denotes the transpose of the matrix $\omega,$ and $\mbox{\bf I}$ is the $2 \times 2$ identity matrix.
Equation~(\ref{twoformidentity}) shows that the symmetric two-differential on the right-hand side of the equation gives Klein's symmetric
integral of the third kind,
\begin{equation}
\mathscr{K}(\lambda,a;z,b) = \int\limits_a^\lambda \int\limits_b^z \frac{F(\lambda,z)+2 w s}{4(\lambda-z)^2}\frac{d\lambda}{w} \frac{dz}{s},
\label{Kdefinition}
\end{equation}
with logarithmic infinities of coefficients $1$ and $-1$ respectively at $\lambda=z$ and $\lambda=b.$  Notice that
the symmetry of the integrand implies that $\mathscr{K} (\lambda,a;z,b) = \mathscr{K} (z,b;\lambda,a).$

\begin{definition}
The normalized differential of the third kind  having a simple pole of residue $+1$ at the location $\lambda=z,$ simple pole with residue $-1$ at the location $\lambda=b,$ and zero periods around the cycles $\mathfrak{a}_1$ and $\mathfrak{a}_2$ is denoted by $d\Pi_{z,b}.$  The normalized differential of the second kind having a single pole of order two at $\lambda=z$ with coefficient $1$
and zero periods around the cycles $\mathfrak{a}_1$ and $\mathfrak{a}_2$ is denoted by $d\Gamma_z.$
\end{definition}
Notice that $d\Gamma_z$ can be obtained from $d\Pi_{z,b}$ by differentiation of the latter with respect to the parameter $z.$  Thus an application of Riemann's bilinear relations using $dv_{i},$ $i=1,2,$ and $d\Pi_{z,b}$ shows that 
\begin{equation}
\oint_{\mathfrak{b}_1} d\Gamma_z  =  2 \pi  i \f{dv_1}{dz}(z), \hspace{.3in} \oint_{\mathfrak{b}_2} d\Gamma_z  =  2 \pi i \f{dv_2}{dz}(z).
\end{equation}
Another application of Riemann's bilinear relations to $d\Pi_{\lambda,a}$ and $d\Pi_{z,b},$ demonstrates the following symmetry,
\begin{equation}
\int_a^\lambda d\Pi_{z,b} = \int_b^z d\Pi_{\lambda,a}.
\label{Pisymmetry}
\end{equation}
Since $\mathscr{K}(\lambda,a;z,b)$ is an integral of the third kind, it can be written in terms of the normalized integral of the third kind and two linearly independent integrals of the first kind, leading to the following definition, in which the symmetry of $\mathscr{K}$ and equation~(\ref{Pisymmetry}) are used.
\begin{definition}
Let $\alpha$ be the $2 \times 2$ symmetric matrix defined by the identity
\begin{equation}
\mathscr{K} (\lambda,a;z,b) = \int_a^\lambda d\Pi_{z,b} - 2 \sum\limits_{i=1}^2\sum\limits_{j=1}^2 \alpha_{ij} \int_a^\lambda du_i \int_b^z du_j.
\label{alphadefinition}
\end{equation}
\end{definition}
Differentiation of the above identity with respect to local parameters $\lambda$ and $z$ gives the following identity between two-differentials,
\begin{equation}
\frac{F(\lambda,z)+ 2 w s}{4(\lambda-z)^2} \frac{d\lambda}{w}\frac{dz}{s} = d\Gamma_z (\lambda) dz  - 2 \sum\limits_{i=1}^2\sum\limits_{j=1}^2 \alpha_{ij} du_i(\lambda)  du_j (z).
\label{alphaidentity}
\end{equation}
If the roles of $\lambda$ and $z$ are interchanged in equation~(\ref{twoformidentity}), using the symmetry of the two-form on the right-hand side, then equation~(\ref{alphaidentity}), can be re-written as 
\begin{equation}
 du_1(z)dr_1(\lambda) + du_2(z) dr_2(\lambda) + 2 \sum\limits_{i=1}^2\sum\limits_{j=1}^2 \alpha_{ij}   du_i (\lambda) du_j(z)= d\Gamma_z (\lambda) dz -\frac{\partial}{\partial \lambda} \frac{s+w}{z - \lambda} \frac{d\lambda dz}{2s}.
\label{newidentity}
\end{equation}
Integration in $\lambda$ of Equation~(\ref{newidentity}) about the basis cycles $\mathfrak{a}_k$ and $\mathfrak{b}_k,$ for $k=1,2,$ gives two identities of the following differentials of $z,$
\begin{equation}
\begin{array}{rcl}
- 2\eta_{1k} du_1 -2 \eta_{2k} du_2+4\sum\limits_{i=1}^2\sum\limits_{j=1}^2 \alpha_{ij} \, \omega_{ik}  \,du_j &=& 0,\\
-2 \eta_{1k}^\prime du_1 - 2 \eta_{2k}^\prime du_2 + 4 \sum \limits_{i=1}^2\sum\limits_{j=1}^2 \alpha_{ij}w_{ik}^\prime du_j&= &2 \pi i \, dv_k
\end{array}
\label{twoidentities}
\end{equation}
Since $du_1$ and $du_2$ are linearly independent, the first of the identities in Equation~(\ref{twoidentities}) implies
\begin{equation}
\begin{array}{rcl}
-2 \eta_{1k} + 4 \alpha_{11} \omega_{1k}+4 \alpha_{21} \omega_{2k} &= & 0, \\
-2 \eta_{2k} + 4 \alpha_{12} \omega_{1k} + 4 \alpha_{22} \omega_{2k} & = & 0.
\end{array}
\end{equation}
Since $\alpha_{12} = \alpha_{21},$ the preceding equations imply a key relation between the period matrices of the differentials $\{du_1, du_2\}$ and $\{dr_1,dr_2\}$ which will be essential in defining the Kleinian sigma function, viz.,
\begin{equation}
\alpha = \f{1}{2} \eta \omega^{-1}.
\end{equation}

\subsection{Riemann Theta Functions}
The Kleinian sigma function will be defined using the Riemann theta function associated with the period matrix $\tau$ of the normalized differentials $dv_1$ and $dv_2.$
\begin{definition}
The theta function of $v \in \mathbb{C}^2,$ associated with the period matrix $\tau,$ is 
\begin{equation}
\begin{array}{rcl}
\theta (v) &= &\sum\limits_{n_1 = -\infty}^{\infty}\sum\limits_{n_2 = -\infty}^{\infty} e^{2 \pi i (v_1 n_1+v_2 n_2)+\pi i (\tau_{11}n_1^2+ 2 \tau_{12}n_1n_2 + \tau_{22} n_2^2)} \\
&=& \sum\limits_{n= -\infty}^{\infty} e^{2 \pi i n v+\pi i n \tau n},
\end{array}
\end{equation}
with quasiperiodicity on the period lattice $\mathbb{C}^2/(I\bigoplus \tau)$ given by
\begin{equation}
\theta (v+m+\tau m^\prime) = e^{-2 \pi i m^{\prime}(v+\f{1}{2} \tau m^\prime)} \theta(v),
\end{equation}
where $m, m^\prime \in \mathbb{Z}^2.$ 
\end{definition}

\begin{definition}
The partial derivatives of $\theta$ are denoted by 
\begin{equation}
\begin{array}{rcl}
\theta_1 (v) &=& \f{\partial \theta}{\partial v_1} (v),\\[.1in]
\theta_2 (v) & = & \f{\partial \theta}{\partial v_2} (v).
\end{array}
\end{equation}
Moreover, the logarithmic derivatives of $\theta$ possess lattice transformations of the form, for $j=1,2,$
\begin{equation}
\f{\partial \ln \theta}{\partial v_j} (v+m + \tau m^\prime ) = -2 \pi i m_j^\prime + \f{\partial \ln \theta}{\partial v_j} (v).
\label{logtheta}
\end{equation}
\end{definition}

\begin{lemma}
The choice of the canonical cycles obeying the conjugate relations~(\ref{conjugatecycles}) implies that the real part of $\tau$ satisfies equation~(\ref{realtau}) and, hence, it can be shown that
\begin{equation}
\theta(v)^* = \theta(v^*).
\end{equation}
Similarly, for $i=1,2,$
\begin{equation}
\begin{array}{rcl}
\theta_1(v)^* & = & \theta_1(v^*),\\
\theta_2(v)^* & = & \theta_2 (v^*).
\end{array}
\end{equation}
\end{lemma}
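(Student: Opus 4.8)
The first assertion of the lemma, that $\Re(\tau)$ equals the anti-diagonal matrix of equation~(\ref{realtau}), is precisely the computation already carried out from the conjugate cycle relations~(\ref{conjugatecycles}), so I would simply cite it. The real content is the ``hence'': deducing $\theta(v)^* = \theta(v^*)$ and the analogous derivative identities from the value of $\Re(\tau)$ together with the symmetry of $\tau$ established via Riemann's bilinear relations. The plan is to work directly with the defining Fourier--type series and track how complex conjugation acts on the linear and quadratic exponents separately.

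First I would conjugate the series term by term. Since each lattice vector $n \in \mathbb{Z}^2$ is real, conjugation sends $e^{2 \pi i n v}$ to $e^{-2 \pi i n v^*}$ and $e^{\pi i n \tau n}$ to $e^{-\pi i n \tau^* n}$, so that $\theta(v)^* = \sum_n e^{-2 \pi i n v^* - \pi i n \tau^* n}$. The key algebraic step is to rewrite $\tau^*$ using the real part: for any matrix $\tau^* = 2 \Re(\tau) - \tau$, and by equation~(\ref{realtau}) this gives $\tau^* = -\hat\delta - \tau$, where $\hat\delta$ is the anti-diagonal matrix with $\hat\delta_{12} = \hat\delta_{21} = 1$. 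Consequently $n \tau^* n = -n\hat\delta n - n\tau n = -2 n_1 n_2 - n \tau n$, whence
\[
-\pi i\, n \tau^* n = 2 \pi i\, n_1 n_2 + \pi i\, n \tau n .
\]
Because $n_1 n_2 \in \mathbb{Z}$, the factor $e^{2 \pi i n_1 n_2} = 1$, so $e^{-\pi i n \tau^* n} = e^{\pi i n \tau n}$ and the quadratic part of the exponent is left unchanged by conjugation. Finally I would reindex the sum by the involution $n \mapsto -n$, a bijection of $\mathbb{Z}^2$ that fixes the even quadratic form $n\tau n$ while replacing $e^{-2 \pi i n v^*}$ by $e^{2 \pi i n v^*}$; the result is exactly $\theta(v^*)$.

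The identities for the partial derivatives follow from the same two manipulations applied to the differentiated series $\theta_j(v) = \sum_n 2\pi i\, n_j\, e^{2 \pi i n v + \pi i n \tau n}$. Conjugation produces the prefactor $-2\pi i\, n_j$, the substitution $\tau^* = -\hat\delta - \tau$ again trivializes the quadratic part, and the reindexing $n \mapsto -n$ simultaneously restores the sign of the linear exponent and the sign of $n_j$, yielding $\theta_j(v)^* = \theta_j(v^*)$ for $j = 1,2$. The one genuinely essential point—the crux on which everything turns—is the precise off-diagonal half-integer value of $\Re(\tau)$ in equation~(\ref{realtau}): it is exactly what makes the cross term $n\hat\delta n = 2 n_1 n_2$ an \emph{even} integer, so that the quadratic factor is preserved under conjugation. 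Were $\Re(\tau)$ not of this special form, the quadratic exponent would pick up a genuine phase and the conjugation symmetry of $\theta$ would break down; thus the reality structure of the homology basis is what ultimately guarantees the reality of the solution $p(x,t)$.
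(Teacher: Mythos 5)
Your proof is correct. The paper itself states this lemma without proof (the text merely asserts ``it can be shown that''), so there is no argument of the paper's to compare against; your series computation supplies exactly the missing verification. The two essential steps---writing $\tau^* = 2\Re(\tau) - \tau = -\hat{\delta} - \tau$ so that the conjugated quadratic exponent differs from $\pi i\, n\tau n$ only by $2\pi i\, n_1 n_2 \in 2\pi i\,\mathbb{Z}$, and then reindexing $n \mapsto -n$ to restore the linear exponent (and, for the derivative identities, the sign of the prefactor $n_j$)---are precisely what the statement requires, and your closing observation that the off-diagonal half-integer structure of $\Re(\tau)$ is what makes the cross term $n\hat{\delta}n = 2n_1n_2$ even correctly identifies why this particular choice of homology basis, encoded in equation~(\ref{conjugatecycles}), is the real hypothesis of the lemma. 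One minor point worth a sentence in a polished write-up: term-by-term conjugation and the reindexing of the sum are legitimate because the theta series converges absolutely, which follows from the positive definiteness of $\Im(\tau)$ established in the paper via Riemann's bilinear relations; the symmetry of $\tau$ is not otherwise needed, since $n\tau n$ only sees the symmetric part of $\tau$ in any case.
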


\begin{theorem}
The theta function is zero at odd half-integer periods, viz., if $m, m^\prime \in \mathbb{Z}^2$ and $m m^\prime = m_1 m_1^\prime + 
m_2 m_2^\prime$ is an odd integer, then
$$\theta (\f{1}{2} m + \f{1}{2} \tau m^\prime) = 0.$$
\end{theorem}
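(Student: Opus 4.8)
The plan is to exploit the interplay between two structural properties of the theta function: its evenness and its quasi-periodicity under the lattice $I \bigoplus \tau.$ Write $v_0 = \f{1}{2} m + \f{1}{2} \tau m'$ for the half-period in question. First I would observe that $\theta$ is an even function, $\theta(-v) = \theta(v),$ which is immediate from the series definition, since the summation ranges over all $n \in \mathbb{Z}^2$ and the substitution $n \mapsto -n$ leaves the sum invariant. This is the only new elementary fact needed; everything else is already available from the definition of $\theta$ and its quasiperiodicity law.

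Next I would rewrite the point $-v_0$ as a lattice translate of $v_0.$ Since
\begin{equation}
-v_0 = -\f{1}{2} m - \f{1}{2} \tau m' = v_0 - m - \tau m',
\end{equation}
I can apply the quasiperiodicity relation with the integer vectors $-m$ and $-m'$ in place of $m$ and $m'.$ This gives
\begin{equation}
\theta(-v_0) = \theta(v_0 - m - \tau m') = e^{2 \pi i\, m'(v_0 - \f{1}{2}\tau m')}\, \theta(v_0).
\end{equation}
The crucial simplification is that $v_0 - \f{1}{2}\tau m' = \f{1}{2} m,$ so the exponent collapses to $2 \pi i\, m' \cdot \f{1}{2} m = \pi i\, (m \cdot m'),$ where $m \cdot m' = m_1 m_1' + m_2 m_2'.$ Hence $\theta(-v_0) = e^{\pi i (m \cdot m')}\, \theta(v_0).$

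Combining this identity with evenness $\theta(-v_0) = \theta(v_0)$ yields
\begin{equation}
\bigl(1 - e^{\pi i (m \cdot m')}\bigr)\, \theta(v_0) = 0.
\end{equation}
Under the hypothesis that $m \cdot m'$ is odd, one has $e^{\pi i (m \cdot m')} = -1,$ so the prefactor equals $2 \neq 0,$ forcing $\theta(v_0) = 0,$ which is the claim. I do not expect a genuine obstacle here; the proof is a short classical computation. The only point demanding care is the bookkeeping of signs in applying the quasiperiodicity law to a negative lattice translation and verifying that the symmetric bilinear term $\f{1}{2}\tau m'$ cancels cleanly, leaving precisely the parity-sensitive factor $e^{\pi i (m \cdot m')}.$ Once that factor is isolated, the odd-parity hypothesis does all the remaining work.
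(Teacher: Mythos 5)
Your proof is correct. It rests on the same two lattice symmetries as the paper's argument, but packages them differently: the paper works entirely inside the theta series, performing the index substitutions $n \rightarrow -n$ and then $n \rightarrow n - m^\prime$ and tracking all the resulting phase factors by hand until the original sum reappears multiplied by $e^{-i\pi m m^\prime}$; you instead factor the argument through two functional equations --- the evenness $\theta(-v)=\theta(v)$ (which you correctly justify by the $n \rightarrow -n$ substitution) and the quasi-periodicity law, which the paper already states as part of the definition of $\theta$ and which encapsulates the shift $n \rightarrow n - m^\prime$. The two routes are mathematically equivalent (unfolding your two lemmas reproduces the paper's chain of equalities), but yours is the more modular, textbook-style presentation: it avoids error-prone phase bookkeeping inside a double sum, it makes explicit that the vanishing depends only on evenness plus the transformation law and not on the particular series, and your application of quasi-periodicity with the integer vectors $-m, -m^\prime$ together with the clean cancellation $v_0 - \f{1}{2}\tau m^\prime = \f{1}{2}m$ is handled correctly. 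The paper's version, by contrast, is self-contained at the level of the series and does not need to invoke evenness as a separate fact. The apparent sign discrepancy ($e^{\pi i m m^\prime}$ in your identity versus $e^{-i\pi m m^\prime}$ in the paper's) is immaterial, since the two exponentials agree whenever $m m^\prime$ is an integer. Both arguments conclude identically: the prefactor $1 - e^{\pi i m m^\prime}$ equals $2$ when $m m^\prime$ is odd, forcing $\theta(v_0)=0$.
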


\begin{proof}
\begin{equation}
\begin{array}{rcl}
\theta (\f{1}{2} m + \f{1}{2} \tau m^\prime) &= & \sum\limits_{n=-\infty}^{\infty} e^{\pi i mn+\pi i (n+\f{1}{2}m^\prime)\tau(n+\f{1}{2}m^\prime)-\f{\pi i}{4}m^\prime \tau m^\prime},\\
& = & \sum\limits_{n=-\infty}^{\infty} e^{-\pi i mn+\pi i (n-\f{1}{2}m^\prime)\tau(n-\f{1}{2}m^\prime)-\f{\pi i}{4}m^\prime \tau m^\prime},\\
& = &e^{-i\pi m m^\prime}\sum\limits_{n=-\infty}^{\infty} e^{-\pi i m(n-m^\prime)+\pi i (n-m^\prime+\f{1}{2}m^\prime)\tau(n-m^\prime+\f{1}{2}m^\prime)-\f{\pi i}{4}m^\prime \tau m^\prime},\\
& =& e^{-i\pi m m^\prime}\sum\limits_{n=-\infty}^{\infty}  e^{-\pi i m n+\pi i (n+\f{1}{2}m^\prime)\tau(n+\f{1}{2}m^\prime)-\f{\pi i}{4}m^\prime \tau m^\prime},\\
& =& e^{-i\pi m m^\prime}\sum\limits_{n=-\infty}^{\infty}  e^{-2\pi i m n +\pi i m n+\pi i (n+\f{1}{2}m^\prime)\tau(n+\f{1}{2}m^\prime)-\f{\pi i}{4}m^\prime \tau m^\prime},\\
& =& e^{-i\pi m m^\prime}\sum\limits_{n=-\infty}^{\infty}  e^{\pi i m n+\pi i (n+\f{1}{2}m^\prime)\tau(n+\f{1}{2}m^\prime)-\f{\pi i}{4}m^\prime \tau m^\prime},\\
& =& e^{-i\pi m m^\prime}\theta (\f{1}{2} m + \f{1}{2} \tau m^\prime), \\
\end{array}
\end{equation}
where the summation index is renamed, firstly, $n \rightarrow - n$ and, secondly, $n \rightarrow n-m^\prime,$ without change to the sum, which is over all possible integer pairs $n \in \mathbb{Z}^2.$ 
\end{proof}

\begin{definition}
The half-integer periods  are denoted as
\begin{equation}
\f{1}{2} \left[\begin{array}{cc} m_1^\prime & m_2^\prime \\ m_1 & m_2 \end{array} \right] = \f{1}{2} \left(\begin{array}{c}
m_1 + \tau_{11} m_1^\prime +\tau_{12} m_2^\prime\\ 
 m_2 + \tau_{21} m_1^\prime + \tau_{22} m_2^\prime) \end{array}\right) = \f{1}{2} m + \f{1}{2} \tau m^\prime,
\end{equation}
where $m, m^\prime \in \mathbb{Z}^2.$
\end{definition}

\begin{theorem}
The fifteen integrals 
\begin{equation}
v^{\lambda_i,\lambda_j} = \int_{\lambda_j}^{\lambda_i} dv,
\end{equation}
for $i, j = 1, \ldots, 6,$ with $i \neq j,$
of the normalized holomorphic differentials $dv = (dv_1, dv_2)^t$ on the dissected Riemann surface $\mathscr{R}$ constructed from the canonical homology cycles in Figure~\ref{cycles} between the fifteen pairs of distinct branch points of the Riemann surface $\mathscr{R}$ are equal to fifteen distinct non-zero half-integer periods, as follows,
\begin{equation}
\begin{array}{rclrclrcl}
v^{\lambda_1,\lambda_2} & = &\f{1}{2} \left[\begin{array}{cc} 0 & 0 \\ 1 & 0 \end{array} \right], & v^{\lambda_1, \lambda_3} & = &  \f{1}{2} \left[\begin{array}{cc} -1 & 1 \\0  & -1  \end{array} \right],        & v^{\lambda_1,\lambda_4} & = &   \f{1}{2} \left[\begin{array}{cc} -1 & 1 \\ 0 &0  \end{array} \right],       \\
v^{\lambda_1,\lambda_5} & = &\f{1}{2} \left[\begin{array}{cc} -1 &0  \\0  &-1  \end{array} \right],         & v^{\lambda_1, \lambda_6} & = &  \f{1}{2} \left[\begin{array}{cc} -1 &0  \\1  & 0 \end{array} \right]  ,      & v^{\lambda_2,\lambda_3} & = &   \f{1}{2} \left[\begin{array}{cc} -1 & 1 \\-1  &-1  \end{array} \right]  ,      \\
v^{\lambda_2,\lambda_4} & = &  \f{1}{2} \left[\begin{array}{cc}  -1& 1 \\-1  &0  \end{array} \right],       & v^{\lambda_2, \lambda_5} & = &  \f{1}{2} \left[\begin{array}{cc} -1 &0  \\  -1& -1 \end{array} \right],        & v^{\lambda_2,\lambda_6} & = &      \f{1}{2} \left[\begin{array}{cc} -1 &0  \\0  &0  \end{array} \right],     \\
v^{\lambda_3,\lambda_4} & = &  \f{1}{2} \left[\begin{array}{cc} 0 &0  \\ 0 & 1 \end{array} \right],       & v^{\lambda_3, \lambda_5} & = &  \f{1}{2} \left[\begin{array}{cc}  0&-1  \\0  &0  \end{array} \right],        & v^{\lambda_3,\lambda_6} & = &    \f{1}{2} \left[\begin{array}{cc} 0 &-1  \\1  &1  \end{array} \right],       \\
v^{\lambda_4,\lambda_5} & = &   \f{1}{2} \left[\begin{array}{cc} 0 &-1  \\0  & -1 \end{array} \right],      & v^{\lambda_4, \lambda_6} & = &    \f{1}{2} \left[\begin{array}{cc}  0&-1  \\1  &0  \end{array} \right],      & v^{\lambda_5,\lambda_6} & = &   \f{1}{2} \left[\begin{array}{cc} 0 &0  \\ 1 & 1 \end{array} \right].  \\
\end{array}
\end{equation}
\end{theorem}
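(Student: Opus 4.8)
The plan is to separate the statement into a structural part, proved abstractly from the Abel--Jacobi theory already underlying the linearization~(\ref{linearized}), and a computational part, in which the explicit characteristics are read off from the dissection of $\mathscr{R}$ determined by the canonical cycles of Figure~\ref{cycles}. First I would establish that each $v^{\lambda_i,\lambda_j}$ is a nonzero half-period and that the fifteen are pairwise distinct; only afterward would I pin down the individual half-integer characteristics in the table.

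For the structural part, write $e_i=(\lambda_i,0)$ for the branch point over $\lambda_i$. The rational function $(\lambda-\lambda_i)/(\lambda-\lambda_j)$ on $\mathscr{K}_2$ has divisor $2e_i-2e_j$, since $\lambda-\lambda_i$ vanishes to second order at $e_i$ (where $w$ is the local parameter) and the two points at infinity cancel. Abel's theorem then gives $2\,v^{\lambda_i,\lambda_j}=2(\mathcal{A}(e_i)-\mathcal{A}(e_j))\equiv 0$ modulo the period lattice $\mathbb{Z}^2\oplus\tau\mathbb{Z}^2$, so $v^{\lambda_i,\lambda_j}$ is a half-period. It is nonzero because $v^{\lambda_i,\lambda_j}=0$ would force $e_i\sim e_j$, impossible for distinct points on a curve of positive genus. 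For distinctness, note that $v^{\lambda_i,\lambda_j}=v^{\lambda_k,\lambda_l}$ is equivalent to $\mathcal{A}(e_i+e_l)=\mathcal{A}(e_j+e_k)$, hence to $e_i+e_l\sim e_j+e_k$; since two distinct branch points are each fixed by the hyperelliptic involution and therefore not exchanged by it, the degree-two divisor $e_i+e_l$ is non-special, so by Riemann--Roch its complete linear system is a single point, forcing $e_i+e_l=e_j+e_k$ as divisors and thus $\{i,j\}=\{k,l\}$. This already yields fifteen distinct nonzero elements of the $2$-torsion subgroup, which, since $2^{2g}-1=15$, in fact exhaust it.

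For the computational part, I would fix the characteristics by additivity from a small generating set. Because $v^{\lambda_i,\lambda_j}=v^{\lambda_i,\lambda_k}+v^{\lambda_k,\lambda_j}$ for any intermediate index, it suffices to evaluate the five consecutive integrals $v^{\lambda_k,\lambda_{k+1}}$, $k=1,\dots,5$, directly from the cut structure and obtain the remaining ten by telescoping and reduction. Each atomic integral is read off from the homology basis: when two branch points bound a branch cut encircled by an $\mathfrak{a}$-cycle, a path joining them along one sheet realizes one half of that cycle and contributes $\f{1}{2}\oint_{\mathfrak{a}_j}dv=\f{1}{2}\delta_{ij}$, whereas a path that must cross a $\mathfrak{b}$-cycle to reach the adjacent cut contributes a corresponding $\f{1}{2}\tau$ term via $\oint_{\mathfrak{b}_j}dv_i=\tau_{ij}$. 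The signs and integer coefficients are then fixed using the involution actions on the cycles in~(\ref{conjugatecycles}) together with the value of $\Re(\tau)$ established in~(\ref{realtau}), which is exactly what permits each computed lattice vector to be normalized into the canonical half-integer form displayed in the table.

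The hard part will be the bookkeeping in this second stage: correctly decomposing each path on the dissected surface into half-cycles and $\mathfrak{b}$-cycle crossing contributions with the right sheet and orientation, since a single sign error or miscounted crossing propagates through the additive generation of the entire table. As an internal consistency check I would verify that the tabulated entries satisfy the cocycle relation $v^{\lambda_i,\lambda_j}+v^{\lambda_j,\lambda_k}=v^{\lambda_i,\lambda_k}$ \emph{exactly}, not merely modulo $\mathbb{Z}^2\oplus\tau\mathbb{Z}^2$, for the triples built from the consecutive generators; confirming this both validates that the lattice reductions were carried out consistently and independently re-confirms the distinctness proved abstractly in the first stage.
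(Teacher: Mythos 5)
Your proposal is correct, but it reaches the theorem by a genuinely different route from the paper. The paper's entire proof is a single use of the hyperelliptic involution: each $v^{\lambda_i,\lambda_j}$ is computed along a path on the lower sheet of the dissected surface that crosses no basis cycles; the same path projected to the upper sheet has integrand multiplied by $-1$ (since $\iota^* dv = -dv$) and, to stay on the dissected surface, must cross basis cycles, so equating the two computations gives $2v^{\lambda_i,\lambda_j}$ directly as an explicit integer combination of periods. That one trick simultaneously delivers the half-period property, the exact characteristics in the table, and (by inspection of the completed table) non-vanishing and distinctness. You instead split the work: your first stage gets the half-period property from Abel's theorem applied to $(\lambda-\lambda_i)/(\lambda-\lambda_j)$, and non-vanishing and distinctness from Riemann--Roch via non-specialty of $e_i+e_l$ --- an argument that is basis-independent, works on any hyperelliptic curve, and shows a priori that the fifteen classes exhaust the nonzero $2$-torsion, none of which the paper's proof provides without inspecting the table. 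The cost is that Abel's theorem only pins down $v^{\lambda_i,\lambda_j}$ modulo the full period lattice, so the exact values on the dissected surface (which is what the theorem asserts and what the subsequent theta-function manipulations require) must come entirely from your second stage; there your plan of evaluating the five consecutive $v^{\lambda_k,\lambda_{k+1}}$ and telescoping is sound and cuts the bookkeeping from fifteen path analyses to five, and your exact (not merely mod-lattice) cocycle check is a genuine one --- the tabulated entries do satisfy, e.g., $v^{\lambda_1,\lambda_2}+v^{\lambda_2,\lambda_3}=v^{\lambda_1,\lambda_3}$ exactly. One caution: the rigorous justification of your atomic step, that a one-sheet path between two branch points ``realizes one half of a cycle,'' is precisely the paper's involution argument (the path and its sheet-projection close up into a cycle, and $\iota^*dv=-dv$ doubles the path integral); the anti-holomorphic involution~(\ref{conjugatecycles}) and the value of $\Re(\tau)$ in~(\ref{realtau}), which you invoke to fix signs, constrain complex-conjugation symmetry rather than the sheet structure, so they are cross-checks, not the engine of that computation.
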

\begin{proof}
By examining the dissection of the Riemann surface $\mathscr{R}$ given by Figure~\ref{cycles}, it is possible to integrate between any two branch points on the dissected Riemann surface while remaining on the lower sheet of the two-sheeted covering, without crossing any of the basis cycles.  However the same integral can be performed by tracing the same path projected on the upper sheet for which the integrand is the same except for multiplication by $-1$ due to the action of the hyperelliptic involution on $dv.$  By keeping track of the crossings of the homology cycles on the upper sheet (so as to remain on the dissected Riemann surface), the equality of the two integration procedures leads to the stated results.
\end{proof}
\begin{corollary} 
\begin{equation}
\theta ( v^{\lambda_1,\lambda_3} ) = \theta( v^{\lambda_1,\lambda_6})  =  \theta (v^{\lambda_2,\lambda_4})  =\theta ( v^{\lambda_2,\lambda_5} )  =    \theta( v^{\lambda_3,\lambda_6})  =\theta (v^{\lambda_4,\lambda_5})  =  0. 
\end{equation}
\end{corollary}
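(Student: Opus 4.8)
The plan is to recognize this corollary as an immediate consequence of two results already established in the excerpt: the preceding theorem that evaluates the fifteen period integrals $v^{\lambda_i,\lambda_j}$ explicitly as half-integer periods, and the earlier theorem asserting that $\theta$ vanishes at every \emph{odd} half-integer period, i.e. that $\theta(\f{1}{2} m + \f{1}{2} \tau m') = 0$ whenever the integer $m m' = m_1 m_1' + m_2 m_2'$ is odd. The whole argument therefore reduces to reading off the characteristic vectors $(m,m')$ for the six integrals named in the statement and verifying in each case that $mm'$ is odd.

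First I would recall the convention fixed in the definition of the half-integer period notation, namely that the array $\f{1}{2}\left[\begin{smallmatrix} m_1' & m_2' \\ m_1 & m_2 \end{smallmatrix}\right]$ denotes $\f{1}{2} m + \f{1}{2}\tau m'$, so that the top row of each array carries the $\tau$-characteristic $m'$ and the bottom row the ordinary characteristic $m$. For each of the six integrals listed I would extract these two integer vectors directly from the table in the preceding theorem, and then compute the pairing $m m' = m_1 m_1' + m_2 m_2'$. A short check gives: for $v^{\lambda_1,\lambda_3}$ one has $m'=(-1,1),\ m=(0,-1)$, hence $mm'=-1$; for $v^{\lambda_1,\lambda_6}$, $mm'=-1$; for $v^{\lambda_2,\lambda_4}$, $mm'=1$; for $v^{\lambda_2,\lambda_5}$, $mm'=1$; for $v^{\lambda_3,\lambda_6}$, $mm'=-1$; and for $v^{\lambda_4,\lambda_5}$, $mm'=1$. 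Each value is odd, so the vanishing theorem applies term by term and yields all six stated equalities at once.

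As a consistency check, and although it is not strictly demanded by the statement, I would verify that the remaining nine of the fifteen integrals all have even pairing $mm'$, so that the six listed above are precisely the odd half-periods among the fifteen. This confirms that no zero of $\theta$ arising from this table has been omitted.

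The computation is entirely mechanical and I anticipate no genuine obstacle. The one point I would be careful about is the bookkeeping of the characteristic convention, namely keeping straight which row of each array is $m$ and which is $m'$ when forming the pairing $mm'$; transposing the two rows would interchange $m$ and $m'$, and while $m m'=\sum_j m_j m_j'$ is symmetric under that interchange, consistency with the definition still has to be checked so that the parity is read from the correct entries. That bookkeeping is the only step worth double-checking before invoking the vanishing theorem.
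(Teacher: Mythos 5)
Your proposal is correct and is precisely the argument the paper intends: the result is stated as a corollary of the two preceding theorems, obtained by reading the characteristics $(m,m')$ of the six listed integrals from the half-period table and checking that the pairing $m m' = m_1 m_1' + m_2 m_2'$ is odd in each case, so the odd-half-period vanishing theorem applies. Your computed parities (and the consistency check that the remaining nine half-periods are even) agree with the table, so nothing is missing.
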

\begin{lemma}
If  $\theta (\int^\lambda_{\lambda_6} dv)$ is not identically zero, then it has precisely two simple zeros at $\lambda = \lambda_1, \lambda_3.$  
\end{lemma}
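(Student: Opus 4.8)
The plan is to combine the explicit vanishing recorded in the preceding corollary with the classical count of the number of zeros of a theta function restricted to the curve, and then to deduce simplicity purely by comparing these two numbers.

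First I would exhibit the two claimed zeros directly. Writing $f(\lambda) = \theta(\int_{\lambda_6}^{\lambda} dv)$ and regarding it as a function of the point $\lambda$ on $\mathscr{K}_2$, evaluation at $\lambda = \lambda_1$ gives $\theta(v^{\lambda_1,\lambda_6})$ and evaluation at $\lambda = \lambda_3$ gives $\theta(v^{\lambda_3,\lambda_6})$, since $v^{\lambda_i,\lambda_6} = \int_{\lambda_6}^{\lambda_i} dv$ by definition. The corollary following the half-period table asserts precisely that $\theta(v^{\lambda_1,\lambda_6}) = 0$ and $\theta(v^{\lambda_3,\lambda_6}) = 0$, so both $\lambda_1$ and $\lambda_3$ are zeros of $f$. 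Moreover, the non-singularity hypothesis on $\mathscr{K}_2$ guarantees that the six branch points are distinct, so $\lambda_1 \neq \lambda_3$ are two genuinely distinct points of the surface.

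Next I would count the total number of zeros of $f$. Because $\int_{\lambda_6}^{\lambda} dv$ is the Abel map based at the branch point $\lambda_6$, the composite $f$ becomes single-valued and holomorphic once $\mathscr{K}_2$ is dissected along the canonical cycles $\{\mathfrak{a}_1, \mathfrak{a}_2, \mathfrak{b}_1, \mathfrak{b}_2\}$ into a fundamental polygon. Under the standing hypothesis $f \not\equiv 0$, the number of zeros of $f$ inside the polygon equals $\frac{1}{2\pi i} \oint_{\partial} d\log f$. Pairing the contribution of each cut edge with that of its identified partner and invoking the quasiperiodicity law $\theta(v + m + \tau m') = e^{-2\pi i m'(v + \frac{1}{2}\tau m')} \theta(v)$, the period contributions telescope and leave exactly the genus $g = 2$; this is the classical Riemann count of the intersection of the curve with the theta divisor (see~\cite{bake 07, belo 94}).

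Finally I would combine the two facts. Each of the distinct points $\lambda_1$ and $\lambda_3$ is a zero of multiplicity at least one, while the total multiplicity of the zeros of $f$ on the surface is exactly two. Hence each of $\lambda_1, \lambda_3$ must be a simple zero and $f$ can have no other zeros, which is the assertion. The main obstacle is the middle step: one must verify carefully that $f$ is genuinely single-valued on the dissected surface and organize the boundary integral so that the $\mathfrak{a}_k$ and $\mathfrak{b}_k$ contributions recombine correctly through the quasiperiodicity factor, as this is the only place where the precise transformation law of $\theta$ and the structure of the fundamental polygon enter. By contrast, the location of the zeros is handed to us by the corollary, and simplicity then follows from the elementary arithmetic $1 + 1 = 2$ once distinctness has been noted.
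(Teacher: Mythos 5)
Your proof is correct, but note that the paper itself offers no proof of this lemma: it is stated bare, as a known classical fact (Riemann's vanishing theorem specialized to the genus-two curve $\mathscr{K}_2$), and the text immediately moves on to its consequences. Your argument is precisely the standard one that the paper implicitly invokes. The corollary preceding the lemma gives $\theta(v^{\lambda_1,\lambda_6}) = \theta(v^{\lambda_3,\lambda_6}) = 0$, so $\lambda_1$ and $\lambda_3$ are zeros of $f(\lambda) = \theta(\int_{\lambda_6}^{\lambda} dv)$; the argument principle applied to $f$ on the fundamental polygon, with the edge contributions paired via the quasiperiodicity law, gives exactly $g = 2$ zeros counted with multiplicity whenever $f \not\equiv 0$; and distinctness of the branch points (the non-singularity assumption on $\mathscr{K}_2$) then forces both zeros to be simple with no others. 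Two details are worth making explicit if you write this up in full: first, the zero set of $f$ on $\mathscr{K}_2$ is well defined despite the path-dependence of the Abel map, because changing the path of integration shifts the argument of $\theta$ by a lattice vector $m + \tau m^\prime$, under which $\theta$ is multiplied by a nonvanishing exponential factor; second, in the boundary-integral computation one should perturb the dissection slightly so that no zeros of $f$ lie on the cuts, a standard adjustment. Neither point is a genuine gap; your proof supplies exactly the argument the paper leaves unstated, and the only thing it "buys" relative to the paper is completeness.
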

Note that $\theta(0)$ certainly does not vanish identically for all $\theta$ since $\theta (0) > 0$ when $\tau = \left(\begin{array}{cc} i & 0 \\ 0 & i \end{array} \right).$  In such cases $\theta (\int^\lambda_{\lambda_6} dv)$ is not identically zero, since it is not zero at $\lambda=\lambda_6.$
\begin{lemma}
If  $\theta (\int^\lambda_{\lambda_6} dv)$ is not identically zero and $z_1 \neq z_2,$ then
$\theta (\int_{\lambda_6}^\lambda dv - \int_{\lambda_1}^{z_1} dv - \int_{\lambda_3}^{z_2}dv)$ has precisely two simple zeros at $\lambda=z_1,z_2 \in \mathscr{R}.$
\end{lemma}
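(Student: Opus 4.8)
The plan is to derive this from Riemann's vanishing theorem, using the previous lemma only to pin down the relevant vector of Riemann constants. Write $A(\lambda) = \int_{\lambda_6}^{\lambda} dv$ for the Abel map based at $\lambda_6$, and let $K \in \mathbb{C}^2$ be the associated vector of Riemann constants. The general theory then says that the section $\theta(A(\lambda) - e)$, whenever it is not identically zero, has exactly two zeros $P_1, P_2 \in \mathscr{R}$ (with multiplicity) whose Abel images satisfy $A(P_1) + A(P_2) \equiv e - K$ modulo the period lattice, and moreover $\theta(\zeta)=0$ exactly when $\zeta \equiv K + A(Q)$ for some $Q \in \mathscr{R}$. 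The preceding lemma identifies the two zeros of $\theta(A(\lambda))$ (the case $e=0$) as $\lambda_1$ and $\lambda_3$, so $A(\lambda_1) + A(\lambda_3) \equiv -K$; this is the only fact about $K$ that I need.

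First I would rewrite the shift vector $e = \int_{\lambda_1}^{z_1} dv + \int_{\lambda_3}^{z_2} dv$. A telescoping of the path integrals gives $e = A(z_1) + A(z_2) - A(\lambda_1) - A(\lambda_3) \equiv A(z_1) + A(z_2) + K$, so that the argument of the theta function in the lemma is $A(\lambda) - e \equiv A(\lambda) - A(z_1) - A(z_2) - K$. I would then locate the zeros by direct substitution: evaluating at $\lambda = z_1$ collapses the argument to $-A(z_2) - K$, and since $\theta$ is even and $K + A(z_2)$ lies on the theta divisor, $\theta(-A(z_2) - K) = \theta(K + A(z_2)) = 0$; the symmetric computation gives vanishing at $\lambda = z_2$. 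Thus $z_1$ and $z_2$ are both zeros. Because $z_1 \neq z_2$ by hypothesis and the count above produces exactly two zeros with multiplicity, each of $z_1, z_2$ must be simple and they exhaust the zero set, which is the assertion of the lemma.

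The main obstacle is the step hidden inside the counting statement, namely that the section is not identically zero. By Riemann's theorem, $\theta(A(\lambda) - e)$ vanishes identically precisely when $e - K \equiv A(z_1) + A(z_2)$ is the Abel image of a special degree-two divisor, i.e. when $z_1 + z_2$ is linearly equivalent to the canonical class; in genus two this happens if and only if $z_2 = \iota(z_1)$, the hyperelliptic conjugate of $z_1$. Hence the hypothesis $z_1 \neq z_2$ must be read together with the exclusion of this one degenerate configuration, which is exactly the setting in which the lemma is later used, where $(z_1,z_2) = (\mu_1,\mu_2)$ lie on prescribed sheets of $\mathscr{R}$.

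To certify non-vanishing within the paper's concrete framework, rather than appealing to the abstract special-divisor criterion, I would exhibit a single point at which the function is nonzero. Choosing a branch point $\lambda_k$ reduces $A(\lambda_k) - e$ to a combination of the explicit half-period integrals $v^{\lambda_i,\lambda_j}$, and one can then check against the list of vanishing theta values in the Corollary that this combination does not land on the theta divisor for $z_2 \neq \iota(z_1)$. Once $\theta(A(\lambda) - e) \not\equiv 0$ is secured, the counting and substitution steps of the second paragraph complete the proof with no further calculation.
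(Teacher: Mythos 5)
The paper itself offers no proof of this lemma: it is stated, together with the neighboring lemmas on the zeros of $\theta(\int_{\lambda_6}^{\lambda}dv - e)$ and on the theta divisor, as part of the classical toolkit (Riemann's vanishing theorem specialized to genus two, in the spirit of the cited works of Baker and of Eilbeck--Enolskii--Holden), so there is no in-paper argument to compare against. Your reconstruction is the standard one and its core is correct: you use the preceding lemma to pin down the Riemann constants via $A(\lambda_1)+A(\lambda_3)\equiv -K$, rewrite the shift as $e \equiv A(z_1)+A(z_2)+K$, obtain the zeros at $\lambda = z_1, z_2$ from the evenness of $\theta$ and the divisor description $\Theta = \{K + A(Q)\}$, and then invoke the count of $g=2$ zeros to conclude that the two zeros are simple and exhaust the zero set. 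Your observation about the degenerate configuration is also genuinely right and worth making explicit: if $(z_2,s_2)=\iota(z_1,s_1)$, then $A(z_1)+A(z_2)\equiv 0$ (the base point $\lambda_6$ is a branch point, hence a Weierstrass point), the divisor is special, and $\theta(A(\lambda)-e)$ vanishes identically, so the stated conclusion would fail. The lemma survives because in the paper's notation $z_1,z_2$ are the $\lambda$-projections of points $(z_1,s_1),(z_2,s_2)$ on the surface, so the hypothesis $z_1\neq z_2$ excludes equal projections and therefore excludes both the double-point case and the hyperelliptic-conjugate case; your reading is the one consistent with how the lemma is used later, where $(z_1,z_2)=(\mu_1,\mu_2)$ have distinct projections.

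One caveat: your proposed ``concrete'' certification of non-vanishing in the last paragraph does not work as described. Since $e = \int_{\lambda_1}^{z_1}dv + \int_{\lambda_3}^{z_2}dv$ depends on the arbitrary points $z_1,z_2$, the value $A(\lambda_k)-e$ at a branch point $\lambda_k$ is \emph{not} a combination of the half-period integrals $v^{\lambda_i,\lambda_j}$; and in any case non-membership in the theta divisor cannot be decided by comparison with the Corollary's finite list of vanishing theta values, since $\Theta$ is a one-complex-dimensional subvariety of the Jacobian. You should drop that step and rest the non-vanishing entirely on the criterion you already cite: $\theta(A(\lambda)-e)\equiv 0$ if and only if $e-K$ is the Abel image of a special degree-two divisor, which in genus two means precisely $z_2=\iota(z_1)$, and this is exactly what the hypothesis, properly read, excludes.
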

\begin{lemma}
If $\theta (\int^\lambda_{\lambda_6} dv) $ is not identically zero, then $\theta (z) = 0$  if and only if $\lambda \in \mathscr{R}$ such that
$$z = \int^\lambda_{\lambda_6} dv + \int^{\lambda_1}_{\lambda_3} dv.$$  The set of $z \in \mathbb{C}^2/(I \bigoplus \tau)$ having this property  is called the {\em theta divisor} $\Theta.$  The theta divisor is a one-complex-dimensional subvariety of the two-complex-dimensional period lattice $\mathbb{C}^2/(I \bigoplus \tau).$
\end{lemma}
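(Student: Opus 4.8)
The plan is to recognize this statement as the Riemann vanishing theorem specialised to the genus-two curve $\mathscr{K}_2$, with the vector of Riemann constants identified — via the preceding lemmas — with the half-period $v^{\lambda_1,\lambda_3}$. Throughout I would write $\mathcal{A}(\lambda)=\int_{\lambda_6}^{\lambda}dv$ for the Abel map based at $\lambda_6$, so that $\mathcal{A}(\lambda_1)=v^{\lambda_1,\lambda_6}$, $\mathcal{A}(\lambda_3)=v^{\lambda_3,\lambda_6}$, and $\mathcal{A}(\lambda_6)=0$, all equalities being read modulo the period lattice $I\oplus\tau$. Two elementary reductions make the whole argument turn on half-periods. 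First, since every $v^{\lambda_i,\lambda_j}$ is a half-period, $2v^{\lambda_i,\lambda_j}\equiv 0$, whence $-v^{\lambda_i,\lambda_j}\equiv v^{\lambda_i,\lambda_j}$; combined with additivity of $\mathcal{A}$ this gives $v^{\lambda_1,\lambda_6}+v^{\lambda_3,\lambda_6}=v^{\lambda_1,\lambda_3}+2v^{\lambda_3,\lambda_6}\equiv v^{\lambda_1,\lambda_3}$. Second, $\theta$ is even and its quasiperiodicity factor never vanishes, so vanishing of $\theta$ at a point depends only on that point modulo the lattice and is invariant under negation.

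For the ``if'' direction I would feed the preceding lemma — that $\theta(\mathcal{A}(\lambda)-\int_{\lambda_1}^{z_1}dv-\int_{\lambda_3}^{z_2}dv)$ has its only zeros at $\lambda=z_1,z_2$ — with $\lambda$ set equal to $z_1$. The argument of $\theta$ then collapses to $\mathcal{A}(\lambda_1)+\mathcal{A}(\lambda_3)-\mathcal{A}(z_2)\equiv v^{\lambda_1,\lambda_3}-\mathcal{A}(z_2)$, which must therefore be a zero of $\theta$ for every $z_2$ (for all $z_2$ by continuity). Applying evenness together with $-v^{\lambda_1,\lambda_3}\equiv v^{\lambda_1,\lambda_3}$ converts this into $\theta(\mathcal{A}(\lambda)+v^{\lambda_1,\lambda_3})=0$ for every $\lambda$, which is exactly the containment of the image set $\{\mathcal{A}(\lambda)+v^{\lambda_1,\lambda_3}\}$ in the zero locus of $\theta$.

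The ``only if'' direction carries the real content, and I would obtain it by inverting the same lemma rather than re-deriving Riemann's theorem ab initio. Given $z$ with $\theta(z)=0$, solvability of the Jacobi inversion problem — the sum map $\mathrm{Sym}^2\mathscr{K}_2\to\mathbb{C}^2/(I\oplus\tau)$, $(z_1,z_2)\mapsto\int_{\lambda_1}^{z_1}dv+\int_{\lambda_3}^{z_2}dv$, being a holomorphic map of compact complex surfaces that is surjective — produces $(z_1,z_2)$ with $\int_{\lambda_1}^{z_1}dv+\int_{\lambda_3}^{z_2}dv\equiv z$. For this pair the preceding lemma gives that $\theta(\mathcal{A}(\lambda)-z)$ vanishes only at $\lambda=z_1,z_2$; but $\theta(\mathcal{A}(\lambda_6)-z)=\theta(-z)=\theta(z)=0$, so $\lambda_6\in\{z_1,z_2\}$. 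Eliminating the base point (say $z_2=\lambda_6$) and using the half-period reductions yields $z\equiv\mathcal{A}(z_1)-v^{\lambda_1,\lambda_6}-v^{\lambda_3,\lambda_6}\equiv\mathcal{A}(z_1)+v^{\lambda_1,\lambda_3}$, exhibiting $z$ in the required form with $\lambda=z_1$. The residual case in which the inversion solution forces $z_1=z_2$ (so that the lemma does not directly apply) I would dispose of by continuity: such $z$ are finitely many on the one-dimensional zero set, which is closed, while the image $\{\mathcal{A}(\lambda)+v^{\lambda_1,\lambda_3}\}$ is compact, so the containment extends to them as a limit.

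The dimension statement then follows at once: the zero set has been identified with the image of $\mathscr{K}_2$ under the holomorphic map $\lambda\mapsto\mathcal{A}(\lambda)+v^{\lambda_1,\lambda_3}$, which is nonconstant because $dv\not\equiv 0$, and a nonconstant holomorphic image of a compact Riemann surface is a one-complex-dimensional analytic subvariety of the two-complex-dimensional torus $\mathbb{C}^2/(I\oplus\tau)$. I expect the main obstacle to be the surjectivity used in the ``only if'' direction: it must be justified independently of the theorem being proved, which the open-mapping-plus-compactness argument does, provided one checks that the differential of the sum map is generically nondegenerate — a Wronskian of $dv_1,dv_2$ that is nonzero whenever the two points lie over distinct values of $\lambda$. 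The bookkeeping that pins the Riemann constant to $v^{\lambda_1,\lambda_3}$ is the other delicate point, but it reduces entirely to the half-period identities recorded at the outset.
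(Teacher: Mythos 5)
The paper itself gives no proof of this lemma: it is stated, together with the lemmas surrounding it, as the classical Riemann vanishing theorem for genus two, with the Riemann constant pinned to the half-period $v^{\lambda_1,\lambda_3}$ by the chosen homology basis (the paper leans on Baker and on Belokolos et al.\ for this). So your proposal must stand on its own. Most of it does: the half-period identities $2v^{\lambda_i,\lambda_j}\equiv 0$ and $v^{\lambda_1,\lambda_6}+v^{\lambda_3,\lambda_6}\equiv v^{\lambda_1,\lambda_3}$ are correct; the ``if'' direction obtained by setting $\lambda=z_1$ in the preceding lemma and then using evenness and lattice-invariance of the zero set is correct; the ``only if'' reduction $z\equiv\mathcal{A}(z_1)+v^{\lambda_1,\lambda_3}$ when the inversion pair is non-diagonal (forcing $\lambda_6\in\{z_1,z_2\}$) is correct, as is your justification of surjectivity of the sum map and the final dimension count.

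The genuine gap is the residual case of the ``only if'' direction. The set of $z$ whose Jacobi-inversion preimage is a diagonal pair $(w,w)$ is not a small set in the Jacobian: except for finitely many points it is the curve $D=\{2\mathcal{A}(w)+v^{\lambda_1,\lambda_3}\colon w\in\mathscr{K}_2\}$, the image of the diagonal, a compact one-complex-dimensional subvariety (the fibre of the sum map is a single point except over the single point $v^{\lambda_1,\lambda_3}$, whose fibre is the pencil $\{(u,\iota u)\}$). Your continuity argument therefore rests entirely on the assertion that $D\cap\Theta$ is finite, and that assertion is exactly the statement that $D$ is not a component of the zero set of $\theta$ --- which you nowhere prove. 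This cannot be dismissed as generic position: $D$ and $\Theta$ are two compact one-dimensional analytic subsets of the same two-torus, and they intersect in finitely many points only if they share no component; if a component of $\Theta$ equaled $D$, the ``good'' zeros would fail to be dense in $\Theta$ and your limit argument collapses. Ruling out $D\subseteq\Theta$ requires a separate idea, e.g.\ an intersection-number computation on the Jacobian (homologically $D$ is the multiplication-by-$2$ image of a translate of the Abel curve, giving $D\cdot\Theta=8$, whereas a component of $\Theta$ must satisfy $D\cdot\Theta\le\Theta\cdot\Theta=2$), or the observation that stability of the Abel curve under doubling would produce an automorphism of a genus-two curve with derivative $2$ at a fixed point, which is impossible since the automorphism group is finite. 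Until some such argument is supplied, your ``only if'' direction is proved only off a possibly one-dimensional exceptional subset of $\Theta$, so the proof is incomplete as written.
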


\begin{lemma}
Suppose $\theta (\int^\lambda_{\lambda_6} dv)$ is not identically zero and $e \notin \Theta,$ then $\theta (\int^\lambda_{\lambda_6} dv - e)$  is not identically zero and has precisely two zeros $\lambda=z_1, z_2.$ Moreover, up to addition of integer multiples of periods,
\begin{equation}
e=\int_{\lambda_1}^{z_1} dv + \int_{\lambda_3}^{z_2}dv.
\label{zdefinition}
\end{equation}
\label{elemma}
\end{lemma}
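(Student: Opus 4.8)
The plan is to treat $f_e(\lambda) := \theta\!\left(\int_{\lambda_6}^\lambda dv - e\right)$ as a multi-valued section on the dissected surface $\mathscr{R}$ and to recover $e$ from the divisor of its zeros, using the classical argument-principle computation for theta functions together with the already-established location of the theta divisor. Write $\mathcal{A}(\lambda)=\int_{\lambda_6}^\lambda dv$ for the Abel map. First I would show $f_e\not\equiv 0$: since $\mathcal{A}(\lambda_6)=0$ and $\theta$ is even, $f_e(\lambda_6)=\theta(-e)=\theta(e)$, which is nonzero precisely because $e\notin\Theta$.

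Next I would count the zeros by integrating $d\log f_e$ over the boundary $\partial\Pi$ of the fundamental $8$-gon. Using $\oint_{\mathfrak{a}_j}dv_i=\delta_{ij}$, $\oint_{\mathfrak{b}_j}dv_i=\tau_{ij}$ and the quasiperiodicity $\theta(v+m+\tau m')=e^{-2\pi i m'(v+\frac{1}{2}\tau m')}\theta(v)$, the two $\mathfrak{b}_j$-edges of each pair contribute nothing (the shift is an integer period, under which $\theta$ is genuinely periodic), while on each $\mathfrak{a}_j$-pair the shift by the $\mathfrak{b}_j$-period, namely the $j$-th column of $\tau$, changes $d\log\theta$ by exactly $-2\pi i\,d\mathcal{A}_j$, so that pair contributes $\oint_{\mathfrak{a}_j}dv_j=1$. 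The total is $2$, the genus, so $f_e$ has exactly two zeros $z_1,z_2$ counted with multiplicity; an advantage of this contour computation over invoking the earlier $z_1\neq z_2$ lemma is that double zeros $z_1=z_2$ are handled uniformly, since residues count with multiplicity.

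The heart of the argument is the same contour integral weighted by $\mathcal{A}_k$. By the residue theorem $\frac{1}{2\pi i}\oint_{\partial\Pi}\mathcal{A}_k\,d\log f_e=\mathcal{A}_k(z_1)+\mathcal{A}_k(z_2)$, whereas evaluating the boundary directly via the quasiperiodicity above is precisely Riemann's vanishing-theorem computation and yields
\begin{equation}
\mathcal{A}(z_1)+\mathcal{A}(z_2)\equiv e + C \pmod{\Lambda},
\end{equation}
where $\Lambda$ is the period lattice and $C$ is a vector depending only on $\mathscr{R}$, the base point $\lambda_6$, and the chosen dissection, and, decisively, not on $e$. I expect this bookkeeping (tracking the quasiperiodicity factors along paired edges and the resulting period-lattice ambiguities) to be the one delicate step; the payoff of keeping $C$ abstract is that I never have to evaluate the Riemann constants explicitly.

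Finally I would fix $C$ by specialization. The choice $e=0$ is admissible because $\theta(0)\neq 0$, so $0\notin\Theta$ (as noted in the remark that $\theta(0)>0$ for the nondegenerate $\tau$); then $f_0(\lambda)=\theta(\mathcal{A}(\lambda))$, whose two zeros are $\lambda_1,\lambda_3$ by the lemma asserting that $\theta(\int^\lambda_{\lambda_6}dv)$ has precisely two simple zeros at $\lambda=\lambda_1,\lambda_3$. Substituting $e=0$ into the displayed relation gives $C\equiv \mathcal{A}(\lambda_1)+\mathcal{A}(\lambda_3)$, whence for general $e\notin\Theta$
\begin{equation}
e\equiv\bigl(\mathcal{A}(z_1)-\mathcal{A}(\lambda_1)\bigr)+\bigl(\mathcal{A}(z_2)-\mathcal{A}(\lambda_3)\bigr)=\int_{\lambda_1}^{z_1}dv+\int_{\lambda_3}^{z_2}dv \pmod{\Lambda}.
\end{equation}
The symmetric form of the right-hand side shows the result is independent of how $z_1,z_2$ are paired with $\lambda_1,\lambda_3$, and this congruence is exactly the asserted identity up to integer multiples of periods.
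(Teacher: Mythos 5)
The paper never actually proves this lemma: it is stated as part of the chain of classical facts (zeros of $\theta$ at odd half-periods, the zeros of $\theta(\int_{\lambda_6}^{\lambda}dv)$ at $\lambda_1,\lambda_3$, the theta divisor, and then this inversion statement) that the appendix assembles from the classical theory of Baker and the Riemann vanishing theorem, so there is no in-paper argument to match yours against. What you have written is the standard proof of Riemann's theorem specialized to genus two, and its structure is sound: non-vanishing of $f_e(\lambda)=\theta(\int_{\lambda_6}^{\lambda}dv-e)$ at $\lambda=\lambda_6$ by evenness of $\theta$ and $e\notin\Theta$; the argument-principle count over the dissected octagon, where the $\mathfrak{b}_j$-pairs cancel (integer periods) and each $\mathfrak{a}_j$-pair contributes $\oint_{\mathfrak{a}_j}dv_j=1$, giving exactly $2$ zeros with multiplicity; the weighted integral $\frac{1}{2\pi i}\oint_{\partial\Pi}\mathcal{A}_k\,d\log f_e$ yielding $\mathcal{A}(z_1)+\mathcal{A}(z_2)\equiv e+C$ with $C$ independent of $e$; and the elimination of $C$ by specializing to $e=0$, using the paper's earlier lemma that the zeros of $\theta(\int_{\lambda_6}^{\lambda}dv)$ are precisely $\lambda_1,\lambda_3$. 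That last specialization is a genuinely nice structural move: it lets you avoid ever computing the vector of Riemann constants, which is exactly the bookkeeping the classical proofs spend their effort on, and it also covers the case $z_1=z_2$ that the paper's intermediate lemma (which assumes $z_1\neq z_2$) does not.

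Two caveats. First, a small misattribution that is easily repaired: you justify $0\notin\Theta$ by appeal to the paper's remark that $\theta(0)>0$, but that remark concerns only the special case $\tau=\left(\begin{smallmatrix} i & 0\\ 0 & i\end{smallmatrix}\right)$ and is offered only to show the hypothesis is not vacuous. The correct justification under the lemma's standing hypothesis is the one you already have the tools for: by the earlier lemma the zeros of $f_0(\lambda)=\theta(\int_{\lambda_6}^{\lambda}dv)$ are precisely $\lambda_1$ and $\lambda_3$, and since the branch points are distinct, $\lambda_6\notin\{\lambda_1,\lambda_3\}$, so $\theta(0)=f_0(\lambda_6)\neq 0$. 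Second, the step you yourself flag as delicate — evaluating $\oint_{\partial\Pi}\mathcal{A}_k\,d\log f_e$ edge-pair by edge-pair and checking that all $e$-independent terms and lattice ambiguities collapse into a single constant $C \pmod{\Lambda}$ — is where essentially all of the work of Riemann's theorem lives; deferring it to "the classical computation" is acceptable here precisely because you never need $C$ explicitly, but a fully self-contained write-up would have to carry it out (including the routine deformation of $\partial\Pi$ so that no zero of $f_e$ lies on the boundary). With those two points addressed, your argument is a correct and complete proof of the lemma the paper leaves unproved.
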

Thus, with the exception of the one-complex-dimensional variety $e \in \Theta,$ the points $(z_1,s_1), (z_2,s_2) \in \mathscr{R}$ that satisfy Equation~(\ref{zdefinition}) may be viewed as well-defined functions of the independent variable $e \in \mathbb{C}^2/(I\bigoplus \tau).$

\subsection{Kleinian elliptic functions}
\begin{definition}
The fundamental Kleinian sigma function $\sigma$ of $u \in \mathbb{C}^2$ is 
\begin{equation}
\sigma(u) = e^{\f{1}{2} u \eta \omega^{-1} u} \theta(\f{1}{2} \omega^{-1} u),
\label{sigmadefinition}
\end{equation}
with transformations on the period lattice $\mathbb{C}^2/(2\omega \bigoplus 2\omega^\prime)$ given by, for $r=1,2,$
\begin{equation}
\begin{array}{ccc}
\sigma(u_1+2\omega_{1r},u_2+2\omega_{2r})=e^{2\eta_{1r}(u_1+\omega_{1r})+2\eta_{2r}(u_2+\omega_{2r})}\sigma(u),\\
\sigma(u_1+2\omega^\prime_{1r},u_2+2\omega^\prime_{2r})=e^{2\eta^\prime_{1r}(u_1+\omega^\prime_{1r})+2\eta^\prime_{2r}(u_2+\omega^\prime_{2r})}\sigma(u).
\end{array}
\end{equation}
In general, if $m = (m_1,m_2)^T$ and $m^\prime = (m_1^\prime,m_2^\prime)^T$ are two couples of integers and
\begin{equation}
\Omega_m = 2\omega m + 2 \omega^\prime m^\prime, \hspace{.2in} H_m = 2 \eta m + 2 \eta^\prime m^\prime,
\end{equation}
then
\begin{equation}
\sigma (u + \Omega_m) = e^{H_m (u + \f{1}{2} \Omega_m) - i \pi m m^\prime} \sigma (u).
\label{sigmalattice}
\end{equation} 
\end{definition}
Note that the sigma function can be multiplied by a constant factor independent of $u$ without altering any of the results of this paper.  The simplest normalization sufficient to accomplish the Jacobi inversion has been chosen. The definition of the Kleinian sigma function is motivated by the following identities between integrals on the Riemann surface.

Firstly, Equation~(\ref{alphadefinition}) implies
\begin{equation}
\begin{split}
\mathscr{K}(\lambda,a;z_1,b_1)+\mathscr{K}(\lambda,a;z_2,b_2) - \int_a^\lambda d\Pi_{z_1,b_1}-\int_a^\lambda d\Pi_{z_2,b_2} =\\
	-2 \sum\limits_{i=1}^2 \sum\limits_{j=1}^2 \alpha_{ij}\int^\lambda_a   du_i \left(\int_{b_1}^{z_1} du_j + \int_{b_2}^{z_2} du_j \right).
\end{split}
\label{fun1}
\end{equation}

Secondly, the identity
\begin{equation}
\begin{split}
\int_a^\lambda d\Pi_{z_1,b_1} +\int_a^\lambda d\Pi_{z_2,b_2} = \\
\log \f{\theta (\int_{\lambda_6}^\lambda dv - \int_{\lambda_1}^{z_1} dv - \int_{\lambda_3}^{z_2}dv) \theta (\int_{\lambda_6}^a dv - \int_{\lambda_1}^{b_1} dv - \int_{\lambda_3}^{b_2}dv)}{\theta (\int_{\lambda_6}^\lambda dv - \int_{\lambda_1}^{b_1} dv - \int_{\lambda_3}^{b_2}dv)\theta (\int_{\lambda_6}^a dv - \int_{\lambda_1}^{z_1} dv - \int_{\lambda_3}^{z_2}dv)},
\end{split}
\label{fun2}
\end{equation}
follows from considering the function
$$\f{\theta (\int_{\lambda_6}^\lambda dv - \int_{\lambda_1}^{z_1} dv - \int_{\lambda_3}^{z_2}dv)}{\theta (\int_{\lambda_6}^\lambda dv - \int_{\lambda_1}^{b_1} dv - \int_{\lambda_3}^{b_2}dv)} \exp \left(-\int_a^\lambda d\Pi_{z_1,b_1} -\int_a^\lambda d\Pi_{z_2,b_2}\right),$$
which must be constant because it has no zeros and no poles and zero periods around all homology basis cycles.  Therefore it is equal to its value when $\lambda = a.$   A similar conclusion follows even if only the periods around $\mathfrak{a}_1$ and $\mathfrak{a}_2$ are zero and the periods around $\mathfrak{b}_1$ and $\mathfrak{b}_2$ are constant (but not necessarily zero).

The two identities in Equations~(\ref{fun1}) and~(\ref{fun2}) can be written in terms of the Kleinian sigma function~(\ref{sigmadefinition}), using the fact that $\alpha = \f{1}{2} \eta \omega^{-1},$ as
\begin{equation}
\begin{split}
\mathscr{K}(\lambda,a;z_1,b_1)+\mathscr{K}(\lambda,a;z_2,b_2) = \\
\log \f{\sigma (\int_{\lambda_6}^\lambda du - \int_{\lambda_1}^{z_1} du - \int_{\lambda_3}^{z_2}du) \sigma (\int_{\lambda_6}^a du - \int_{\lambda_1}^{b_1} du - \int_{\lambda_3}^{b_2}du)}{\sigma (\int_{\lambda_6}^\lambda du - \int_{\lambda_1}^{b_1} du - \int_{\lambda_3}^{b_2}du)\sigma (\int_{\lambda_6}^a du - \int_{\lambda_1}^{z_1} du - \int_{\lambda_3}^{z_2}du)},
\end{split}
\label{fun3}
\end{equation}
where $du = (du_1,du_2)^T$ and $dv = (dv_1,dv_2)^T.$

Now Equation~(\ref{twoformidentity}), in which we interchange $(\lambda,a)$ and $(z,b)$ using the symmetry of the expression, and Equation~(\ref{Kdefinition}) imply
\begin{equation}
\begin{split}
\mathscr{K}(\lambda,a;z_1,b_1)+\mathscr{K}(\lambda,a;z_2,b_2) = 
\int_{b_1}^{z_1} \int_a^\lambda \f{\partial}{\partial \lambda} \f{s+w}{z-\lambda} \f{d\lambda}{2 s} dz +\\ \int_{b_2}^{z_2} \int_a^\lambda \f{\partial}{\partial \lambda} \f{s+w}{z-\lambda} \f{d\lambda}{2 s} dz +
\left(\int_{b_1}^{z_1} du_1 + \int_{b_2}^{z_2} du_1\right) \int_a^\lambda dr_1 + \\ \left(\int_{b_1}^{z_1} du_2 + \int_{b_2}^{z_2} du_2\right) \int_a^\lambda dr_2.
\end{split}
\label{fun4}
\end{equation}

The identities in Equations~(\ref{fun3}) and~(\ref{fun4}) combine to give
\begin{equation}
\begin{split}
\log \f{\sigma (\int_{\lambda_6}^\lambda du - \int_{\lambda_1}^{z_1} du - \int_{\lambda_3}^{z_2}du) \sigma (\int_{\lambda_6}^a du - \int_{\lambda_1}^{b_1} du - \int_{\lambda_3}^{b_2}du)}{\sigma (\int_{\lambda_6}^\lambda du - \int_{\lambda_1}^{b_1} du - \int_{\lambda_3}^{b_2}du)\sigma (\int_{\lambda_6}^a du - \int_{\lambda_1}^{z_1} du - \int_{\lambda_3}^{z_2}du)} = \\
\int_{b_1}^{z_1} \int_a^\lambda \f{\partial}{\partial \lambda} \f{s+w}{z-\lambda} \f{d\lambda}{2 s} dz +\int_{b_2}^{z_2} \int_a^\lambda \f{\partial}{\partial \lambda} \f{s+w}{z-\lambda} \f{d\lambda}{2 s} dz +\\
\left(\int_{b_1}^{z_1} du_1 + \int_{b_2}^{z_2} du_1\right) \int_a^\lambda dr_1 + \left(\int_{b_1}^{z_1} du_2 + \int_{b_2}^{z_2} du_2\right) \int_a^\lambda dr_2.
\end{split}
\label{fun5}
\end{equation}

Lemma~\ref{elemma} and the discussion following Equation~(\ref{fun2}) shows that in Equation~(\ref{fun5}), for $u^\prime, u^{\prime\prime} \notin \Theta,$ we can write
$$u^\prime = \int_{\lambda_1}^{z_1} du + \int_{\lambda_3}^{z_2}du, \hspace{.2in} u^{\prime\prime} = \int_{\lambda_1}^{b_1} du + \int_{\lambda_3}^{b_2}du,$$
consider $z_1$ and $z_2$ as functions of $u^\prime$ and $b_1$ and $b_2$ as functions of $u^{\prime\prime}.$
Thus Equation~(\ref{fun5}) becomes,
\begin{equation}
\begin{split}
\log \f{\sigma (\int_{\lambda_6}^\lambda du - u^\prime) \sigma (\int_{\lambda_6}^a du - u^{\prime\prime})}{\sigma (\int_{\lambda_6}^\lambda du - u^{\prime\prime})\sigma (\int_{\lambda_6}^a du - u^\prime)} = 
\int_{b_1}^{z_1} \int_a^\lambda \f{\partial}{\partial \lambda} \f{s+w}{z-\lambda} \f{d\lambda}{2 s} dz + \\ \int_{b_2}^{z_2} \int_a^\lambda \f{\partial}{\partial \lambda} \f{s+w}{z-\lambda} \f{d\lambda}{2 s} dz +
\left(u^\prime_1-u^{\prime\prime}_1\right) \int_a^\lambda dr_1 + \left(u^\prime_2-u^{\prime\prime}_2\right) \int_a^\lambda dr_2.
\end{split}
\label{fun6}
\end{equation}
Differentiation of Equation~(\ref{fun6}) with respect to $u_i^\prime$ for $i=1,2,$  produces
\begin{equation}
\begin{split}
-\zeta_i \left(\int^\lambda_{\lambda_6}du -u^\prime \right) +\zeta_i \left(\int^a_{\lambda_6} du - u^\prime \right) = 
\f{\partial z_1}{\partial u_i^\prime} \int_a^\lambda \f{\partial}{\partial \lambda} \f{s_1+w}{z_1-\lambda} \f{d\lambda}{2 s_1}
+ \\ \f{\partial z_2}{\partial u_i^\prime} \int^\lambda_a \f{\partial}{\partial \lambda} \f{s_2+w}{z_2 -\lambda} \f{d\lambda}{2 s_2}
+\int^\lambda_{a} dr_i,
\end{split}
\label{zetaeqn}
\end{equation}
where the zeta functions $\zeta_i$  and $\wp_{ij}$ functions for $i,j=1,2,$ are defined by
\begin{equation}
\zeta_i (u) = \f{\partial}{\partial u_i} \log \sigma (u), \hspace{.2in} \wp_{ij} = - \f{\partial^2}{\partial u_i \partial u_j} \log \sigma (u).
\end{equation}
Equation~(\ref{sigmalattice}) shows that, for $i,j = 1,2,$ and $\Omega_m$ an integer translation across the period lattice of the Jacobi variety of $\mathscr{K}_2,$
\begin{equation}
\zeta_i (u + \Omega_m) - \zeta_i (u) = (H_m)_i = 2 \eta_{i1}m_1+2\eta_{i2} m_2 + 2\eta^\prime_{i1} m_1^\prime + 2 \eta_{i2}^\prime m_2^\prime
\end{equation}
and
\begin{equation}
\wp_{ij} (u + \Omega_m) = \wp_{ij} (u).
\end{equation}
Using 
\begin{equation}
\begin{array}{rcl}
u_1 & = & \int^{z_1}_{\lambda_1} \f{dz}{s} + \int^{z_2}_{\lambda_3} \f{dz}{s}, \\
u_2 & = & \int^{z_1}_{\lambda_1} \f{z dz}{s} + \int^{z_2}_{\lambda_3} \f{z dz}{s},
\end{array}
\end{equation}
we find that
\begin{equation}
\begin{array}{rclrcl}
\f{\partial z_1}{\partial u_1} & = & \f{s_1 z_2}{z_2-z_1}, &\hspace{.2in} \f{\partial z_2}{\partial u_1} &= & \f{s_2 z_1}{z_1-z_2},\\
\f{\partial z_1}{\partial u_2} & = & \f{s_1}{z_1-z_2}, & \hspace{.2in} \f{\partial z_2}{\partial u_2} & = & \f{s_2}{z_2-z_1}.
\end{array}
\label{zderivatives}
\end{equation}
In Equation~(\ref{zetaeqn}) we now make the change of replacing the points $(z_1,s_1), (z_2,s_2) \in \mathscr{K}_2$ with their corresponding points under the
hyperelliptic involution, viz.,  $(z_1,-s_1)$ and $(z_2,-s_2).$  Notice that $u_r^\prime$ is changed by this transformation to
\begin{equation}
-u_r^\prime +  2 \omega_{i1}m_1+2\omega_{i2} m_2 + 2\omega^\prime_{i1} m_1^\prime + 2 \omega_{i2}^\prime m_2^\prime
\end{equation}
for some integers $m_1,m_2,m_1^\prime,m_2^\prime.$  Making the corresponding change in the right-hand side of Equation~(\ref{zetaeqn}) and using Equation~(\ref{zderivatives}), we obtain, for $i=1,2,$
\begin{equation}
\int^\lambda_a dr_i + \zeta_i \left( \int^\lambda_{\lambda_6} du + u^\prime \right) - \f{1}{2} f_i (\lambda, z_1,z_2) =  \zeta_i \left( \int^a_{\lambda_6} du + u^\prime \right) - \f{1}{2} f_i (a, z_1,z_2),
\label{newzetaeqn}
\end{equation}
where
\begin{equation}
\begin{array}{rcl}
f_1 (\lambda,z_1,z_2) & = & \f{w (\lambda - z_1-z_2)}{(z_1-\lambda)(z_2-\lambda)} + \f{s_1 (z_1 - \lambda - z_2)}{(z_1-\lambda)(z_1-z_2)} +\f{s_2 (z_2-\lambda-z_1)}{(z_2-\lambda)(z_2-z_1)}, \\
f_2 (\lambda,z_1,z_2) & = & \f{w}{(z_1-\lambda)(z_2-\lambda)} + \f{s_1}{(z_1-\lambda)(z_1-z_2)} + \f{s_2}{(z_2-\lambda)(z_2-z_1)}.
\end{array}
\label{feqn}
\end{equation}
By adding $\int^{z_1}_{\lambda_1} dr_i + \int^{z_2}_{\lambda_3} dr_i$ to both sides of Equation~(\ref{newzetaeqn}), we obtain
\begin{equation}
\begin{split}
\int^\lambda_a dr_i + \int^{z_1}_{\lambda_1} dr_i + \int^{z_2}_{\lambda_3} dr_i + \zeta_i \left( \int^\lambda_{\lambda_6} du + u^\prime \right) - \f{1}{2} f_i (\lambda, z_1,z_2) \\
 = \int^{z_1}_{\lambda_1} dr_i + \int^{z_2}_{\lambda_3} dr_i + \zeta_i \left( \int^a_{\lambda_6} du + u^\prime \right) - \f{1}{2} f_i (a, z_1,z_2),
\end{split}
\label{symmetriczetaeqn}
\end{equation}
in which the left-hand side is symmetric with respect to $\lambda, z_1, z_2,$ and the right-hand side is the value of the left-hand side when $\lambda=a.$  Consequently the left-hand side of Equation~(\ref{symmetriczetaeqn}) is independent of $\lambda, z_1,$ and $z_2.$  Therefore, for $i=1,2,$
\begin{equation}
\zeta_i \left( \int^\lambda_{\lambda_6} du + u^\prime  \right) 
=  C_i + \f{1}{2} f_i (\lambda,z_1,z_2) - \int^\lambda_a dr_i - \int^{z_1}_{\lambda_1} dr_i - \int^{z_2}_{\lambda_3} dr_i ,
\label{Ceqn}
\end{equation}
where $C_i$ is independent of $\lambda, z_1, z_2.$  Since $a$ is an arbitrary point on $\mathcal{K}_2,$ it can be set equal to the branch point at $a=\lambda_6,$ giving
\begin{equation}
\zeta_i \left( \int^\lambda_{\lambda_6} du + u^\prime  \right) 
=  C_i + \f{1}{2} f_i (\lambda,z_1,z_2) - \int^\lambda_{\lambda_6} dr_i - \int^{z_1}_{\lambda_1} dr_i - \int^{z_2}_{\lambda_3} dr_i ,
\label{Ceqn}
\end{equation}
Now $\zeta_i(0) =0,$ being an odd function, so setting $\lambda = \lambda_6, z_1 = \lambda_1,$ and $z_2 = \lambda_3,$ shows that $C_i = 0,$ for $i=1,2.$

Direct calculation shows that, as  $\lambda \rightarrow \infty^{\pm},$ the singularities in the terms on the right-hand side of equation~(\ref{Ceqn}) cancel out, so that
\begin{equation}
\lim_{\lambda \rightarrow \infty^{\pm}} \frac{1}{2} f_i (\lambda,z_1,z_2) - \int^\lambda_{\lambda_6} dr_i = \gamma_i^{\pm},
\label{limiteqn}
\end{equation}
where 
\begin{equation}
\begin{array}{rcl}
\gamma_1^{\pm} & = & \pm \f{1}{2} (-z_1 z_2 -\f{1}{8} \Lambda_1^2+\f{1}{2}\Lambda_2) + \f{1}{2}\f{s_1-s_2}{z_1-z_2} + \delta_1^\pm,\\[.1in]
\gamma_2^{\pm} & = & \pm \f{1}{2} (z_1 + z_2 -\f{1}{2} \Lambda_1) + \delta_2^\pm,
\end{array}
\end{equation}
and $\delta_1^\pm$ and $\delta_2^\pm$ are constants independent of $z_1$ and $z_2.$ 
Thus equation~(\ref{Ceqn}) becomes
\begin{equation}
\zeta_i \left(\int^{\infty^\pm}_{\lambda_6} du + u^\prime \right) = \gamma_i^\pm -\int^{z_1}_{\lambda_1} dr_i - \int^{z_2}_{\lambda_3} dr_i.
\label{newCeqn}
\end{equation}
Equation~(\ref{newCeqn}), implies, for $i=1,2,$
\begin{equation}
\begin{split}
\zeta_i \left(u^\prime + \int^{\infty^+}_{\lambda_6} du \right) - \zeta_i \left(u^\prime + \int^{\infty^-}_{\lambda_6} du \right) = \gamma_i^+ - \gamma_i^-
\end{split}
\end{equation}
Substituting the explicit expressions for $\gamma^\pm_i,$ we obtain the solution to the Jacobi inversion problem,
\begin{equation}
\begin{array}{rcl}
\zeta_1 \left(u^\prime + \int^{\infty^+}_{\lambda_6}du \right) - \zeta_1 \left(u^\prime + \int^{\infty^-}_{\lambda_6} du \right) &=&-z_1 z_2 -\f{1}{8} \Lambda_1^2+\f{1}{2} \Lambda_2 + \delta_1,\\[.2in]
\zeta_2 \left(u^\prime + \int^{\infty^+}_{\lambda_6} du \right) - \zeta_2 \left(u^\prime + \int^{\infty^-}_{\lambda_6} du \right) &=&z_1+z_2 -\f{1}{2} \Lambda_1 + \delta_2,
\end{array}
\label{implicitsolution}
\end{equation}
where the constants $\delta_1$ and $\delta_2$ are independent of $z_1$ and $z_2,$ and so can be obtained from equation~(\ref{implicitsolution}) by setting 
$z_1=(\lambda_1,0)$ and $z_2 =(\lambda_2,0)=(\lambda_1^*,0).$  
Recall $\alpha_{ij}$ for $i,j = 1,2,$ is given by
\begin{equation}
\alpha = \f{1}{2} \eta \omega^{-1}  = \left(\begin{array}{cc} \alpha_{11} & \alpha_{12} \\[.1in] \alpha_{21} & \alpha_{22} \end{array} \right).
\end{equation} Hence 
\begin{equation}
\begin{array}{rcl}
\delta_1=\delta_1^+-\delta_1^- & = &  - (- |\lambda_1|^2 - \f{1}{8} \Lambda_1^2 + \f{1}{2} \Lambda_2) \\
 & &+ \zeta_1 \left(\int^{\infty^+}_{\lambda_6}du +\int^{\lambda_2}_{\lambda_3} du \right)- \zeta_1 \left(\int^{\infty^-}_{\lambda_6}du +\int^{\lambda_2}_{\lambda_3} du \right)\\[.1in]
& = &  - (- |\lambda_1|^2 - \f{1}{8} \Lambda_1^2 + \f{1}{2} \Lambda_2) + 2 \alpha_{11} \int^{\infty^+}_{\infty^-} du_1 + (\alpha_{12}+\alpha_{21}) \int^{\infty^+}_{\infty^-} du_2\\
&& +\f{1}{2}(\omega^{-1})_{11} (\f{\theta_1^+}{\theta^+}-\f{\theta_1^-}{\theta^-})+\f{1}{2}  (\omega^{-1})_{21} (\f{\theta_2^+}{\theta^+} - \f{\theta_2^-}{\theta^-}),\\[.1in] 
\delta_2 =\delta_2^+-\delta_2^- & = & - (\lambda_1 + \lambda_1^* -\f{1}{2} \Lambda_1)\\
& &+\zeta_2 \left(\int^{\infty^+}_{\lambda_6}du +\int^{\lambda_2}_{\lambda_3}\right)- \zeta_2 \left(\int^{\infty^-}_{\lambda_6}du +\int^{\lambda_2}_{\lambda_3} du\right)\\
& =&  - (\lambda_1 + \lambda_1^* -\f{1}{2} \Lambda_1) +(\alpha_{12}+\alpha_{21})\int^{\infty^+}_{\infty^-} du_1 + 2 \alpha_{22} \int^{\infty^+}_{\infty^-} du_2\\
& & +\f{1}{2} (\omega^{-1})_{12}(\f{\theta_1^+}{\theta^+}-\f{\theta_1^-}{\theta^-})+\f{1}{2} (\omega^{-1})_{22}(\f{\theta_2^+}{\theta^+}-\f{\theta_2^-}{\theta^-}),
\end{array}
\label{deltaeqn}
\end{equation}
where
\begin{equation}
\begin{array}{rcl}
\theta^+ & = & \theta \left(\f{1}{2} \omega^{-1}( \int^{\infty^+}_{\lambda_6} du + \int^{\lambda_2}_{\lambda_3} du)\right),\\[.1in]
\theta^- & = & \theta \left(\f{1}{2} \omega^{-1}( \int^{\infty^-}_{\lambda_6} du + \int^{\lambda_2}_{\lambda_3} du)\right),\\[.1in]
\theta_1^+ & = & \theta_1 \left(\f{1}{2} \omega^{-1}( \int^{\infty^+}_{\lambda_6} du + \int^{\lambda_2}_{\lambda_3} du)\right),\\[.1in]
\theta_1^- & = & \theta_1 \left(\f{1}{2} \omega^{-1}( \int^{\infty^-}_{\lambda_6} du + \int^{\lambda_2}_{\lambda_3} du)\right),\\[.1in]
\theta_2^+ & = & \theta_2 \left(\f{1}{2} \omega^{-1}( \int^{\infty^+}_{\lambda_6} du + \int^{\lambda_2}_{\lambda_3} du)\right),\\[.1in]
\theta_2^- & = & \theta_2 \left(\f{1}{2} \omega^{-1}( \int^{\infty^-}_{\lambda_6} du + \int^{\lambda_2}_{\lambda_3} du)\right).\\[.1in]
\end{array}
\end{equation}
Explicit expressions for $z_1$ and $z_2$ can then be found from the quadratic formula for the roots of the quadratic equation
\begin{equation}
z^2 - (z_1+z_2) z + z_1 z_2 = 0.
\end{equation}
The corresponding values of $s_1$ and $s_2$ for $(z_1,s_1), (z_2,s_2) \in \mathscr{K}_2$ are, for $k=1,2,$
\begin{equation}
\begin{split}
s_k = \left( \wp_{22} \left(u^\prime + \int^{\infty^-}_{\lambda_6} du \right) - \wp_{22}\left(u^\prime + \int^{\infty^+}_{\lambda_6} du \right) \right) z_k \\+ 
\wp_{12} \left(u^\prime + \int^{\infty^-}_{\lambda_6} du \right) - \wp_{12}\left(u^\prime + \int^{\infty^+}_{\lambda_6} du \right).
\end{split}
\end{equation}

\end{document}